\documentclass[pra,aps,floatfix,amsmath,superscriptaddress,twocolumn,longbibliography,nofootinbib]{revtex4-2}
\usepackage{amssymb,enumerate}
\usepackage{graphicx}
\usepackage{graphics}
\usepackage{amsmath}
\usepackage{amsthm,bbm}
\usepackage{color}
\usepackage{dsfont}
\usepackage{hyperref}

\usepackage[capitalise,nameinlink]{cleveref}

\usepackage{qcircuit}

\usepackage{graphicx}

\usepackage{xcolor}
\hypersetup{
    colorlinks,
    linkcolor={red!50!black},
    citecolor={blue!50!black},
    urlcolor={blue!80!black}
}

\usepackage{xfrac}

\usepackage{amsmath}
\usepackage{tikz-cd}
\usepackage{empheq}

\usepackage{enumitem}

\usepackage{graphicx}


\usepackage{hyperref}
\usepackage[capitalise]{cleveref}
\usepackage{youngtab}

\usepackage{tikz}
\usepackage{mathtools}

\newcommand{\SU}{\operatorname{SU}}

\newcommand{\e}{\operatorname{e}}

\renewcommand{\i}{\mathrm{i}}
\renewcommand{\j}{\mathrm{j}}

\newcommand{\R}{\mathbb{R}}

\newcommand{\Tr}{\operatorname{Tr}}

\newcommand{\<}{\langle}
\renewcommand{\>}{\rangle}

\renewcommand{\i}{\mathrm{i}}

\renewcommand{\i}{\mathrm{i}}

\newtheorem{thm}{Theorem}

\usepackage{lipsum}
\usepackage{lmodern}
\usepackage{tcolorbox}

\usepackage{amsfonts}
\usepackage{graphicx,graphics,epsfig,times,bm,bbm,amssymb,amsmath,amsfonts,mathrsfs}
\usepackage[normalem]{ulem}
\usepackage{setspace}
\usepackage{subfigure}
\usepackage{dsfont}
\usepackage{braket}
\usepackage{upgreek }
\usepackage{tikz}
\usepackage{natbib}
\usepackage{chngcntr}

\newtheorem{theorem}{Theorem}

\newtheorem{corollary}[theorem]{Corollary}

\newtheorem{lemma}[theorem]{Lemma}

\newtheorem{proposition}[theorem]{Proposition}

\makeatletter 
\renewcommand\onecolumngrid{%
  \do@columngrid{one}{\@ne}%
  \def\set@footnotewidth{\onecolumngrid}%
  \def\footnoterule{\kern-6pt\hrule width 1.5in\kern6pt}%
}
\makeatother

\newcommand{\bes} {\begin{subequations}}
\newcommand{\ees} {\end{subequations}}
\newcommand{\bea} {\begin{eqnarray}}
\newcommand{\eea} {\end{eqnarray}}
\newcommand{\be} {\begin{equation}}
\newcommand{\ee} {\end{equation}}

\def\b{\lambda}

\def\>{\rangle}
\def\<{\langle}
\def\Tr{\textrm{Tr}}

\newcommand{\ketbra}[2]{|{#1}\rangle\langle{#2}|}
\newcommand{\ketbrasame}[1]{\ketbra{#1}{#1}}

\newcommand{\ignore}[1]{}

\newcommand{\SKexp}{\log_{(1+\sqrt{5})/2}2}

\newcommand{\swap}{\text{Sw}}
\newcommand{\iswap}{\i\swap}
\newcommand{\siswap}{\sqrt{\i\swap}}
\newcommand{\iSWAP}{\i\text{SWAP}}
\newcommand{\siSWAP}{\sqrt{\iSWAP}}

\newcommand{\CZ}{\text{C}Z}

\newcommand{\sct}[1]{^{\color{black}(#1)}}

\usetikzlibrary{decorations.pathreplacing, positioning, shapes.misc, calc}
\tikzset{tensor/.style={rectangle,color=black,draw=black,fill=white,
                    inner sep=1pt,minimum size=5mm}}
\tikzset{tensor2h/.style={rectangle,color=black,draw=black,fill=white,
                    inner sep=1pt,minimum width=5mm,minimum height=10mm}}
\tikzset{tensor3h/.style={rectangle,color=black,draw=black,fill=white,
                    inner sep=1pt,minimum width=5mm,minimum height=20mm}}
\tikzset{parameter/.style={rectangle,color=black,draw=black,fill=black!10,thick,
                    inner sep=1pt,minimum size=5mm}}
\tikzset{virtual/.style={rectangle,inner sep=1pt,minimum size=5mm}}
\tikzset{prepare/.style={rounded rectangle, rounded rectangle east arc=none,color=black,draw=black,fill=white,inner sep=1pt,minimum size=5mm}}
\tikzset{measure/.style={rounded rectangle, rounded rectangle west arc=none,color=black,draw=black,fill=white,inner sep=1pt,minimum size=5mm}}
\tikzset{
    triple3/.style args={[#1] in [#2] in [#3]}{
        #1,preaction={preaction={draw,#3},draw,#2}
    }
}
\tikzset{triple/.style={triple3={[line width=0.125mm,black] in [line width=2mm,white] in [line width=2.25mm,black]}}}
\tikzset{thick triple/.style={triple3={[line width=0.25mm,black] in [line width=1.75mm,white] in [line width=2.25mm,black]}}}

\tikzset{measurement/.pic={\draw (55:0.5) arc (55:125:0.5); \draw (80:0.25) -- (80:0.6);}}
\tikzset{ground/.pic={\draw[thick] (-0.15,0) -- (0.15,0);\draw[thick] (-0.10,-0.05) -- (0.10,-0.05);\draw[thick] (-0.05,-0.1) -- (0.05,-0.1);}}
\tikzset{cross/.pic={\draw (-0.1,-0.1) -- (0.1,0.1);\draw (-0.1,0.1) -- (0.1,-0.1);}}
\tikzset{x/.pic={\draw (-0.1,-0.1) -- (0.1,0.1);\draw (-0.1,0.1) -- (0.1,-0.1);}}
\tikzset{c1/.pic={\fill (0, 0) circle [radius=0.075];}}
\tikzset{c0/.pic={\fill[white] (0, 0) circle [radius=0.075]; \draw (0, 0) circle [radius=0.075];}}
\tikzset{c01/.pic={
    \fill[white] (0, 0) circle [radius=0.075];
    \draw[fill] (0,0) -- (-135:0.075) arc (-135:45:0.075) -- cycle;
    \draw (0,0) circle [radius=0.075];
    }
}
\tikzset{c10/.pic={
    \fill[white] (0, 0) circle [radius=0.075];
    \draw[fill] (0,0) -- (225:0.075) arc (225:45:0.075) -- cycle;
    \draw (0,0) circle [radius=0.075];
    }
}
\definecolor{highlightyellow}{RGB}{251,247,25}
\tikzset{
    highlighted/.style={
        preaction={
            draw,highlightyellow,-, very thick,
            double=highlightyellow,
            double distance=2\pgflinewidth,
        }
    }
}

\makeatletter

\pgfkeys{/pgf/.cd,
  rectangle corner radius north west/.initial=10pt,
  rectangle corner radius north east/.initial=10pt,
  rectangle corner radius south west/.initial=10pt,
  rectangle corner radius south east/.initial=10pt
}
\newif\ifpgf@rectanglewrc@donecorner@
\def\pgf@rectanglewithroundedcorners@docorner#1#2#3#4#5{%
  \edef\pgf@marshal{%
    \noexpand\pgfintersectionofpaths
      {%
        \noexpand\pgfpathmoveto{\noexpand\pgfpoint{\the\pgf@xa}{\the\pgf@ya}}%
        \noexpand\pgfpathlineto{\noexpand\pgfpoint{\the\pgf@x}{\the\pgf@y}}%
      }%
      {%
        \noexpand\pgfpathmoveto{\noexpand\pgfpointadd
          {\noexpand\pgfpoint{\the\pgf@xc}{\the\pgf@yc}}%
          {\noexpand\pgfpoint{#1}{#2}}}%
        \noexpand\pgfpatharc{#3}{#4}{#5}%
      }%
    }%
  \pgf@process{\pgf@marshal\pgfpointintersectionsolution{1}}%
  \pgf@process{\pgftransforminvert\pgfpointtransformed{}}%
  \pgf@rectanglewrc@donecorner@true
}
\pgfdeclareshape{rectangle with rounded corners}
{
  \inheritsavedanchors[from=rectangle] 
  \inheritanchor[from=rectangle]{north}
  \inheritanchor[from=rectangle]{north west}
  \inheritanchor[from=rectangle]{north east}
  \inheritanchor[from=rectangle]{center}
  \inheritanchor[from=rectangle]{west}
  \inheritanchor[from=rectangle]{east}
  \inheritanchor[from=rectangle]{mid}
  \inheritanchor[from=rectangle]{mid west}
  \inheritanchor[from=rectangle]{mid east}
  \inheritanchor[from=rectangle]{base}
  \inheritanchor[from=rectangle]{base west}
  \inheritanchor[from=rectangle]{base east}
  \inheritanchor[from=rectangle]{south}
  \inheritanchor[from=rectangle]{south west}
  \inheritanchor[from=rectangle]{south east}

  \savedmacro\cornerradiusnw{%
    \edef\cornerradiusnw{\pgfkeysvalueof{/pgf/rectangle corner radius north west}}%
  }
  \savedmacro\cornerradiusne{%
    \edef\cornerradiusne{\pgfkeysvalueof{/pgf/rectangle corner radius north east}}%
  }
  \savedmacro\cornerradiussw{%
    \edef\cornerradiussw{\pgfkeysvalueof{/pgf/rectangle corner radius south west}}%
  }
  \savedmacro\cornerradiusse{%
    \edef\cornerradiusse{\pgfkeysvalueof{/pgf/rectangle corner radius south east}}%
  }

  \backgroundpath{%
    \northeast\advance\pgf@y-\cornerradiusne\relax
    \pgfpathmoveto{}%
    \pgfpatharc{0}{90}{\cornerradiusne}%
    \northeast\pgf@ya=\pgf@y\southwest\advance\pgf@x\cornerradiusnw\relax\pgf@y=\pgf@ya
    \pgfpathlineto{}%
    \pgfpatharc{90}{180}{\cornerradiusnw}%
    \southwest\advance\pgf@y\cornerradiussw\relax
    \pgfpathlineto{}%
    \pgfpatharc{180}{270}{\cornerradiussw}%
    \northeast\pgf@xa=\pgf@x\advance\pgf@xa-\cornerradiusse\southwest\pgf@x=\pgf@xa
    \pgfpathlineto{}%
    \pgfpatharc{270}{360}{\cornerradiusse}%
    \northeast\advance\pgf@y-\cornerradiusne\relax
    \pgfpathlineto{}%
    \pgfpathclose
  }

  \anchor{before north east}{\northeast\advance\pgf@y-\cornerradiusne}
  \anchor{after north east}{\northeast\advance\pgf@x-\cornerradiusne}
  \anchor{before north west}{\southwest\pgf@xa=\pgf@x\advance\pgf@xa\cornerradiusnw
    \northeast\pgf@x=\pgf@xa}
  \anchor{after north west}{\northeast\pgf@ya=\pgf@y\advance\pgf@ya-\cornerradiusnw
    \southwest\pgf@y=\pgf@ya}
  \anchor{before south west}{\southwest\advance\pgf@y\cornerradiussw}
  \anchor{after south west}{\southwest\advance\pgf@x\cornerradiussw}
  \anchor{before south east}{\northeast\pgf@xa=\pgf@x\advance\pgf@xa-\cornerradiusse
    \southwest\pgf@x=\pgf@xa}
  \anchor{after south east}{\southwest\pgf@ya=\pgf@y\advance\pgf@ya\cornerradiusse
    \northeast\pgf@y=\pgf@ya}

  \anchorborder{%
    \pgf@xb=\pgf@x
    \pgf@yb=\pgf@y%
    \southwest%
    \pgf@xa=\pgf@x
    \pgf@ya=\pgf@y%
    \northeast%
    \advance\pgf@x by-\pgf@xa%
    \advance\pgf@y by-\pgf@ya%
    \pgf@xc=.5\pgf@x
    \pgf@yc=.5\pgf@y%
    \advance\pgf@xa by\pgf@xc
    \advance\pgf@ya by\pgf@yc%
    \edef\pgf@marshal{%
      \noexpand\pgfpointborderrectangle
      {\noexpand\pgfqpoint{\the\pgf@xb}{\the\pgf@yb}}
      {\noexpand\pgfqpoint{\the\pgf@xc}{\the\pgf@yc}}%
    }%
    \pgf@process{\pgf@marshal}%
    \advance\pgf@x by\pgf@xa%
    \advance\pgf@y by\pgf@ya%
    \pgfextract@process\borderpoint{}%
    \pgf@rectanglewrc@donecorner@false
    %
    \southwest\pgf@xc=\pgf@x\pgf@yc=\pgf@y
    \advance\pgf@xc\cornerradiussw\relax\advance\pgf@yc\cornerradiussw\relax
    \borderpoint
    \ifdim\pgf@x<\pgf@xc\relax\ifdim\pgf@y<\pgf@yc\relax
      \pgf@rectanglewithroundedcorners@docorner{-\cornerradiussw}{0pt}{180}{270}{\cornerradiussw}%
    \fi\fi
    %
    \ifpgf@rectanglewrc@donecorner@\else
      \southwest\pgf@yc=\pgf@y\relax\northeast\pgf@xc=\pgf@x\relax
      \advance\pgf@xc-\cornerradiusse\relax\advance\pgf@yc\cornerradiusse\relax
      \borderpoint
      \ifdim\pgf@x>\pgf@xc\relax\ifdim\pgf@y<\pgf@yc\relax
       \pgf@rectanglewithroundedcorners@docorner{0pt}{-\cornerradiusse}{270}{360}{\cornerradiusse}%
      \fi\fi
    \fi
    %
    \ifpgf@rectanglewrc@donecorner@\else
      \northeast\pgf@xc=\pgf@x\relax\pgf@yc=\pgf@y\relax
      \advance\pgf@xc-\cornerradiusne\relax\advance\pgf@yc-\cornerradiusne\relax
      \borderpoint
      \ifdim\pgf@x>\pgf@xc\relax\ifdim\pgf@y>\pgf@yc\relax
       \pgf@rectanglewithroundedcorners@docorner{\cornerradiusne}{0pt}{0}{90}{\cornerradiusne}%
      \fi\fi
    \fi
    %
    \ifpgf@rectanglewrc@donecorner@\else
      \northeast\pgf@yc=\pgf@y\relax\southwest\pgf@xc=\pgf@x\relax
      \advance\pgf@xc\cornerradiusnw\relax\advance\pgf@yc-\cornerradiusnw\relax
      \borderpoint
      \ifdim\pgf@x<\pgf@xc\relax\ifdim\pgf@y>\pgf@yc\relax
       \pgf@rectanglewithroundedcorners@docorner{0pt}{\cornerradiusnw}{90}{180}{\cornerradiusnw}%
      \fi\fi
    \fi
  }
}

\makeatother

\tikzset{prepare tall/.style={rectangle with rounded corners, rectangle corner radius north east=0pt, rectangle corner radius south east=0pt, rectangle corner radius north west=30pt, rectangle corner radius south west=30pt, color=black,draw=black,fill=white,thick,inner sep=1pt,minimum height=22mm,minimum width=12mm}}
\tikzset{measure tall/.style={rectangle with rounded corners, rectangle corner radius north west=0pt, rectangle corner radius south west=0pt, rectangle corner radius north east=30pt, rectangle corner radius south east=30pt, color=black,draw=black,fill=white,thick,inner sep=1pt,minimum height=22mm,minimum width=12mm}}

\newcounter{step}
\crefformat{step}{#2Step #1#3}
\crefmultiformat{step}{#2Steps~#1#3}%
{ and~#2#1#3}{, #2#1#3}{ and~#2#1#3}

\crefformat{section}{#2Sec.~#1#3}

\usepackage{chngcntr}

\begin{document}
\title{Synthesis of Energy-Conserving Quantum Circuits with  XY interaction}
	  
\author{Ge Bai}
\affiliation{Centre for Quantum Technologies, National University of Singapore, 3 Science Drive 2, Singapore 117543}
\affiliation{QICI Quantum Information and Computation Initiative, Department of Computer Science,
The University of Hong Kong, Pokfulam Road, Hong Kong
}
\affiliation{HKU-Oxford Joint Laboratory for Quantum Information and Computation}

 \author{Iman Marvian}
\affiliation{Duke Quantum Center, Departments of Electrical and Computer Engineering and Physics, Duke University, Durham, North Carolina 27708, USA}
\begin{abstract} 
We study quantum circuits constructed from $\sqrt{\i\text{SWAP}}$ gates and, more generally, from the entangling gates that can be realized with the XX+YY interaction alone.
Such gates preserve the Hamming weight of states in the computational basis, which means they  respect the global U(1) symmetry corresponding to rotations around the z axis. Equivalently,  assuming that the intrinsic Hamiltonian of each qubit in the system is  the Pauli Z operator, they conserve the total energy of the system.
We develop efficient methods for synthesizing circuits realizing any desired energy-conserving unitary using XX+YY  interaction with or without single-qubit rotations around the z-axis. Interestingly, implementing generic energy-conserving unitaries, such as CCZ and Fredkin gates, with 2-local energy-conserving gates requires the use of ancilla qubits.     
  When single-qubit rotations around the z-axis are  permitted, our scheme requires only a single ancilla qubit, whereas with the XX+YY interaction alone, it requires 2 ancilla qubits. In addition to exact realizations, we also consider approximate realizations and show how a general energy-conserving unitary can be synthesized using only a sequence of $\sqrt{\i\text{SWAP}}$ gates and 2 ancillary qubits, with arbitrarily small error, which can be bounded via the Solovay-Kitaev theorem. 
 Our methods are also applicable for synthesizing energy-conserving unitaries when, 
  rather than the XX+YY interaction,
  one has access to 
 any other energy-conserving 2-body interaction that is not diagonal in the computational basis, such as the Heisenberg exchange interaction. We briefly discuss the applications of these circuits in the context of quantum computing, quantum thermodynamics, and quantum clocks. 

  \end{abstract}

\maketitle
\section{Introduction}

In the field of quantum computing and other related areas, such as quantum control and quantum thermodynamics, one is often interested in implementing desired unitary transformations on a quantum system, e.g., on a finite number of qubits. Inspired by the classical circuit model, researchers in this field have developed circuit synthesis techniques to implement any desired unitary using elementary gate sets acting on a few qubits in the system \cite{kitaev2002classical, NielsenAndChuang, barenco1995elementary, deutsch1985quantum, deutsch1995universality, divincenzo1995two, Lloyd:95}. For instance, it has been shown that any unitary transformation on $n$ qubits can be implemented exactly with $\mathcal{O}(4^n)$ single-qubit and 2-qubit gates, such as Controlled-NOT (CNOT) gates \cite{mottonen2004quantum}. 
 However, these general circuit synthesis techniques do not take into account the specific properties of the desired unitaries, such as their symmetries. Such considerations can significantly reduce the number of required gates and also enable circuit realizations that are more noise-resilient. Additionally, the general circuit synthesis techniques do not distinguish between generic gates and the gates that can be realized with native interactions on a particular platform---an important property that makes the gates easier to implement and more robust against noise.
  
In this work, we study energy-conserving quantum circuits, which are circuits formed from single-qubit rotations around the z-axis and 2-qubit unitary gates $U$ that conserve the sum of Pauli $Z$ operators, such that 
\be
[U, Z\otimes \mathbb{I}+\mathbb{I}\otimes Z]=0\ .
\ee
Assuming the qubits have identical intrinsic Hamiltonian, with eigenstates $|0\rangle$ and  $|1\rangle$, such gates conserve the total intrinsic energy of the system. Hence, in this paper,  
 we refer to such unitary transformations  as energy-conserving unitaries   (See \cref{Sec:prem} for further details and definitions). 
Note that this condition can be equivalently  understood as a global U(1) symmetry, where the representation of symmetry on each qubit is $\e^{\i\theta Z}: \theta\in (-\pi,\pi]$.

\begin{figure}
\begin{minipage}{\linewidth}
  \Qcircuit @C=1em @R=.7em { 
   &   \multigate{1}{\i\text{Sw}}& \qw &   \multigate{1}{\i\text{Sw}}& \gate{\sqrt{X}} & \multigate{1}{\i\text{Sw}} & \qw &\qw \\
 & \ghost{i\text{Sw}} &  \gate{\sqrt{X}}& \ghost{i\text{Sw}} &  \qw& \ghost{i\text{Sw}}&\gate{\sqrt{X}}&\qw    }
\end{minipage}
\vspace{10mm}\\

\begin{minipage}{\linewidth}
\Qcircuit @C=1em @R=.7em {
\lstick{\ket{0}} 
   &   \multigate{1}{\i\text{Sw}}& \qw &   \multigate{2}{\i\text{Sw}}& \qw & \qw & \ket{0} \\
 & \ghost{i\text{Sw}} &  \multigate{1}{\i\text{Sw}} & \qw & \gate{S^\dag} & \qw \\ 
  & \qw &  \ghost{\i\text{Sw}}&  \ghost{\i\text{Sw}} & \gate{Z}&\qw  }   
\end{minipage}
   \caption{\textbf{SWAP from $\i\text{SWAP}$.} The top circuit is the standard way of implementing the SWAP gate using three   $\i\text{SWAP}=\exp(\i R \pi/2)$ gates, also denoted as $\i\text{Sw}$, where $R=(X\otimes X+Y\otimes Y)/2$ is the XY interaction \cite{schuch2003natural}.   (Note that SWAP and $\i\text{SWAP}$ are indeed equal, up to a global phase in the sector with Hamming weight 1.)   This circuit requires three $\sqrt{X}$ gates, which are not energy-conserving. 
   Interestingly, it turns out that even though SWAP is energy-conserving, unless one uses ancilla qubits,  such non-energy-conserving gates are unavoidable (See \cite{marvian2022restrictions}  and   \cref{Thm0}).  The bottom circuit realizes SWAP using   three  $\i\text{SWAP}$ gates together with two energy-conserving single-qubit gates, namely $S^\dag$ and  $Z$, and with one ancillary qubit.   
 The lack of non-energy-conserving unitaries makes the bottom circuit more resilient against certain types of noise, such as the fluctuations of the master clock that controls the timing of pulses \cite{ball2016role, marvian2022restrictions}.   }\label{Fig1}
\end{figure}
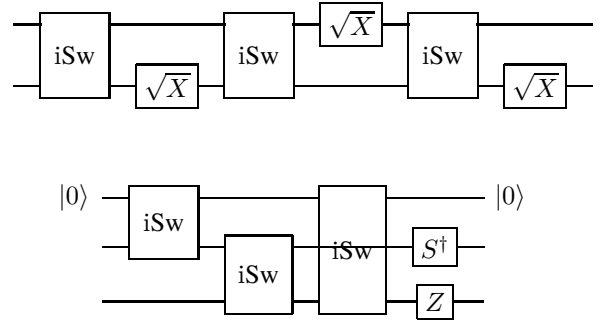

A canonical example of such an energy-conserving gate family, extensively discussed in this paper, is one that can be realized with the Hamiltonian $X\otimes X+Y\otimes Y$, also known as the XY interaction.  Besides its fundamental importance in the condensed matter theory, this interaction also plays a crucial role in quantum computing. Namely, it is the native interaction in various solid-state qubits, including  quantum 
dot spins  
\cite{imamog1999quantum,warren2019long,burkard2023semiconductor}, as well as some 
superconducting qubits 
\cite{bialczak2010quantum,abrams2020implementation,sung2021realization,houck2007generating,johansson2006vacuum, abrams2019implementation}.
We also consider quantum circuits that contain other 2-qubit energy-conserving interactions, such as the Heisenberg exchange interaction $X\otimes X+Y\otimes Y+Z\otimes Z$.

It turns out that many useful gates and subroutines in quantum computing are energy-conserving unitaries. This includes the SWAP and controlled-SWAP gates (also know as the Fredkin gate), the controlled-Z (CZ) gate with arbitrary number of control qubits, and the family of   unitaries generated by the multi-qubit swap Hamiltonian 
\cite{lloyd2014quantum,  marvian2016universal, kimmel2017hamiltonian, pichler2016measurement}.  The standard approaches \cite{NielsenAndChuang} for implementing such unitaries ignore this conservation law and decomposes the desired unitary to elementary gates that do not conserve the intrinsic energy of qubits, such as CNOT and Hadamard.

In this work, on the other hand,  we are interested in the synthesis of energy-conserving unitaries with energy-conserving gates alone. In such circuits, the total energy of the qubits in the system remains conserved throughout the execution of the circuit.   As argued in \cite{marvian2022restrictions}, this makes the circuit more resilient against certain types of noise, such as those induced by the instability of the master clock that controls the timing of pulses   \cite{ball2016role}.  As a simple example, in \cref{Fig1} we compare two different realizations of a useful and common energy-conserving gate in quantum computing, namely the SWAP gate.

Indeed, even in the cases when the target unitary is not energy-conserving, still, it might be desirable to minimize the use of non-energy-conserving gates.  For instance, instead of the standard   implementation of the 3-qubit Toffoli  gate that requires multiple single and two-qubit non-energy-conserving gates  (namely, 6 CNOTs and 2 Hadamards \cite{NielsenAndChuang,divincenzo1998quantum,shende2008cnot}), one can implement this useful gate by sandwiching the energy-conserving controlled-controlled-Z (CCZ) gate between two Hadamards, as shown in \cref{fig:Toffoli}. Then, by realizing  CCZ gate with energy-conserving gates, one obtains an implementation of Toffoli, which is more robust against certain fluctuations of the master clock.

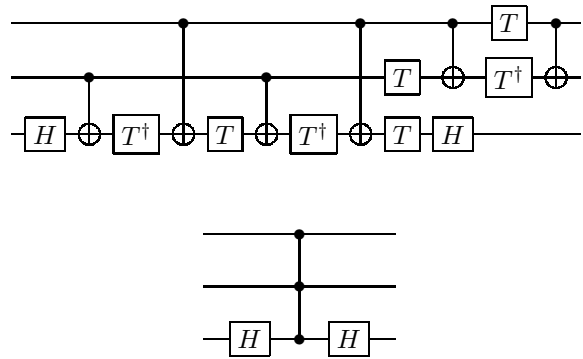
\begin{figure}
\begin{minipage}{\linewidth}
\Qcircuit @C=0.5em @R=.7em {
& \qw     & \qw     & \qw          & \ctrl{2} & \qw     & \qw     & \qw          & \ctrl{2} & \qw      & \ctrl{1} & \gate{T}     & \ctrl{1} & \qw \\
& \qw     & \ctrl{1}& \qw          & \qw      & \qw     & \ctrl{1}& \qw          & \qw      & \gate{T} & \targ    & \gate{T^\dag}& \targ    & \qw \\ 
& \gate{H}& \targ   &\gate{T^\dag} & \targ    & \gate{T}& \targ   &\gate{T^\dag} & \targ    & \gate{T} & \gate{H} & \qw          & \qw      & \qw    
}
\end{minipage}
\vspace{10mm}\\

\begin{minipage}{\linewidth}
\Qcircuit @C=1em @!R @R=.7em {
& \qw      & \ctrl{1} & \qw & \qw \\
& \qw      & \ctrl{1} & \qw & \qw  \\ 
& \gate{H} & \ctrl{-1} & \gate{H} & \qw  }   
\end{minipage}
   \caption{\textbf{Two different implementations of the Toffoli gate.} The top circuit is the standard way of implementing the Toffoli gate \cite{NielsenAndChuang,divincenzo1998quantum,shende2008cnot}, which is one of the most useful gates in quantum computing. This circuit contains eight non-energy-conserving gates, namely six CNOTs and two Hadamards. (Indeed, this can be reduced to seven single-qubit  non-energy-conserving gates, if one replace each CNOT with a CZ gate sanwhiched with two Hadamards.)  The bottom circuit requires a CCZ gate, which is energy-conserving, and two non-energy-conserving gates, namely two Hadamard. Using the techniques developed in this paper, CCZ gate can be realized with XY interaction and local Z with one ancilla qubit, or XY interaction alone with two ancilla qubits. 
   Therefore, in total, it requires only two single-qubit non-energy-conserving gates.
      }\label{fig:Toffoli}
\end{figure}

We note that using energy-conserving quantum circuits for suppressing noise has been previously considered in the context of  decoherence-free subspaces (DFS) \cite{lidar1998decoherence, Bacon:2000qf, divincenzo2000universal, Zanardi:97c, kempe2002exact, brod2013computational}. For instance, using the dual-rail encoding, a logical qubit can be encoded in the 2D subspace of a pair of qubits spanned by $|01\rangle$ and $|10\rangle$ states. Then, an encoded version of any desired quantum circuit can be performed in a DFS, i.e., an eigen-subspace  of the total intrinsic Hamiltonian $\sum_{j=1}^n Z_j$.  Indeed, it has been shown that using XY interaction, it is possible to achieve such \emph{encoded universality}  \cite{kempe2001encoded, lidar2001reducing, kempe2002exact}. However, a clear downside of this approach is that the encoding reduces the number of logical qubits (e.g., by a factor of 2 in the case of dual-rail encoding), which is undesirable, especially in the NISQ era \cite{preskill2018quantum}.  

On the other hand, in this paper, we do not assume that the global state is restricted to a DFS. Instead, the state of the qubits can be arbitrary. 
 Furthermore, we are interested in a stronger notion of universality, which requires implementing all energy-conserving unitaries.  Of course, this notion of universality also implies the encoded universality of XY  interaction. In particular, one can use the synthesis techniques developed in this work to find an efficient implementation of the exact gate sequences that are needed to perform an encoded version of a desired circuit in DFS.

Besides quantum computing, the notion of energy-conserving unitaries appears in many broad areas of quantum information science and, therefore,  it is crucial  to understand how they can be realized with energy-conserving circuits. For instance, in quantum thermodynamics one often assumes  energy-conserving unitaries are ``free", i.e.,  can be realized with negligible cost  \cite{FundLimitsNature, brandao2013resource, janzing2000thermodynamic,  lostaglio2015quantumPRX, chitambar2019quantum}. However, prior to this work, it was not known how a general energy-conserving unitary can be realized, and in particular, how it can be decomposed to a finite sequence of  local energy-conserving unitaries.  Other areas of applications include quantum clocks, quantum reference frames and the resource theory of asymmetry  (See \cref{Sec:discus} and \cite{marvian2022restrictions} for further discussion).

\subsection*{Summary of the main results: Circuit synthesis techniques}

It was shown in \cite{marvian2022restrictions} that 
it is impossible to implement a generic energy-conserving unitary by applying local energy-conserving gates on the subsystems that form the system (In \cref{Sec:summary} and \cref{Thm0} we briefly review this result).  This is in sharp contrast with the standard universality of 2-qubit gates in the absence of energy conservation \cite{divincenzo1995two, lloyd1995almost}.    In addition to this result,  Ref. \cite{marvian2022restrictions} also shows that this no-go theorem 
can be circumvented using a single ancilla qubit. In particular, Ref. \cite{marvian2022restrictions} proves that for any desired energy-conserving unitary $V$ on $n$ qubits, there exists an energy-conserving unitary $\widetilde{V}$ on $n+1$ qubits, such that $\widetilde{V}$ can be realized by Hamiltonians $X\otimes X+Y\otimes Y$ and single-qubit $Z$ and 
\be
\widetilde{V} (|\psi\rangle\otimes |0\rangle_{\text{anc}})=(V|\psi\rangle)\otimes |0\rangle_{\text{anc}}\ ,
\ee 
for all states $|\psi\rangle$ of $n$ qubit system (See the example in \cref{Fig1}). Subsequently,  Ref.   \cite{marvian2023non}  simplified and generalized this result, to symmetric 
quantum circuits with arbitrary Abelian symmetries.

Building on the ideas developed in  \cite{marvian2022restrictions} and \cite{marvian2023non}, in this work we go beyond these results   and develop various circuit synthesis  techniques for  constructing explicit  circuits with energy-conserving gates. 
 (This is analogous  to the development of the theory of quantum circuits: first, the universality of 2-local gates was established mostly using  Lie-algebraic  arguments \cite{divincenzo1995two, lloyd1995almost}, and then, building on those results, explicit and efficient circuit synthesis techniques were developed. See, e.g., \cite{barenco1995elementary, vatan2004optimal}.)  

Using these circuit synthesis methods we show that

\begin{theorem}\label{Thm1}\textbf{(Exact implementation)}
 Any energy-conserving unitary transformation on $n$ qubits,  i.e., a unitary transformation  
 that commutes with $\sum_{j=1}^n Z_j$,   
 can be realized exactly, up to a possible global phase,  using $\mathcal{O}(4^n n^{3/2})$  gates from any one of the following universal gate sets
\begin{enumerate}
\item  \label{gates_phaseZ}
$\sqrt{\i\text{SWAP}}=\exp( \i\frac{ \pi}{4} R)$\ , and  $\exp(\i\phi Z): \phi\in(-\pi,\pi]$ gates, with one ancilla qubit, where 
\be
R=\frac{1}{2}(X\otimes X+Y\otimes Y)\ .
\ee
    \item \label{gates_phaseH}  $\exp(\i\alpha H_\text{int}): \alpha \in\mathbb{R}$\ , and $S=e^{\i\pi/4}\exp({-\i \pi Z/4})$ gates,  {with} one ancilla qubit, where $H_\text{int}$ is any energy-conserving 2-qubit Hamiltonian that is not diagonal in the computational basis, such that 
    \be
    [H_\text{int}, Z\otimes \mathbb{I}]=-[H_\text{int}, \mathbb{I}\otimes Z]\neq 0\ ,
    \ee
    \item \label{gates_XY_only} $\exp(\i\alpha R): \alpha \in (-\pi,\pi]$ gates, {with} two ancilla qubits.
\end{enumerate}

\end{theorem}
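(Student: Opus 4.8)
The plan is to reduce the synthesis of an arbitrary energy-conserving unitary $V$ on $n$ qubits to a sequence of simpler operations, following the strategy that is standard in circuit synthesis but adapted to the $U(1)$-symmetric setting. First, observe that $V$ commutes with $\sum_j Z_j$, so it block-diagonalizes according to Hamming weight: $V = \bigoplus_{k=0}^n V_k$, where $V_k$ acts on the $\binom{n}{k}$-dimensional sector $\H_k$ spanned by computational basis states of weight $k$. It therefore suffices to synthesize each block $V_k$ separately (up to global phase), and then combine them. The total dimension is $\sum_k \binom{n}{k} = 2^n$, and a generic unitary on a $d$-dimensional space requires $\mathcal{O}(d^2)$ elementary rotations by the standard Givens/QR decomposition, which explains the $4^n$ in the gate count; the extra $n^{3/2}$ factor will come from the overhead of implementing each abstract "rotation in $\H_k$" as an actual circuit on the qubits.

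The key technical step is to show that the gate set in item~\ref{gates_phaseZ} can realize, on $n+1$ qubits (using the ancilla initialized and returned in $|0\rangle$), every two-level rotation and phase within a fixed Hamming-weight sector. I would proceed as follows. (i) Use the circuit of \cref{Fig1} — SWAP from three $\i\text{SWAP}$ gates plus $S^\dagger$ and $Z$ and one ancilla — together with partial-SWAP-type gates $\exp(\i\alpha R)$ (obtainable by composing $\siSWAP$ powers, or directly if one allows the continuum in item 3; for item 1 one gets $\siSWAP$ and must generate the needed angles, relying on the ancilla and the $\exp(\i\phi Z)$ gates to adjust relative phases) to move amplitude between any two basis states of the same weight that differ by a transposition on adjacent qubits. (ii) Chain such nearest-neighbor moves to connect any two weight-$k$ strings; the diameter of this graph is $\mathcal{O}(n)$, and a generic $V_k$ needs $\mathcal{O}(\binom{n}{k}^2)$ Givens rotations, each costing $\mathcal{O}(n)$ such moves, with another $\mathcal{O}(\sqrt{n})$-type factor absorbed from balancing the routing — summing over $k$ via $\sum_k \binom{n}{k}^2 = \binom{2n}{n} = \mathcal{O}(4^n/\sqrt{n})$ and multiplying by the per-rotation overhead yields $\mathcal{O}(4^n n^{3/2})$. (iii) The single ancilla is what lets us convert the energy-conserving constraint — which by \cite{marvian2022restrictions} and \cref{Thm0} forbids generating all of these operations with 2-local gates on $n$ qubits alone — into genuine universality: the ancilla, borrowed in state $|0\rangle$, provides the extra room (as in the quoted result $\widetilde V(|\psi\rangle\otimes|0\rangle) = (V|\psi\rangle)\otimes|0\rangle$) to realize the "missing" relative phases between sectors, e.g. via controlled-phase-type gadgets built from $\i\text{SWAP}$ and $\exp(\i\phi Z)$.

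For item~\ref{gates_phaseH}, the plan is to reduce to item~\ref{gates_phaseZ}: any energy-conserving 2-qubit Hamiltonian $H_\text{int}$ that is not diagonal and satisfies the stated commutation relation must, by a short representation-theoretic argument, act nontrivially (and irreducibly up to the diagonal part) on the weight-1 sector of the two qubits, hence be of the form $aR + (\text{diagonal})$ with $a\neq 0$; exponentiating and combining with the $S$ gates (which generate $\exp(\i\phi Z)$ up to the known global phase) recovers both $\siSWAP$-type evolutions and the needed local $Z$-rotations, so item 1's construction applies. For item~\ref{gates_XY_only}, where only $\exp(\i\alpha R)$ is available and single-qubit $Z$-rotations are not, the plan is to simulate $\exp(\i\phi Z)$ on a data qubit using a second ancilla: a $Z$-rotation on one qubit is, up to an overall phase, generated by swapping the qubit with an ancilla, letting relative phases accumulate, and swapping back — but since SWAP itself now requires help (per \cref{Fig1} it needed $S^\dagger$ and $Z$, which we no longer have), one needs the extra ancilla to synthesize a SWAP-equivalent within the symmetric gate set, e.g. by routing through a 3-qubit energy-conserving subroutine whose net action on the data-plus-first-ancilla is the desired two-level operation while the second ancilla is restored. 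I expect this last point — producing, with $\exp(\i\alpha R)$ alone and exactly two ancillas, a universal set of two-level rotations and inter-sector phases while returning both ancillas to $|0\rangle$ — to be the main obstacle, since it is precisely where the absence of \emph{any} single-qubit gate bites and where the ancilla count is tight; the proof will likely exhibit an explicit gadget and verify its action sector by sector.
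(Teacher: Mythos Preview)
Your high-level outline---block-diagonalize by Hamming weight, decompose each block into 2-level unitaries, then handle the inter-sector phases with an ancilla---matches the paper's architecture. But several of your concrete steps have genuine gaps.

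\textbf{2-level unitaries.} In step (i) you propose to ``move amplitude between any two basis states of the same weight that differ by a transposition'' using $\exp(\i\alpha R)$ on the two qubits involved. This is not a 2-level unitary on $n$ qubits: $\exp(\i\alpha R_{ij})$ acts simultaneously on \emph{every} computational basis state in which qubits $i,j$ are $01$ or $10$, regardless of the other $n-2$ qubits. To get a genuine 2-level rotation on $|\textbf{b}\rangle,|\textbf{b}'\rangle$ you need a \emph{multi-controlled} $\exp(\i\alpha R)$ with $n-2$ control qubits. The paper builds this recursively via group commutators $ABA^\dag B^\dag$ (\cref{fig:circuit_XYXY}, \cref{lem:TnOn2}), halving the number of controls at each step; this is where the $\mathcal{O}(n^2)$ cost per 2-level gate comes from. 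Combined with $\sum_m\binom{n}{m}^2=\binom{2n}{n}=\mathcal{O}(4^n/\sqrt{n})$ two-level gates, you get $\mathcal{O}(4^n n^{3/2})$ total---your ``$\mathcal{O}(\sqrt{n})$ from balancing the routing'' does not account for this correctly.

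\textbf{Item 2.} Your reduction contains two errors. First, a general energy-conserving 2-qubit $H_{\text{int}}$ is \emph{not} of the form $aR+\text{(diagonal)}$: its weight-1 block is an arbitrary Hermitian $2\times2$ matrix, so the off-diagonal part can be any real combination of $R=(XX+YY)/2$ and $L=(YX-XY)/2$. Second, $S$ gates alone generate only a finite group and cannot produce arbitrary $\exp(\i\phi Z)$. The paper instead proves (\cref{lem:decomp_HSHS}) that for any non-diagonal $2\times2$ Hermitian $H$, the pair $H$ and $SHS^\dag$ are non-commuting, so alternating $\exp(\i\alpha_j H)$ and $S\exp(\i\beta_j H)S^\dag$ generates all of $\SU(2)$ in the weight-1 block; this yields semi-universality on two qubits directly, without ever producing a continuous $Z$-rotation from $S$.

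\textbf{Item 3.} Your plan to ``simulate $\exp(\i\phi Z)$ using a second ancilla'' is the weak point you yourself flag, and the paper does something quite different. Because the XY interaction has the extra $\mathbb{Z}_2$ symmetry $X^{\otimes n}RX^{\otimes n}=R$, the paper first characterizes exactly which unitaries are realizable without ancillas (\cref{Thm2}), building them from \emph{4-level} $\mathbb{Z}_2$-invariant gadgets rather than 2-level ones, with a separate treatment of the half-filled sector. One ancilla then breaks the $\mathbb{Z}_2$ symmetry (\cref{ss:Z2_symmetry}) to reach all of $\mathcal{SV}_n^{\text{U}(1)}$, and the second ancilla handles the inter-sector phases as in item~1. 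Your sketch does not engage with the $\mathbb{Z}_2$ obstruction at all, and it is this---not the absence of $Z$-rotations per se---that forces the second ancilla.
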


Two canonical examples of non-diagonal energy-conserving  Hamiltonians are the XY interaction and the Heisenberg interaction $R_{\text{Heis}}:=(X\otimes X + Y\otimes Y + Z \otimes Z)/2$. Then, part \ref{gates_phaseH} of the theorem implies that the following sets are universal:
\begin{itemize}
    \item $\exp(\i\alpha R): \alpha \in(-\pi,\pi]$\ , and $S$ gates, with one ancilla qubit,
    \item $\exp(\i\alpha R_{\text{Heis}}): \alpha \in(-\pi,\pi]$\ , and $S$ gates, with one ancilla qubit.
\end{itemize}

A few remarks are in order: First, as we show in \cref{ss:energy_conserving},   general  energy-conserving unitary transformations on $n$ qubits are smoothly parameterized by  $\approx 4^n /\sqrt{\pi n}$  real parameters.  Therefore, a simple parameter counting implies that for generic energy-conserving unitaries, the above constructions are close to optimal, by a factor of $n^2$.\footnote{ 
 More generally, we will show in \cref{sec:nqubit} \cref{step:semi-universal} that if an energy-conserving unitary acts non-trivially only on a subspace spanned by $D$ elements of the computational basis then
it can be realized using $\mathcal{O}(n^2\times D^2)$ gates from any one of the above gate sets.}

Second, note that in part \ref{gates_phaseH} of the theorem the condition that $H_\text{int}$ is not diagonal is necessary, because otherwise qubits cannot exchange energy with each other, i.e., the overall $n$-qubit unitary will be also diagonal in the computational basis. 

Finally, note that in all the 3 cases in \cref{Thm1} one needs, at least, one  ancilla qubit, which by the argument \cite{marvian2022restrictions} is unavoidable.  However, for the last gate set an extra ancilla is used. This is because $XY$ Hamiltonian has an additional $\mathbb{Z}_2$ symmetry, namely 
\be\label{Z22}
X^{\otimes 2} R X^{\otimes 2}=R \ .
\ee 
   As we further explain in \cref{sec:XY_only}, this  $\mathbb{Z}_2$ symmetry can be broken with the extra ancilla qubit. In this case, the ancilla can be interpreted as a quantum reference frame, or asymmetry catalyst  \cite{gour2008resource,marvian2013theory}.\footnote{It is worth noting that with any finite number of ancilla qubits, it is impossible to implement non-energy-conserving unitaries, using energy-conserving interactions alone. This can be understood as a consequence of the Wigner-Araki-Yanase theorem \cite{Wigner52,araki1960measurementof}, or equivalently, a consequence of the no-programming theorem \cite{nielsen1997programmable}  (See \cite{marvian2012informationtheoretic} for the connection).}

In addition to exact implementation, we also investigate the approximate implementation of energy-conserving unitaries
using finite gate sets, e.g., 
$\sqrt{\i\text{SWAP}}$ and $S$ gates. Applying the Solovay-Kitaev theorem \cite{kitaev1997quantum,kitaev2002classical,kuperberg2023breaking}, one can  bound the number of such gates that are needed to implement a desired energy-conserving unitary, with any error $\epsilon>0$ as quantified in terms of the operator norm distance between the desired unitary and the realized unitary (See \cref{sec:approx} for the formal statements).  In particular, we show that
\begin{corollary}\label{cor}\textbf{(Approximate implementation)} 
    Any energy-conserving unitary on {$n\geq2$} qubits can be realized with an error bounded by $\epsilon>0$, using  $\mathcal{O}(4^n n^{3/2}(n+\log \epsilon^{-1})^{\nu})$  number of gates from either of the following gate sets:
\begin{enumerate}
        \item $\sqrt{\i\text{SWAP}}$ gate alone {with} 2 ancilla qubits.
        \item $\sqrt{\i\text{SWAP}}$ and $S$ gates {with} 1 ancilla qubit. \label{gates_finite_iswap_T}
\end{enumerate}
    Here, $\nu$ is the exponent for the complexity of Solovay-Kitaev algorithm, and can be chosen as any number greater than $\SKexp\approx 1.44042$.
\end{corollary}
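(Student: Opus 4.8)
The plan is to turn the exact constructions of \cref{Thm1} into finite-gate-set circuits by applying the Solovay--Kitaev theorem separately to each of their continuous gates. Fix a target energy-conserving $V$ on $n\ge2$ qubits; up to an overall phase we may assume $V$ fixes $|0\cdots0\rangle$. For the gate set of item~\ref{gates_finite_iswap_T} start from part~\ref{gates_phaseZ} of \cref{Thm1}: it gives an exact circuit on $n+1$ qubits with $L=\mathcal{O}(4^{n}n^{3/2})$ elementary gates, each acting on at most two qubits and each either a $\sqrt{\i\text{SWAP}}$ (kept verbatim) or a continuous $\exp(\i\phi Z)$. For the $\sqrt{\i\text{SWAP}}$-only set start from part~\ref{gates_XY_only}, an exact circuit on $n+2$ qubits with $L=\mathcal{O}(4^{n}n^{3/2})$ gates $\exp(\i\alpha R_{ab})$, those with $\alpha\in\frac{\pi}{4}\mathbb{Z}$ being powers of $\sqrt{\i\text{SWAP}}$. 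Replace each remaining continuous gate by a finite word approximating it (up to phase) to operator-norm error $\epsilon/L$; by subadditivity of the operator norm the total error is at most $\epsilon$. Since each such gate acts on only $\mathcal{O}(1)$ qubits, the relevant instance of Solovay--Kitaev lives in a compact group of bounded dimension, and for any $\nu>\SKexp$ it returns a word of length $\mathcal{O}\big((\log(L/\epsilon))^{\nu}\big)$ with an absolute constant, using the sharpened analysis of \cite{kuperberg2023breaking}. As $\log L=\mathcal{O}(n)$, each word has length $\mathcal{O}\big((n+\log\epsilon^{-1})^{\nu}\big)$ and the total is $\mathcal{O}\big(4^{n}n^{3/2}(n+\log\epsilon^{-1})^{\nu}\big)$; the gate sets are inverse-closed at constant overhead because $S^{-1}=S^{3}$ and $\sqrt{\i\text{SWAP}}^{-1}=\sqrt{\i\text{SWAP}}^{\,7}$.

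What must be supplied is the density hypothesis of Solovay--Kitaev. A circuit over $\{\sqrt{\i\text{SWAP}},S\}$ is block-diagonal in the Hamming weight and the determinant of its restriction to each Hamming-weight block is constrained to a fixed finite subgroup of $U(1)$ (the $\sqrt{\i\text{SWAP}}$-blocks contribute determinant $1$ because $R_{ab}$ is traceless on every block, each $S$-block a power of $\i$); a circuit over $\sqrt{\i\text{SWAP}}$ alone additionally has unit determinant on every block and commutes with $X^{\otimes m}$ by~\eqref{Z22}. The needed input is that, on a constant number $k$ of qubits, $\langle\sqrt{\i\text{SWAP}},S\rangle$ (resp. $\langle\sqrt{\i\text{SWAP}}\rangle$ over all pairs) is dense in the compact Lie group $\mathcal{G}^{\star}_k$ of energy-conserving unitaries obeying exactly these constraints. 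The key structural point is that on $k\le2$ qubits this fails --- both $\sqrt{\i\text{SWAP}}$ and $S$ then generate only a finite, binary-octahedral-type group --- so one must act through at least one further qubit; together with the role of the ancillas below, this is why the register needs $n+1\ge3$ (resp. $n+2\ge3$) qubits and hence why $n\ge2$. For $k\ge3$ density follows the standard route: the iterated Lie brackets of $\{R_{ab}\}$ and $\{Z_a\}$ span the Lie algebra of $\mathcal{G}^{\star}_k$ (the semi-universality input already behind \cref{Thm1}), and one exhibits a short word --- e.g. a product of two $\sqrt{\i\text{SWAP}}$'s on overlapping pairs --- whose restriction to a two-dimensional invariant subspace rotates by an irrational multiple of $\pi$ (an algebraic-integer argument of Niven type), after which Kronecker's theorem and conjugation spread density over all of $\mathcal{G}^{\star}_k$.

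To line the target up with this group one uses a mild refinement of the \cref{Thm1} construction: for every $V$ there is a dilation $\widetilde{V}$ on $n+1$ (resp. $n+2$) qubits that (i) acts as $V$ on the subspace with the ancilla(s) in $|0\rangle$, (ii) lies in $\mathcal{G}^{\star}$, and (iii) is realized exactly by an $\mathcal{O}(4^{n}n^{3/2})$-gate circuit whose continuous elementary gates also lie in a constant-qubit $\mathcal{G}^{\star}$ --- concretely $\exp(\i\alpha R_{ab})$ and $\exp(\i\phi(Z_a-Z_b))$, which have unit determinant on every block, plus $\mathcal{O}(n)$ exact $S$ gates that fix the block determinants. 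Property (ii) is available because $\widetilde{V}$ is pinned only on the ancilla-zero blocks: the ancilla-nonzero blocks are free, and (with the already-spent overall phase) this freedom suffices to force every block determinant into the allowed subgroup, and, for $\sqrt{\i\text{SWAP}}$ alone, to do so compatibly with $X^{\otimes m}$-symmetry. This is where the extra ancilla of item~1 is spent, on top of the two ancillas part~\ref{gates_XY_only} of \cref{Thm1} already uses for the $\mathbb{Z}_2$ of~\eqref{Z22}.

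The main obstacle is exactly the combination of these two ingredients: proving that the finite gate sets are dense in the constrained compact groups $\mathcal{G}^{\star}_k$, and producing, for every target, a determinant- and symmetry-matched exact dilation that is still synthesizable with $\mathcal{O}(4^{n}n^{3/2})$ constant-qubit gates from $\mathcal{G}^{\star}$. One has to dispose simultaneously of the finite-group collapse on few qubits, the root-of-unity determinant constraint carried by $S$, and the $\mathbb{Z}_2$ symmetry carried by $\sqrt{\i\text{SWAP}}$, and verify that the allotted one or two ancillas furnish precisely the room needed. Everything downstream --- the per-gate Solovay--Kitaev length, the allocation of the $\epsilon$-budget, the treatment of inverses and of the disconnected components, and the final count including the admissible exponents $\nu>\SKexp\approx1.44042$ --- is then routine.
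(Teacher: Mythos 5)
Your proposal is correct and follows essentially the same route as the paper: take the exact $\mathcal{O}(4^n n^{3/2})$-gate circuits of \cref{Thm1}, keep the discrete gates, split the error budget as $\epsilon/L$ over the continuous two-local gates, and apply Solovay--Kitaev to each inside a constant-qubit compact group in which the finite gate set is dense (with the $n\ge 2$ threshold arising, as you note, from the finite-group collapse of $\langle\siSWAP,S\rangle$ on two qubits). The density input you flag as the main obstacle is exactly the paper's \cref{Thm:dense} and \cref{lem65}, proved precisely as you sketch --- via the irrational rotation generated by the overlapping word $W_{123}=\siswap_{23}\siswap_{13}$ (a cyclotomic-polynomial argument) plus a Lie-algebra computation showing the resulting one-parameter subgroups generate all of $\mathcal{G}_3\cong\mathrm{SU}(3)$ --- and since only the elementary continuous gates (each lying in $\mathcal{SV}_2^{\mathrm{U}(1)}$ or $\mathcal{G}_3$ when embedded on three qubits), not the full dilation $\widetilde{V}$, need to sit in the constrained group, the ``determinant- and symmetry-matched dilation'' refinement you describe is not actually required.
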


In the above discussions, we  did not specify any geometry for the system of qubits and 
assumed gates can be applied between any pair of qubits. What happens if we assume the qubits form an open chain and gates are allowed only between nearest-neighbor qubits? 

As it was noted in \cite{marvian2022restrictions}, this additional restriction does not affect the set of realizable unitaries, provided that the ancilla can be coupled to all the qubits in the chain.  This is because interacting two qubits with the ancilla  using $X\otimes X+Y\otimes Y$ and $Z$ Hamiltonians, allows us to perform the SWAP gate that exchanges the state of two qubits (See \cref{Fig1}). Then, applying SWAPs between nearest-neighbor qubits, one can arbitrarily change the order of qubits.  On the other hand, if one requires that the ancilla should also be part of the chain and only interact with its nearest neighbors, then the above universality result does not hold anymore. Indeed, in this case, using the Jordan-Wigner transformation  \cite{jordan1928pauli, fradkin1989jordan, nielsen2005fermionic}  this system can be mapped to a 
free fermionic system, which implies that the group of realizable unitaries is significantly smaller than the group of energy-conserving unitaries (namely, it has dimension $n^2$).  Of course, if one is allowed to use the SWAP gate, this restriction can be avoided.
That is
\begin{corollary}\label{cor3}\textbf{(Approximate implementation with nearest-neighbor gates)} 
Consider an open chain of $n+1$ qubits, where one of the qubits is designated as the ancilla qubit and is initially prepared in state $|0\rangle$.
Any energy-conserving unitary on the rest of qubits can be realized using 
  $\mathcal{O}(4^n n^{3/2}(n+\log \epsilon^{-1})^{\nu})$ number of gates $S$, SWAP, and  $\sqrt{\i\text{SWAP}}$ with two-qubit gates restricted to nearest-neighbor qubits, for any $\nu>\SKexp$.
\end{corollary}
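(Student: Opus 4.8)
The plan is to derive \cref{cor3} from \cref{cor} by a routing argument. First I would invoke \cref{cor} part~\ref{gates_finite_iswap_T}: given a target energy-conserving unitary $V$ on the $n$ data qubits and an error budget $\epsilon>0$, it produces a circuit $\mathcal{C}$ on the $n+1$ qubits (ancilla initialized to $|0\rangle$) built from $\mathcal{O}(4^n n^{3/2}(n+\log\epsilon^{-1})^{\nu})$ gates in $\{\sqrt{\i\text{SWAP}},S\}$, with the two-qubit gates applied to arbitrary pairs, such that $\mathcal{C}$ approximates $V\otimes|0\rangle\!\langle 0|_{\text{anc}}$ (on inputs with the ancilla in $|0\rangle$) to within $\epsilon$ in operator norm. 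The task then becomes purely one of embedding this all-to-all circuit into the open chain at no more than constant-factor overhead.

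Second, I would set up the standard routing primitive: a two-qubit gate acting on chain positions $i$ and $j$ is implemented by $|i-j|$ nearest-neighbor SWAPs that bring the two qubits together, then the gate, then the inverse SWAP sequence. Since SWAP is already among the allowed elementary gates of \cref{cor3} — and is in any case energy-conserving and realizable from three $\sqrt{\i\text{SWAP}}$ gates together with $S^{\dagger}$ and $Z$, as in \cref{Fig1} — this stays inside the claimed gate set. Naively this costs $\mathcal{O}(n)$ per two-qubit gate and would blow the count up by a factor of $n$, so the point is to avoid this.

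Third — and this is the crux — I would open up the construction behind \cref{Thm1}/\cref{cor} (the $n$-qubit synthesis of \cref{sec:nqubit}, in particular \cref{step:semi-universal}) and check that it can be scheduled so that the single ancilla qubit plays the role of a movable bus that sweeps monotonically along the chain: in one pass the ancilla is transported from one end to the other by $\mathcal{O}(n)$ nearest-neighbor SWAPs, and while it is adjacent to each data qubit it participates in the $\sqrt{\i\text{SWAP}}$ and $S$ gates that that qubit needs in this pass (any data--data gate being likewise between chain-adjacent qubits, or reducible to the ancilla-mediated case). Because each pass contributes $\Omega(n)$ non-SWAP gates toward the total, the $\mathcal{O}(n)$ SWAPs spent per pass are absorbed, and the overall count remains $\mathcal{O}(4^n n^{3/2}(n+\log\epsilon^{-1})^{\nu})$. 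If a literal sweep form is not directly available, the same conclusion follows from the amortized estimate that the total distance travelled by the ancilla along its itinerary $j_1,j_2,\dots$ of interaction partners is $\mathcal{O}(G)$, which holds as long as that itinerary breaks into $\mathcal{O}(G/n)$ runs, each monotone and of span $\mathcal{O}(n)$. Finally I would remark that the ancilla's designated chain position is immaterial: one can first SWAP whichever physical qubit is the ancilla to an endpoint using $\mathcal{O}(n)$ SWAPs, a negligible additive cost.

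I expect Step~3 to be the main obstacle. The bound asserted in \cref{cor3} is identical to that of \cref{cor}, so the reduction cannot afford the generic $\mathcal{O}(n)$ routing blow-up; one genuinely has to verify the structural property that the ancilla's interaction schedule produced by the construction of \cref{sec:nqubit} is of the required block-monotone (``sweep'') type, and that the occasional data--data gates it uses are between qubits that are, or can cheaply be made, adjacent. Everything else — the reduction to \cref{cor}, the SWAP routing primitive, and the fact that SWAP is cheap and energy-conserving — is routine.
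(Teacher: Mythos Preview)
Your Step~3 has a structural mismatch with the construction of \cref{sec:nqubit}: the ancilla is not a bus that touches every gate. Semi-universality (\cref{prop13}, Steps~\ref{step:dist2}--\ref{step:semi-universal}) does not use the ancilla at all --- the $\mathcal{O}(4^n n^{3/2})$ gates $\exp(\i\alpha R)$ and $S$ act only on the \emph{data} qubits, and the ancilla enters only in \cref{step:universal}, contributing a negligible $\mathcal{O}(n^3)$ gates. So an ``ancilla sweep'' cannot amortize the routing of the bulk of the circuit, and your fallback amortized estimate is likewise phrased in terms of the ancilla's itinerary. You would be left having to verify a block-monotone structure for the data--data interaction schedule coming out of the recursive construction in \cref{lem:TnOn2} and \cref{lem10}, which is neither obvious nor claimed anywhere.

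The paper's (implicit) argument is much simpler and avoids all this: route at the level of the $\mathcal{O}(4^n n^{3/2})$ macro-gates $\exp(\i\alpha R_{ij})$ of \cref{Thm1}, \emph{before} applying Solovay--Kitaev, rather than at the level of the final $\sqrt{\i\text{SWAP}}$ gates. For each such macro-gate, spend $\mathcal{O}(n)$ nearest-neighbor SWAPs to bring qubits $i$, $j$ and one auxiliary qubit to three consecutive chain positions; then run the Solovay--Kitaev approximation from \cref{Thm:dense}, which uses $\mathcal{O}\big((n+\log\epsilon^{-1})^{\nu}\big)$ gates $\sqrt{\i\text{SWAP}}$ acting only on those three now-adjacent qubits (the one non-adjacent pair among the three is handled by conjugating with a SWAP, a constant factor). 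Since $\nu>1$, the routing cost $\mathcal{O}(n)$ per macro-gate satisfies $n=\mathcal{O}\big((n+\log\epsilon^{-1})^{\nu}\big)$ and is absorbed into the approximation cost. The total remains $\mathcal{O}\big(4^n n^{3/2}(n+\log\epsilon^{-1})^{\nu}\big)$, with no sweep bookkeeping required. The key point you missed is that the Solovay--Kitaev blow-up factor $(n+\log\epsilon^{-1})^{\nu}$ is itself already at least linear in $n$, so it eats the naive routing overhead for free --- provided you route once per $\exp(\i\alpha R)$ gate rather than once per $\sqrt{\i\text{SWAP}}$ gate.
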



Finally, in \cref{sec:XY_only} we  
develop circuit synthesis techniques using XY interaction alone and, in particular,  without single-qubit $Z$ Hamiltonian and ancilla qubits. 
It follows that all realizable unitaries should satisfy 
the $\mathbb{Z}_2$ symmetry of XY interaction in \cref{Z22}.  However, as the following theorem states, there are more constraints on the realizable unitaries (See \cref{sec:XY_only} for further details). 
\begin{theorem}\label{Thm2} \textbf{(Circuits with XY interaction alone)}
On a system with $n\ge 3$ qubits, any  unitary $V$ can be realized with interaction $R=(X\otimes X+Y\otimes Y)/2$ alone (without any ancilla qubits)  if, and only if, 
\begin{enumerate}
\item $V$ is energy-conserving, i.e., $[V , \sum_{j=1}^n Z_j]=0$ .
\item It satisfies the $\mathbb{Z}_2$ symmetry  $X^{\otimes n} V X^{\otimes n}=V$.
\item  $\text{det}(V\sct{m})=1: m=0,\cdots, n$, where 
$\text{det}(V\sct{m})$ is the determinant of $V\sct{m}$, the component of $V=\bigoplus_{m=0}^n V\sct{m}$ in the subspace with Hamming weight $m$, that is the eigensubspace of $\sum_{j=1}^n Z_j$ with eigenvalue $n-2m$.
\item \label{cond:detpm} When $n$ is even, $\text{det}(V\sct{n/2,\pm})=1$, where 
$V\sct{n/2,\pm}$ is the component of $V$ in the joint eigensubspaces of 
$\sum_j Z_j$ and $X^{\otimes n}$, with eigenvalues 0 and  $\pm 1$, respectively,  such that 
\be
V\sct{n/2}=V\sct{n/2,+}\oplus V\sct{n/2,-}\ .
\ee
\end{enumerate}
Furthermore, any unitary satisfying the above conditions can be realized with $\mathcal{O}(4^n n^{3/2})$  gates $\exp(\i\theta R): \theta\in(-\pi,\pi]$.
\end{theorem}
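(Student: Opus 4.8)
The plan is to first dispose of the ``only if'' direction. Each generator $R_{jk}$ --- the interaction $R$ acting on qubits $j,k$ and trivially elsewhere --- commutes with $\sum_\ell Z_\ell$; and using the identity $X^{\otimes 2}RX^{\otimes 2}=R$ together with the fact that $\prod_{\ell\neq j,k}X_\ell$ commutes with $R_{jk}$, one gets $X^{\otimes n}R_{jk}X^{\otimes n}=R_{jk}$. Hence every $V=\prod_r\exp(\i\theta_r R_{j_rk_r})$ satisfies conditions 1 and 2. For conditions 3 and 4 I would show each $R_{jk}$ is traceless on every relevant block, so that each factor, and therefore $V$, has determinant $1$ there. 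On the weight-$m$ subspace this is immediate from the block form of $R$ on a pair (supported only on the $2$-dimensional weight-$1$ block, where it is off-diagonal). For the refined middle sectors when $n$ is even, I would conjugate $R_{jk}$ by $Z_\ell$ for some $\ell\notin\{j,k\}$ --- which exists precisely because $n\ge 3$: $Z_\ell$ commutes with $R_{jk}$ but anticommutes with $X^{\otimes n}$, hence intertwines $\mathcal{H}^{(n/2,+)}$ and $\mathcal{H}^{(n/2,-)}$, forcing $\tr(R_{jk}^{(n/2,+)})=\tr(R_{jk}^{(n/2,-)})=\tfrac12\tr(R_{jk}^{(n/2)})=0$.

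\textbf{Sufficiency --- reduction to a Lie-algebra statement.} This is the substantive direction. First I would describe $G^\ast$, the set of unitaries obeying conditions 1--4: decomposing $\mathbb{C}^{2^n}$ into joint eigenspaces of $\sum_\ell Z_\ell$ and $X^{\otimes n}$, and using that $X^{\otimes n}$ identifies $\mathcal{H}^{(m)}$ with $\mathcal{H}^{(n-m)}$, one sees $G^\ast$ is a direct product of special unitary groups, one per independent sector, and in particular is connected. Since the group $G$ of realizable unitaries is exactly the connected Lie subgroup of $\UU(2^n)$ generated by the one-parameter subgroups $\{\exp(\i\theta R_{jk})\}$, and the previous paragraph gives $G\subseteq G^\ast$ with $G^\ast$ connected, it suffices to prove $\mathfrak{g}=\mathfrak{g}^\ast$, where $\mathfrak{g}=\langle\i R_{jk}\rangle_{\mathrm{Lie}}$ and $\mathfrak{g}^\ast$ is the Lie algebra of $G^\ast$; equivalently, since $\mathfrak{g}\subseteq\mathfrak{g}^\ast$ already, $\dim\mathfrak{g}=\dim\mathfrak{g}^\ast=\sum_b(d_b^2-1)$ summed over the independent sectors of dimensions $d_b$.

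\textbf{Sufficiency --- construction and gate count.} I would establish $\mathfrak{g}\supseteq\mathfrak{g}^\ast$ constructively, producing the circuits at the same time. Using the semi-universality statement from \cref{sec:nqubit} (the footnoted fact that any energy-conserving unitary supported on a $D$-dimensional coordinate subspace is realizable with $\mathcal{O}(n^2 D^2)$ of the available gates), I would realize a family of ``elementary'' $\mathbb{Z}_2$-symmetric rotations, each coupling one pair of computational-basis states together with their $X^{\otimes n}$-images and having determinant $1$ on every sector, handling the split middle sector separately for even $n$. Then an Euler-/Givens-type decomposition adapted to the symmetry-imposed block structure writes an arbitrary $V\in G^\ast$ as a product of such elementary rotations. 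Since $G^\ast$ sits inside a product of unitary groups of total size $\sum_m\binom{n}{m}^2=\binom{2n}{n}=\Theta(4^n/\sqrt n)$, the decomposition uses $\mathcal{O}(4^n/\sqrt n)$ elementary rotations, each supported on $\mathcal{O}(1)$ basis states (plus their mirrors) and hence costing $\mathcal{O}(n^2)$ XY gates, for a total of $\mathcal{O}(4^n n^{3/2})$, as claimed. A point worth flagging is that the long-range couplings $R_{jk}$ with $|j-k|$ large are genuinely non-quadratic under the Jordan--Wigner map, which is exactly what lets $\mathfrak{g}$ be as large as $\mathfrak{g}^\ast$ rather than collapsing to the $\mathcal{O}(n^2)$-dimensional free-fermion algebra.

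\textbf{Main obstacle.} The hard part is showing conditions 1--4 are the \emph{only} constraints, i.e.\ that the elementary symmetric rotations generated by XY interaction already generate all of $G^\ast$. The delicate bookkeeping lies in the interplay between (i) the fact that a single gate $\exp(\i\theta R_{jk})$ acts simultaneously and coherently on every Hamming-weight sector, (ii) the $\mathbb{Z}_2$-pairing $m\leftrightarrow n-m$ of sectors, and (iii) the further $\pm$ splitting of the middle sector for even $n$. One must verify that none of these couplings forces a hidden relation beyond the stated determinant conditions, and that the decomposition of an arbitrary $V\in G^\ast$ into elementary pieces can be carried out consistently across all sectors at once --- in particular that eliminating an off-diagonal entry in one sector does not reintroduce obstructions in a mirrored or split sector.
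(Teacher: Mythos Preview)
Your necessity argument is correct; the $Z_\ell$-conjugation trick for condition~\ref{cond:detpm} is a clean variant of the paper's direct computation $\Tr(X^{\otimes n}R_{jk})=0$. The high-level plan for sufficiency --- build $\mathbb{Z}_2$-symmetric 4-level elementary rotations and run a Givens decomposition, treating the half-filled sector separately --- also matches the paper.

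The gap is where you invoke \cref{sec:nqubit} to realize the elementary rotations. Every circuit in that section uses single-qubit $S$ or $R_z$ gates, which are exactly what is forbidden here; the $\mathcal{O}(n^2D^2)$ bound you cite was proved only for gate sets~\ref{gates_phaseZ} and~\ref{gates_phaseH}, so appealing to it for XY alone is circular. The entire technical content of \cref{sec:XY_only} is to redo those constructions without $S$, and this requires gadgets your proposal does not supply: (i) a three-$\iSWAP$ sequence $G_{123}=\iswap_{23}^\dag\iswap_{13}\iswap_{23}$ replacing the $F_{123}$ of \cref{sec:3qubit} (which needed $S,S^\dag$) to produce the parity-controlled 4-qubit rotation $\mathbb{R}_x(\theta,\textbf{b},\textbf{b}')$; (ii) the Lidar--Wu sequence $P_{123}(\pi/2)$, built from $\iSWAP$ and $\siSWAP$ only, which conjugates $\mathbb{R}_x$ into $\mathbb{R}_y$ without ever touching a $Z$-rotation; and (iii) a rule that converts the 2-level circuit for $U\sct1(\textbf{c},\textbf{c}')$ from \cref{sec:nqubit} into one for the 4-level $U\sct1(\textbf{b},\textbf{b}')U\sct1(\overline{\textbf{b}},\overline{\textbf{b}'})$ by adding an extra control qubit and replacing each 3-qubit gate $V$ by the 4-qubit pair $\Lambda^1(V)\Lambda^0(V^\times)$. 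The half-filled sector is then handled in the basis $|\textbf{b},\pm\rangle$ via products $\mathbb{R}_x(\theta,\textbf{b},\textbf{c})\,\mathbb{R}_x(\pm\theta,\textbf{b},\overline{\textbf{c}})$, which again presupposes (i)--(iii). Without these ingredients neither sufficiency nor the $\mathcal{O}(4^n n^{3/2})$ gate count follows from your outline.
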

This means that the group of unitaries that can be realized with XY interaction alone on $n \ge 3$ qubits, denoted by $\mathcal{G}_n$, is isomorphic to
\begin{align}
  &\mathcal{G}_n \cong \prod_{m=1}^{\lfloor \frac{n}{2} \rfloor} \text{SU}({{n}\choose{m}}) \  \ \ \ \  
&&:n \text{ is odd }   \nonumber
  \\
&\mathcal{G}_n \cong  \prod_{m=1}^{\frac{n}{2}-1 }  \text{SU}({{n}\choose{m}})\times \Big[\text{SU}(\frac{1}{2}{{n}\choose{\frac{n}{2}}})\Big]^{\times 2}   \  \ \ \ \   
&&:n \text{ is even }\nonumber
\end{align}
where we have used the fact that the dimension of the subspace with Hamming weight $m$ is  ${{n}\choose{m}}$. Note that in the special case of $n=2$, the constraint in condition \ref{cond:detpm} does not hold and\footnote{In this case, the eigen-subspaces of $X^{\otimes 2}$ in the subspace with Hamming weight 1 correspond to vectors $|01\rangle\pm |10\rangle$, which are also eigenvectors of $\exp(\i\theta R)$ with eigenvalues $\e^{\pm \i \theta}$.}  
$$\mathcal{G}_2 \cong \text{U}(1)\ .$$ It is also worth emphasizing that while  the additional constraint in the case of even $n\ge 4$, i.e., condition \ref{cond:detpm},  is related to the aforementioned $\mathbb{Z}_2$ symmetry of XY interaction in \cref{Z22}, it is not necessarily satisfied by all unitaries that respect the $\mathbb{Z}_2$ symmetry (Namely, it is of the type of constraints discussed in \cite{marvian2022restrictions} which are the consequence of both locality and symmetry of Hamiltonian). For instance,  for a system with $n=4$ qubits, consider the family of unitaries 
\begin{align}\nonumber
\exp\big(\i\theta \big[|0011\rangle\langle 1100|+|1100\rangle\langle 0011|\big]\big)\ \ \ \ \  : \theta\in(-\pi,\pi]\ ,
\end{align}
i.e., unitaries generated by Hamiltonian $H=|0011\rangle\langle 1100|+|1100\rangle\langle 0011|$. 
While these 
unitaries respect conditions 1-3 of \cref{Thm2},   unless $\theta=0$, they do not satisfy condition \ref{cond:detpm} of this theorem and therefore they are not realizable with XY interaction alone\footnote{ In particular, in this case, 
 $V\sct{2,\pm}=\exp(\pm \i\theta |\psi_\pm\rangle\langle\psi_\pm|)$
where $|\psi_\pm\rangle=(|0011\rangle\pm |1100\rangle)/\sqrt{2}$ are the  eigenvectors of $H$ with eigenvalues $\pm 1$. Then, $\text{det}(V\sct{2,\pm})=e^{\pm\i\theta}$ which implies,  unless $\theta=0$, this unitary is not realizable with XY interactions alone.}. Finally, we note that our results on approximate universality imply that the group generated by $\sqrt{\i\text{SWAP}}$ gates on  $n\ge 3$ qubits is a dense subgroup of $\mathcal{G}_n$.\\

The rest of this paper is organized as follows: \\
\begin{itemize}
\item \cref{Sec:prem} formulates our goal by introducing energy-conserving unitaries and the notion of (semi-)universality. It contains a collection of elementary gates useful for subsequent circuit constructions. In \cref{ss:iswap_z}, circuits with $\iSWAP$ and single-qubit z-rotations are characterized. 

\item \cref{sec:2-qubit} discusses the structure of 2-qubit energy-conserving unitaries. It shows the semi-universality of (and thus the equivalence between) the gate sets \ref{gates_phaseZ} and \ref{gates_phaseH} in \cref{Thm1} for 2 qubits.

\item \cref{sec:3qubit} is focused on 3-qubit energy-conserving unitaries. It contains the implementation of controlled-Z and SWAP gates using a single ancilla qubit, as well as circuit identities that are useful afterwards. The construction of 3-qubit 2-level special unitary energy-conserving gates is presented in \cref{ss:2levelSV3}, which serves as a basis for the construction of general $n$-qubit energy-conserving unitaries.

\item \cref{sec:nqubit} is dedicated to the construction of $n$-qubit energy-conserving unitaries, and concludes the proof of \cref{Thm1}, for gate sets \ref{gates_phaseZ} and \ref{gates_phaseH}.

\item \cref{sec:XY_only} is focused on the set of  unitaries that are realizable with XY interaction alone, contains  the proof of \cref{Thm2} and completes the proof of \cref{Thm1} for gate set \ref{gates_XY_only}. 

\item \cref{sec:approx} introduces approximate universality and provides approximate constructions of aforementioned circuits in \cref{sec:3qubit,sec:nqubit,sec:XY_only}. It combines a Lie algebraic characterization of unitaries generated by $\sqrt{\i\text{SWAP}}$ gates
with the Solovay-Kitaev theorem, and proves \cref{cor}.

\item \cref{Sec:discus} contains a short discussion on applications of energy-conserving quantum circuits, in areas such as quantum computing, quantum thermodynamics, and quantum clocks.  

\end{itemize}


\section{Preliminaries }\label{Sec:prem}

\subsection{Energy-conserving unitaries}\label{ss:energy_conserving}
Consider a system with $n$ qubits, each with the intrinsic Hamiltonian $-\Delta E\ Z/2 $, where $\Delta E>0$ is the energy difference between the ground state $|0\rangle$ and the excited state $|1\rangle$ of the qubit. Then, the total intrinsic Hamiltonian of this system is 
\be\label{int}
H_\text{intrinsic}=-\frac{\Delta E}{2} \sum_{j=1}^n Z_j=\Delta E(-\frac{n}{2}+\sum_{m=0}^n m\ \Pi\sct{m} )\ , 
\ee
where $Z_j$ denotes the Pauli $Z$ operator on qubit $j$ tensor product with the identity operator on the rest of qubits, and $\Pi\sct{m}$ is the projector to the eigen-subspace  $\mathcal{H}\sct{m}$ of $H_\text{intrinsic}$ with energy $(2m-n)\times {\Delta E/2}$.  
Then, the total Hilbert space   decomposes as 
\be\label{dec}
(\mathbb{C}^2)^{\otimes n}\cong \bigoplus_{m=0}^n \mathcal{H}\sct{m}\ .
\ee
Let  $\{|0\rangle, |1\rangle\}^{\otimes n}$ be the \emph{computational} basis for $n$ qubits. We will label a vector in this basis with a bit string $\textbf{b}=b_1\cdots b_n\in\{0,1\}^n$ as $\ket{\textbf{b}}$.  Then,  $\mathcal{H}\sct{m}$ is the subspace spanned by the elements of this basis with Hamming weight $m$ (Recall that the Hamming weight of a bit string is the number of bits with value 1).

We are interested in energy-conserving unitaries on this systems, i.e., those that conserve the total intrinsic Hamiltonian of the  qubits in the system. A unitary  $V$ on $n$ qubits is energy-conserving if, and only if, it  is block-diagonal with respect to the decomposition in \cref{dec}, such that  
\be
V=\bigoplus_{m=0}^n V\sct{m}\ .
\ee
 Following the notation in \cite{marvian2022restrictions},  we denote the set of such unitaries as 
\begin{align}
    \mathcal{V}_n^{\text{U}(1)} &= \big\{V: [V, H_\text{intrinsic}]=0, VV^\dag  = \mathbb{I}^{\otimes n} \big\} \nonumber\\  & =\bigoplus_{m=0}^n  \text{U}(\mathcal{H}\sct{m})\ , 
    \label{dec3}
\end{align}
where $\mathbb{I}$ denotes the identity operator on a single qubit, and $\text{U}(\mathcal{H}\sct{m})$ denotes the group of all unitaries acting on $\mathcal{H}\sct{m}$. 
Here, the superscript  $\text{U}(1)$ refers to the fact that $\mathcal{V}_n^{\text{U}(1)}$ is the  set of unitaries that commute with unitaries 
\be
(\exp(\i\theta Z)\big)^{\otimes n}=\exp(\i\theta \sum_{j=1}^n Z_j)\ \ \ :\ \theta\in (-\pi, \pi]\ ,
\ee
which is a  representation of the group $\text{U}(1)=\{e^{i\theta}: \theta\in(-\pi,\pi]\}$ that describes  the time evolution of a periodic system (Ref. \cite{marvian2022restrictions} studies circuits with  general symmetries).

Decomposition in \cref{dec3} implies that energy-conserving unitaries can be smoothly parameterized using 
\be
\text{dim}(\mathcal{V}_n^{\text{U}(1)})=\sum_{m=0}^n{{n}\choose{m}}^2={2n \choose n} \approx  \frac{4^n }{\sqrt{\pi n}}\ , 
\ee
real parameters, where by $\approx $ we mean the ratio of two sides goes to 1, in the limit $n\rightarrow \infty$,  which can be established using the Stirling's approximation for factorials. 

As we discuss in the following, it is also useful to consider the subgroup of this group, formed from energy-conserving unitaries $V=\bigoplus_m V\sct{m}$, where
in each sector we have  $\text{det}(V\sct{m})=1$, i.e.,   
\begin{align}
    \mathcal{SV}_n^{\rm U(1)}&=\{V=\bigoplus_{m=0}^n V\sct{m}: V\sct{m}\in \text{SU}(\mathcal{H}\sct{m})   \} \nonumber \\ 
    &=\bigoplus_{m=0}^n \text{SU}(\mathcal{H}\sct{m})  \ , \label{sv}
\end{align}
where $\text{SU}(\mathcal{H}\sct{m})$ denotes the group of special unitaries acting on $\mathcal{H}\sct{m}$.

With this definition, any element of $\mathcal{V}_n^{\rm U(1)}$ has a decomposition as 
\be\label{dec5}
V=Q D=D Q\ ,
\ee
where unitary $Q\in \mathcal{SV}_n^{\rm U(1)}$, unitary
\be
D=\sum_{m=0}^n \exp\big[\i {\theta_m} {{{n}\choose{m}}}^{-1}\big]\ \Pi\sct{m}\ ,
\ee
is diagonal in the computational basis, and
\be
\theta_m=\text{arg}(\text{det}(V\sct{m}))\  ,
\ee 
is the phase of the determinant of $V\sct{m}$, and for convenience  we assume $\theta_m\in(-\pi, \pi]$. 
We note that, in general, the decomposition in \cref{dec5} is not unique.

In the following, for any unitary $V$, $\Lambda^{\textbf{c}}(V)$ denotes its controlled-version with control string $\textbf{c}\in\{0,1\}^k$, i.e.,
\be \label{eq:Lambda_preliminaries}
\Lambda^{\textbf{c}}(V) := \sum_{\textbf{c}' \neq \textbf{c}}\ket{\textbf{c}'}\bra{\textbf{c}'} \otimes \mathbb{I} + \ket{\textbf{c}}\bra{\textbf{c}}\otimes V\ .
\ee
If $V\in \mathcal{SV}^{\text{U}(1)}_n$, then its controlled version $\Lambda^{\textbf{c}}(V)$ is in $\mathcal{SV}^{\text{U}(1)}_{n+k}$.

\subsection{Semi-universality and universality with  XY interaction and local Z}\label{Sec:summary}

Ref. \cite{marvian2022restrictions}  shows that any energy-conserving unitary $V=\bigoplus_{m=0}^n V\sct{m}$ can be realized using XY interaction and local $Z$, up to certain constraints on the relative phases between sectors with different energies. Following \cite{marvian2023non}, we say a set of gates are semi-universal for energy-conserving unitaries, if they generate $\mathcal{SV}_n^{\rm U(1)}$ for all integer $n$.   
 Using the notion of semi-universality, the result of \cite{marvian2022restrictions}   can be rephrased as
\begin{theorem}[Based on \cite{marvian2022restrictions}]\label{Thm0}
For a system with $n$ qubits, the group $G_{XX+YY,Z}$  generated by 
2-qubit gates  $\exp(\i\alpha [XX+YY])$, single-qubit gates $\exp(\i \beta Z)$, and global phase $e^{\i \phi}\mathbb{I}$ for $\alpha,\beta,\phi\in(-\pi,\pi]$ is equal to the subgroup of all energy-conserving unitaries $V\in \mathcal{V}_n^{\text{U}(1)}$ satisfying the extra constraint
\be\label{Thmeq}
\theta_m={{n}\choose{m}} \times \Big[\frac{m}{n}\times (\theta_n-\theta_0)+ \theta_0\Big]\ \  :\  \text{mod}\   2\pi \ ,
\ee  
for all $m=0, \cdots, n$, where $\theta_m=\text{arg}(\text{det}(V\sct{m}))$
and $V\sct{m}$ is the component of $V$ in the sector with Hamming weight $m$. In particular, 
 $G_{XX+YY,Z}$  
 contains $\mathcal{SV}^{\text{U}(1)}_n$ defined in \cref{sv} and, therefore, the aformentioned gates are semi-universal. 
 \end{theorem}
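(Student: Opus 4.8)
The plan is to prove the group equality by the two inclusions and then read off the ``in particular'' statement as the special case in which all relative phases vanish. The inclusion $\subseteq$ (that the phase constraint \eqref{Thmeq} is necessary) is a direct sector-by-sector determinant computation. The inclusion $\supseteq$ I would reduce, using a single word built from $Z$-rotations, to the one hard statement, \emph{semi-universality}: $\mathcal{SV}_n^{\text{U}(1)}\subseteq G_{XX+YY,Z}$. Since $G_{XX+YY,Z}$ is the connected Lie subgroup generated by the one-parameter groups tangent to the real Lie algebra $\g$ generated (under brackets) by $\{\i[X\otimes X+Y\otimes Y]_{jk}\}$, $\{\i Z_j\}$ and $\i\mathbb{I}$, semi-universality is equivalent to the Lie-algebra inclusion $\g\supseteq\bigoplus_{m=0}^n\su(\mathcal{H}\sct m)$.

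\textbf{Necessity of \eqref{Thmeq}.} Every generator preserves Hamming weight — the hopping term $[X\otimes X+Y\otimes Y]_{jk}$ has vanishing diagonal in the computational basis, and $Z_j,\mathbb{I}$ are diagonal — so $G_{XX+YY,Z}\subseteq\mathcal{V}_n^{\text{U}(1)}$ and any $V$ in it splits as $\bigoplus_m V\sct m$. On the weight-$m$ sector, $[X\otimes X+Y\otimes Y]_{jk}$ is traceless, so $\det\big(\exp(\i\alpha[X\otimes X+Y\otimes Y]_{jk})|_{\mathcal{H}\sct m}\big)=1$; counting weight-$m$ strings by the value of bit $j$ gives $\tr(Z_j|_{\mathcal{H}\sct m})=\binom{n-1}{m}-\binom{n-1}{m-1}=\binom{n}{m}\tfrac{n-2m}{n}$, so $\det\big(\exp(\i\beta Z_j)|_{\mathcal{H}\sct m}\big)=\exp\big(\i\beta\binom{n}{m}\tfrac{n-2m}{n}\big)$; and $\det(e^{\i\phi}\mathbb{I}|_{\mathcal{H}\sct m})=e^{\i\phi\binom{n}{m}}$. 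As the determinant is multiplicative independently of gate order, any word in the generators gives $\theta_m=\arg\det(V\sct m)\equiv\binom{n}{m}\big[\Phi+\tfrac{n-2m}{n}B\big]\pmod{2\pi}$, where $\Phi,B$ are the total accumulated global-phase and single-qubit $Z$-rotation angles; reading off $m=0,n$ yields $\theta_0\equiv\Phi+B$ and $\theta_n\equiv\Phi-B$, and eliminating $\Phi,B$ reproduces exactly \eqref{Thmeq}.

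\textbf{Reduction of $\supseteq$.} Conversely let $V\in\mathcal{V}_n^{\text{U}(1)}$ obey \eqref{Thmeq}, with $\theta_m=\arg\det(V\sct m)$. Put $\Phi=\tfrac12(\theta_0+\theta_n)$, $B=\tfrac12(\theta_0-\theta_n)$ and $W=e^{\i\Phi}\mathbb{I}\,\exp(\i B Z_1)\in G_{XX+YY,Z}$. By the computation above, $\arg\det(W\sct m)\equiv\binom{n}{m}\big[\Phi+\tfrac{n-2m}{n}B\big]=\binom{n}{m}\big[\theta_0+\tfrac{m}{n}(\theta_n-\theta_0)\big]\equiv\theta_m\pmod{2\pi}$, the last congruence being precisely \eqref{Thmeq}. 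Hence $VW^{-1}\in\mathcal{V}_n^{\text{U}(1)}$ has unit determinant on every sector, i.e.\ $VW^{-1}\in\mathcal{SV}_n^{\text{U}(1)}$; granting semi-universality, $VW^{-1}\in G_{XX+YY,Z}$, and since $W\in G_{XX+YY,Z}$ we get $V=(VW^{-1})W\in G_{XX+YY,Z}$. With $\subseteq$ this gives the equality, and $\mathcal{SV}_n^{\text{U}(1)}$ (all $\theta_m\equiv0$) sits inside it. (The same bookkeeping gives the decomposition \eqref{dec5} used elsewhere.)

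\textbf{Semi-universality — the core, and the main obstacle.} It remains to show $\g\supseteq\bigoplus_m\su(\mathcal{H}\sct m)$. An operator commuting with all generators is diagonal (commuting with every $Z_j$) and constant on each Hamming-weight sector (commuting with every $[X\otimes X+Y\otimes Y]_{jk}$, which link the computational basis within a sector), so the commutant of the generators is $\operatorname{span}\{\Pi\sct m\}$; by von Neumann's bicommutant theorem the unital $*$-algebra they generate is $\bigoplus_m\End(\mathcal{H}\sct m)$, and therefore the image $\g_m$ of $\g$ in $\mathfrak{u}(\mathcal{H}\sct m)$ acts irreducibly on $\mathcal{H}\sct m$ with full associative closure. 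Since $\g$ is reductive, $\g=\mathfrak{z}(\g)\oplus[\g,\g]$ with $[\g,\g]$ semisimple and surjecting onto every $[\g_m,\g_m]$, so it suffices to prove (a) each $\g_m\supseteq\su(\mathcal{H}\sct m)$, and (b) no simple ideal of $[\g,\g]$ projects nontrivially onto two distinct sectors. For (a), the hopping terms link the weight-$m$ string basis into a connected graph, and iterated brackets of the $\i[X\otimes X+Y\otimes Y]_{jk}$ yield $\i(|a\rangle\langle b|+|b\rangle\langle a|)$ for every pair of weight-$m$ strings $a,b$, hence all of $\mathfrak{so}(\mathcal{H}\sct m)$; bracketing with the diagonal operators $Z_j|_{\mathcal{H}\sct m}$ then supplies the remaining directions of $\su(\mathcal{H}\sct m)$ — most cleanly organized by induction on $n$, with $n=2$ immediate since on $\operatorname{span}(|01\rangle,|10\rangle)$ the hopping term and $Z_1-Z_2$ act as $\sigma_x,\sigma_z$ and generate $\su(2)$. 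For (b), $\binom{n}{m}$ is strictly unimodal, so equal sector dimensions force $m'=n-m$; the sectors $\mathcal{H}\sct m$ and $\mathcal{H}\sct{n-m}$ are intertwined by $X^{\otimes n}$, which commutes with every hopping term — this is exactly why XY \emph{alone} produces the extra constraints of \cref{Thm2} — but the single-qubit generators satisfy $X^{\otimes n}Z_jX^{\otimes n}=-Z_j$, which prevents $X^{\otimes n}$ (the only intertwiner left once the sectors are full by (a)) from being a $\g$-module map, keeping the two sectors independent; thus $[\g,\g]=\bigoplus_m\su(\mathcal{H}\sct m)$. I expect part (a) — ruling out that $\g_m$ is a \emph{proper} irreducible subalgebra of $\mathfrak{u}(\mathcal{H}\sct m)$, and carrying the induction through — to be the genuine technical work; this is the content of the analysis in Refs.~\cite{marvian2022restrictions, marvian2023non}, which I would follow and adapt.
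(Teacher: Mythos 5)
Your phase bookkeeping --- both the necessity of \cref{Thmeq} via $\Tr(\Pi\sct{m}R_{jk})=0$, $\Tr(Z_j\Pi\sct{m})=\binom{n}{m}\tfrac{n-2m}{n}$, and multiplicativity of sector determinants, and the converse reduction $V=(VW^{-1})W$ with $W=e^{\i\Phi}\exp(\i B Z_1)$ forcing $VW^{-1}\in\mathcal{SV}_n^{\rm U(1)}$ --- is exactly the paper's Appendix~\ref{App:proof}, and it is correct. (The one point you gloss over is that eliminating $\Phi,B$ modulo $2\pi$ introduces ambiguities $2\pi k$ that must be absorbed using $\binom{n}{m}\tfrac{m}{n}\,2\pi=\binom{n-1}{m-1}2\pi\equiv 0$; the paper spells this out, but it is routine.)

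Where you part ways with the paper is semi-universality, and this is also where your proposal stops being a proof. The paper does \emph{not} establish $\mathcal{SV}_n^{\rm U(1)}\subseteq G_{XX+YY,Z}$ Lie-algebraically: it builds explicit circuits --- the 2-level $\mathcal{SV}_3^{\rm U(1)}$ gates of \cref{Fig2}, the recursive multi-controlled construction of \cref{step:dist2}, the controlled-$\i$SWAP conjugation of \cref{step:equalweight}, and the Givens-rotation decomposition of \cref{lem:2-level_decomp} --- culminating in \cref{prop13}. You instead follow the Lie-algebraic route of the original reference, and your sketch of that route has two concrete soft spots. First, the claim that iterated brackets of the hopping generators alone ``yield $\i(|a\rangle\langle b|+|b\rangle\langle a|)$ for every pair of weight-$m$ strings, hence all of $\mathfrak{so}(\mathcal{H}\sct{m})$'' is not a warm-up step: $R_{jk}|_{\mathcal{H}\sct{m}}$ is a sum of $\binom{n-2}{m-1}$ matrix units, and isolating individual ones requires interleaving brackets with the diagonal $Z_j$'s from the outset --- this \emph{is} the technical core of the cited reference, as you yourself acknowledge. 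Second, your decoupling step (b) excludes only the intertwiner $X^{\otimes n}$ via $X^{\otimes n}Z_jX^{\otimes n}=-Z_j$; for $d\ge3$ the outer automorphism $A\mapsto -A^{T}$ of $\su(d)$ supplies a second candidate correlation between sectors $m$ and $n-m$, namely $A\mapsto X^{\otimes n}\overline{A}\,X^{\otimes n}$, which the $Z_j$'s do \emph{not} break (being real and anticommuting with $X^{\otimes n}$, they are fixed by it); it is the hopping terms --- real and commuting with $X^{\otimes n}$ --- that rule that case out. As written, the heart of your argument is a deferral to the literature rather than a derivation; to make the proof self-contained you should either import the paper's constructive \cref{prop13} for the semi-universality step or carry the Lie-algebraic induction through in full, including both decoupling cases.
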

We note that the circuit synthesis techniques presented in this paper provide an independent proof of the second part of this theorem. In particular, the semi-universality of this gate set is demonstrated in \cref{prop13}. For completeness, in \cref{App:proof}, 
 we also establish the first part; that is, we show that \cref{Thmeq} along with the energy conservation condition fully characterizes the group $G_{XX+YY,Z}$.\footnote{It is also worth  noting that in the context of control theory, using a Lie algebraic argument, Ref. \cite{wang2016subspace} shows  that in any individual Hamming-weight sector, Hamiltonians $XX+YY$, $ZZ$, and local $Z$ together generate all unitaries in that sector; a property  known as subspace controllability.}

\cref{Thmeq} imposes $n-1$ independent constraints on the set of unitaries in $G_{XX+YY,Z}$. Then, it follows from this theorem that the difference between the dimensions of the Lie group of all energy-conserving unitaries and the subgroup  
$G_{XX+YY,Z}$ is equal to
\be
\text{dim}(\mathcal{V}^{\rm U(1)}_n)-\text{dim}(G_{XX+YY,Z})=n-1\ .
\ee
Ref. \cite{marvian2022restrictions} also shows that using a single ancilla qubit, it is possible to circumvent these constraints. That is
\begin{corollary}\cite{marvian2022restrictions}
Any energy-conserving unitary can be realized with a single ancillary qubit, and gates  $\exp(i\phi Z)$ and 2-local gates  $\exp(i\phi [XX+YY])$.
\end{corollary}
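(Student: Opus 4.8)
The plan is to leverage \cref{Thm0} together with a standard trick for pumping determinant phases into an ancilla qubit. By \cref{Thm0}, the gate set $\{\exp(\i\alpha[XX+YY]), \exp(\i\beta Z), e^{\i\phi}\mathbb{I}\}$ generates $G_{XX+YY,Z}$, which contains $\mathcal{SV}^{\mathrm{U}(1)}_n$; so for any target energy-conserving $V$ on $n$ qubits, it already knows how to realize the ``special'' part $Q\in\mathcal{SV}^{\mathrm{U}(1)}_n$ from the decomposition $V=QD$ in \cref{dec5}. What is missing is the diagonal unitary $D=\sum_m \exp[\i\theta_m\binom{n}{m}^{-1}]\Pi\sct{m}$, equivalently the $n+1$ determinant phases $\theta_m=\arg\det(V\sct{m})$, which on $n$ qubits alone are constrained by \cref{Thmeq}. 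The idea is that adding one ancilla qubit $a$ prepared in $|0\rangle$ enlarges each Hamming-weight sector and gives exactly enough extra freedom to realize any assignment of the $\theta_m$'s (modulo an overall global phase, which is free).

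Concretely, I would first reduce to realizing an arbitrary computational-basis-diagonal energy-conserving unitary $D$ on the $n$ system qubits, acting trivially on the ancilla. Writing $D=\sum_m e^{\i\gamma_m}\Pi\sct{m}$ for arbitrary phases $\gamma_m$, I want an $(n+1)$-qubit energy-conserving $\widetilde D\in G_{XX+YY,Z}$ with $\widetilde D(|\psi\rangle\otimes|0\rangle_a)=(D|\psi\rangle)\otimes|0\rangle_a$. The key step is to build, for each $m$, a gate $W_m$ that applies a phase $e^{\i\gamma_m}$ conditioned on the system being in Hamming weight $m$ \emph{and} the ancilla being in $|0\rangle$, while leaving the ancilla-in-$|1\rangle$ subspace free to absorb whatever determinant phase is forced by \cref{Thmeq}. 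On $n+1$ qubits the sector $\widetilde{\mathcal{H}}\sct{m}$ splits as $\mathcal{H}\sct{m}\otimes|0\rangle_a \oplus \mathcal{H}\sct{m-1}\otimes|1\rangle_a$, and a unitary that multiplies the first summand by $e^{\i\gamma_m}$ and the second by a compensating phase $e^{\i\delta_m}$ lies in $G_{XX+YY,Z}$ precisely when the resulting global determinant phases $\theta'_m = \binom{n}{m}\gamma_m + \binom{n}{m-1}\delta_m$ satisfy the linear constraint of \cref{Thm0}; since the $\delta_m$ are unconstrained degrees of freedom, one solves this linear system to pick the $\delta_m$, and then such a unitary is realizable by \cref{Thm0}. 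Composing these $W_m$ over all $m$, together with the known realization of the special part $Q$ (which is in $\mathcal{SV}^{\mathrm{U}(1)}_{n+1}\subseteq G_{XX+YY,Z}$ after padding with the ancilla), gives $\widetilde V$ with $\widetilde V(|\psi\rangle\otimes|0\rangle_a)=(V|\psi\rangle)\otimes|0\rangle_a$ up to a global phase, and the global phase can be killed by a single-qubit $\exp(\i\beta Z)$ plus $e^{\i\phi}\mathbb{I}$.

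A cleaner packaging of the same idea, which I would actually use, is to implement each conditional phase $W_m$ directly as a short circuit: use $\sqrt{\i\mathrm{SWAP}}$-type gates and local $Z$ between the ancilla and a couple of system qubits to move a single excitation in and out of the ancilla (as in the SWAP construction of \cref{Fig1}), pick up the phase $e^{\i\gamma_m}$ via local $\exp(\i\beta Z)$ while the configuration is ``flagged,'' and move it back. One checks that the net effect on $|\psi\rangle\otimes|0\rangle_a$ is multiplication by $e^{\i\gamma_m}$ on $\mathcal{H}\sct{m}$ and the identity elsewhere, and that the whole thing commutes with $\sum_{j=1}^{n+1} Z_j$. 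Either route needs only the two-body Hamiltonians $XX+YY$ and local $Z$, plus the global phase, which is the claimed gate set.

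The main obstacle is verifying that the extra ancilla genuinely lifts the $n-1$ constraints of \cref{Thmeq}: one must check that the linear map from the ancilla-enabled phases $(\gamma_m,\delta_m)$ to the achievable system-restricted diagonal $D$ is surjective onto all $(\gamma_0,\dots,\gamma_n)$, i.e. that the constraint from \cref{Thm0} on $n+1$ qubits, after restricting attention to the action on the $|0\rangle_a$ subspace, becomes vacuous. This is a finite-dimensional linear-algebra check: the $n$ new free parameters $\delta_1,\dots,\delta_n$ must be able to satisfy the single family of linear relations that \cref{Thmeq} imposes on $n+1$ qubits for arbitrary choices of $\gamma_m$. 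I expect this to go through because \cref{Thmeq} imposes affine constraints that are ``transverse'' to the $\delta$ directions, but it is the one place where the counting must be done carefully. The remaining verifications — that each $W_m$ is energy-conserving, that it acts trivially on the ancilla when the ancilla starts in $|0\rangle$, and that the composition still fixes $|0\rangle_a$ — are routine.
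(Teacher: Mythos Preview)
Your approach is correct and shares the paper's core idea: write $V=QD$, realize $Q\in\mathcal{SV}_n^{\mathrm{U}(1)}$ via semi-universality, and realize the diagonal piece $D$ by dumping the unwanted determinant phases into the ancilla-$|1\rangle$ subspace. The difference is in how the dumping is packaged. The paper's Step~\ref{step:universal} picks, for each sector $m\ge 1$, a \emph{single} basis vector $|\textbf{b}_m\rangle$ and uses the 2-level unitary
\[
\exp\Big(\i\theta_m\big[\,|\textbf{b}_m\rangle\langle\textbf{b}_m|\otimes|0\rangle\langle 0|\;-\;|\textbf{b}'_m\rangle\langle\textbf{b}'_m|\otimes|1\rangle\langle 1|\,\big]\Big),
\]
which is manifestly in $\mathcal{SV}_{n+1}^{\mathrm{U}(1)}$ and is exactly of the Hamming-distance-2 type synthesized in Step~\ref{step:dist2}, so gate counts follow immediately. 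You instead spread the compensating phase over the entire block $\Pi_{\text{sys}}^{(m-1)}\otimes|1\rangle$ and invoke \cref{Thm0} as a black box. Both work; the paper's choice buys explicit circuits, yours is conceptually the same but one level more abstract.

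Your ``main obstacle'' is easier than you fear. For your $W_m$ (phase $e^{\i\gamma_m}$ on $\Pi^{(m)}_{\text{sys}}\otimes|0\rangle$, phase $e^{\i\delta_m}$ on $\Pi^{(m-1)}_{\text{sys}}\otimes|1\rangle$, identity elsewhere), every determinant phase on $n{+}1$ qubits vanishes except $\theta'_m=\binom{n}{m}\gamma_m+\binom{n}{m-1}\delta_m$; with $\theta'_0=\theta'_{n+1}=0$, \cref{Thmeq} collapses to $\theta'_m=0$, solved by $\delta_m=-\tfrac{n-m+1}{m}\gamma_m$. No coupled linear system is needed --- each $W_m$ already lies in $\mathcal{SV}_{n+1}^{\mathrm{U}(1)}$, which is the paper's observation in disguise.

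Your ``cleaner packaging'' paragraph, however, is too vague to stand on its own: shuttling a single excitation into the ancilla does not by itself detect that the system has Hamming weight exactly $m$, so there is no obvious few-qubit circuit for the sector-conditional phase $W_m$. Stick with the first argument.
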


The proof of this result in \cite{marvian2022restrictions} is Lie algebraic. In this work, on the other hand, we give explicit circuit construction methods for realizing general energy-conserving unitaries with a single ancilla qubit.

\subsection{2-level energy-conserving unitaries}

A key notion in the quantum circuit theory is the concept of 2-level unitaries, also known as Givens rotations \cite{NielsenAndChuang}. We say a unitary transformation is 2-level with respect to the computational basis $\{|0\rangle,|1\rangle\}^{\otimes n}$, if it acts trivially (i.e., as the identity operator)  on all the basis elements, except, at most 2. 

In the following, for any pair of bit strings $\textbf{b}, \textbf{b}'\in\{0,1\}^n$, $U(\textbf{b},\textbf{b}')$ denotes a 2-level unitary acting as $U\in\text{U(2)}$ on the subspace spanned by $\ket{\textbf{b}}$ and $\ket{\textbf{b}'}$, and acting trivially on the orthogonal complement of this subspace. More precisely, for a $2\times2$ unitary \be
U=\left(\begin{matrix}
    u_{11} & u_{12} \\ u_{21} & u_{22}
\end{matrix}\right),\ee
its corresponding 2-level unitary is
\begin{align}
    U(\textbf{b},\textbf{b}') &:=  u_{11}\ketbrasame{\textbf{b}} + u_{12}\ketbra{\textbf{b}}{\textbf{b}'}\nonumber\\
   & + u_{21}\ketbra{\textbf{b}'}{\textbf{b}} + u_{22}\ketbrasame{\textbf{b}'} \nonumber\\
   & + \sum_{\textbf{c}\neq \textbf{b},\textbf{b}'} \ketbrasame{\textbf{c}} \,.  \label{eq:def_2-level}
\end{align}

A 2-level unitary $U(\textbf{b},\textbf{b}')$ is in $\mathcal{SV}_n^{\rm U(1)}$ if, and only if  $\text{det}(U)=1$ and $w(\textbf{b})=w(\textbf{b}')$, where 
 for any bit string 
 $\textbf{b}=b_1\cdots b_n\in\{0,1\}^n$, 
\be 
w(\textbf{b})=\sum_{j=1}^n b_j\ ,
\ee
denotes the Hamming weight of $\textbf{b}$.

\subsection{Elementary gates}

We mainly study quantum circuits formed from two types of gates. First, single-qubit rotations around z, i.e., unitaries 
\be
R_z(\phi)=\exp(\i\frac{\phi}{2} Z)=\left(
\begin{array}{cc}
e^{\i \frac{\phi}{2}}  &     \\ & e^{-\i \frac{\phi}{2}}
  \end{array}
\right) \quad :\phi\in(-\pi,\pi]\ .
\ee
Two important specific cases are
\be
T= e^{\frac{\i\pi}{8}}
R_z(-\frac{\pi}{4})=\left(
\begin{array}{cc}
1  &     \\ & e^{\i\frac{\pi}{4}}
  \end{array}
\right)=\sqrt[4]{Z}
\ee
and 
\be
S=e^{\frac{\i\pi}{4} }R_z(-\frac{\pi}{2})
=
\left(
\begin{array}{cc}
1  &     \\ & \i
  \end{array}
\right)=\sqrt{Z}= T^2\ .
\ee
The second type of gates used in our circuits  are in the form
\begin{align}
    \exp( \i\alpha R) \quad  : \ \alpha\in (-\pi,\pi]\ ,
\end{align}    
 where 
 \begin{align}
 R := \frac12 (X\otimes X+Y\otimes Y)\ .
\end{align}
Two important special cases are the $\i\text{SWAP}$ gate  
\be\label{iSWAP}
\i\text{SWAP}=\iswap=\exp( \i\frac{ \pi}{2} R)=\left(
\begin{array}{cccc}
1  &   & &   \\
  &   & \i & \\
  & \i  &    & \\ 
  & & & 1 
\end{array}\right)\ ,
\ee
and the square root of $\i\text{SWAP}$ gate
\be\label{sqrt}
\sqrt{\i\text{SWAP}}=\siswap=\exp( \i\frac{ \pi}{4} R)=\left(
\begin{array}{cccc}
1  &   & &   \\
  & \frac{1}{\sqrt{2}}  &\frac{\i}{\sqrt{2}}  & \\
  & \frac{\i}{\sqrt{2}}  &   \frac{1}{\sqrt{2}}    & \\ 
  & & & 1 
\end{array}\right)\ ,
\ee
where the matrices are written in the computational basis $\{|00\rangle, |01\rangle, |10\rangle, |11\rangle\}$. 
 Note that $\sqrt{\i\text{Sw}}^\dag=(\sqrt{\i\text{Sw}})^7$. 
See, e.g., \cite{Burkard:99a, schuch2003natural} for further discussions about the properties and applications  of $\sqrt{\i\text{Sw}}$ and $\i\text{Sw}$ gates for circuit synthesis.

We also consider the SWAP gate
\begin{align}
\text{SWAP}=\text{Sw}&={\small\left(
\begin{array}{cccc}
1  &   & &   \\
  &   & 1 & \\
  &  1 &    & \\ 
  & & & 1 
\end{array}\right)} = ~ \begin{minipage}{3cm}
    \Qcircuit @C=1em @R=2.7em {
    &  \qswap &\qw    \\ 
     &  \qswap \qwx &\qw
   }
\end{minipage}\ ,
\end{align}
and the Controlled-$Z$ (CZ) gate
\begin{align}
\CZ&=\small{|0\rangle\langle0|\otimes \mathbb{I}+|1\rangle\langle 1|\otimes Z}={\small\left(
\begin{array}{cccc}
1  &   & &   \\
  &  1 &  & \\
  &   & 1   & \\ 
  & & & -1 
\end{array}\right)}= \begin{minipage}{3cm}
    \Qcircuit @C=1em @R=2.7em {
    &  \ctrl{1} &\qw    \\ 
     &  \ctrl{-1}& \qw  
       }
\end{minipage}\nonumber
\end{align}
\subsection{Circuits with $\i\text{SWAP}$  and single-qubit z rotations} \label{ss:iswap_z}

In this paper, we show how a general energy-conserving unitary can be realized with the $\exp(\i\alpha R)$ gate and single-qubit rotations around $z$. However, it is useful to first consider a more restricted family of circuits generated by the single-qubit rotations around $z$ together with  $\i\text{SWAP}$ gate.  To analyze such circuits, we consider the useful circuit  identity
\begin{align} \label{eq:decompiSWAP}
\begin{minipage}{3cm}
    \Qcircuit @C=1em @R=1.5em {
    & \multigate{1}{\i\text{Sw}} & \qw  \\
    & \ghost{\i\text{Sw}} & \qw  \\
    }
\end{minipage}
\ \ \ \  = \ \  \ \ 
\begin{minipage}{3cm}
    \Qcircuit @C=1em @R=1.5em {
    & \qswap & \qw&  \ctrl{1} & \qw  & \gate{S}& \qw  \\& \qswap \qwx   &\qw 
    & \ctrl{-1} & \qw &\gate{S}& \qw \\
    }
\end{minipage}
\end{align}
Note that the S gate commutes with the controlled-Z gate. Then, using this identity it can be easily seen that
\begin{proposition} \label{prop:iswap}
Suppose unitary $V$ is realized by a circuit formed from $\i\text{SWAP}$ gates. Then, $V$ has a decomposition as $V=V_3 V_2 V_1$, where $V_1$ is a permutation, i.e., is a composition of SWAP gates, $V_2$ is a composition of controlled-$Z$ gates, and $V_3$ is a composition of single-qubit $S$ gates. More generally, if 
in addition to $\i\text{SWAP}$,  the circuit also contains the single-qubit rotations around z, denoted by $R_z(\phi): \phi\in(-\pi,\pi]$, then the realized unitary $V$ has a similar decomposition, where  $V_3$ is now a product of arbitrary single-qubit rotations around the $z$ axis.   
\end{proposition}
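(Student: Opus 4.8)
The plan is to prove this by induction on the number of gates in the circuit, maintaining the normal form $V = V_3 V_2 V_1$ as a loop invariant, where $V_1$ is a permutation of computational basis states (a product of SWAPs), $V_2$ is a product of controlled-$Z$ gates, and $V_3$ is a product of single-qubit $z$-rotations (which in the $\i\text{SWAP}$-only case are all equal to $S$). The base case is trivial: an empty circuit is $\mathbb{I}=\mathbb{I}\cdot\mathbb{I}\cdot\mathbb{I}$. For the inductive step, suppose the unitary realized by the first $k$ gates has the form $V_3 V_2 V_1$, and consider appending a $(k+1)$-st gate $g$ acting on some qubit(s). The key move is to use the identity \eqref{eq:decompiSWAP} to rewrite any newly-appended $\i\text{SWAP}$ gate on qubits $(a,b)$ as $S_a S_b \cdot \CZ_{ab} \cdot \swap_{ab}$ (the $S$ gates commute with $\CZ$, so their placement is flexible), and then to commute the new pieces leftward through $V_3$, then through $V_2$, then through $V_1$, collecting them into the appropriate factor.

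The main steps, in order, are the following. First, I would record the elementary commutation facts needed: (i) single-qubit $z$-rotations commute with one another and with $\CZ$ gates and with the diagonal part of anything, and in particular all the $V_3$ and $V_2$ factors mutually commute since they are all diagonal in the computational basis; (ii) a SWAP gate conjugates a $z$-rotation to a $z$-rotation on the swapped qubit, and conjugates a $\CZ$ gate to another $\CZ$ gate on the relabelled qubits — i.e.\ permutations normalize both the group of diagonal $z$-rotation products and the group of $\CZ$ products. Second, for the case $g = R_z(\phi)$ on qubit $a$: this is already a single-qubit $z$-rotation, so $V_3 V_2 V_1 \cdot R_z(\phi)_a = V_3 V_2 \cdot (V_1 R_z(\phi)_a V_1^\dagger) \cdot V_1$; the conjugate is again a $z$-rotation (on the qubit $V_1$ maps $a$ to), which commutes past $V_2$ (diagonal), and merges into $V_3$. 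Third, for the case $g = \i\text{SWAP}$ on $(a,b)$: apply \eqref{eq:decompiSWAP} to write $g = (S_a S_b)\,\CZ_{ab}\,\swap_{ab}$; then $V_3 V_2 V_1\cdot g = V_3 V_2 V_1 (S_a S_b)\CZ_{ab}\swap_{ab}$. Conjugate $\swap_{ab}$ to the far left through $V_1$ (this just relabels qubits in everything to its right, and $V_1 \swap_{ab}$ is still a permutation); then the leftover $\CZ$ (with relabelled qubits) commutes past $V_2$ and merges in; then the leftover $S$ gates (with relabelled qubits), being $z$-rotations, commute past the new $V_2$ and merge into $V_3$. In the $\i\text{SWAP}$-only case every added single-qubit gate is an $S$, so $V_3$ stays a product of $S$'s; with $R_z$ gates allowed, $V_3$ becomes an arbitrary product of $z$-rotations, as claimed.

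The one point requiring care — and the step I expect to be the mild obstacle — is bookkeeping the qubit relabelling when a SWAP is commuted to the left past an already-accumulated block of $\CZ$'s and $z$-rotations: one must check that $\swap_{ab}\, \CZ_{cd}\, \swap_{ab} = \CZ_{\sigma(c)\sigma(d)}$ where $\sigma$ is the transposition $(a\,b)$, and similarly for $z$-rotations, so that conjugating the whole string $V_2$ by $\swap_{ab}$ again lands inside the group generated by $\CZ$ gates, and likewise for $V_3$. This is immediate since $\CZ_{cd}$ is symmetric in its two qubits and acts only on them, but it is the substantive content of the argument; everything else is the observation that diagonal unitaries commute. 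I would also remark that since $\CZ^2 = \mathbb{I}$ and $S$-powers reduce mod $4$ (or $z$-rotation angles add mod $2\pi$), the factors $V_1,V_2,V_3$ can always be taken in "reduced" form, though this is not needed for the statement. This completes the induction and hence the proposition.
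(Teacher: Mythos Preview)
Your proposal is correct and follows exactly the approach the paper has in mind: the paper simply states that the proposition ``can be easily seen'' from the identity $\iSWAP = (S\otimes S)\cdot\CZ\cdot\swap$, and your induction (using that diagonal gates commute and that permutations normalize both the $\CZ$-group and the $z$-rotation group) is precisely the natural way to make that remark rigorous. One small wording quibble: in your $\iSWAP$ step you write ``conjugate $\swap_{ab}$ to the far left through $V_1$,'' but what you actually do (and what your parenthetical and subsequent sentences make clear) is push the $S$ and $\CZ$ factors leftward past $V_1$ by conjugation, leaving $V_1\,\swap_{ab}$ as the new permutation on the right---the mechanics are fine, just tighten the phrasing.
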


In particular, note that while  $\i\text{SWAP}$ is an entangling gate, the family of energy-conserving unitaries that can be realized by combining this gate with single-qubit z-rotations, are very limited. Namely, the set of realizable unitaries is specified by $n$ real parameters and they map any element of the computational basis to an element of the computational basis, up to a global phase.  

Using this property of $\i\text{SWAP}$ circuits one can establish several other useful circuit identities, which are summarized in \cref{tab:identities}. Such identities will play a key role for constructions in the following sections.

\section{2-qubit energy-conserving unitaries}\label{sec:2-qubit}

\subsection{The structure and realization of $\mathcal{V}_2^{\rm U(1)}$} \label{ss:V2stucture}

For $n=2$ qubits, the energy levels of the intrinsic Hamiltonian 
$H_\text{intrinsic}$ in \cref{int} 
decomposes the Hilbert space into 3 eigen-subspaces corresponding to Hamming weights 0, 1, and 2.  Then, a general 2-qubit energy-conserving unitary is specified by $1+2^2+1=6$ real parameters and is in the form
\be\label{2-qubit}
V=
\left(
\begin{array}{ccc}
e^{\i\theta_0}  &   &   \\
  & V\sct1  &   \\
  &   &   e^{\i\theta_2}
\end{array}\right)
=Q D=D Q\ .
\ee
Here,  $V\sct1$ is an arbitrary  $2\times 2$ unitary in the sector with Hamming weight 1 spanned by $|01\rangle$ and $|10\rangle$. Furthermore, $D$  is diagonal in the computational basis
 and satisfies $[D, V]=0$, whereas 
 \be\label{Q}
 Q = \left(\begin{matrix}1&&\\&Q\sct1&\\&&1\end{matrix}\right)\in\mathcal{SV}^{\rm U(1)}_2\ ,
 \ee
 i.e., has determinant 1 in all sectors with Hamming weights 0, 1, and 2. 
 $Q\sct1$ can be written in the basis of $\ket{01}$ and $\ket{10}$ as a single-qubit gate. For example, 
 \begin{align}\ Q&=\exp(\i\alpha R) &&\longrightarrow\ \ \  Q\sct1 =\exp(\i \alpha X) \nonumber\\
 Q&=\sqrt{\i\text{SWAP}}&&\longrightarrow\ \ \  Q\sct1=\exp(\i\pi X/4) \nonumber \\
    Q&=\exp(\i\alpha [Z_1-Z_2])&&\longrightarrow\ \ \  Q\sct1=\exp(2\i\alpha Z)\ . \label{eq:VV1}
\end{align}

In particular, using the Euler decomposition for $Q\sct1$ as $Q\sct1 =  e^{\i \gamma Z} e^{\i \beta Y} e^{\i \alpha Z}$ for some $\alpha,\beta,\gamma \in (-\pi,\pi]$, we find that $Q$ has a decomposition as 
 \be\label{Q-decomp}
Q=e^{\i \frac{\gamma}{2} (Z_{1}-Z_2)} \sqrt{\i\text{Sw}}_{12} e^{\i \frac{\beta}{2} (Z_{1}-Z_2)}\sqrt{\i\text{Sw}}^\dag_{12} e^{\i \frac{\alpha}{2} (Z_{1}-Z_2)}  \ .
\ee

Furthermore, 
\bes
\begin{align}
D&=e^{i\theta_0}\Pi\sct0\pm e^{\i\theta_1/2}\Pi\sct1+e^{i\theta_2} \Pi\sct2\\ &= e^{i \phi_2 Z_1Z_2}\ e^{i \phi_1 (Z_1+Z_2)} \ e^{i\phi_0}\ ,\label{D-formula}
\end{align}
\ees
 where 
 $\theta_1=\text{arg}(\text{det}(V_1))$ and
\bes
 \begin{align}
 4 \phi_0=\theta_0+\theta_1+\theta_2&=:\Phi_0\ , \\  8 \phi_1=2\theta_0-2\theta_2&=: \Phi_1 \ , \\ 
 4\phi_2=\theta_0-\theta_1+\theta_2&=:\Phi_2{\ \  :\  \text{mod}\   2\pi}\ .
 \end{align}
 \ees

 The phases $\Phi_l: l=0,1, 2$ are called the $l$-body phase associated to the unitary $V$ \cite{marvian2022restrictions}. 
 It is worth noting that while neither of phases $\theta_0, \theta_1, \theta_2$, or $\phi_0, \phi_1, \phi_2$  are  physically observable, all the $l$-body phases,  except $\Phi_0$, are  observable.    
 For instance, when  $V$ is the identity operator, \cref{D-formula} holds for all 
   $$\phi_0=\phi_1=\phi_2=\frac{k \pi}{2}\ \ \ \ : k=0,\cdots, 3\ ,$$
whereas for all 4 cases we have
$$\Phi_0=\Phi_1=\Phi_2=0{\ \  :\  \text{mod}\   2\pi}\ . $$

Note that  $\Phi_0=4\phi_0$ corresponds to a global phase. Furthermore, $\Phi_1=4\phi_1$ can be changed arbitrarily by applying the unitary $\exp({\i \phi_1 (Z_1+Z_2)}) $, which can be realized with local $Z$ interactions. On the other hand, as it follows immediately from \cref{Thmeq} in \cref{Thm0}, a general U(1)-invariant  unitary $V$ on a pair of qubits can be realized by XY interaction and local $Z$ if, and only if its 2-body phase is zero, i.e., 
 \be 
 \Phi_2=\theta_0-\theta_1+\theta_2=0{\ \  :\  \text{mod}\   2\pi}\ .
 \ee
 This constraint, for instance,  excludes the SWAP and CZ gates,  because for both of these operators $2$-body phase is $\Phi_2=\pi$.

  
  It is also worth mentioning that the 
  matrices $Q$ and $D$ in 
  decomposition 
in \cref{2-qubit} are unique up to a freedom in choosing $\theta_1$ or $\theta_1+\pi$. This freedom is related to the fact that both matrices $$
\left(
\begin{array}{cc}
  1 &    \\
  & 1  \end{array}\right)\  , \ \ \ \  \ 
\left(
\begin{array}{cc}
  -1 &    \\
  & -1  \end{array}\right)$$
   are in SU(2). Therefore, the component of $Q$ in the sector with  Hamming weight 1 is fixed up to a plus/minus sign.

\subsection{2-qubit semi-universality of the gate sets in Theorem \ref{Thm1}}

As we will explain in the following,  the above characterization 
of 2-qubit energy-conserving unitaries 
immediately implies the semi-universality of gate sets \ref{gates_phaseZ} and \ref{gates_phaseH} in \cref{Thm1},  for the special case of $n=2$ qubits.  
This, in particular,  means that using any of these gate sets one can realize the 
 family of  unitaries
$$    \exp(\i\alpha R): \alpha=(-\pi,\pi]\ ,$$
where $R=(X\otimes X + Y\otimes Y)/2$.
 
 Then, in the rest of the paper, we consider circuits that only contain this family of  unitaries as well as the $S$ gate and prove \cref{Thm1} for this gate set, which is a special case of gate set \ref{gates_phaseH} (In other words,  we  
show how a general energy-conserving unitary can be realized with these elementary gates and a single ancilla qubit). This result combined with the following discussion proves \cref{Thm1} for  gate sets \ref{gates_phaseZ} and \ref{gates_phaseH}.  Crucially, as we will show below,  the gates in each set  can be constructed using the other gate set exactly, using a finite $\mathcal{O}(1)$ number of gates. Therefore, to implement a general energy-conserving unitary, the number of required gates from each gate set is the same for all gate sets, up to an $\mathcal{O}(1)$ constant.

Note that, due to $\mathbb{Z}_2$ symmetry of XY interaction in \cref{Z22}, the gate set \ref{gates_XY_only},
which only includes $\exp(\i\alpha R): \alpha=(-\pi,\pi]$ without single-qubit rotations around z, is not semi-universal.   However, as we will explain in \cref{ss:Z2_symmetry}, this $\mathbb{Z}_2$ symmetry can be broken using a single ancilla qubit. It follows that in this case semi-universality is achievable with one extra qubit.

\subsubsection{Semi-universality of gate  set \ref{gates_phaseZ}}\label{ss:universal_set_gates_phaseZ}

First, consider the gate set
\begin{align} \label{eq:gates_siswap_Z}
    \siSWAP, \text{ and } \exp(\i\phi Z): \phi\in(-\pi,\pi] \,.
\end{align}
\cref{Q-decomp} already shows that any unitary in $\mathcal{SV}_2^{\rm U(1)}$ can be realized exactly with this gate set, using the following circuit:
$$\Qcircuit @C=1em @R=.7em {
& \gate{e^{\i \frac\alpha2 Z}} &  \multigate{1}{\sqrt{\i \text{Sw}}} &\gate{e^{\i \frac\beta2 Z}}  &  \multigate{1}{\sqrt{\i \text{Sw}}^\dag} &\gate{e^{\i \frac\gamma2 Z}} & \qw \\
& \gate{e^{-\i \frac\alpha2 Z}} & \ghost{\sqrt{\i \text{Sw}}} & \gate{e^{-\i \frac\beta2 Z}} & \ghost{\sqrt{\i \text{Sw}}^\dag} & \gate{e^{-\i \frac\gamma2 Z}} &\qw }$$
where $\alpha,\beta$ and $\gamma$ come from the Euler decomposition.
It is also worth noting that one can save the number of gates at the cost of introducing relative phases between sectors with different Hamming weights, i.e., constructing an energy-conserving unitary outside $\mathcal{SV}^{U(1)}_2$. Note that in the decomposition of matrix $Q$ in  \cref{Q-decomp}  $e^{\i\alpha (Z_1-Z_2)}=e^{\i 2\alpha Z_1} e^{-\i\alpha (Z_1+Z_2)} $, and  the unitary $e^{-\i\alpha (Z_1+Z_2)} $ commutes with all other unitaries in this decomposition and can be absorbed in the diagonal matrix $D$. Similar argument works for $e^{\i\beta (Z_1-Z_2)}$ and $e^{\i\gamma (Z_1-Z_2)}$. We conclude that, up to relative phases  between sectors with different Hamming weights, any 2-qubit energy-conserving circuit has a decomposition as 
$$\Qcircuit @C=1em @R=.7em {
& \gate{e^{\i \alpha Z}} &  \multigate{1}{\sqrt{\i \text{Sw}}} &\gate{e^{\i \beta Z}}  &  \multigate{1}{\sqrt{\i \text{Sw}}^\dag} &\gate{e^{\i \gamma Z}} & \qw \\
&  \qw & \ghost{\sqrt{\i \text{Sw}}} & \qw  & \ghost{\sqrt{\i \text{Sw}}^\dag} &\qw&\qw }$$

For future applications, it is also useful to consider the following  realization of {the unitary $\exp(\i\alpha R)= \exp(\i\alpha [XX+YY]/2)$:}
$$\begin{minipage}{1em}
\Qcircuit @C=0.5em @R=1.25em{
 & \multigate{1}{e^{\i\alpha R}} & \qw\\
 & \ghost{e^{\i\alpha R}} & \qw
}
\end{minipage}=\begin{minipage}{1em}
\Qcircuit @C=0.6em @R=.7em {
 & \gate{S} &  \multigate{1}{\sqrt{\i \text{Sw}}} &\gate{e^{-\i \alpha Z/2}}  &  \multigate{1}{\sqrt{\i \text{Sw}}^\dag} &\gate{S^\dag} & \qw \\
 &  \qw & \ghost{\sqrt{\i \text{Sw}}} &\gate{e^{\i \alpha Z/2}}  & \ghost{\sqrt{\i \text{Sw}}^\dag} &\qw&\qw }
\end{minipage}
$$

Therefore, any circuit that involves gates  $\exp(\i\alpha R)$ for arbitrary values of $\alpha\in(-\pi,\pi]$, can be transformed to a circuit that only involves gates  $\sqrt{\i \text{SWAP}}$ and single-qubit rotations around $z$, and this increases the number of entangling gates by, at most, a factor of 2.

\subsubsection{Variation of gate set \ref{gates_phaseZ}: Heisenberg interaction}

A variant of the gate set in \cref{eq:gates_siswap_Z} is 
\begin{align} \label{eq:gates_Heis_Z}
    \exp(\i \pi R_{\text{Heis}}/4), \text{ and } \exp(\i\phi Z): \phi\in(-\pi,\pi] \,,
\end{align}
where  $R_{\text{Heis}}:=(X\otimes X + Y \otimes Y + Z \otimes Z)/2$
is the Heisenberg interaction. We observe that 
\bes
\begin{align}
 \exp(\i \pi R_{\text{Heis}}/4)&=\siSWAP \exp(\i \pi Z_1 Z_2/8)\\ &=\exp(\i \pi Z_1 Z_2/8)\siSWAP \ .
\end{align}
\ees
Since $\exp(\i \pi Z_1 Z_2/8)$ commutes with the single-qubit rotations around z,   in the above circuit we can replace  $\siSWAP$ 
with $ \exp(\i \pi R_{\text{Heis}}/4)$. That is,  any $Q\in\mathcal{SV}_2^{\rm U(1)}$ can be realized with the following circuit

$$\Qcircuit @C=0.7em @R=.7em {
& \gate{e^{\i \frac\alpha2 Z}} &  \multigate{1}{e^{\i\frac{\pi}{4} H_{\text{Heis}}}} &\gate{e^{\i \frac\beta2 Z}}  &  \multigate{1}{e^{-\i\frac{\pi}{4} H_{\text{Heis}}}} &\gate{e^{\i \frac\gamma2 Z}} & \qw \\
& \gate{e^{-\i \frac\alpha2 Z}} & \ghost{e^{\i\frac{\pi}{4} H_{\text{Heis}}}} & \gate{e^{-\i \frac\beta2 Z}} & \ghost{e^{-\i\frac{\pi}{4} H_{\text{Heis}}}} & \gate{e^{-\i \frac\gamma2 Z}} &\qw }$$

Hence, the gate set in \cref{eq:gates_Heis_Z} is semi-universal.

\subsubsection{Semi-universality with gate set \ref{gates_phaseH}}

Recall the map defined in \cref{eq:VV1}, which gives the matrix representation of the component of 2-qubit operators in the sector with Hamming weight 1, relative to $|01\rangle, |10\rangle$ basis. Applying this map we find 
\bes \label{eq:H_h}
\begin{align}
H_{\text{int}} &\longrightarrow  H_{\text{int}}\sct1\\ 
S_1H_{\text{int}}S_1^\dag &\longrightarrow  S H_{\text{int}}\sct1 S^\dag
\end{align}
\ees
where $H_{\text{int}}\sct1$ is a $2\times 2$ Hermitian matrix and $S_1 = S\otimes\mathbb{I}$ denotes the $S$ gate on the first qubit. The fact that $H_{\text{int}}$ is not diagonal in the computational basis implies that $H_{\text{int}}\sct1$  is not diagonal. Then, we apply the following lemma which is shown in \cref{app:proof_decomp_HSHS}.
\begin{lemma} \label{lem:decomp_HSHS}
Suppose $H$ is a $2\times 2$ non-diagonal Hermitian matrix. Then, any unitary $U\in SU(2)$ has a decomposition as 
\be
U=\prod_{j=1}^l  \big[\exp(\i \alpha_j H) S \exp(\i \beta_j H) S^\dag\big]\ ,
\ee
where $\alpha_j,\beta_j\in\mathbb{R}$, and the length of this sequence, $l$, is bounded by a constant independent of $U$.
\end{lemma}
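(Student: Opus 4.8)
The plan is to reduce the claim to a statement about the Lie algebra $\su(2)$ and a compactness argument. First I would observe that since $H$ is a non-diagonal $2\times 2$ Hermitian matrix, after subtracting a multiple of the identity (which only contributes a global phase that can be absorbed), we may assume $H$ is traceless, so $\i H$ lies in $\su(2)$; write $H = \vec{h}\cdot\vec{\sigma}$ with $\vec h$ a nonzero real vector not parallel to $\hat z$ (non-diagonality). The conjugation $S(\cdot)S^\dag$ acts on $\su(2)$ as a rotation by $\pi/2$ about the $z$-axis, sending $\sigma_x \mapsto \sigma_y$, $\sigma_y\mapsto -\sigma_x$, $\sigma_z\mapsto \sigma_z$. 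Hence $S\exp(\i\beta H)S^\dag = \exp(\i\beta H')$, where $H'$ is $H$ rotated by $\pi/2$ about $z$; crucially, since $\vec h$ has a nonzero component in the $xy$-plane, $H$ and $H'$ are linearly independent traceless Hermitian matrices, so $\i H$ and $\i H'$ span a $2$-dimensional subspace of $\su(2)$ whose bracket closes onto all of $\su(2)$ (the bracket $[\i H, \i H']$ is a nonzero element independent of both).

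Next I would invoke the standard fact that two elements $A, B$ of a compact semisimple Lie algebra that generate the whole algebra give rise, via products $\exp(\alpha_1 A)\exp(\beta_1 B)\exp(\alpha_2 A)\cdots$, to the whole connected group $\SU(2)$, and moreover that there is a \emph{uniform bound} $l$ on the number of factors needed. The uniformity is the point worth dwelling on: for each fixed small $\delta$, products of a bounded number $l_0$ of one-parameter-subgroup factors cover a neighborhood of the identity of radius $\delta$ (this follows from the fact that the relevant smooth map from $\R^{l_0}$ to $\SU(2)$ is a submersion at a suitable point, by the generation/bracket condition, hence has open image by the inverse function theorem), and then any element of the compact group $\SU(2)$ is a product of a bounded number $N$ of elements from that fixed $\delta$-ball (again by compactness, since finitely many translates of the ball cover the group, or just because $\SU(2)$ has finite diameter in the relevant sense). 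Composing, every $U\in\SU(2)$ is a product of at most $l = l_0 N$ factors of the form $\exp(\alpha A)$ or $\exp(\beta B)$, and grouping them in pairs $\exp(\i\alpha_j H)\,S\exp(\i\beta_j H)S^\dag$ gives exactly the asserted form (padding with trivial factors $\beta_j = 0$ if needed to make the count even and to match the alternating pattern).

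Concretely, the key steps in order are: (1) normalize $H$ to be traceless and identify $\i H, \i H'$ (with $H' = SHS^\dag$ up to the harmless identity part) as linearly independent elements of $\su(2)$; (2) check $[\i H, \i H']\neq 0$ and conclude $\{\i H, \i H'\}$ generates $\su(2)$; (3) apply the local surjectivity/submersion argument to get a fixed-length product of exponentials covering a neighborhood of $1$; (4) use compactness of $\SU(2)$ to cover the whole group by boundedly many such neighborhoods, yielding the uniform bound $l$; (5) regroup the factors into the stated pattern.

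The main obstacle, and the part deserving the most care, is the uniformity of $l$ — i.e., that the bound is genuinely independent of $U$. This is not automatic from $U$-by-$U$ decompositions; it requires the combination of the open-image (inverse function theorem) argument at the identity with a compactness covering argument for the rest of the group. A secondary subtlety is making the alternating $H, S H S^\dag, H, \ldots$ structure exact: one must allow some $\alpha_j$ or $\beta_j$ to vanish, and one should double-check that conjugating by $S$ really does give a $\pi/2$ rotation (not $\pi$ or $\pi/4$) so that $H'$ is indeed generically independent of $H$ — this is where the precise form $S = \mathrm{diag}(1,\i)$ matters. Once uniformity is in hand the rest is routine.
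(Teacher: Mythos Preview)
Your argument for the traceless case is correct but takes a different route from the paper. You use a general Lie-algebra-generation plus submersion/compactness argument to establish the existence of a uniform bound $l$, whereas the paper simply invokes the Euler decomposition for $\SU(2)$: since $\exp(\i\alpha H_0)$ and $S\exp(\i\beta H_0)S^\dag$ are rotations about two non-parallel axes (non-parallel precisely because $H$ is non-diagonal), any element of $\SU(2)$ is a product of at most three such rotations, giving an explicit and small bound. Your approach would generalize to other compact groups; the paper's is more elementary and yields a concrete $l$ immediately.

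There is, however, a genuine gap in your reduction to the traceless case. You assert that the identity part of $H$ ``only contributes a global phase that can be absorbed,'' but the lemma demands \emph{exact} equality $U=\prod_j[\exp(\i\alpha_j H)\,S\exp(\i\beta_j H)S^\dag]$ in $\SU(2)$, not equality up to phase. Writing $H=H_0+c\,\mathbb{I}$ with $H_0$ traceless, the product built from $H$ equals $e^{\i c\sum_j(\alpha_j+\beta_j)}$ times the corresponding product built from $H_0$, and you give no argument that the parameters can be chosen so that this extra phase is trivial. The paper closes this gap with a group-commutator trick: every $U\in\SU(2)$ can be written as $U=ABA^\dag B^\dag$ for suitable $A,B\in\SU(2)$, and if one realizes $\tilde A=e^{\i\gamma}A$ and $\tilde B=e^{\i\delta}B$ (with uncontrolled phases) by the traceless-case argument, then $\tilde A\tilde B\tilde A^\dag\tilde B^\dag=ABA^\dag B^\dag=U$ exactly, the phases cancelling in the commutator at the cost of multiplying $l$ by four.
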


This lemma together with the correspondence in \cref{eq:H_h} imply that  any  2-qubit energy-conserving unitary $Q\in\mathcal{SV}_2^{U(1)}$, 
has a decomposition as
\begin{align}\label{dectr}
    Q = \prod_{j=1}^l \big[\exp(\i\alpha_j H_{\text{int}}) S_1  \exp(\i\beta_j H_{\text{int}} ) S_1^\dag\big] \, 
\end{align}
where $l$, the length of this sequence, does not depend on $Q$. We conclude that any $Q\in\mathcal{SV}_2^{U(1)}$
can be realized with Hamiltonian $H_{\text{int}}$ and $S$ gates, which means gate set \ref{gates_phaseH} is semi-universal on $n=2$ qubits.

\subsubsection{Example of gate set \ref{gates_phaseH}: XY interaction}

For XY interaction, the realization of a general 2-qubit energy-conserving unitary $Q\in\mathcal{SV}_2^{U(1)}$ has a simple form. Applying the Euler decomposition to $Q\sct1$ defined in \cref{Q}, this time for the x and y axes, we obtain 
\begin{align} \label{eq:XYeuler}
    Q\sct1 = e^{\i\gamma X}e^{\i\beta Y}e^{\i \alpha X}= e^{\i\gamma X} S e^{\i\beta X} S^\dag e^{\i \alpha X}\ ,
\end{align}  
for $\gamma, \beta, \alpha\in(-\pi,\pi]$. 
Recall that by the correspondence in \cref{eq:VV1}, in the subspace with Hamming weight 1 the action of $\exp(\i\alpha R)$  relative to the basis $\ket{01}$ and $\ket{10}$ is described by the unitary 
 $\exp(\i \alpha X)$ and the action of $S$ gate on the first qubit is $S = e^{\frac{\i\pi}{4}}\exp(-\i\pi Z/4)$.  It follows that an arbitrary unitary $Q\in\mathcal{SV}_2^{U(1)}$ can be realized as by the circuit
$$\Qcircuit @C=1em @R=.7em {
 &  \multigate{1}{\exp(\i\alpha R)} &\gate{S^\dag}  &  \multigate{1}{\exp(\i\beta R)} &\gate{S} & \multigate{1}{\exp(\i\gamma R)} & \qw \\
 &  \ghost{\exp(\i\alpha R)} & \qw & \ghost{\exp(\i\beta R)} &\qw & \ghost{\exp(\i\gamma R)} & \qw }$$

\subsubsection{Example of gate set \ref{gates_phaseH}: Heisenberg interaction}

Another canonical example of non-diagonal energy-conserving Hamiltonians  is the Heisenberg interaction $R_{\text{Heis}}=(X\otimes X+Y\otimes Y+ Z\otimes Z)/2$.  In this case, again the decomposition in \cref{dectr} finds   a simple form, which can be seen by noting that 
\begin{align}
     Z_1\exp(-\i\alpha R_{\text{Heis}})  Z_1 &\exp(\i\alpha R_{\text{Heis}}) \nonumber \\
    &= Z_1\exp(-\i\alpha R) Z_1 \exp(\i\alpha R)  \nonumber \\
    &=\exp(\i 2\alpha R) \label{eq:XYZtoXY} \ ,
\end{align}
where we have used the facts that $R_{\text{Heis}}=R+Z\otimes Z/2$,  
$[R, Z\otimes Z] = 0$, and $(Z\otimes \mathbb{I}) R(Z\otimes \mathbb{I})=- R $. 
The circuit diagram is shown as follows:
$$\begin{minipage}{1em}
\vspace{0.2em}
\Qcircuit @C=0.5em @R=.9em{
 & \multigate{1}{e^{\i 2 \alpha R}} & \qw\\
 & \ghost{e^{\i 2 \alpha R}} & \qw
}
\end{minipage}~=~\begin{minipage}{1em}
\Qcircuit @C=0.6em @R=.7em {
 & \multigate{1}{e^{\i\alpha R_{\text{Heis}}}} &\gate{Z}  &  \multigate{1}{e^{-\i\alpha R_{\text{Heis}}}} &\gate{Z} & \qw \\
 & \ghost{e^{\i\alpha R_{\text{Heis}}}} & \qw  & \ghost{e^{-\i\alpha R_{\text{Heis}}}} &\qw&\qw }
\end{minipage}
$$
Therefore, gates $\exp(\i\alpha R_{\text{Heis}}): \alpha\in(-\pi,\pi]$ together with $S$ gate generate all gates in 
 {the previous example}, which are semi-universal for for $n=2$.

\subsection{Single-qubit rotations around z\\ (Proof of \cref{Thm1} for the special case of $n=1$)}
Using the above results, it can be easily shown that   any single-qubit rotation around z axis can be realized  using the gate set \ref{gates_phaseH} in \cref{Thm1}, i.e., $S$ gates and $\exp(\i \alpha H_\text{int}): 
\alpha\in{\R}$, and a single ancilla qubit. To see this note that the family of diagonal unitaries
\be\label{rotz}
e^{\i\phi Z}\otimes e^{-\i\phi Z}\ \ \ \ \ \ 
 :\  \phi\in(-\pi,\pi] \  , 
\ee
are in $\mathcal{SV}_2^{\text{U}(1)}$ and therefore are realizable by the aforementioned gate set. Then, for any single-qubit state $|\psi\rangle\in\mathbb{C}^2$, we have
\be
(e^{\i\phi Z}\otimes e^{-\i\phi Z})|\psi\rangle\otimes |0
\rangle_{\text{anc}}=e^{-\i\phi} e^{\i\phi Z}|\psi\rangle\otimes |0\rangle_{\text{anc}}\ , 
\ee
which means that, up to a global phase, unitary $e^{\i\phi Z}$ can be realized on a  system with $n=1$ qubit, using a single ancilla qubit. This proves  \cref{Thm1} for gate set \ref{gates_phaseH} in the special case of $n=1$ qubit.

It is worth noting that in the case of XY interaction the following circuit realizes unitaries in \cref{rotz}

$$\begin{minipage}{1em}
\vspace{0.25em}
    \Qcircuit @C=0.5em @R=.7em {
    & \gate{e^{\i\phi Z}} & \qw \\
    & \gate{e^{-\i\phi Z}} & \qw }
\end{minipage}
= \begin{minipage}{1em}
    \Qcircuit @C=0.5em @R=0.95em {
&\multigate{1}{\siswap^\dag} & \gate{S}& \multigate{1}{e^{2\i\phi R}} & \gate{S^\dag} & \multigate{1}{\siswap} &  \qw \\
&       \ghost{\siswap^\dag} & \qw     &        \ghost{e^{2\i\phi R}} & \qw           &        \ghost{\siswap} &  \qw }
\end{minipage}$$

\section{3-qubit energy-conserving unitaries}\label{sec:3qubit}

Next, we study 3-qubit energy-conserving unitaries and show how they can be realized with gates  $\exp(\i\alpha R): \alpha\in(-\pi,\pi]$ and $S$ gate.  These methods can then be generalized to implement energy-conserving unitaries on an arbitrary number of qubits.  We start by constructing a useful  2-level 3-qubit energy-conserving unitary, using only combinations of $\i\text{SWAP}$ and  $\i\text{SWAP}^\dag$ gates.

\subsection{A useful 2-level unitary: Two controlled-$Z$ gates } \label{sec:CZ}

Using \cref{eq:decompiSWAP} one can easily show the identities 
\bes\label{swap2}
\begin{align}
\i \text{Sw}_{13}\ \i \text{Sw}^\dag_{23}\ \i \text{Sw}_{12}\ \i \text{Sw}_{23}&= \exp(\i \frac{\pi}{4} Z_2 [Z_3-Z_1])\\ 
\i \text{Sw}_{13}\ \i \text{Sw}_{23}\  \i \text{Sw}_{12}\ \i \text{Sw}_{23}&= -\i \exp(\i \frac{\pi}{4} Z_2 [Z_1+Z_3])\ ,
 \end{align}
 \ees
 where  $\i \text{Sw}_{ij}$ denotes $\i \text{SWAP}$ gate on qubits $i$ and $j$, defined in \cref{iSWAP}.  
The second identity and similar other identities can be obtained by replacing  $\i \text{Sw}_{ij}$ with $\i \text{Sw}^\dag_{ij}$, or vice versa.  It is worth noting that the specific combination of unitaries $\i \text{Sw}_{ij}$ appearing in these identities  has a nice interpretation in terms of the permutation group:  Applying 
\cref{eq:decompiSWAP}
 to all $\i \text{Sw}_{ij}$ in the left-hand side of \cref{swap2}, the left-hand side becomes  $ \text{Sw}_{13} \text{Sw}_{23} \text{Sw}_{12} \text{Sw}_{23}$,  up to a diagonal unitary in the computational basis,  where  $\text{Sw}_{ij}$ denotes the SWAP unitary. However, this combination of swaps is equal to the identity operator, which can be seen from the basic properties of the permutation group. We conclude that  $\i \text{Sw}_{13}\i \text{Sw}_{23} \i \text{Sw}_{12} \i \text{Sw}_{23}$ is equal to a unitary diagonal in the computational basis, that is determined by the right-hand side of \cref{swap2}.


The first identity implies that 
\bes
\begin{align}
\i \text{Sw}_{13}&\i \text{Sw}_{23}^\dag \i \text{Sw}_{12} \i \text{Sw}_{23} S^\dag_1S_3= \exp(\i \frac{\pi}{4} [(Z_3-Z_1)(Z_2-I)])\nonumber\\ &=|0\rangle\langle 0|_2\otimes \mathbb{I}_{13}+|1\rangle\langle 1|_2\otimes Z_1 Z_3\ 
\\ &=\CZ_{12}\CZ_{23}\  .\label{art}
\end{align}
\ees
See \cref{tab:identities} for the corresponding circuit identity. This unitary is diagonal in the computational basis and is 2-level, i.e., it acts non-trivially only on states $|011\rangle$ and $|110\rangle$, and it gives $-1$ sign to both of these states. Note that using the second identity in \cref{swap2} we can obtain a similar construction of this unitary using 4 $\iSWAP$ gates. \\

For future applications, we also note that \cref{swap2} implies 
\be\label{Def-F}
F_{123}:=\iswap_{13} \iswap^\dag_{23}\iswap_{12}=(S_1\otimes S^\dag_2\otimes Z_3)\CZ_{12} \text{Sw}_{23}
\ee
See  \cref{tab:identities}
for the corresponding circuit identity.
 Also, note that changing each  $\i\text{Sw}$ gate to  $\i\text{Sw}^\dag$, or, vice versa, in the circuit on the left-hand side is equivalent to changing $S$ gate to $S^\dag$ and vice versa in the right-hand side. (This can be seen by considering the complex conjugate of both sides in the computational basis.)

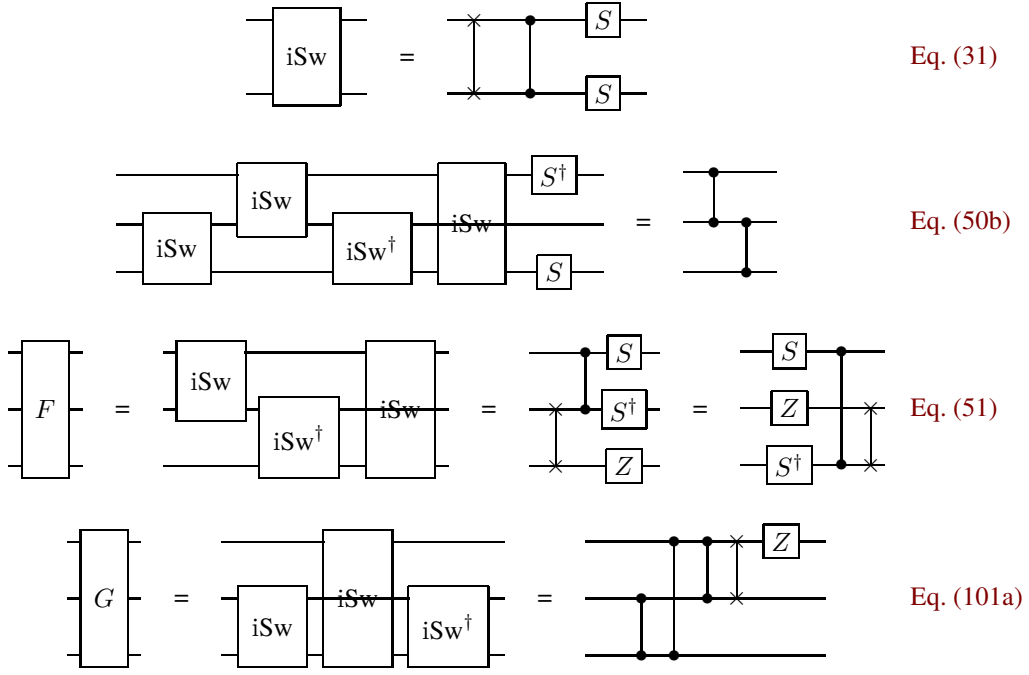
\begin{figure*}[t]
    \setlength\tabcolsep{5pt}
    \renewcommand\arraystretch{1.5}
    \begin{center}
        
    \begin{tabular}{cl}
    \begin{minipage}{3cm}
        \Qcircuit @C=1em @R=1.9em {
        & \multigate{1}{\i\text{Sw}} & \qw  \\
        & \ghost{\i\text{Sw}} & \qw  \\
        }
    \end{minipage}
    \quad = \quad
    \begin{minipage}{3cm}
        \Qcircuit @C=1em @R=1.5em {
        & \qswap & \qw&  \ctrl{1} & \qw  & \gate{S}& \qw  \\& \qswap \qwx   &\qw 
        & \ctrl{-1} & \qw &\gate{S}& \qw \\
        }
    \end{minipage}
    & \cref{eq:decompiSWAP}
    \\
    \\
    
    \begin{minipage}{1em}
    \Qcircuit @C=1em @R=.7em {
    & \qw &  \multigate{1}{\i\text{Sw}}& \qw &   \multigate{2}{\i\text{Sw}}& \gate{S^\dag} & \qw \\
    & \multigate{1}{\i\text{Sw}} & \ghost{i\text{Sw}} &  \multigate{1}{\i\text{Sw}^\dag} & \qw & \qw & \qw \\ 
      & \ghost{\i\text{Sw}}& \qw &  \ghost{\i\text{Sw}^\dag}&  \ghost{\i\text{Sw}} & \gate{S}&\qw  }
    \end{minipage}
    \quad = \quad
    \begin{minipage}{1em}
    \Qcircuit @C=1em @R=1.65em {
    &  \ctrl{1}& \qw &\qw    \\ 
     &  \ctrl{-1}& \ctrl{1} &\qw   \\
      & \qw& \ctrl{-1} &\qw 
       }
    \end{minipage}
    & \cref{art}
    \\
    \\
    
    \begin{minipage}{1em}
    \Qcircuit @C=.5em @R=1.25em @!R {
        & \multigate{2}{F} & \qw \\
        & \ghost{F} & \qw \\
        & \ghost{F} & \qw
        }
    \end{minipage}
    \quad = \quad
    \begin{minipage}{1em}
        \Qcircuit @C=.5em @R=1.25em @!R {
        &  \multigate{1}{\i\text{Sw}}   &\qw                              & \qw & \multigate{2}{\i\text{Sw}} & \qw\\
        & \ghost{i\text{Sw}}            & \multigate{1}{\i\text{Sw}^\dag} & \qw & \qw                & \qw\\        
        & \qw                           & \ghost{i\text{Sw}^\dag}         & \qw & \ghost{i\text{Sw}} & \qw\\
        }
        
    \end{minipage}
    
    \quad = \quad

    \begin{minipage}{1em}
    \Qcircuit @C=.5em @R=0.7em @!R {
    &\qw&\qw        &\qw  &\ctrl{1} &\gate{S}      &\qw \\
    &\qw&\qswap     &\qw  &\ctrl{-1}&\gate{S^\dag} &\qw \\
    &\qw&\qswap \qwx&\qw  &\qw      &\gate{Z}      &\qw
    } 
    \end{minipage}
    
    \quad = \quad

    \begin{minipage}{1em}
    \Qcircuit @C=.5em @R=0.7em @!R {
    &\qw & \gate{S}      & \qw  &\ctrl{2}  & \qw & \qw         &\qw \\
    &\qw & \gate{Z} & \qw  &\qw       & \qw & \qswap      &\qw \\
    &\qw & \gate{S^\dag} & \qw  &\ctrl{-2} & \qw & \qswap \qwx &\qw
    }
    \end{minipage}
         & \cref{Def-F} 
         \\\\
    
    \begin{minipage}{1em}
    \Qcircuit @C=.5em @R=1.25em @!R {
        & \multigate{2}{G} & \qw \\
        & \ghost{G} & \qw \\
        & \ghost{G} & \qw
        }
    \end{minipage}
    \quad = \quad
    \begin{minipage}{1em}
        \Qcircuit @C=0.6em @R=1.25em {
     & \qw                         &   \multigate{2}{\i\text{Sw}}&  \qw                        &\qw      \\
     &  \multigate{1}{\i\text{Sw}} &  \qw                        &   \multigate{1}{\i\text{Sw}^\dag} &\qw  \\
     & \ghost{i\text{Sw}}          &  \ghost{i\text{Sw}}         &  \ghost{i\text{Sw}^\dag}   &\qw
     }
     \end{minipage}
     \quad = \quad
     \begin{minipage}{1em}
     \vspace{-0.5em}
        \Qcircuit @C=1em @R=0.875em @!R {
        &\qw&\qw      & \ctrl{2} &\ctrl{1}    &\qswap      &\gate{Z}&\qw \\
        &\qw&\ctrl{1} & \qw      &\ctrl{-1}   &\qswap \qwx &\qw     &\qw \\
        &\qw&\ctrl{-1}& \ctrl{-2}&\qw         &\qw         &\qw     &\qw
        } 
    \end{minipage}
    & \cref{def-G}
    \end{tabular}
    \end{center}
    \caption{Summary of useful circuit identities involving $\i\text{SWAP}$ and $S$ gates. 
    \label{tab:identities}}
\end{figure*}

\subsubsection*{ \textbf{\emph{The SWAP and controlled-$Z$   gates with an ancilla qubit}}}

\cref{art} immediately gives a method for realizing the controlled-$Z$ gate: Suppose we prepare  qubit 1 in state $|0\rangle$. Then, 
 the overall action of this circuit on qubits 2 and 3 will be CZ gate. Another important unitary transformation that can be realized in this way is the SWAP gate. In \cref{Fig1} we compare the circuit obtained in this way (the bottom circuit) with the standard way of implementing 
the SWAP unitary with  3 $\i\text{SWAP}$ gates,  which is originally presented in  \cite{schuch2003natural}.
  As explained in the caption of \cref{Fig1}, the realization of the SWAP gate with ancilla is more robust against certain types of errors, such as the fluctuations of the master clock. (It is also worth noting that the use of an ancilla  has another advantage: by measuring the ancilla qubit at the end of the process in the z basis, it is possible to detect the presence of certain $X$ errors in the circuit.)

In \cref{App:pi/4} we present other examples of identities similar to \cref{swap2}. Such identities, for instance, imply  that the gate controlled-$R_z(-\frac{\pi}{2})$ can be realized using 3 $\i\text{SWAP}$, 3 $\sqrt{\i\text{SWAP}}$ gates and an ancilla qubit.

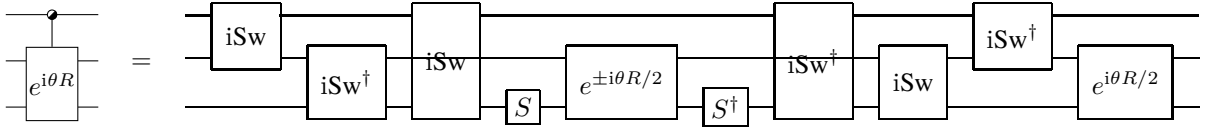
\begin{figure*}
\begin{minipage}{5em}
\begin{tikzpicture}[scale=0.82]
    \coordinate (left) at (-0.75, 0);
    \coordinate (right) at (0.75,0);
    \useasboundingbox (-0.75,-1.2) rectangle (0.75,1.2);
    \draw (left) -- (right);
    \draw ($(left)+(0,0.75)$) -- ($(right)+(0,0.75)$);
    \draw ($(left)-(0,0.75)$) -- ($(right)-(0,0.75)$);
    \node[tensor2h,minimum height=0.65*15mm] (V01) at (0,-0.375) {$e^{\i\theta R}$};
    \draw (V01) -- (0,0.75) pic[scale=0.82] {c01};
\end{tikzpicture}
\end{minipage}
\ $=$ \quad
\begin{minipage}{1em}
\vspace{0.3em}
  \Qcircuit @C=1em @R=.7em {
&  \multigate{1}{\i\text{Sw}}& \qw &   \multigate{2}{\i\text{Sw}}& \qw & \qw &   \qw & \multigate{2}{\i\text{Sw}^\dag} &  \qw  & \multigate{1}{\i\text{Sw}^\dag}&\qw & \qw  \\
 & \ghost{i\text{Sw}} &  \multigate{1}{\i\text{Sw}^\dag} & \qw & \qw & \multigate{1}{e^{\pm\i \theta R/2}} & \qw &  \qw &  \multigate{1}{\i\text{Sw}}  & \ghost{\i\text{Sw}^\dag} &  \multigate{1}{e^{\i \theta R/2}} & \qw    \\ 
  & \qw &  \ghost{\i\text{Sw}^\dag}&  \ghost{\i\text{Sw}} & \gate{S} & \ghost{e^{\pm\i \theta R/2}}& \gate{S^\dag} &  \ghost{\i\text{Sw}^\dag}&  \ghost{\i\text{Sw}} & \qw &\ghost{e^{\i \theta R/2}} & \qw }
\end{minipage}
  \caption{The circuit corresponding to \cref{x-rotation}: 
  Assuming the gate in the middle (the fifth  gate) is $\exp(-\i R\theta/2)$ with $R=(X\otimes X+Y\otimes Y)/2$, this circuit implements controlled-$\exp(\i\theta R)$, i.e.,  the 2-level 3-qubit unitary $U_1$ defined in \cref{x-rotation}. This family includes  useful unitaries, such as controlled-$\i\text{SWAP}$ (corresponding to  $\theta=\pi/2$, which means the fifth gate is 
 $\siSWAP^\dag$ and the tenth gate is $\siSWAP$)   as well as the controlled-$\sqrt{\i\text{SWAP} }$ (corresponding to $\theta=\pi/4$, which means the fifth gate is  $\exp(-\i\pi R / 8)$ and the tenth gate is $\exp(\i\pi R / 8)$). 
   On the other hand, when the gate in the middle is $\exp(+\i R\theta/2)$, the circuit realizes 
  $U_0$ in \cref{x-rotation}, which applies   $\exp(\i\theta R/2)$  when the control qubit is in state $|0\rangle$ rather than  $|1\rangle$. As we explain in the next section, using this construction recursively, we can obtain all energy-conserving 2-level unitaries with determinant  1.  }\label{Fig2} 
\end{figure*}

\begin{figure*}
\centering
\begin{tikzpicture}[yscale=0.667]
    \coordinate (Qleft) at (-3,0);
    \coordinate (Qright) at (-2,0);
    \draw (Qleft) -- (Qright);
    \draw ($(Qleft)+0.75*(0,1)$) -- ($(Qright)+0.75*(0,1)$);
    \draw ($(Qleft)-0.75*(0,1)$) -- ($(Qright)-0.75*(0,1)$);
    \node[tensor3h,minimum height=0.5*25mm] (Q) at ($(Qleft)!0.5!(Qright)$) {$Q$};
    \coordinate (start0) at (0, 0);
    \coordinate (start1) at (4.5, 0);
    \coordinate (left) at (-1.25, 0);
    \node (equal) at ($(Qright)!0.5!(left)$) {$=$};
    \coordinate (right) at ($(start1)+(4.25,0)$);
    \draw (left) -- (right);
    \draw ($(left)+0.75*(0,1)$) -- ($(right)+0.75*(0,1)$);
    \draw ($(left)-0.75*(0,1)$) -- ($(right)-0.75*(0,1)$);
    \node[tensor2h,minimum height=0.5*15mm] (V01) at ($(start0)-(0,0.375)$) {};
    \node[tensor3h,minimum height=0.5*25mm] (V02) at ($(start0)+(1.5,0)$) {};
    \node[tensor2h,minimum height=0.5*15mm] (V03) at ($(start0)+(3,0.375)$) {};
    \node[tensor2h,minimum height=0.5*15mm] (V11) at ($(start1)-(0,0.375)$) {};
    \node[tensor3h,minimum height=0.5*25mm] (V12) at ($(start1)+(1.5,0)$) {};
    \node[tensor2h,minimum height=0.5*15mm] (V13) at ($(start1)+(3,0.375)$) {};
    \foreach \ctl in {0,1}
    {
        \coordinate (control\ctl1) at ($(start\ctl) + 0.75*(0,1)$);
        \coordinate (control\ctl2) at ($(start\ctl) + 0.75*(1,0)$);
        \coordinate (control\ctl3) at ($(V\ctl3 |- 0,-0.75)$);
        \draw (control\ctl1) -- (V\ctl1);
        \draw (control\ctl2) |- ($(V\ctl2.north)+0.75*(0,0.5)$) -- (V\ctl2);
        \draw (control\ctl2) -- (V\ctl2.east);
        \draw (control\ctl3) -- (V\ctl3);
        \foreach \cnt in {1,2,3}
        {
            \pic[scale=1] at (control\ctl\cnt) {c\ctl};
        }
    }
\end{tikzpicture}
    \caption{\label{fig:SV3}The circuit for implementing arbitrary unitaries in $\mathcal{SV}_3^{\rm U(1)}$, i.e., all 3-qubit  energy-conserving unitaries with the property that the determinant of the component of unitary in each Hamming weight sector is 1. Here, each gate is in the form of \cref{Eq51} and hence can be realized by composing 3 circuits in the form of \cref{Fig2}.
    }
\end{figure*}
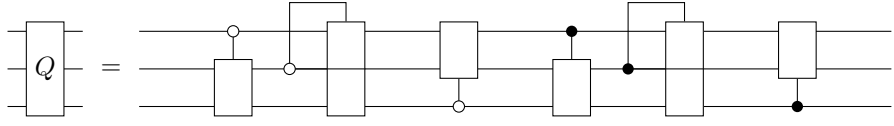

\subsection{General 2-level energy-conserving unitaries on 3 qubits}\label{ss:2levelSV3}

Next, we show how general 2-level energy-conserving unitaries in $\mathcal{SV}_3^{U(1)}$ can be realized. Here, we 
 adapt the approach developed in  \cite{marvian2023non} for quantum circuits with general Abelian symmetries to the case of U(1) symmetry.  To achieve this we use the gate $F_{123}$ defined in \cref{Def-F}.

 First, note that sandwiching $\exp(\i\theta R/2)$ on qubits 2 and 3 with $F_{123}$ and $F^\dag_{123}$ and using the facts that 
$S^\dag\otimes S^\dag$ and SWAP commute with $R$, 
we obtain 
\begin{align}\label{ew3}
&F^\dag_{123} S^\dag_3 \exp(\i \frac{\theta}{2} R_{23}) S_3 F_{123}\\ &=|0\rangle\langle 0|_1\otimes \exp(\i \frac{\theta}{2} R_{23})+|1\rangle\langle 1|_1\otimes\exp(-\i \frac{\theta}{2} R_{23})\ , \nonumber
\end{align}
or, equivalently,  the circuit identity
$$
\Qcircuit @C=.5em @R=0.7em @!R {
&\qw & \gate{S}      & \qw  &\ctrl{2}  & \qw & \qw         &\qw & \qw                             & \qw & \qw         &\qw  &\ctrl{2}  & \qw & \gate{S^\dag}& \qw &    &  & \qw  &\ctrl{2}  & \qw & \qw                             & \qw &\ctrl{2}  & \qw \\
&\qw & \gate{S^\dag} & \qw  &\qw       & \qw & \qswap      &\qw & \multigate{1}{e^{\frac{\i\theta R}{2}}} & \qw & \qswap      &\qw  &\qw       & \qw & \gate{S}     & \qw & ~=~&  & \qw  &\qw       & \qw & \multigate{1}{e^{\frac{\i\theta R}{2}}} & \qw &\qw       & \qw \\
&\qw & \gate{S^\dag} & \qw  &\ctrl{-2} & \qw & \qswap \qwx &\qw & \ghost{e^{\frac{\i\theta R}{2}} }       & \qw & \qswap \qwx &\qw  &\ctrl{-2} & \qw & \gate{S}     & \qw &    &  & \qw  &\ctrl{-2} & \qw & \ghost{e^{\frac{\i\theta R}{2}}}        & \qw &\ctrl{-2} & \qw   
}$$

This implies
\begin{align}\label{x-rotation}
&\exp(\i \frac{\theta}{2} R_{23}) F^\dag_{123} S^\dag_3 \exp(\i \frac{(-1)^b\theta}{2} R_{23}) S_3 F_{123}
\\ &=|\overline{b}\rangle\langle \overline{b}|_1\otimes \mathbb{I}_{23}+|b\rangle\langle b|_1\otimes \exp(\i  \theta R_{23})=: U_b \ \ : \  b=0,1\ ,\nonumber
\end{align}
which corresponds to the circuit in \cref{Fig2}, where $\overline{b}$ denotes the negation of $b$. This unitary is 2-level: it acts as $\exp(\i \theta X)$ in  the 2D subspace 
\be\label{sub}
\text{span}_\mathbb{C}\{|\widetilde{0}\rangle\equiv |b01\rangle\ ,\ |\widetilde{1}\rangle\equiv |b10\rangle \} \ \ : \  b=0,1 \ ,
\ee
whereas it acts as the identity operator   elsewhere.  By sandwiching the second (or, third) qubit between $S^\dag$ and $S$, we obtain the 2-level unitary 
\be\label{y-rotation}
{S_2U_bS_2^\dag =}S^\dag_3 U_b S_3=|b\rangle\langle b|\otimes \mathbb{I}+|\overline{b}\rangle\langle \overline{b}|\otimes \exp({\i  \theta} L_{23})\ \ : \  b=0,1\ ,
\ee
which acts as $\exp(\i \theta Y)$ in the same 2D subspace, where we have defined
\bes
\begin{align}
L&=\frac{1}{2}(Y\otimes X-X\otimes Y)=(I\otimes S^\dag) R (I\otimes S)
\\ &=\i (|10\rangle\langle 01|-|01\rangle\langle 10|) \ .
\end{align}
\ees

Therefore, we can realize both x and y rotations in the 2D subspace in \cref{sub}. Furthermore, 
 applying the Euler decomposition in \cref{eq:XYeuler} for SU(2) unitaries, we can obtain a general SU(2) unitary in this 2D subspace.

In conclusion, this way we can realize any 2-level 3-qubit energy-conserving unitary in the form of
\begin{align}\label{Eq51}
    \Lambda^b(V) := \ketbra{\bar{b}}{\bar{b}} \otimes \mathbb{I} +  \ketbra{b}{b}\otimes V\ \ \ : \  V = \left(\begin{matrix}1 &&\\&V\sct1&\\&&1\end{matrix}\right)\ ,
\end{align} 
for any $V\sct1 \in \text{SU}(2)$, with any of the sets \ref{gates_phaseZ} and \ref{gates_phaseH} in \cref{Thm1}. 
 Here, $\Lambda^b(V)$ denotes a controlled gate that applies $V$,  when the control qubit is $b$, namely a controlled-$V$ gate for $b=1$ or an anti-controlled-$V$ gate for $b=0$.  
The number of gates used in this construction is:
\begin{enumerate}
    \item 32 gates in set \ref{gates_phaseH} (choosing $H_\text{int}=R$), if we also allow the application of $S^\dag$, or 40 gates if we implement $S^\dag$ as $S^3$ in set \ref{gates_phaseH};
    \item 80 gates in set \ref{gates_phaseZ}, allowing $\sqrt{\i\text{SWAP}}^\dag$ (one needs 6 gates to implement $\exp(\pm\i\theta R)$; see \cref{ss:universal_set_gates_phaseZ}), or 152 if one implement $\i\text{SWAP}^\dag$ as $\sqrt{\i\text{SWAP}}^6$ and $\sqrt{\i\text{SWAP}}^\dag$ as $\sqrt{\i\text{SWAP}}^7$.
\end{enumerate}

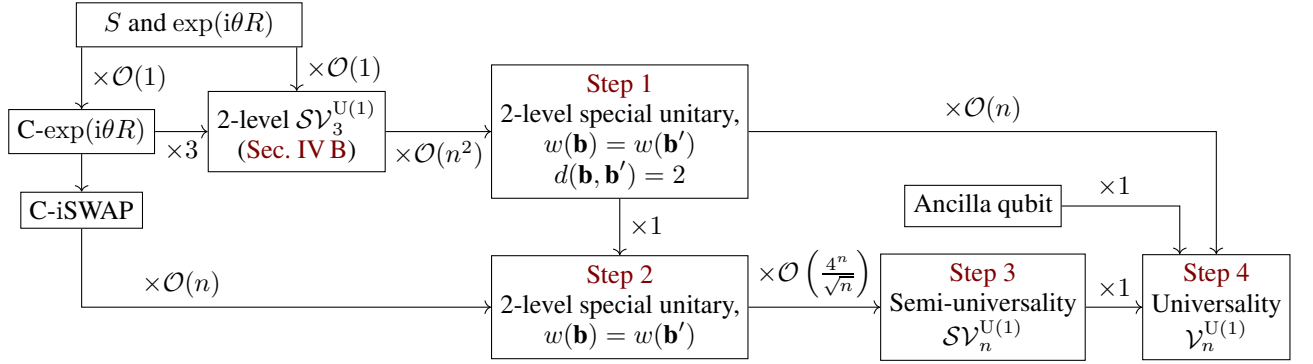
\begin{figure*}
    \centering
    \begin{tikzpicture}
        \node[draw=black,align=center] (CR){C-$\exp(\i\theta R)$};
        \node[draw=black,align=center,right=0.7 of CR] (SV3) {{2-level} $\mathcal{SV}_3^{\rm U(1)}$\\(\cref{ss:2levelSV3})};
        \node[draw=black,minimum width=3cm] (basic) at ($(CR.north)!0.5!(SV3.north) + (0,1)$) {$S$ and $\exp(\i\theta R)$};
        \node[draw=black,align=center,below=0.5 of CR] (ciswap) {C-$\i\text{SWAP}$};
        \node[right=1.4cm of SV3, draw=black,align=center] (d2) {\cref{step:dist2}\\ 2-level special unitary, \\ $w(\textbf{b})=w(\textbf{b}')$\\ $d(\textbf{b},\textbf{b}')=2$};
        \node[below=0.75 of d2, draw=black,align=center] (dany) {\cref{step:equalweight}\\ 2-level special unitary, \\ $w(\textbf{b})=w(\textbf{b}')$};
        \node[right=1.75cm of dany, draw=black,align=center] (semi) {\cref{step:semi-universal}\\Semi-universality\\$\mathcal{SV}_n^{\rm U(1)}$};
        \node[right=0.75cm of semi, draw=black,align=center] (univ) {\cref{step:universal}\\Universality\\$\mathcal{V}_n^{\rm U(1)}$};
        \node[draw=black,above=0.35cm of semi] (anc) {Ancilla qubit};
        \draw[->] (CR) -- (ciswap);
        \draw[->] (CR) -- node[below](times3){$\times3$}(SV3);
        \draw[->] (basic.south-|CR) -- node[right]{$\times \mathcal{O}(1)$} (CR);
        \draw[->] (basic.south-|SV3) -- node[right]{$\times \mathcal{O}(1)$} (SV3);
        \draw[->] (SV3) -- node[below]{$\times \mathcal{O}(n^2)$} (d2);
        \draw[->] (d2) -- node[right] (d2_dany) {$\times 1$} (dany);
        \draw[->] (ciswap) |- node[above,pos=0.7](ciswap_dany){\phantom{$\times\mathcal{O}(n)$}} (dany);
        \node at (ciswap_dany -| times3) {$\times \mathcal{O}(n)$};
        \draw[->] (dany) -- node[above] {$\times\mathcal{O}\left(\frac{4^n}{\sqrt{n}}\right)$} (semi);
        \draw[->] (semi) -- node[above](times1){$\times 1$} (univ);
        \draw[->] (d2) -| node[above, pos=0.25]{$\times \mathcal{O}(n)$} (univ);
        \draw[->] (anc) -| node[above,pos=0.25](anc_univ){\phantom{$\times1$}} ($(univ.north west) + (0.5,0)$);
        \node at (anc_univ -| times1) {$\times1$};
    \end{tikzpicture}
    \caption{Overview of the method for 
 synthesizing general energy-conserving unitaries. 
 The {number on the arrow} pointing from one box $A$ to another box $B$ indicates 
 the number of uses of $A$ in the construction of $B$. 
 C-$\exp(\i\theta R)$ denotes controlled-$\exp(\i\theta R)$ which can be realized with the circuit in \cref{Fig2}. 
 }
    \label{fig:3ton}
\end{figure*}

\subsubsection*{\textbf{\emph{Example: controlled-$\i\text{SWAP}$ gate}}} 

A useful unitary, which plays a crucial role in our construction, is the 3-qubit gate controlled-$\i\text{SWAP}$ denoted as
\be
\Lambda^1(\i\text{SWAP})=|0\rangle\langle 0|\otimes \mathbb{I}+|1\rangle\langle 1|\otimes \i\text{SWAP}\ .
\ee
This unitary can be realized by the circuit in \cref{Fig2}. In particular, this unitary corresponds to  
 {$\theta=\pi/2$}, in which case the fifth and tenth gates in the circuit are $\sqrt{\i\text{SWAP}}^\dag$ and  $\sqrt{\i\text{SWAP}}$ gates, respectively.

In summary,  controlled-$\i\text{SWAP}$ can be realized with 
$$3\ \i\text{SWAP}+3\  \i\text{SWAP}^\dag+ 1\  \sqrt{\i\text{SWAP}}+1\  \sqrt{\i\text{SWAP}}^\dag+ 1S+1S^\dag .$$

\subsection{Semi-universality on 3 qubits}\label{ss:general_sv3}

By composing the above 2-level unitaries one can obtain any  unitary  $Q\in\mathcal{SV}_3^{\rm U(1)}$. Recall that any such unitary  has a decomposition as
\begin{align}\label{eq:Q_Q1Q2}
    Q = \left(\begin{matrix}
        1 & & &\\
        & Q\sct1 & &\\
        & & Q\sct2 &\\
        & & & 1
    \end{matrix}\right)\ ,
\end{align}
where $Q\sct1, Q\sct2 \in \text{SU}(3)$ act  on the subspaces with Hamming weights 1 and 2, respectively.  According to \cref{lem:2-level_decomp} to be shown later, any unitary in $\text{SU}(3)$ can be decomposed into 3 2-level unitaries in $\text{SU}(3)$.  We conclude that any 3-qubit energy-conserving unitary   $Q\in \mathcal{SV}_3^{\rm U(1)}$  has a decomposition in the form of \cref{fig:SV3}, where each gate in this circuit is a 2-level energy-conserving unitary in the form of \cref{Eq51}.

\section{Synthesis of general energy-conserving unitaries with XY and Z interactions } \label{sec:nqubit}

In this section, we explain how general energy-conserving unitaries can be obtained from 2-level 3-qubit unitaries constructed in Section \ref{ss:2levelSV3}. 
 In the following, 
 for any pair of bit strings 
 $\textbf{b},\textbf{b}' \in\{0,1\}^n$, 
 $d(\textbf{b}, \textbf{b}'):=\sum_{j=1}^{n} |b_j-b'_j|$ denotes their Hamming distance, i.e., the number of bits taking different values in $\textbf{b}$ and  $\textbf{b}'$.

Our construction, which is  illustrated in \cref{fig:3ton}, 
 is based on the following steps: 
\begin{enumerate}
\item  Synthesizing 2-level special unitaries acting on any two basis elements $|\textbf{b}\rangle$ and  $|\textbf{b}'\rangle$, with Hamming weights $w(\textbf{b})=w(\textbf{b}')$, and Hamming distance $d(\textbf{b}, \textbf{b}')=2$ (See  \cref{lem:TnOn2}). 
\item  Generalizing the previous step to the case of arbitrary Hamming distance $d(\textbf{b}, \textbf{b}')$ (See  \cref{cor56}).

\item 
Synthesizing the subgroup  $\mathcal{SV}_n^{\rm U(1)}$, defined in \cref{sv}. 
\item Synthesizing the group of all energy-conserving unitaries, denoted by $\mathcal{V}_n^{\rm U(1)}$.
\end{enumerate}
Note that only in the last step, one needs to use a single ancilla qubit. It is also worth noting that some of the techniques that are used in the proofs of Steps \ref{step:dist2} to \ref{step:semi-universal} follow similar constructions that have been developed  previously  in the quantum circuit theory for circuits without symmetry constraints (See \cite{kitaev2002classical} and \cite{NielsenAndChuang}). In \cref{step:universal}, where we use an ancilla qubit,  we apply a technique that was developed previously in \cite{marvian2023non}.

\refstepcounter{step}\label{step:dist2}
\subsection*{Step \arabic{step}: Basis elements with equal Hamming weights and Hamming distance 2}

First, we show that by applying the construction in \cref{Fig2} recursively, we can realize any unitary in the form 
$V=U\sct1(\textbf{b},\textbf{b}')$, defined in \cref{eq:def_2-level}, 
where $d(\textbf{b},\textbf{b}')=2$,  $w(\textbf{b})=w(\textbf{b}')$, and 
$U\sct1(\textbf{b},\textbf{b}')$ acts unitarily on the 2-dimensional subspace spanned by $|\textbf{b}\rangle$ and $|\textbf{b}'\rangle$ and satisfies  
\be
\text{det}(V)=\text{det}(U\sct1(\textbf{b},\textbf{b}'))=1\ .
\ee
 Here, we denote the $2\times2$ unitary as $U\sct1$ since, as we will see below in \cref{eq:2-level_controlled},
we will think of it as the component of a 2-qubit unitary $U$ in the sector with Hamming weight 1.

For simplicity of presentation, we relabel the qubits such that the common bits of $\textbf{b}$ and $\textbf{b}'$ are labelled from 1 to $k=n-2$, and the two different bits of $\textbf{b}$ and $\textbf{b}'$ are labeled $n-1$ and $n$. Rearranging the qubits in this way, noting that $\textbf{b}$ and $\textbf{b}'$ have equal Hamming weights and Hamming distance 2, we can write 
\be
|\textbf{b}\rangle=|\textbf{c}\rangle|0\rangle|1\rangle\ , \ \ \ \ |\textbf{b}'\rangle=|\textbf{c}\rangle|1\rangle|0\rangle\ ,
\ee
where $\textbf{c}\in\{0,1\}^{n-2}$ contains the common bits of $\textbf{b}$ and $\textbf{b}'$. With this definition the target gate $U\sct1(\textbf{b},\textbf{b}')$ can be written as a multi-controlled $\mathcal{SV}_2^{\rm U(1)}$ gate as
\be \label{eq:2-level_controlled}
U\sct1(\textbf{b},\textbf{b}')= \Lambda^{\textbf{c}}(U)\ , ~\ \ \ \ 
U = \left(
\begin{array}{ccc}
1  &   &   \\
  & U\sct1  &   \\
  &   &   1
\end{array}\right)\ ,
\ee
where 
\be \label{eq:Lambda}
\Lambda^{\textbf{c}}(U) := \sum_{\textbf{c}' \neq \textbf{c}}\ket{\textbf{c}'}\bra{\textbf{c}'} \otimes \mathbb{I} + \ket{\textbf{c}}\bra{\textbf{c}}\otimes U\ ,
\ee
is a controlled-unitary  with control string $\textbf{c}$.

Now we use a recursive construction of the circuit for $\Lambda^{\textbf{c}}(U)$. (Here, we are applying a technique that was originally used in \cite{kitaev2002classical,NielsenAndChuang} for general quantum circuits.) We decompose $\textbf{c}$ as $\textbf{c} = \textbf{c}_1 \textbf{c}_2$, where $\textbf{c}_1 $ has length $\lfloor k/2 \rfloor$ and $\textbf{c}_2$ has length $\lceil k/2 \rceil$.
Since $U\sct1 \in \SU(2)$, there exists operators $A\sct1,B\sct1\in\SU(2)$ such that $A\sct1 B\sct1 A\sct1^\dag B\sct1^\dag = U\sct1$. (To see this, note that a general SU(2) unitary $U\sct1$ is equal to $\exp(\i \theta Z)$,  up to a change of basis, as  $U\sct1=W \exp(\i \theta Z) W^\dag$. Then, one can choose $A\sct1=W \exp(\i \theta Z/2) W^\dag$ and $B\sct1=\i W X W^\dag$.) 

Let $A:=1\oplus A\sct1 \oplus 1$ and $B:=1\oplus B\sct1\oplus 1$ be the extensions of $A\sct1$ and $B\sct1$ to 2-qubit unitaries, in the  same way $U$ extends $U\sct1$ to 
a 2-qubit unitary. Then $A B A^\dag B^\dag = U$.
We can therefore decompose $\Lambda^{\textbf{c}}(U)$ as 
\be\label{tr1}
\Lambda^{\textbf{c}}(U)=\Lambda^{\textbf{c}_1}(A)\Lambda^{\textbf{c}_2}(B)\Lambda^{\textbf{c}_1}(A^\dag)\Lambda^{\textbf{c}_2}(B^\dag)\ ,
\ee
with the first $\lfloor k/2 \rfloor$ qubits being the control qubits for $\Lambda^{\textbf{c}_1}(A)$ and $\Lambda^{\textbf{c}_1}(A^\dag)$, and the next $\lceil k/2 \rceil$ qubits being the control qubits for $\Lambda^{\textbf{c}_2}(B)$ and $\Lambda^{\textbf{c}_2}(B^\dag)$, as shown in \cref{fig:circuit_XYXY}.

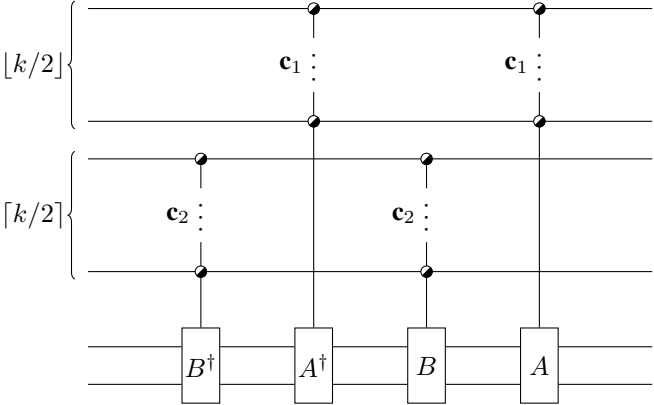
\begin{figure}[h]
    \centering
    \begin{tikzpicture}
        \coordinate (t1) at (0, 0);
        \coordinate (t2) at (0, -0.5);
        \def\cy{{0, 1, 2.5, 3, 4.5}}
        \foreach \i in {1,2,3,4}
        {
            \pgfmathsetmacro\result{\cy[\i]}
            \coordinate (c\i) at (0, \result);
        }
        \node[tensor2h] (g1) at ($(t1)!0.5!(t2)$) {$B^\dag$};
        \node[tensor2h] (g2) at ($(g1)+(1.5,0)$) {$A^\dag$};
        \node[tensor2h] (g3) at ($(g2)+(1.5,0)$) {$B$};
        \node[tensor2h] (g4) at ($(g3)+(1.5,0)$) {$A$};
        \coordinate (left) at (-1.5, 0);
        \coordinate (right) at (6, 0);
        \foreach \i in {1,2}
        {
            \draw[] (left|-t\i) \foreach \j in {1,2,3,4} { -- (g\j.west|-t\i) (g\j.east|-t\i) } -- (right|-t\i);
        }
        \foreach \i in {1,2,3,4}
        {
            \draw[] (left|-c\i) -- (right|-c\i);
        }
        \node[align=center,rotate=90] (dots1) at ($(g1|-c2)!0.5!(g1|-c1)$) {$\dots$};
        \node[align=center,rotate=90] (dots3) at ($(g3|-c2)!0.5!(g3|-c1)$) {$\dots$};
        \node[align=center,rotate=90] (dots2) at ($(g2|-c4)!0.5!(g2|-c3)$) {$\dots$};
        \node[align=center,rotate=90] (dots4) at ($(g4|-c4)!0.5!(g4|-c3)$) {$\dots$};
        \node at ($(dots1) + (-0.3,0)$) {$\textbf{c}_2$};
        \node at ($(dots2) + (-0.3,0)$) {$\textbf{c}_1$};
        \node at ($(dots3) + (-0.3,0)$) {$\textbf{c}_2$};
        \node at ($(dots4) + (-0.3,0)$) {$\textbf{c}_1$};
        \draw[] (g1|-c2) pic{c01} -- (dots1) -- (g1|-c1) pic{c01} -- (g1.north);
        \draw[] (g3|-c2) pic{c01} -- (dots3) -- (g3|-c1) pic{c01} -- (g3.north);
        \draw[] (g2|-c4) pic{c01} -- (dots2) -- (g2|-c3) pic{c01} -- (g2.north);
        \draw[] (g4|-c4) pic{c01} -- (dots4) -- (g4|-c3) pic{c01} -- (g4.north);

        \draw[decorate, decoration = {brace,mirror,raise=5pt}] ($(left|-c4)+(0,0.1)$) -- node[left,xshift=-5pt]{$\lfloor k/2 \rfloor$} ($(left|-c3)+(0,-0.1)$);
        \draw[decorate, decoration = {brace,mirror,raise=5pt}] ($(left|-c2)+(0,0.1)$) -- node[left,xshift=-5pt]{$\lceil k/2 \rceil$} ($(left|-c1)+(0,-0.1)$);
    \end{tikzpicture}
    \caption{\label{fig:circuit_XYXY} Circuit of a controlled gate with $k$ control qubits (See \cref{tr1}).}
\end{figure}

With this decomposition, we realize $\Lambda^{\textbf{c}}(U)$ with 4 controlled-unitaries, namely
$\Lambda^{\textbf{c}_1}(A),\Lambda^{\textbf{c}_2}(B),\Lambda^{\textbf{c}_1}(A^\dag)$ and $\Lambda^{\textbf{c}_2}(B^\dag)$, each of which has $\lfloor k/2 \rfloor$ or $\lceil k/2 \rceil$ number of control qubits. We can recursively decompose each of the 4 unitaries into gates with less number of control qubits, and finally reach the $k=1$ case. Note that when $k=1$, the single-controlled gate $\Lambda^{\textbf{c}}(U)$ is a {2-level} $\mathcal{SV}_3^{\rm U(1)}$ gate whose implementation has been addressed in \cref{ss:2levelSV3}. This gives the base case of the recursion.

The number of 2-level $\mathcal{SV}_3^{\rm U(1)}$ gates in this recursive construction is given by the following lemma, which
applies an argument previously used in the quantum circuit theory \footnote{See Chapter 8.1.3 of \cite{kitaev2002classical},  or Exercise 4.30 of \cite{NielsenAndChuang} 
}.

\begin{lemma}\label{lem:TnOn2}
For any pair of bit strings $\textbf{b},\textbf{b}'\in\{0,1\}^n$ with equal Hamming weights and Hamming distance 2, any  2-level unitary with determinant 1, which acts non-trivially only in the subspace spanned by $|\textbf{b}\rangle$ and $|\textbf{b}'\rangle$
  can be realized with no more than $9 n^2 / 8$ number of 3-qubit controlled gates in \cref{x-rotation}, which all belong to $\mathcal{SV}_3^{\text{U}(1)}$ and can be realized with the circuit in \cref{Fig2}.
\end{lemma}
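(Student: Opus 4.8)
The plan is to express the target $2$-level gate as a multiply-controlled two-qubit gate and then unfold it with the standard commutator-and-halving recursion for controlled unitaries, the base case being the $3$-qubit gates already constructed in \cref{ss:2levelSV3}. First, since $d(\textbf{b},\textbf{b}')=2$ while $w(\textbf{b})=w(\textbf{b}')$, the two strings differ in exactly two positions, with a $0$ and a $1$ interchanged; after relabelling the qubits so that these are the last two, $\textbf{b}=\textbf{c}01$ and $\textbf{b}'=\textbf{c}10$ with $\textbf{c}\in\{0,1\}^{k}$ and $k=n-2$. As in \cref{eq:2-level_controlled}, the target unitary (which has determinant $1$) is then exactly $\Lambda^{\textbf{c}}(U)$ with $U=1\oplus U\sct1\oplus1$ and $U\sct1\in\SU(2)$. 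So it suffices to bound, in terms of $k$, the number of base-case gates needed to build an arbitrary such $k$-controlled gate.

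For the recursive step I would, given any $U\sct1\in\SU(2)$, diagonalise $U\sct1=We^{\i\theta Z}W^\dag$ and put $A\sct1=We^{\i\theta Z/2}W^\dag$, $B\sct1=\i\,WXW^\dag$, so that $A\sct1,B\sct1\in\SU(2)$ and $A\sct1 B\sct1(A\sct1)^\dag(B\sct1)^\dag=U\sct1$. Extending to $A=1\oplus A\sct1\oplus1$ and $B=1\oplus B\sct1\oplus1$, both in $\mathcal{SV}_2^{\rm U(1)}$, \cref{tr1} reads $\Lambda^{\textbf{c}}(U)=\Lambda^{\textbf{c}_1}(A)\,\Lambda^{\textbf{c}_2}(B)\,\Lambda^{\textbf{c}_1}(A^\dag)\,\Lambda^{\textbf{c}_2}(B^\dag)$, where $\textbf{c}=\textbf{c}_1\textbf{c}_2$ is split into blocks of lengths $\lfloor k/2\rfloor$ and $\lceil k/2\rceil$ (\cref{fig:circuit_XYXY}). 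Each of the four factors is again the controlled version of an $\mathcal{SV}_2^{\rm U(1)}$ gate, hence lies in $\mathcal{SV}^{\rm U(1)}$, so the same move applies to each recursively, halving the number of controls each time. When the control string has length $1$, $\Lambda^{c}(U)$ is precisely a $2$-level $\mathcal{SV}_3^{\rm U(1)}$ gate of the form \cref{Eq51}, which by \cref{ss:2levelSV3} (an Euler decomposition of $U\sct1$ in the weight-$1$ sector) is realized by a bounded number of $3$-qubit controlled gates of the form \cref{x-rotation}, i.e.\ by circuits of the form \cref{Fig2}. This gives the base case.

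It remains to count. Let $T(k)$ be the number of such base-case $3$-qubit gates produced by the recursion on a $k$-controlled gate. The decomposition above gives $T(k)=2T(\lfloor k/2\rfloor)+2T(\lceil k/2\rceil)$ with $T(1)$ a constant ($T(1)=1$ if one counts the $\mathcal{SV}_3^{\rm U(1)}$ gates of the form \cref{Eq51}, and an $\mathcal{O}(1)$ multiple of that if one counts the individual \cref{Fig2} circuits). Solving this recurrence gives $T(k)\le\tfrac98 k^2$ for every $k\ge1$, and since $k=n-2\le n$ we get $T(n-2)\le\tfrac98(n-2)^2\le\tfrac98 n^2$, the claimed bound.

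The structural parts are routine: one checks that every gate appearing along the recursion stays inside $\mathcal{SV}^{\rm U(1)}$ (immediate, as all the relevant $2\times2$ blocks are in $\SU(2)$), and that the length-$1$ leaves coincide with the construction of \cref{ss:2levelSV3}. The one place needing care — and the main obstacle — is the sharp constant $\tfrac98$: the naive induction $T(k)\le\tfrac98 k^2$ fails to close for odd $k$ because the $\lceil k/2\rceil$ term overshoots slightly, so one has to either strengthen the inductive hypothesis or analyse the recurrence directly via the binary expansion of $k$, where $\sup_{k\ge1}T(k)/k^2=\tfrac98$ is approached (e.g.\ along $k=(2^{2j+1}+1)/3$) but never exceeded.
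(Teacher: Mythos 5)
Your proposal is correct and follows essentially the same route as the paper: reduce to a $k$-controlled gate $\Lambda^{\textbf{c}}(U)$ with $k=n-2$, apply the commutator decomposition $U=ABA^\dag B^\dag$ with the control string split into halves, recurse down to the single-controlled $\mathcal{SV}_3^{\rm U(1)}$ base case, and solve $T(k)=2T(\lfloor k/2\rfloor)+2T(\lceil k/2\rceil)$, $T(1)=1$. Your diagnosis that the naive induction $T(k)\le\tfrac98k^2$ fails to close for odd $k$ and must be strengthened is exactly what the paper does, proving $T(k)\le(9k^2-1)/8$ for odd $k$ and $T(k)\le(9k^2-4)/8$ for even $k$ by induction.
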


\begin{proof}

Let $T(k)$ be the number of 3-qubit controlled gates used to implement a gate in the form of $\Lambda^{\textbf{c}}(U)$ with $|\textbf{c}|=k$ control qubits. 
 Based on the circuit in \cref{fig:circuit_XYXY}, we obtain the recursive relation:
\bes
\begin{align} 
    T(k) &= 2 T \left(\left\lfloor \frac k2 \right\rfloor\right) + 2 T \left(\left\lceil \frac k2 \right\rceil\right), ~ k \geq 2 \\
    T(1) &= 1
\end{align}
\ees
    We will prove by induction that
    \begin{align} \label{eq:Tk_ineq}
        T(k) \leq \left\{ \begin{array}{lr}
            (9k^2-1)/8 &  k \text{ is odd}\\
            (9k^2-4)/8 &  k \text{ is even}
        \end{array} \right.
    \end{align}
    $T(1)=1, T(2) = 4$ satisfy the inequality. For $k \geq 3$, assuming the inequality holds for all smaller $k$,
	\begin{enumerate}
		\item If $k$ is even, $T(k) = 4T(k/2) \leq 4 \times \frac{9(k/2)^2-1}{8} = \frac{9k^2-4}{8}$;
		\item If $k$ is odd, then one of $\lfloor k/2 \rfloor = \frac{k-1}{2}$ and $\lceil k/2 \rceil=\frac{k+1}{2}$ is even, and the other one is odd. In either case,
        $T(k) = 2T\left(\frac{k-1}{2}\right) +  2 T\left(\frac{k+1}{2}\right) \leq 2 \times \left(\frac{9(\frac{k-1}{2})^2}{8} + \frac{9(\frac{k+1}{2})^2}{8} -\frac18 - \frac48 \right) = \frac{9k^2-1}{8}$.
	\end{enumerate}
    Therefore, \cref{eq:Tk_ineq} holds for every $k \geq 1$. 
    
    To implement an $n$-qubit 2-level gate on the subspace spanned by   $|\textbf{b}\rangle$ and $|\textbf{b}'\rangle$, we need to implement a controlled gate with $n-2$ control qubits. The number of 3-qubit gates used is therefore $T(n-2) \le 9n^2/8$.

\end{proof}

\refstepcounter{step}\label{step:equalweight}
\subsection*{Step \arabic{step}: Basis elements with equal Hamming weights }

Next, using the construction developed in \cref{step:dist2}, we show how one can implement general 2-level unitaries that preserve the Hamming weight of states in the computational basis.
The technique used here is a variation  of a similar technique that has been previously used for  general quantum circuits   \cite{kitaev2002classical}.

\begin{lemma}\label{lem10}
 For a system with $n$ qubits, consider a 2-level unitary transformation $V$ 
that acts trivially on the subspace orthogonal to $|\textbf{b}\rangle,|\textbf{b}'\rangle$, where  $\textbf{b}, \textbf{b}'\in \{0,1\}^n$ are a pair of bit strings with equal Hamming weights. Any such unitary can be decomposed as 
 \be
V= K^\dag W K\ ,
\ee
where $W$ is also a 2-level unitary that acts on the subspace spanned by $|\textbf{b}'\rangle$ and $|\textbf{b}''\rangle$, where $\textbf{b}', \textbf{b}''\in \{0,1\}^n$ have equal Hamming weights and have Hamming distance $d(\textbf{b}',\textbf{b}'')=2$, and  unitary $K$ can be realized as a sequence of $d(\textbf{b},\textbf{b}')/2-1$  controlled-$\i\text{SWAP}$ gates.
\end{lemma}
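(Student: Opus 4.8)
The plan is to construct $K$ explicitly as a short product of controlled-$\i\text{SWAP}$ gates that steers $|\textbf{b}\rangle$ toward $|\textbf{b}'\rangle$ while fixing $|\textbf{b}'\rangle$, stopping one step before the two strings coincide. Write $2\ell := d(\textbf{b},\textbf{b}')$; this is even precisely because $w(\textbf{b})=w(\textbf{b}')$, and we may assume $\textbf{b}\neq\textbf{b}'$ so that $\ell\ge 1$. If $\ell=1$ there is nothing to do: take $K=\mathbb{I}$, $W=V$, and $\textbf{b}''=\textbf{b}$. So assume $\ell\ge 2$. Split the positions where $\textbf{b}$ and $\textbf{b}'$ disagree into $S_{10}=\{j:\ b_j=1,\ b'_j=0\}$ and $S_{01}=\{j:\ b_j=0,\ b'_j=1\}$; since $w(\textbf{b})=w(\textbf{b}')$, these sets have the same size $\ell$. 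Label $S_{10}=\{j_1,\dots,j_\ell\}$ and $S_{01}=\{k_1,\dots,k_\ell\}$; these $2\ell$ qubit labels are distinct, and $2\ell=d(\textbf{b},\textbf{b}')\le n$, so the system has room for all of them.

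The key step is the choice of $K$. For $i=1,\dots,\ell-1$, let $K_i$ be the controlled-$\i\text{SWAP}$ gate with control qubit $j_\ell$ (activated on $|1\rangle$) and target qubits $j_i$ and $k_i$, and set $K := K_{\ell-1}\cdots K_1$. Using the same ``spectator'' qubit $j_\ell$ as the control of \emph{every} $K_i$ achieves two things at once: since $b'_{j_\ell}=0$, each $K_i$ acts trivially on $|\textbf{b}'\rangle$, hence $K|\textbf{b}'\rangle=|\textbf{b}'\rangle$; and since $j_\ell$ never serves as a target (it lies in $S_{10}$, while the targets lie in $S_{10}\setminus\{j_\ell\}$ or in $S_{01}$), its value stays $1$ throughout the action of $K$ on $|\textbf{b}\rangle$, so every $K_i$ remains ``armed''. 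Because the target pairs $\{j_i,k_i\}$ are pairwise disjoint, the $K_i$ commute, and a direct check shows that $K|\textbf{b}\rangle=\i^{\,\ell-1}|\textbf{b}''\rangle$, where $\textbf{b}''$ is obtained from $\textbf{b}$ by flipping the bits at $j_1,\dots,j_{\ell-1}$ from $1$ to $0$ and at $k_1,\dots,k_{\ell-1}$ from $0$ to $1$ (each $\i\text{SWAP}$ contributing a factor $\i$). Now $\textbf{b}''$ agrees with $\textbf{b}'$ everywhere except at $j_\ell$ (where $\textbf{b}''$ has $1$ and $\textbf{b}'$ has $0$) and at $k_\ell$ (where $\textbf{b}''$ has $0$ and $\textbf{b}'$ has $1$), so $d(\textbf{b}'',\textbf{b}')=2$; and $w(\textbf{b}'')=w(\textbf{b})=w(\textbf{b}')$ since every $K_i$ preserves Hamming weight.

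Finally, put $W:=KVK^\dagger$, so that $V=K^\dagger W K$ as claimed. Since $K$ is unitary and carries the two-dimensional subspace spanned by $|\textbf{b}\rangle,|\textbf{b}'\rangle$ onto the one spanned by $|\textbf{b}''\rangle,|\textbf{b}'\rangle$, conjugating the $2$-level unitary $V$ by $K$ yields a $2$-level unitary $W$ supported on the subspace spanned by $|\textbf{b}''\rangle,|\textbf{b}'\rangle$ (the phase $\i^{\,\ell-1}$ merely rescales the off-diagonal entries of $W$, which is immaterial). For the count, $K$ is a product of exactly $\ell-1 = d(\textbf{b},\textbf{b}')/2-1$ controlled-$\i\text{SWAP}$ gates, each a genuine $3$-qubit gate acting on the distinct qubits $j_\ell,j_i,k_i$, and each realizable with $\i\text{SWAP}$, $\sqrt{\i\text{SWAP}}$ and $S$ gates by the constructions of \cref{ss:2levelSV3} (cf.\ \cref{Fig2}). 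The only delicate point is the bookkeeping in the middle step --- verifying that at each stage the control bit still reads $1$ and the two target bits still read $1$ and $0$ on the evolving string --- and the choice of a fixed, never-moved control qubit reduces this to a one-line argument, so I do not anticipate a genuine obstacle.
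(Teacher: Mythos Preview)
Your proof is correct and follows essentially the same approach as the paper's: build $K$ as a product of controlled-$\i\text{SWAP}$ gates that steer $|\textbf{b}\rangle$ toward $|\textbf{b}'\rangle$ while fixing $|\textbf{b}'\rangle$, then set $W=KVK^\dag$. The only difference is that you use a single fixed control qubit $j_\ell$ for every gate, whereas the paper varies the control (using $j_{i+1}$ for the $i$-th swap of $(j_i,k_i)$); your choice makes the bookkeeping marginally cleaner but is otherwise the same construction.
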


We conclude that 
\begin{corollary}\label{cor56}
Any 2-level unitary with determinant 1,
acting on a pair of basis elements  
$|\textbf{b}\rangle$ and $|\textbf{b}'\rangle$ with equal Hamming weights can be realized with
$d(\textbf{b},\textbf{b}')/2-1$ controlled-$\i\text{SWAP}$,  $d(\textbf{b},\textbf{b}')/2-1$ controlled-$\i\text{SWAP}^\dag$ gates, plus one 2-level gate of the type constructed in \cref{step:dist2}. In total this requires  $\mathcal{O}(n^2+d(\textbf{b},\textbf{b}')) = \mathcal{O}(n^2)$ of gates in any of the gate sets \ref{gates_phaseZ}  and \ref{gates_phaseH} in \cref{Thm1}.
\end{corollary}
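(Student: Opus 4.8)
The plan is to derive \cref{cor56} by combining \cref{lem10} with \cref{lem:TnOn2}, so the substantive step is proving the decomposition $V=K^\dagger W K$ asserted in \cref{lem10}, which I would do constructively — it is the Hamming-weight-preserving analogue of the Gray-code reduction used for ordinary circuits. Write $P=\{i:b_i\neq b_i'\}$ and split it as a disjoint union $P=P_{10}\cup P_{01}$, with $P_{10}=\{i:b_i=1,\,b_i'=0\}$ and $P_{01}=\{j:b_j=0,\,b_j'=1\}$; since $w(\textbf{b})=w(\textbf{b}')$ we get $|P_{10}|=|P_{01}|=d(\textbf{b},\textbf{b}')/2=:m$. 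List $P_{10}=\{i_1,\dots,i_m\}$ and $P_{01}=\{j_1,\dots,j_m\}$ and single out $i_m$ as an \emph{anchor}: $|\textbf{b}\rangle$ has value $1$ on qubit $i_m$, $|\textbf{b}'\rangle$ has value $0$ there, and $i_m$ is distinct from every other $i_r,j_r$. Let $K$ be the product, over $r=1,\dots,m-1$, of controlled-$\i\text{SWAP}$ gates with control qubit $i_m$ and target qubits $(i_r,j_r)$. These leave $|\textbf{b}'\rangle$ fixed (the control is $0$) and send $|\textbf{b}\rangle\mapsto\i^{m-1}|\textbf{b}''\rangle$, where $\textbf{b}''$ agrees with $\textbf{b}'$ everywhere except on qubits $i_m$ and $j_m$; hence $w(\textbf{b}'')=w(\textbf{b}')$ and $d(\textbf{b}'',\textbf{b}')=2$. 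Put $W:=KVK^\dagger$, which is a 2-level unitary on $\mathrm{span}\{|\textbf{b}''\rangle,|\textbf{b}'\rangle\}$; then $V=K^\dagger W K$. Since the Hamming-weight-$1$ block of $\i\text{SWAP}$ has determinant $1$, every gate in $K$ lies in $\mathcal{SV}_3^{\rm U(1)}$, and conjugation together with the harmless rescaling of one basis vector by $\i^{m-1}$ preserves the $2\times2$ determinant, so $W$ is again a determinant-$1$ 2-level unitary whenever $V$ is.

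Granting \cref{lem10}, \cref{cor56} is bookkeeping. The middle factor $W$ is a determinant-$1$ 2-level unitary on a pair of equal-weight basis states at Hamming distance $2$, so \cref{lem:TnOn2} realizes it with at most $9n^2/8$ three-qubit controlled gates of the form in \cref{x-rotation}, each built via the circuit of \cref{Fig2} from $\mathcal{O}(1)$ elementary gates of either set \ref{gates_phaseZ} or \ref{gates_phaseH}. The factors $K$ and $K^\dagger$ consist of $d(\textbf{b},\textbf{b}')/2-1$ controlled-$\i\text{SWAP}$ and $d(\textbf{b},\textbf{b}')/2-1$ controlled-$\i\text{SWAP}^\dagger$ gates, each costing $\mathcal{O}(1)$ elementary gates by the construction in \cref{ss:2levelSV3}. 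Summing gives $\mathcal{O}(n^2)+\mathcal{O}(d(\textbf{b},\textbf{b}'))=\mathcal{O}(n^2+d(\textbf{b},\textbf{b}'))=\mathcal{O}(n^2)$, using $d(\textbf{b},\textbf{b}')\le n$; when $d(\textbf{b},\textbf{b}')=2$ the factor $K$ is empty and the claim is just \cref{lem:TnOn2}.

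The step needing the most care is the design of $K$ in \cref{lem10}: the control qubit must be chosen so that $K$ fixes $|\textbf{b}'\rangle$ while driving $|\textbf{b}\rangle$ to a string at Hamming distance $2$ from $|\textbf{b}'\rangle$, and one must check that no intermediate state created along the way ever coincides with $|\textbf{b}'\rangle$ on the control qubit — which is precisely why the anchor $i_m$, a coordinate untouched by every gate of $K$, works — and that the accumulated powers of $\i$ only conjugate the relevant $2\times2$ block by a phase and so can be ignored. Everything else reuses machinery already in place: the recursive count of \cref{step:dist2} controlling $W$'s cost, and the constant-size controlled-$\i\text{SWAP}$ and 2-level subcircuits of \cref{sec:3qubit}.
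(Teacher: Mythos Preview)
Your proof is correct and follows essentially the same approach as the paper: reduce to Hamming distance $2$ via a chain of $d(\textbf{b},\textbf{b}')/2-1$ controlled-$\i\text{SWAP}$ gates that fix $|\textbf{b}'\rangle$, then invoke \cref{lem:TnOn2}. The one (harmless) difference is in the choice of control qubits for $K$: you use a single fixed anchor $i_m\in P_{10}$ throughout, whereas the paper uses a sliding control $l_{j+1}$ for the $j$-th swap, i.e.\ $K=\Lambda_{l_t}^{1}(\iswap_{l_{t-1},r_{t-1}})\cdots\Lambda_{l_2}^{1}(\iswap_{l_1,r_1})$. Both choices guarantee the control bit is $1$ on the evolving image of $|\textbf{b}\rangle$ and $0$ on $|\textbf{b}'\rangle$; your fixed-anchor version is arguably cleaner since the invariance of the control bit is immediate rather than something to re-verify at each step. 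The paper's treatment of the accumulated $\i^{t-1}$ phase (its \cref{eq:phase_from_iswaps}) matches your observation that this merely conjugates the $2\times2$ block by a diagonal phase and so preserves $\det(W)=1$.
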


\begin{proof}(\cref{lem10}) For any pair of bit strings $\textbf{b}$ and $\textbf{b}'$, there exists a sequence of $n$-bit strings as 
\be\label{seq}
\textbf{b}=\textbf{b}_0 \rightarrow \textbf{b}_1 \rightarrow \dots \rightarrow \textbf{b}_t =\textbf{b}'
 \ ,
\ee
such that (i) all bit strings in this sequence have equal Hamming weights, (ii) For any pair of consecutive bit strings $\textbf{b}_j$ and $\textbf{b}_{j+1}$ the Hamming distance is 2, and (iii) The length of this sequence is $t+1$, where $t=d(\textbf{b},\textbf{b}')/2$.

This sequence of bit strings is realized by applying  $t+1$ swaps that can be determined as follows: Suppose we specify each bit by its label (i.e., its location) which is an integer $i\in \{1,\cdots, n\}$, such that $\textbf{b}=b_1\cdots b_i\cdots b_n$. 
 Consider the bits in which  $\textbf{b}$ and $\textbf{b}'$ take different values.  There are $d(\textbf{b},\textbf{b}')$ such bits, where $d(\textbf{b},\textbf{b}')$ is the Hamming distance of $\textbf{b}$ and $\textbf{b}'$, which is an even integer because they have equal Hamming weights.  These bits can be partitioned into two subsets, each having size $t:=d(\textbf{b},\textbf{b}')/2$ with the following property: 
bits in the first subset take value 0 in $\textbf{b}$ and take value 1 in  $\textbf{b}'$. And, the bits in the second subset take value 1 in $\textbf{b}$ and take value 0 in  $\textbf{b}'$. More precisely, let $$l_1,l_2,\cdots  ,l_{t}$$ be $t$ distinct integers in $\{1, \cdots, n\}$ where $b_{l_j}=1$ and $b'_{l_j}=0$ for $j=1,\dots,t$. Similarly, let $$r_1,r_2,\cdots  , r_{t}$$ be $t$ distinct integers in $\{1, \cdots, n\}$
where $b_{r_j}=0$ and $b'_{r_j}=1$ for $j=1,\dots,t$. 
Then, define 
\be
|\textbf{b}_j\rangle=\text{SWAP}_{l_j,r_j} |\textbf{b}_{j-1}\rangle\ ,
\ee
where $|\textbf{b}_0\rangle=|\textbf{b}\rangle$, and $\text{SWAP}_{l_j,r_j}$ is the SWAP unitary acting on qubits $l_j$ and $r_j$. In this way, we obtain  a sequence  of bit strings in the form of \cref{seq} satisfying the desired properties. 

In general, the gate $\text{SWAP}_{l_j,r_j}$ acts non-trivially on state $|\textbf{b}'\rangle$. Now, suppose rather than this gate, we use controlled-$\i\text{SWAP}$,  
 with the control qubit chosen to differentiate $\ket{\textbf{b}'}$ and $\ket{\textbf{b}_{j-1}}$. Namely, we choose 
\bes
\begin{align}
\Lambda_{l_{j+1}}^{1}(\i\text{Sw}_{l_j,r_j})|\textbf{b}_{j-1}\rangle&=\i|\textbf{b}_{j}\rangle\\ \Lambda_{l_{j+1}}^{1}(\i\text{Sw}_{l_j,r_j})|\textbf{b}'\rangle&=|\textbf{b}'\rangle\ ,
\end{align}
\ees
where the unitary in the left-hand side is controlled-$\i\text{SWAP}$, i.e.,
\be
\Lambda_{l_{j+1}}^{1}(\i\text{Sw}_{l_j,r_j})= |0\rangle\langle 0|_{l_{j+1}}\otimes \mathbb{I}_{l_j,r_j}+|1\rangle\langle 1|_{l_{j+1}}\otimes \i\text{Sw}_{l_j,r_j}\ ,
\ee
where $l_{j+1}$ is the control qubit, and $l_j$ and $r_j$ are the target qubits.  
 This gate acts trivially on state $\textbf{b}'$ since by the above definition, $b'_{l_{j+1}}=0$ for every $j=0,\dots,t-1$. 

Then, defining 
\begin{align} \label{eq:K_as_cisw}
K&=\Lambda_{l_t}^{1}(\iswap_{l_{t-1},r_{t-1}}) \cdots \cdots \Lambda_{l_3}^{1}(\iswap_{l_2,r_2}) \Lambda_{l_2}^{1}(\iswap_{l_1,r_1}) \ ,
\end{align}
we find 
\bes \label{eq:KbKb'}
\begin{align}
K|\textbf{b}\rangle&=\i^{t-1} |\textbf{b}_{t-1}\rangle\\ K|\textbf{b}'\rangle&=|\textbf{b}'\rangle\ .
\end{align}
\ees
Therefore, by defining 
\be
W=K V K^\dag \ ,
\ee
we find that $W$ is a 2-level unitary acting on the subspace spanned by $|\textbf{b}''\rangle=|\textbf{b}_{t-1}\rangle$ and $|\textbf{b}'\rangle$, where  $d(\textbf{b}',\textbf{b}'')=2$.
 More explicitly, if $V = U(\textbf{b},\textbf{b}')$ with
\begin{align}
    U = \left(\begin{matrix}a&b\\c&d\end{matrix}\right)\,,
\end{align}
then $W = \widetilde{U}(\textbf{b}'',\textbf{b}')$ with
\begin{align} \label{eq:phase_from_iswaps}
    \widetilde{U} = \left(\begin{matrix}\i^{t-1}&\\&1\end{matrix}\right)U\left(\begin{matrix}\i^{1-t}&\\&1\end{matrix}\right) = \left(\begin{matrix}a&\i^{t-1}b\\\i^{1-t}c&d\end{matrix}\right)\,.
\end{align}

This completes the proof of  \cref{lem10}. 

\end{proof}

\refstepcounter{step}\label{step:semi-universal}
\subsection*{Step \arabic{step}: Semi-universality }

As defined in \cref{sv}, we call a gate set  semi-universal if it generates 
all  unitaries in $\mathcal{SV}^{\text{U}(1)}_{n}$ for all $n$, i.e.,  all energy-conserving unitaries  $V=\bigoplus_{m=0}^n V\sct{m}$ 
 satisfying  the additional constraint $\text{det}(V\sct{m})=1: m=0, \cdots, n$. It can be easily shown that any such unitary can be decomposed into a sequence of 2-level energy-conserving unitaries that satisfy the same constraint. To show this we use the following result. 
\begin{lemma}\cite{reck1994experimental}\label{lem:2-level_decomp}
    Given any basis for $\mathbb{C}^d$,  any unitary in $\text{SU}(d)$ can be decomposed into a product of no more than $d(d-1)/2$ unitaries in $\text{SU}(d)$ that are 2-level with respect to this basis. 
\end{lemma}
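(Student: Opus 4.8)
The plan is to prove this by the Givens--Jacobi elimination of Reck et al., adapted so that every factor stays in $\SU(d)$. Fix the given orthonormal basis $|1\rangle,\dots,|d\rangle$ and represent $U\in\SU(d)$ as a matrix. I would left-multiply $U$ by a sequence of two-level unitaries, each supported on a pair of basis vectors and each of determinant $1$, clearing the sub-diagonal entries one column at a time, until what remains is the identity. Reading the resulting relation $V_N\cdots V_1 U=\mathbb{I}$ backwards then gives $U=V_1^{\dagger}\cdots V_N^{\dagger}$, a product of two-level $\SU(d)$ unitaries, and the whole job is to bound $N$ by $d(d-1)/2$.

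In more detail, I would first clear column $1$: for $j=2,\dots,d$ (with the entries above position $(j,1)$ already zeroed), the $(j,1)$ entry can be rotated to $0$ by a two-level unitary on $\mathrm{span}\{|1\rangle,|j\rangle\}$ of the form $\bigl(\begin{smallmatrix}p&q\\-\bar q&\bar p\end{smallmatrix}\bigr)$ with $|p|^2+|q|^2=1$, which has determinant $1$; such a rotation is pinned down only up to an overall phase $\bigl(\begin{smallmatrix}e^{-\i\eta}&\\&e^{\i\eta}\end{smallmatrix}\bigr)$, and I would use that residual $\mathrm{U}(1)$ freedom. After $d-1$ such rotations column $1$ equals $\lambda_1|1\rangle$ with $|\lambda_1|=1$, hence by unitarity row $1$ equals $\bar\lambda_1\langle 1|$; I would spend the free phase of, say, the last rotation to make $\lambda_1=1$, leaving the matrix equal to $\mathbb{I}$ on $|1\rangle$ together with a $(d-1)\times(d-1)$ unitary $M'$ on the remaining coordinates, with $\det M'=1$ since all factors applied so far and $U$ have determinant $1$. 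Then I would recurse on $M'$. Handling columns $1,\dots,d-2$ in this way uses $(d-1)+(d-2)+\cdots+2=\tfrac{d(d-1)}{2}-1$ two-level $\SU(d)$ unitaries and leaves a single two-level unitary on $\mathrm{span}\{|d-1\rangle,|d\rangle\}$, which lies in $\SU(d)$ by the same determinant bookkeeping; counting this last block as the final factor yields the claimed bound $d(d-1)/2$. The base case $d=2$ is immediate ($U$ itself is a two-level $\SU(2)$ matrix), and degenerate steps---an entry already zero, or a vanishing pivot, the latter handled by the determinant-one swap $\bigl(\begin{smallmatrix}0&1\\-1&0\end{smallmatrix}\bigr)$---only decrease the count.

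The main obstacle, and the only place this departs from the textbook $\mathrm{U}(d)$ statement, is the determinant constraint. A naive elimination terminates not at $\mathbb{I}$ but at a diagonal phase matrix $\diag(\lambda_1,\dots,\lambda_d)$, which is not two-level and whose removal would cost extra factors and break the count. The crux is therefore to check that the $\mathrm{U}(1)$ freedom in each Givens rotation is exactly enough to force every pivot to equal $1$, so that no residual diagonal survives, each factor remains in $\SU(d)$, and the count coincides with the $\mathrm{U}(d)$ case; the rest is routine linear algebra. (Since this is the classical Reck et al. decomposition, in the paper it can simply be cited, so only this sketch is needed.)
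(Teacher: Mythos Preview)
Your proposal is correct and follows essentially the same approach as the paper: column-by-column Givens elimination with determinant-$1$ two-level rotations on $\mathrm{span}\{|1\rangle,|j\rangle\}$, then recursion on the remaining $(d-1)\times(d-1)$ block, yielding the same count $(d-1)+(d-2)+\cdots+1=d(d-1)/2$. The only cosmetic difference is that the paper writes down an explicit $\SU(2)$ rotation $V(a,b)=(|a|^2+|b|^2)^{-1/2}\bigl(\begin{smallmatrix}a^*&b^*\\-b&a\end{smallmatrix}\bigr)$ whose particular phase choice makes the accumulated $(1,1)$ entry come out as the real number $\sqrt{\sum_i|a_i|^2}=1$ automatically, whereas you keep the phase freedom explicit and spend it at the end of each column; these are the same argument.
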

This lemma is a slight variation of a similar result in \cite{reck1994experimental}, which does not impose any constraints on the determinant of 2-level unitaries (See also \cite{kitaev2002classical, NielsenAndChuang}). For completeness, we present the proof in \cref{app:2-level-decomposition}.

Recall that in the decomposition $V=\bigoplus_{m=0}^n V\sct{m}$, unitary $V\sct{m}$ acts on the subspace with Hamming weight $m$, which has dimension ${n}\choose{m}$. 
Then, applying this lemma, we find that $V\sct{m}$ can be realized with $$\frac{1}{2}{n\choose m}\times \left[{n\choose m}-1\right]$$ 2-level energy-conserving unitaries, each of which acts on two computational basis vectors in the subspace of Hamming weight $m$, has determinant 1, and can be constructed with the method in \cref{step:equalweight}.

In conclusion, the total number of 2-level gates needed to implement $V$ is
\begin{align}
  \frac{1}{2}  \sum_{m=1}^{n-1} {n\choose m}\left[{n\choose m}-1\right] = \frac{{2n \choose n} - 2^{n}}{2} \approx \frac{4^n}{2\sqrt{\pi n}}\ ,
\end{align}
 where $\approx $ means the ratio of two sides goes to 1, in the limit $n\rightarrow\infty$.

Combining this with \cref{cor56} we 
conclude that 
\begin{proposition}\label{prop13}
Any energy-conserving unitary in $\mathcal{SV}^{\text{U}(1)}_n$ can be realized with 
$\mathcal{O}(4^n n^{3/2})$
gates in 
any of the gate sets \ref{gates_phaseZ} and \ref{gates_phaseH} in \cref{Thm1}, without ancillary qubits. 
\end{proposition}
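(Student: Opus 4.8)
The plan is to peel a general element of $\mathcal{SV}_n^{\mathrm{U}(1)}$ apart sector by sector and reduce to the two-level constructions already in hand. Fix $V=\bigoplus_{m=0}^n V\sct{m}\in\mathcal{SV}_n^{\mathrm{U}(1)}$; by the decomposition in \cref{sv} each block $V\sct{m}$ lies in $\mathrm{SU}(\mathcal{H}\sct{m})$, where $\mathcal{H}\sct{m}$ carries the computational basis formed by the $\binom{n}{m}$ bit strings of Hamming weight $m$. I would apply \cref{lem:2-level_decomp} to this basis to write each $V\sct{m}$ as a product of at most $\tfrac12\binom{n}{m}\big(\binom{n}{m}-1\big)$ unitaries that are $2$-level with respect to it \emph{and} have determinant $1$. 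Each such factor is then a $2$-level unitary acting on a pair of computational basis vectors $|\textbf{b}\rangle,|\textbf{b}'\rangle$ with $w(\textbf{b})=w(\textbf{b}')=m$ and determinant $1$ on that block, hence an element of $\mathcal{SV}_n^{\mathrm{U}(1)}$; concatenating these factors over $m=1,\dots,n-1$ (the blocks $m=0$ and $m=n$ are scalars of determinant $1$, hence trivial) recovers $V$ exactly, with no ancilla.

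Next I would realize each of these $2$-level special unitaries via the construction of \cref{step:equalweight}: by \cref{cor56}, a determinant-$1$ $2$-level unitary on a pair of equal-Hamming-weight basis elements costs $\mathcal{O}(n^2)$ gates from either of the gate sets \ref{gates_phaseZ} or \ref{gates_phaseH}, and since the two sets are interconvertible with $\mathcal{O}(1)$ overhead (the $n=2$ analysis in \cref{sec:2-qubit}), it is enough to count for one of them. Summing the counts,
\begin{align}
\tfrac12\sum_{m=1}^{n-1}\binom{n}{m}\Big(\binom{n}{m}-1\Big)=\frac{\binom{2n}{n}-2^n}{2}=\mathcal{O}\!\left(\frac{4^n}{\sqrt{n}}\right),
\end{align}
using $\sum_m\binom{n}{m}^2=\binom{2n}{n}$ and Stirling's formula, and multiplying by the $\mathcal{O}(n^2)$ cost per factor yields the claimed $\mathcal{O}(4^n n^{3/2})$ bound. (Restricting the same argument to a single sector gives, more generally, the $\mathcal{O}(n^2 D^2)$ bound for unitaries supported on $D$ basis elements.)

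The only point that needs care is that the $2$-level factors produced by the sector-wise decomposition must be \emph{special} unitaries: the textbook (Reck-type) two-level reduction would leave behind a diagonal phase-type $2$-level unitary, which is not of the form handled in \cref{step:dist2,step:equalweight}. This is exactly why the sharpened statement \cref{lem:2-level_decomp} is invoked, and why this proposition is confined to $\mathcal{SV}_n^{\mathrm{U}(1)}$: the residual relative phases $\theta_m$ between Hamming-weight sectors of a general $V\in\mathcal{V}_n^{\mathrm{U}(1)}$ cannot be created by these energy-conserving gates without an ancilla (cf.\ \cref{Thm0}), and are dealt with separately in \cref{step:universal}. Apart from this bookkeeping, the proof is just the assembly of \cref{lem:2-level_decomp}, \cref{cor56}, and the counting above.
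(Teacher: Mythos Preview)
Your proof is correct and follows essentially the same approach as the paper: decompose each $V\sct{m}$ via \cref{lem:2-level_decomp} into at most $\tfrac12\binom{n}{m}\big(\binom{n}{m}-1\big)$ two-level special unitaries, realize each via \cref{cor56} at cost $\mathcal{O}(n^2)$, and sum using $\sum_m\binom{n}{m}^2=\binom{2n}{n}$ and Stirling to obtain $\mathcal{O}(4^n n^{3/2})$. Your remark about the necessity of the determinant-$1$ version of the two-level lemma and the $\mathcal{O}(n^2D^2)$ refinement also mirror the paper's discussion.
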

Here, to count the number of gates, one can use 
 the diagram in \cref{fig:3ton}:   
Recall that 
controlled-$\i\text{SWAP}$ 
can be realized using the circuit in \cref{Fig2}, which requires $\mathcal{O}(1)$ gates in one of the aforementioned gate sets. Therefore, the total required number of elementary gates is
\begin{align} \label{eq:O4nn32} 
    \left[\mathcal{O}(n^2)+\mathcal{O}(n)\right]\times \mathcal{O}\left(\frac{4^n}{\sqrt{n}}\right)= \mathcal{O}(4^n n^{3/2}) \ .
\end{align}

Note that, in general, if the desired unitary acts non-trivially only on a subspace spanned by $D \le  2^n$ basis elements, then from \cref{lem:2-level_decomp} it can be decomposed into no more than $D(D-1) = \mathcal{O}(D^2)$ 2-level unitaries. Hence, any such unitary can be implemented with $[\mathcal{O}(n^2)+\mathcal{O}(n)]\times \mathcal{O}(D^2) = \mathcal{O}(n^2D^2)$ elementary gates in \cref{Thm1}.

\refstepcounter{step}\label{step:universal}
\subsection*{Step \arabic{step}: Universality}

Finally, applying  a mechanism developed  in \cite{marvian2023non}, we show how one can achieve universality using a single ancillary qubit. It is worth noting that this technique can be applied to symmetric circuit  with any arbitrary Abelian symmetry (See Lemma 5 of \cite{marvian2023non}).

Let $|\textbf{b}\rangle$ be an arbitrary element of the computational basis $\{|0\rangle,|1\rangle\}^{\otimes n}$, other than $|0\rangle^{\otimes n}$. Then, there is (at least) one qubit with reduced state $|1\rangle$. Let $|\textbf{b}'\rangle$ be the $n$-qubit state obtained from $|\textbf{b}\rangle$ by changing the state of this qubit from $|1\rangle$ to $|0\rangle$. 
Define the Hamiltonian
\be\label{tq}
\widetilde{H}_{\textbf{b}}= |\textbf{b}\rangle\langle \textbf{b}|\otimes |0\rangle\langle 0|_{\text{anc}} -|\textbf{b}'\rangle\langle \textbf{b}'|\otimes |1\rangle\langle 1|_{\text{anc}} \ .
\ee
Note that $(n+1)$-qubit states $|\textbf{b}\rangle|0\rangle_{\text{anc}} $ and $|\textbf{b}'\rangle|1\rangle_{\text{anc}}$  have equal Hamming weights and their Hamming distance is $2$. Furthermore, this Hamiltonian is traceless, which means $\exp(\i \widetilde{H}_{\textbf{b}} \theta)$ has determinant 1. It follows from \cref{step:dist2} that this 2-level energy-conserving unitary can be implemented without ancilla. Implementing this unitary on the system and ancilla we obtain state 
 \be
\exp[\i \widetilde{H}_{\textbf{b}} \theta] (|\psi\rangle \otimes |0\rangle_{\text{anc}})= \big(\exp[\i \theta |\textbf{b}\rangle\langle \textbf{b}| ] |\psi\rangle\big)\otimes  |0\rangle_{\text{anc}}\ ,
\ee
where $|\psi\rangle\in(\mathbb{C}^2)^{\otimes n}$ is the initial state of $n$ qubits in the system. Since the state of ancilla qubit remains unchanged, we can reuse again in a similar fashion. Therefore, in this way we can realize the unitary $\exp(\i |\textbf{b}\rangle\langle \textbf{b}| \theta)$. 
We can apply this procedure to any sector, except the sector with Hamming weight $m=0$, which corresponds to state $|0\rangle^{\otimes n}$.
It follows that using $(n-1)$ 2-level unitaries we can realize any unitary in the form  
\be \label{eq:D_bm}
D = |0\rangle\langle 0|^{\otimes n}+\sum_{m=1}^n \exp(\i \theta_m)  \ket{\textbf{b}_m}\bra{\textbf{b}_m} \ ,
\ee
for any arbitrary $\theta_1,\dots,\theta_n\in (-\pi,\pi]$, 
where $|\textbf{b}_m\rangle$ is an arbitrarily selected basis element in the sector with Hamming weight $m$.

Finally, recall that 
any energy-conserving unitary $V$  can be decomposed as $V=e^{\i\theta_0} D Q$, where $e^{\i\theta_0}=V\sct0=\langle 0|^{\otimes n}V|0\rangle^{\otimes n}$, 
 $D$ is in the form of \cref{eq:D_bm}, with $\theta_m = \arg(\det(V\sct{m})) - \theta_0$ for $1 \leq m \leq n$,  and $Q\in \mathcal{SV}_n^{\rm U(1)}$.  As mentioned above, $D$ can be implemented with an ancillary qubit. Furthermore, in the previous section we saw how  $Q\in \mathcal{SV}_n^{\rm U(1)}$ can be implemented without ancillary qubits.

Note that the unitary generated by Hamiltonian in \cref{tq} is of the type we studied in \cref{step:dist2} of this construction, and therefore can be realized with no more than $9 (n+1)^2/8$ gates ($n+1$ qubits including one ancilla qubit) in 
$\mathcal{SV}_3^{\rm U(1)}$. In total, this step requires $\mathcal{O}(n^2)$ elementary gates.

 In conclusion, any energy-conserving unitary can  be realized, up to a global phase, with one ancillary qubit. Furthermore,  using the diagram in  \cref{fig:3ton} and \cref{eq:O4nn32}, we  find  that  this construction
requires
\begin{align}
    \mathcal{O}(4^n n^{3/2}) + \mathcal{O}(n^2)(n-1) = \mathcal{O}(4^n n^{3/2}) \ ,
\end{align}
gates in one of these elementary gate sets, 
which completes the proof of \cref{Thm1} for  sets \ref{gates_phaseZ} and \ref{gates_phaseH}.


\section{Realizing all energy-conserving unitaries with XY interaction alone} \label{sec:XY_only}

So far in this paper, we have considered circuits that contain both XY interaction as well as single-qubit rotations around the z axis. In this section, we focus on synthesizing quantum circuits that only contain XY interaction.  In this case, the overall realized unitary $V$ on the system respects the $\mathbb{Z}_2$ symmetry corresponding to flipping all the qubits in the system. That is, 
\be\label{Z2}
X^{\otimes n} V X^{\otimes n}= V \ ,
\ee
which follows from the fact that unitary 
$\exp(\i\theta R_{ij})$ satisfies this symmetry for all qubit  pairs 
$i$ and $j$ and all $\theta\in(-\pi,\pi]$.  Interestingly,  \cref{Thm2}, which is proven below, implies that  this $\mathbb{Z}_2$ symmetry together with the U(1) symmetry corresponding to energy conservation, namely
\be\label{eq:XY_U1}
\big[V , \sum_{j=1}^n Z_j \big]=0 \,,
\ee
characterize the set of realizable unitaries, up to additional constraints on the overall phases in each invariant subspace: First, for any realizable unitary $V$, 
\be\label{cons-det}
\text{det}(V\sct{m})=1\ \ \ \ : m=0,\cdots, n \ ,
\ee
where $V\sct{m}$ is the component of $V=\bigoplus_{m=0}^n V\sct{m}$ in the sector with Hamming weight $m$. Second, in the case of even $n$,  
\be\label{cons-det2}
\text{det}(V\sct{n/2,\pm})=1\ ,
\ee
where, as defined in \cref{Thm2},  
$V\sct{n/2,\pm}$ is the component of $V\sct{n/2}$ in the eigensubspace of $X^{\otimes n}$ with eigenvalue $\pm 1$. 
In the following,  $\mathcal{G}_n$ denotes the group of  $n$-qubit unitaries satisfying these 4 conditions, i.e., \cref{Z2,eq:XY_U1,cons-det,cons-det2}.

The necessity of 
 conditions in \cref{cons-det,cons-det2} are discussed in \cite{marvian2022restrictions, marvian2023non} and for  completeness
is also explained  in \cref{App:proof} and \cref{half}. (Briefly, they follow from the fact that $\Tr(\Pi\sct{m} R_{ij})=0$ for all $m=0,\cdots, n$ where $\Pi\sct{m}$ is the projector to the sector with Hamming weight $m$. Furthermore, for $n\geq3$, it holds that $\Tr(X^{\otimes n} R_{ij})=0$.)

In the following, we present explicit circuit synthesis methods for implementing general unitary $V$ satisfying the above 4 conditions using XY interaction. 
This, in particular, completes the proofs of \cref{Thm2} and the main part of \cref{Thm1} for the case of gate set \ref{gates_XY_only}.  

\subsection{Overview of the synthesis method: 4-level unitaries}

Recall that the method we used in \cref{sec:nqubit} decomposes a general energy-conserving unitary to a sequence of 2-level energy-conserving unitaries, which in general do not respect the $\mathbb{Z}_2$ symmetry in \cref{Z2} and hence cannot be realized with XY interaction alone. Hence,  we consider a natural extension of 2-level unitaries that satisfy the $\mathbb{Z}_2$ symmetry, namely 4-level unitaries. In the following,  for any pair of distinct bit strings $\textbf{b}, \textbf{b}'\in\{0,1\}^n$,
define the Pauli $X$ and $Y$ operators in the subspace spanned by $\ket{\textbf{b}}$ and $\ket{\textbf{b}'}$ as
\bes
\begin{align}
{X}(\textbf{b}, \textbf{b}')&:=|\textbf{b}\rangle\langle\textbf{b}'|+|\textbf{b}'\rangle\langle\textbf{b}|\ ,\\
{Y}(\textbf{b}, \textbf{b}')&:=\i (|\textbf{b}'\rangle\langle\textbf{b}|-|\textbf{b}\rangle\langle\textbf{b}'|)\ . 
\end{align}
\ees
Let $\overline{\textbf{b}}$ be the  bitwise negation of bit $\textbf{b}$, which means 
\be
|\overline{\textbf{b}}\rangle=X^{\otimes n}|{\textbf{b}}\rangle\ .
\ee
Then, for all $\textbf{b}, \textbf{b}'\in\{0,1\}^n$ with  $\textbf{b}' \neq \textbf{b}, \overline{\textbf{b}}$ 
 and $\theta\in(-\pi,\pi]$,  consider the following 4-level unitaries 
\bes\label{AB}
\begin{align}
\mathbb{R}_x(\theta, \textbf{b}, \textbf{b}')&:=\exp[\i\theta {X}(\textbf{b}, \textbf{b}')] \exp[\i\theta {X}(\overline{\textbf{b}}, \overline{\textbf{b}'})] \label{eq:AB_A}\\ \mathbb{R}_y(\theta, \textbf{b}, \textbf{b}')&:=\exp[\i\theta {Y}(\textbf{b}, \textbf{b}')] \exp[\i\theta {Y}(\overline{\textbf{b}}, \overline{\textbf{b}'})] \  .
\end{align}
\ees
 They clearly respect the $\mathbb{Z}_2$ symmetry in \cref{Z2}, and if  $\textbf{b}$ and $\textbf{b}'$ have equal Hamming weights, i.e.,
\be
w(\textbf{b})=w(\textbf{b}')\ ,
\ee
then they are also energy-conserving.  

Note that we have excluded the case of $\textbf{b}'=\overline{\textbf{b}}$. In general,  $w(\overline{\textbf{b}})=n-w(\textbf{b})$, which means unitaries in \cref{AB}  corresponding to the case of $\textbf{b}'=\overline{\textbf{b}}$ are not energy-conserving, unless $n$ is even and $w(\overline{\textbf{b}})=w(\textbf{b})=n/2$.
We call the special case of Hamming weight $m=n/2$  the ``half-filled sector" and study it separately in \cref{ss:half-filled}. In this case, indeed there exists a family of 2-level energy-conserving  unitaries that respect the $\mathbb{Z}_2$ symmetry, 
as well as the condition $\text{det}(V\sct{n/2})=1$, namely 
\be \label{eq:X_b_bbar}
\exp[\i\theta {X}(\textbf{b}, \overline{\textbf{b}})]=\mathbb{R}_x(\frac{\theta}{2},\textbf{b},\overline{\textbf{b}})\ ,
\ee
which corresponds to the special case of $\textbf{b}'=\overline{\textbf{b}}$ in \cref{eq:AB_A}. Is this family of unitaries realizable using XY interaction alone? 

Interestingly, the answer is no! While the unitary in \cref{eq:X_b_bbar} respects the 3 conditions in \cref{Z2,eq:XY_U1,cons-det}, since $\text{det}(V\sct{n/2,\pm})=e^{ \pm\i \theta}$, unless $\theta=0$ it does not  respect the condition in \cref{cons-det2}. (Note that this  is indeed the generalization of the example we discussed in the introduction, below \cref{Thm2}.) It is also worth noting that, unitaries in \cref{eq:X_b_bbar} are  the only 2-level unitaries that satisfy the 3 conditions in \cref{Z2,eq:XY_U1,cons-det}. In conclusion, in the following we always restrict our attention to 4-level unitaries in \cref{AB}, i.e., we impose the condition $\textbf{b}' \neq  \overline{\textbf{b}}$.

\begin{figure*}
    \centering
    \begin{tikzpicture}
        \node[draw=black,align=center] (CR){$\mathbb{R}_x$ (\ref{AB})};
        \node[draw=black,align=center,right=0.7 of CR] (SV3) {4-qubit 4-level\\$V\sct1(\textbf{b},\textbf{b}')V\sct1(\overline{\textbf{b}},\overline{\textbf{b}'})$ \\ $w(\textbf{b})=w(\textbf{b}')$\\$\textbf{b}'\neq\textbf{b},\overline{\textbf{b}}$\\(\cref{ss:XX_YY_RX_RY})};
        \node[draw=black,minimum width=3cm] (basic) at ($(CR.north)!0.5!(SV3.north) + (0,1.2)$) {XY interaction $\exp(\i\theta R)$};
        \node[below=0.75cm of SV3, draw=black,align=center](dany) {$n$-qubit 4-level \\ $w(\textbf{b})=w(\textbf{b}')$ \\ $\textbf{b}'\neq\textbf{b},\overline{\textbf{b}}$ \\ (\cref{ss:XXYY_distany})};
        \node[right=1.8cm of SV3, draw=black, align=center] (halffilled) {2-level gates in \\half-filled subspace \\ (\cref{ss:half-filled})};
        \node[draw=black,align=center] (semisemi) at (dany-|halffilled) {Full set of unitaries $\mathcal{G}_n$\\realizable with $\exp(\i\theta R)$\\  (\cref{ss:XY_semi})};
        \node[right=0.75cm of semisemi, draw=black, align=center] (semi)  {Breaking $\mathbb{Z}_2$ symmetry\\$\mathcal{SV}_{n}^{\rm U(1)}$ (\cref{ss:Z2_symmetry})};
        \node[right=0.75cm of semi, draw=black,align=center] (univ) {Universality\\$\mathcal{V}_{n}^{\rm U(1)}$(\cref{ss:XY_univ})};
        \node[above=1cm of semi, draw=black] (anc0)  {Ancilla qubit};
        \node[above=1cm of univ, draw=black] (anc) {Ancilla qubit};
        \draw[->] (CR) -- node[above](times3){$\times3$}(SV3);
        \draw[->] (basic.south-|CR) -- node[right]{$\times \mathcal{O}(1)$} (CR);
        \draw[->] (basic.south-|SV3) -- node[right]{$\times \mathcal{O}(1)$} (SV3);
        \draw[->] (SV3) -- node[right]{$\times \mathcal{O}(n^2)$} (dany);
        \draw[->] ($(dany.east)+(0,0.4)$) -- +(0.7,0) |- node[above,pos=0.7] {$\times2$} (halffilled);
        \draw[->] (dany) -- node[below] {$\times\mathcal{O}\left(\frac{4^n}{\sqrt{n}}\right)$} (semisemi);
        \draw[->] (semisemi) -- node[above]{$\times1$} (semi);
        \draw[->] (halffilled) -- node[right]{$\times\mathcal{O}\left(\frac{4^n}{n}\right)$} (semisemi);
        \draw[->] (anc0) -- node[right]{$\times1$} (semi);
        \draw[->] (semi) -- node[above]{$\times1$} (univ);
        \draw[->] (anc) -- node[right](anc_univ){$\times1$} (univ);
    \end{tikzpicture}
    \caption{Overview of the method for 
 synthesizing general energy-conserving unitaries with XY interaction alone.  
 The {number on the arrow} pointing from one box $A$ to another box $B$ indicates 
 the number of uses of $A$ in the construction of $B$. }
    \label{fig:XY_only_4ton}
\end{figure*}
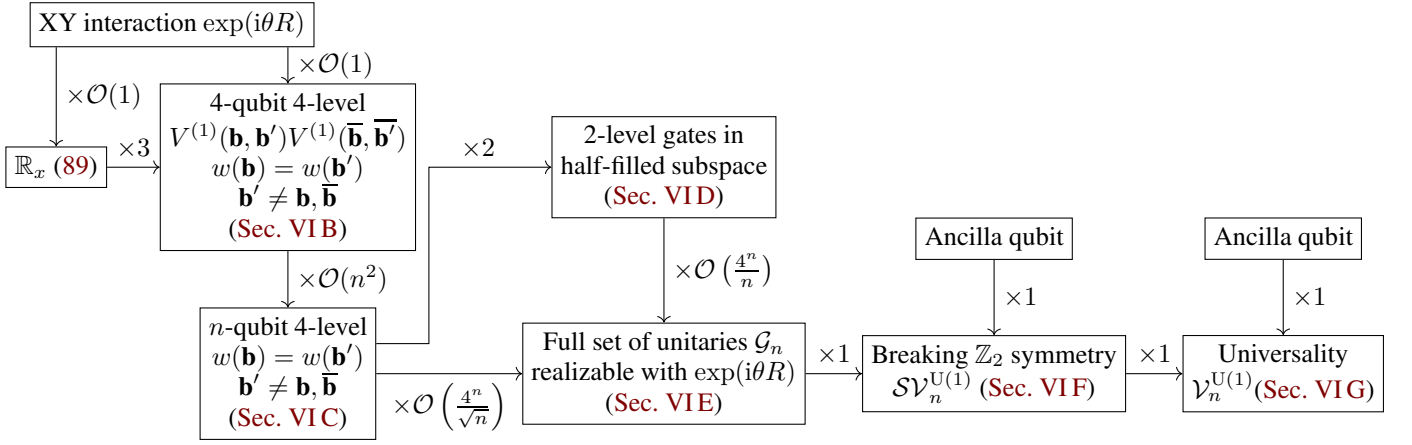


In the following, we first show how the unitaries in \cref{AB} can be realized with XY interaction alone, and then use them to construct all unitaries respecting the 4 conditions in \cref{Thm2}. In particular, in \cref{ss:XX_YY_RX_RY} we show that the circuits for 4-level unitaries in \cref{AB} can be obtained based on the circuits for 2-level unitaries $\exp[\i\theta {X}(\textbf{b}, \textbf{b}')]$ and $\exp[\i\theta {Y}(\textbf{b}, \textbf{b}')]$, which were found in \cref{sec:3qubit}.

The restriction to XY interactions alone makes the constructions in this section more complicated and slightly different from the constructions in \cref{sec:nqubit}.  
However, the presence of  $\mathbb{Z}_2$ symmetry also leads to a simplification, which can be understood in terms of the following lemma.

\begin{lemma} \label{lem:con-ham}
Suppose  a pair of  $n$-qubit energy-conserving unitaries $V=\bigoplus_{m=0}^n V\sct{m}$ and $W=\bigoplus_{m=0}^n W\sct{m}$ both respect the $\mathbb{Z}_2$ symmetry in \cref{Z2}. Then $V=W$ if, and only if 
\be\label{eq:con-ham}
V\sct{m}=W\sct{m}\quad\ \  : m=0,\cdots, \lfloor \frac{n}{2} \rfloor\ .
\ee
\end{lemma}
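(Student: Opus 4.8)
The plan is to exploit the fact that conjugation by $X^{\otimes n}$ swaps the Hamming-weight sector $\mathcal{H}^{(m)}$ with the sector $\mathcal{H}^{(n-m)}$, so that the $\mathbb{Z}_2$ symmetry in \cref{Z2} forces each ``upper-half'' block $V^{(n-m)}$ to be determined by the ``lower-half'' block $V^{(m)}$. Once this is established, the lemma follows immediately.

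First I would record the elementary observation that $X^{\otimes n}\ket{\textbf{b}} = \ket{\overline{\textbf{b}}}$ with $w(\overline{\textbf{b}}) = n - w(\textbf{b})$, so that $X^{\otimes n}$ restricts to a unitary isomorphism $\Phi_m\colon \mathcal{H}^{(m)}\to\mathcal{H}^{(n-m)}$ satisfying $\Phi_{n-m} = \Phi_m^{-1}$, because $(X^{\otimes n})^2 = \mathbb{I}$. Since $V$ is energy-conserving it is block-diagonal, $V = \bigoplus_{m=0}^n V^{(m)}$, and restricting $X^{\otimes n} V X^{\otimes n}$ to $\mathcal{H}^{(m)}$ — applying the rightmost $X^{\otimes n}$ first, which sends $\mathcal{H}^{(m)}$ to $\mathcal{H}^{(n-m)}$, then $V^{(n-m)}$, then $X^{\otimes n}$ back — shows that the symmetry $X^{\otimes n} V X^{\otimes n} = V$ is equivalent to
\be
V^{(n-m)} = \Phi_m\, V^{(m)}\, \Phi_m^{-1}\qquad : \ m=0,\ldots,n\ ,
\ee
and likewise for $W$.

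Next I would note that $\{0,1,\ldots,\lfloor n/2\rfloor\}$ together with its image $\{\,n-m : 0\le m\le \lfloor n/2\rfloor\,\} = \{\lceil n/2\rceil,\ldots,n\}$ covers all of $\{0,\ldots,n\}$. Hence if $V^{(m)} = W^{(m)}$ for every $m\le\lfloor n/2\rfloor$, then for the remaining sectors $V^{(n-m)} = \Phi_m V^{(m)}\Phi_m^{-1} = \Phi_m W^{(m)}\Phi_m^{-1} = W^{(n-m)}$, so $V^{(k)} = W^{(k)}$ for all $k$ and therefore $V = W$. The converse is trivial, since $V = W$ gives $V^{(m)} = W^{(m)}$ for every $m$.

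There is essentially no obstacle here; the only points needing a little care are bookkeeping ones: getting the direction of the conjugation right, so the derived relation expresses the high-weight block in terms of the low-weight block and not the reverse, and checking the middle sector — when $n$ is even, $m = n/2$ already lies in $\{0,\ldots,\lfloor n/2\rfloor\}$ and is covered by the hypothesis (the induced constraint $V^{(n/2)} = \Phi_{n/2}V^{(n/2)}\Phi_{n/2}^{-1}$ is then automatically consistent), while when $n$ is odd there is no middle sector and the two index ranges are disjoint and jointly exhaustive.
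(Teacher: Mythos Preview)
Your proof is correct and follows essentially the same approach as the paper: both arguments use that the $\mathbb{Z}_2$ symmetry forces $V^{(n-m)}$ to be the $X^{\otimes n}$-conjugate of $V^{(m)}$, so agreement on the lower half of the Hamming-weight sectors propagates to the upper half. The paper phrases this with projectors, writing $\Pi^{(m)} W \Pi^{(m)} = X^{\otimes n}\Pi^{(n-m)} W \Pi^{(n-m)} X^{\otimes n}$ and then substituting $V$ for $W$ in the low-weight block, while you phrase it via the restriction isomorphism $\Phi_m$; the content is the same.
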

\begin{proof}
    If
$
W=X^{\otimes n}WX^{\otimes n}$ and   $
V=X^{\otimes n}VX^{\otimes n}$, then for $m > \lfloor n/2 \rfloor$ it holds that
\bes
\begin{align}
\Pi\sct{m} W \Pi\sct{m}&= \Pi\sct{m} X^{\otimes n} W X^{\otimes n}\Pi\sct{m} \\ &=X^{\otimes n} \Pi\sct{n-m} W \Pi\sct{n-m} X^{\otimes n}\ \\ &=X^{\otimes n} \Pi\sct{n-m} V \Pi\sct{n-m}X^{\otimes n}\\ &= \Pi\sct{m} V \Pi\sct{m} \ ,
\end{align}
\ees
where the third line follows from \cref{eq:con-ham} since $n-m\leq\lfloor n/2 \rfloor$, and  we have used the fact that $\Pi\sct{m}X^{\otimes n}=X^{\otimes n}\Pi\sct{n-m}$. Therefore, \cref{eq:con-ham} implies $V=W$. The other direction is trivial.

\end{proof}

As we further discuss in \cref{ss:half-filled}, when $n$ is even, 
the sector with Hamming weight $m=n/2$ requires special treatment. But, for the rest of the Hilbert space, the above lemma implies a useful simplification. Namely, we can restrict our attention to the subspace 
\begin{align} \label{eq:subspc_nover2}
    \mathcal{H}\sct{ <  n/2 }:=\bigoplus_{m=0}^{ \lfloor (n-1)/2 \rfloor} \mathcal{H}\sct{m} \,.
\end{align} 
Inside this subspace, the unitaries in \cref{AB} are 2-level and energy-conserving. 
This fact allows us to use the strategies developed in \cref{sec:nqubit} based on 2-level unitaries.  \cref{fig:XY_only_4ton} presents an overview of the workflow in this section.

\subsection{A useful family of 4-qubit 4-level energy-conserving  $\mathbb{Z}_2$-invariant gates} \label{ss:XX_YY_RX_RY}

In this section, we construct a useful family of 4-level unitaries that can be realized using XY interaction alone, namely the following sequences of 2 controlled unitaries  
\begin{align}\label{Eq2024}
\begin{minipage}{10em}
\centering
\begin{tikzpicture}[scale=1]
    \useasboundingbox (-0.75, -1) rectangle (1.5,1.5);
    \coordinate (left) at (-0.75, 0);
    \coordinate (right) at (1.5,0);
    \coordinate (l1) at (0,-0.5);
    \coordinate (l2) at (0,0);
    \coordinate (l3) at (0,0.75);
    \coordinate (l4) at (0,1.25);
    \foreach \i in {l1,l2,l3}
    {
        \draw (left|-\i) -- (right|-\i);
    }
    \draw(left|-l4) -- (right|-l4);
    \node[tensor2h] (V01) at ($(l1)!0.5!(l2)$) {$V^\times$};
    \draw (V01) -- (V01|-l3) pic {c0} -- (V01|-l4) pic{c0};
    \node[tensor2h] (V02) at ($(V01)+(0.75,0)$) {$V$};
    \draw (V02) -- (V02|-l3) pic {c1} -- (V02|-l4) pic{c1};
\end{tikzpicture}
\end{minipage} \text{and}
\begin{minipage}{10em}
\centering
\begin{tikzpicture}[scale=1]
    \useasboundingbox (-0.75, -1) rectangle (1.5,1.5);
    \coordinate (left) at (-0.75, 0);
    \coordinate (right) at (1.5,0);
    \coordinate (l1) at (0,-0.5);
    \coordinate (l2) at (0,0);
    \coordinate (l3) at (0,0.75);
    \coordinate (l4) at (0,1.25);
    \foreach \i in {l1,l2,l3}
    {
        \draw (left|-\i) -- (right|-\i);
    }
    \draw(left|-l4) -- (right|-l4);
    \node[tensor2h] (V01) at ($(l1)!0.5!(l2)$) {$V^\times$};
    \draw (V01) -- (V01|-l3) pic {c1} -- (V01|-l4) pic{c0};
    \node[tensor2h] (V02) at ($(V01)+(0.75,0)$) {$V$};
    \draw (V02) -- (V02|-l3) pic {c0} -- (V02|-l4) pic{c1};
\end{tikzpicture}
\end{minipage} 
\end{align}
where
\begin{align}
    V = \left(
\begin{array}{ccc}
1  &   &   \\
  & V\sct1  &   \\
  &   &   1
\end{array}\right), \ \ V^\times = (X\otimes X)V(X\otimes X)\ ,
\end{align}
where $V^{(1)}$ is in SU(2) and acts in the subspace spanned by $|01\rangle$ and $|10\rangle$. The particular pairing of these two controlled unitaries guarantees that 
this composition respects  the $\mathbb{Z}_2$ symmetry of XY interaction.  
These unitaries then will be used as the building blocks for constructing general unitaries in $\mathcal{G}_n$.

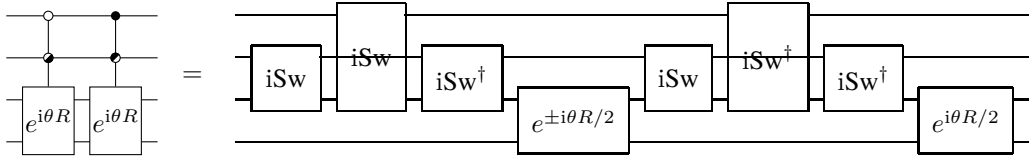
\begin{figure*}

\begin{minipage}{7em}
\begin{tikzpicture}[scale=0.745]
    \coordinate (left) at (-0.75, 0);
    \coordinate (right) at (1.95,0);
    \useasboundingbox (-0.75,-1.2) rectangle (1.75,2.1);
    \coordinate (l1) at (0,-0.75);
    \coordinate (l2) at (0,0);
    \coordinate (l3) at (0,0.75);
    \coordinate (l4) at (0,1.5);
    \foreach \i in {l1,l2,l3,l4}
    {
        \draw (left|-\i) -- (right|-\i);
    }
    \node[tensor2h,minimum height=0.6*15mm] (V01) at ($(l1)!0.5!(l2)$) {$e^{\i\theta R}$};
    \draw (V01) -- (V01|-l3) pic[scale=0.82] {c01} -- (V01|-l4) pic[scale=0.82]{c0};
    \node[tensor2h,minimum height=0.6*15mm] (V02) at ($(V01)+(1.2,0)$) {$e^{\i\theta R}$};
    \draw (V02) -- (V02|-l3) pic[scale=0.82] {c10} -- (V02|-l4) pic[scale=0.82]{c1};
\end{tikzpicture}
\end{minipage}
\ $=$ \quad
\begin{minipage}{1em}
\vspace{0.3em}

  \Qcircuit @C=0.6em @R=.7em {
 & \qw                         &   \multigate{2}{\i\text{Sw}}&  \qw                              & \qw                                 & \qw                             &  \multigate{2}{\i\text{Sw}^\dag}& \qw                             & \qw                              & \qw\\
 &  \multigate{1}{\i\text{Sw}} &  \qw                        &   \multigate{1}{\i\text{Sw}^\dag} & \qw                                 & \multigate{1}{\i\text{Sw}}      & \qw                             & \multigate{1}{\i\text{Sw}^\dag} & \qw                              & \qw  \\
 & \ghost{i\text{Sw}}          &  \ghost{i\text{Sw}}         &  \ghost{i\text{Sw}^\dag}          & \multigate{1}{e^{\pm\i \theta R/2}} & \ghost{\i\text{Sw}}             & \ghost{i\text{Sw}^\dag}         & \ghost{\i\text{Sw}^\dag}        & \multigate{1}{e^{\i \theta R/2}} & \qw    \\ 
 & \qw                         &  \qw                        &  \qw                              & \ghost{e^{\pm\i \theta R/2}}        & \qw                             & \qw                             & \qw                             & \ghost{e^{\i \theta R/2}}        & \qw 
 }
\end{minipage}
 \vspace{5mm}
  \caption{ The circuits for implementing
  $\mathbb{R}_x(\theta, 0101, 0110)$ and  $\mathbb{R}_x(\theta, 0001, 0010)$, based on \cref{teta,teta2}. The unitary $\mathbb{R}_x(\theta, 0101, 0110)$ applies  $e^{ \i\theta R}$ on the third and fourth qubits, when the parity of the first two qubits is odd, whereas $\mathbb{R}_x(\theta, 0001, 0010)$ 
  applies $e^{\i\theta R}$ when the parity is even. In the circuit in the right-hand side, these unitaries correspond to choosing the minus sign (odd parity) and plus sign (even parity) in the $e^{\pm \i\theta R/2}$ gate, respectively.
  }\label{fig:AB}
\end{figure*}


Recall that in the circuit synthesis method developed in \cref{sec:nqubit},  the main building block was  the
 3-qubit 2-level energy-conserving unitaries in \cref{ew3} (See \cref{Fig2}).
However, this gate does not respect the $\mathbb{Z}_2$ symmetry in \cref{Z2}, which explains why the single-qubit gates $S$ and $S^\dag$ appear in this circuit  and they cannot be avoided.  
To construct a gate that respects this symmetry, one may consider a modification of this circuit, obtained by removing $S$ and $S^\dag$ gates from \cref{ew3}. 
Then, we obtain the 3-qubit gate
\begin{align}
&F^\dag_{123} \exp(\i \frac{\theta}{2} R_{23}) F_{123}\\ &=|0\rangle\langle 0|_1\otimes \exp(\i \frac{\theta}{2} L_{23})+|1\rangle\langle 1|_1\otimes\exp(-\i \frac{\theta}{2} L_{23})\ ,  \nonumber
\end{align}
acting  on qubits $1,2,3$, where  
 $L_{jk}:=(Y_j X_k-X_jY_k)/{2}=\i (|10\rangle\langle 01|_{jk}-|01\rangle\langle 10|_{jk})$  
and $F_{ijk}:=\iswap_{ik} \iswap^\dag_{jk}\iswap_{ij}$ as defined in \cref{Def-F}. 
While this is a useful unitary, since it always acts non-trivially on qubits $2$ and $3$, regardless of the state of qubit 1, it can not be easily used in composition with other unitaries.

To overcome this problem, we construct the 4-qubit unitary 

\begin{align}
&\mathbb{R}_x(\theta, 0101, 0110) \\ &:= \exp[\i\theta (\ketbra{0101}{0110}+\ketbra{0110}{0101})] \nonumber\\
 &\hspace{24mm} \times\exp[\i\theta (\ketbra{1010}{1001}+\ketbra{1001}{1010})] \  ,\nonumber\\ 
&=(|00\rangle\langle 00|_{12}+|11\rangle\langle 11|_{12})\otimes \mathbb{I}_{34}\nonumber \\ &\hspace{24mm}   +(|01\rangle\langle 01|_{12}+|10\rangle\langle 10|_{12})\otimes \exp(\i\theta R_{34})\nonumber
\\ &= \sum_{b_1, b_2=0}^1 |b_1b_2\rangle\langle b_1b_2|_{12}\otimes  \exp(\i[b_1\oplus b_2]\theta R_{34})\nonumber\\
&=\exp(\i\frac{\theta}{2}  R_{34})\  \CZ_{23}\  \CZ_{13}\ \exp(-\i\frac{\theta}{2} R_{34})\ \CZ_{13}\  \CZ_{23} \label{teta0} \  
\end{align}
which is obtained from \cref{AB} by choosing $\textbf{b}=0101$ and $\textbf{b}'=0110$, and  $b_1\oplus b_2 \in\{0,1\}$ denotes the parity of the bits $b_1$ and $b_2$.   This unitary applies $\exp(\i\theta R_{34})$ on qubits 3 and 4, when the parity of qubits 1 and 2 is odd. Note that, using \cref{art}, this unitary can be rewritten as 
\begin{align}\label{teta}
&\mathbb{R}_x(\theta, 0101, 0110)=\exp(\i\frac{\theta}{2}  R_{34})\  G^\dag_{123}\ \exp(-\i\frac{\theta}{2} R_{34})\   G_{123}\ , \end{align}
where 
\bes
\begin{align}\label{def-G}
G_{123}&:= \iswap_{23}^\dag \iswap_{13} \iswap_{23}\\  &= Z_1 \text{Sw}_{12}\ \CZ_{12}\   \CZ_{13}\  \CZ_{23}\ ,
\end{align}
\ees
and in \cref{teta} 
its effect is equivalent to $\CZ_{13}\CZ_{23}$ (See \cref{tab:identities} for a diagram of the circuit identity in \cref{def-G}). 
Furthermore, by applying $\exp(\i\frac{\theta}{2} R_{34})$ instead of $\exp(-\i\frac{\theta}{2} R_{34})$ one obtains the gate 
\begin{align}\label{teta2}
&\mathbb{R}_x(\theta, 0001, 0010)=\exp(\i\frac{\theta}{2}  R_{34})\  G^\dag_{123}\ \exp(\i\frac{\theta}{2} R_{34})\   G_{123}\  .
\end{align}
The circuits for unitaries $\mathbb{R}_x(\theta, 0101, 0110)$ and $\mathbb{R}_x(\theta, 0001, 0010)$  are presented in \cref{fig:AB}.

Next, we obtain  $\mathbb{R}_y$ gates defined in \cref{AB} from $\mathbb{R}_x$ gates. Note that $\mathbb{R}_y(\theta, 0001, 0010)$ can be obtained from $\mathbb{R}_x(\theta, 0001, 0010)$ by sandwiching the third (or the fourth) qubit between $S$ and $S^\dag$. However, our goal in this section is to avoid single-qubit rotations around the z-axis and realize everything with XY interaction alone. 

Hence, to overcome this challenge  we use a different approach. Namely, we utilize a 3-qubit sequence  introduced by Lidar and Wu  in \cite{lidar2001reducing, wu2002power} to establish the possibility of universal quantum computing with XY interaction in a DFS. (It is worth noting that Kempe and Whaley adapt this sequence in \cite{kempe2002exact} 
to obtain the exact gate sequences for universal quantum computation using the XY interaction alone.) The sequence is given by 
\begin{align}
P_{123}(\phi)&= 
\sqrt{\i\text{Sw}}_{23}\  \i\text{Sw}_{12}\  \exp(\i \phi R_{13})\  \i\text{Sw}^\dag_{12} \ \sqrt{\i\text{Sw}}_{23}^\dag\nonumber \ .
\end{align}
This gate is diagonal in the computational basis. Indeed, one can show that 
\begin{align}
P_{123}(\phi)
&=|0\rangle\langle 0|_1 \otimes \exp(-\i \frac{\phi}{2}Z_2)\otimes \exp(\i \frac{\phi}{2}Z_3) \nonumber \\ &+|1\rangle\langle 1|_1\otimes \exp(\i \frac{\phi}{2}Z_2)\otimes  \exp(-\i \frac{\phi}{2}Z_3) \ ,
\end{align}
for arbitrary $\phi\in(-\pi,\pi]$. 
Then,  by sandwiching the unitary $\mathbb{R}_x(\theta, 0001, 0010)$ in \cref{teta} between  
\bes 
\begin{align}
P_{123}(\frac{\pi}{2}) &= \siswap_{23} \iswap_{12}\iswap_{13}\iswap_{12}^\dag \siswap^\dag_{23}\\
&=|0\rangle\langle 0|_1 \otimes S_2 \otimes S_3^\dag +|1\rangle\langle 1|_1\otimes S_2^\dag \otimes S_3 \\ 
&=  \CZ_{12} \CZ_{13}  S_2 S_3^\dag\label{PPP} \ ,
\end{align}
\ees
and its inverse we show that 
\begin{align}\label{tetay}
P_{123}(-\frac{\pi}{2})\ &\mathbb{R}_x(\theta, 0101, 0110) P_{123}(\frac{\pi}{2})=\mathbb{R}_y(\theta,0101,0110)\ .
\end{align}
To see this, first recall that $\mathbb{R}_x(\theta, 0101, 0110)$ is defined as $$\exp[\i \theta {X}(0101,0110)]\exp[\i \theta {X}(1010,1001)]\ .$$
Similarly, $\mathbb{R}_y(\theta, 0101, 0110)$ is defined as $$\exp[\i \theta {Y}(0101,0110)]\exp[\i \theta {Y}(1010,1001)]\ .$$  

Then, we note that 
\begin{align} 
&P_{123}(-\frac{\pi}{2}) \exp[\i \theta {X}(0101,0110)] P_{123}(\frac{\pi}{2}).\nonumber\\ &=\CZ_{13} S_3 \exp[\i \theta {X}(0101,0110)] S_3^\dag \CZ_{13} \nonumber
\\ &=\exp[\i \theta {Y}(0101,0110)] \ , \label{PPP2} 
\end{align}
where we have applied \cref{PPP}. 
Furthermore, by sandwiching both sides of this equation between $X^{\otimes 4}$ and using the fact that $P_{123}(\phi)$ commutes with $X^{\otimes 4}$, we obtain
\begin{align}
&P_{123}(-\frac{\pi}{2}) \exp[\i \theta {X}(1010,1001)] P_{123}(\frac{\pi}{2})\nonumber \\ &=\exp[\i \theta {Y}(1010,1001)]\ . 
\end{align}
Then, multiplying the above two equations, we arrive at \cref{tetay}.

In summary, in this way we can obtain 4 families of unitaries
\bes\label{eq:RxRy1234}
\begin{align} 
    &\mathbb{R}_x(\theta, 0101, 0110)\ , \label{eq:Rx1}\\
    &\mathbb{R}_x(\theta, 0001, 0010) = \mathbb{R}_x(\theta, 1110, 1101)\ , \label{eq:Rx2}\\
    &\mathbb{R}_y(\theta,0101,0110)\ ,\label{eq:Ry1} \\ &\mathbb{R}_y(\theta,0001,0010)=\mathbb{R}_y(\theta,1110,1101)\ ,\label{eq:Ry2}
\end{align}
\ees
and their  permuted versions,  which can all be realized with $\mathcal{O}(1)$ number of $\exp(\i\alpha R)$ gates  alone.

As a useful observation, any 4-bit strings $\textbf{b}$ and $\textbf{b}'$ satisfying $w(\textbf{b})=w(\textbf{b}')$ and $\textbf{b}'\neq \textbf{b},\overline{\textbf{b}}$ necessarily have two common bits and two distinct bits, and fit into either \cref{eq:Rx1,eq:Ry1} or \cref{eq:Rx2,eq:Ry2} by permuting the qubits. 
This means that we can implement $\mathbb{R}_x(\theta,\textbf{b}, \textbf{b}')$ and $\mathbb{R}_y(\theta,\textbf{b}, \textbf{b}')$ for all $\textbf{b}, \textbf{b}'\in\{0,1\}^4$ which satisfy the  conditions  $w(\textbf{b})=w(\textbf{b}')$ and $\textbf{b}'\neq \textbf{b},  \overline{\textbf{b}}$.

Furthermore, combining these unitaries we can obtain unitaries that act as arbitrary $V\sct1 \in \text{SU}(2)$ on the subspace spaned by $\{|\textbf{b}\rangle, |\textbf{b}'\rangle\}$. In particular, suppose $V\sct1$ has the  Euler decomposition
\be
V\sct1 = e^{\i\gamma X}e^{\i\beta Y}e^{\i\alpha X}\ .
\ee
Then, for $\textbf{b}'\neq\overline{\textbf{b}}$  the following sequence of unitaries
\begin{align}
    \mathbb{R}_x(\gamma,\textbf{b}, \textbf{b}')\mathbb{R}_y(\beta,\textbf{b}, \textbf{b}')\mathbb{R}_x(\alpha,\textbf{b}, \textbf{b}') = V\sct1(\textbf{b},\textbf{b}')V\sct1(\overline{\textbf{b}},\overline{\textbf{b}'})\ , \label{eq:V1_4qubit}
\end{align}
realize two copies of $V\sct1$, namely $V\sct1(\textbf{b},\textbf{b}')$ which acts in the subspace spanned by $|\textbf{b}\rangle$ and  $|\textbf{b}'\rangle$, and 
$V\sct1(\overline{\textbf{b}},\overline{\textbf{b}'})$ which acts in the subspace spanned by $|\overline{\textbf{b}}\rangle$ and  $|\overline{\textbf{b}'}\rangle$.

Choosing ${\textbf{b}}$ and  $\textbf{b}'$ to be bit strings in $\{0,1\}^4$ with Hamming weight 1, we can obtain 4-level unitaries that act as 2-level ones in the sectors with Hamming weights 1 and 3, either of which has dimension 4.  Combining such 2-level unitaries, we can realize an arbitrary unitary in SU(4) in the sector with Hamming weight 1. Then, the realized unitary in the sector with Hamming weight 3 is dictated by the $\mathbb{Z}_2$ symmetry. In \cref{ss:half-filled}, where we focus on the half-filled sector $m=n/2$, we come back to this example and explain how, in the sector with  Hamming weight 2, a general unitary satisfying the condition in \cref{Z2,eq:XY_U1,cons-det,cons-det2} can be realized.

We finish this section by rewriting \cref{eq:V1_4qubit} in terms of controlled unitaries in the form of \cref{Eq2024}, which will be useful for applications in the next section. Recall that here we consider $\textbf{b},\textbf{b}'\in\{0,1\}^{4}$ satisfying the constraints  $w(\textbf{b})=w(\textbf{b}')$, and $\textbf{b}'\neq \textbf{b}, \overline{\textbf{b}}$. As mentioned above, any such $\textbf{b}$ and $\textbf{b}'$ have two common bits and two distinct bits, which means up to a permutation, they can be written as 
$$\textbf{b}=\textbf{c}01\ \ \  \text{and}\ \  \ \  \textbf{b}'=\textbf{c}10\ ,$$
where $\textbf{c}$ contains the common bits. 
Then, we can interpret the unitary in \cref{eq:V1_4qubit} as a controlled unitary that acts only when the first two qubits are in $\textbf{c}$ or $\bar{\textbf{c}}$. To see this, recall that, 
 as we have seen before in   \cref{eq:2-level_controlled}, the 2-level unitary $V\sct1(\textbf{b}, \textbf{b}')$ can be written as the controlled gate
\begin{align}
    V\sct1(\textbf{b}, \textbf{b}') = \Lambda^{\textbf{c}}(V)\ , ~\ \ \ \ 
V = \left(
\begin{array}{ccc}
1  &   &   \\
  & V\sct1  &   \\
  &   &   1
\end{array}\right)
\end{align}
where $\Lambda^{\textbf{c}}(V)$ is a controlled-$V$ gate with control string $\textbf{c}$ defined in \cref{eq:Lambda}.  Similarly,
 \begin{align}
 V\sct1(\overline{\textbf{b}}, \overline{\textbf{b}'}) = \Lambda^{\overline{\textbf{c}}}(V^\times)\ , ~\ \ \ \ 
V^\times &:= (X\otimes X) V (X\otimes X)\ .
\end{align}
We conclude that  the 4-level gate in \cref{eq:V1_4qubit} can be rewritten as 
\begin{align}
    V\sct1(\textbf{b}, \textbf{b}')V\sct1(\overline{\textbf{b}}, \overline{\textbf{b}'}) = \Lambda^{\textbf{c}}(V)\Lambda^{\overline{\textbf{c}}}(V^\times) \label{eq:U_b_Lambda_c}\ ,
\end{align}
which corresponds to the circuit in \cref{Eq2024}.

\subsection{From 4-qubit 4-level gates to $n$-qubit 4-level gates} \label{ss:XXYY_distany}
Next, we use these 4-qubit unitaries  to construct general 4-level $n$-qubit unitaries in the form
\be \label{eq:4-level_U1}
U\sct1(\textbf{b}, \textbf{b}')U\sct1(\overline{\textbf{b}}, \overline{\textbf{b}'})\ \ \ \  : U\sct1\in\text{SU}(2)\ ,
\ee
where, $U\sct1(\textbf{b}, \textbf{b}')$ as defined in \cref{eq:def_2-level} is a 2-level unitary acting on $|\textbf{b}\rangle$ and $|\textbf{b}'\rangle$, and we assume 
\be\label{eq:b'notbbar}
\textbf{b}'\neq  \overline{\textbf{b}}, \textbf{b}\ ,
\ee
and $\textbf{b}$ and $\textbf{b}'$ have equal Hamming weights, i.e.  
\be\label{eq:b'notbbar22}
w(\textbf{b})=w(\textbf{b}')\ .
\ee
The assumption that $\textbf{b}'\neq \overline{\textbf{b}}$ in \cref{eq:b'notbbar} implies that $\textbf{b}$ and $\textbf{b}'$ have, at least, one common bit. Without loss of generality, we assume this common bit is the first bit of $\textbf{b}$ and $\textbf{b}'$ and the value of this bit is 1; otherwise, we proceed with $\overline{\textbf{b}}$ and $\overline{\textbf{b}'}$ instead. This means 
\begin{align}    \label{eq:common_bit} |\textbf{b}\rangle=|1\rangle|\textbf{c}\rangle, \quad |\textbf{b}'\rangle=|1\rangle|\textbf{c}'\rangle\ ,    
\end{align}
where $\textbf{c},\textbf{c}'\in\{0,1\}^{n-1}$ are the remaining bits of $\textbf{b}$ and $\textbf{b}'$, which have equal Hamming weights, i.e.,  $w(\textbf{c}) = w(\textbf{c}')$.
With this definition we have
\begin{align} \label{eq:Ubb_to_Ucc1}   U\sct1(\textbf{b},\textbf{b}') = |0\rangle\langle 0|_1\otimes \mathbb{I}+|1\rangle\langle 1|_1\otimes U\sct1(\textbf{c},\textbf{c}')_{2,\dots,n} \ .
\end{align}

Now consider the $n-1$ energy-conserving unitary $U\sct1(\textbf{c},\textbf{c}')_{2,\dots,n}$.  Using the techniques of \cref{sec:nqubit} \cref{step:dist2,step:equalweight}, we can decompose this unitary  as
\be\label{rty}
U\sct1(\textbf{c},\textbf{c}')_{2,\dots,n}=V_T \cdots V_1\ ,
\ee
where $T = \mathcal{O}(n^2)$, and each $V_j$ is a gate in the form of $\Lambda^1(W)$ (or $\Lambda^0(W)$).
In other words, it is a gate in the form of the left-hand side of the figure below. Note that these gates are not realizable with XY interaction alone, because, in general, they do not commute with $X^{\otimes 3}$, i.e., they break the $\mathbb{Z}_2$ symmetry of this interaction. However, we can extend these 3-qubit gates to 4-qubit gates acting on the original 3 qubits and qubit 1 (i.e., the control qubit in \cref{eq:Ubb_to_Ucc1}) using the following rule: 
\vspace{-3mm}
\begin{align}
\begin{minipage}{8em}
\centering
\begin{tikzpicture}[scale=1]
    \useasboundingbox (-0.75, -1) rectangle (0.75,1.5);
    \coordinate (left) at (-0.75, 0);
    \coordinate (right) at (0.75,0);
    \coordinate (l1) at (0,-0.5);
    \coordinate (l2) at (0,0);
    \coordinate (l3) at (0,0.75);
    \foreach \i in {l1,l2,l3}
    {
        \draw (left|-\i) -- (right|-\i);
    }
    \node[tensor2h] (V01) at ($(l1)!0.5!(l2)$) {$W$};
    \draw (V01) -- (V01|-l3) pic {c1};
\end{tikzpicture}
\end{minipage}
&\longrightarrow
\begin{minipage}{10em}
\centering
\begin{tikzpicture}[scale=1]
    \useasboundingbox (-0.75, -1) rectangle (1.5,1.5);
    \coordinate (left) at (-0.75, 0);
    \coordinate (right) at (1.5,0);
    \coordinate (l1) at (0,-0.5);
    \coordinate (l2) at (0,0);
    \coordinate (l3) at (0,0.75);
    \coordinate (l4) at (0,1.25);
    \foreach \i in {l1,l2,l3}
    {
        \draw (left|-\i) -- (right|-\i);
    }
    \draw[highlighted] (left|-l4) -- (right|-l4);
    \node[tensor2h] (V01) at ($(l1)!0.5!(l2)$) {$W^\times$};
    \draw (V01) -- (V01|-l3) pic {c0} -- (V01|-l4) pic{c0};
    \node[tensor2h] (V02) at ($(V01)+(0.75,0)$) {$W$};
    \draw (V02) -- (V02|-l3) pic {c1} -- (V02|-l4) pic{c1};
\end{tikzpicture}
\end{minipage} \nonumber \\
V=\Lambda^1(W) \quad ~&\longrightarrow  \phantom{=} \Lambda^{1}(V) \Lambda^{0}(V^\times)\label{eq:to_parity_ctrl} \\ & \phantom{\longrightarrow} = \Lambda^{11}(W) \Lambda^{00}(W^\times)\nonumber
\end{align}
where the highlighted qubit is qubit 1. In other words, we replace each 3-qubit gate $V_j$ in \cref{rty} with  4-qubit gates 
\begin{align} \label{eq:Lambda1VLambda0V}
&\Lambda^1(V_j)\Lambda^0(V_j^\times)=\nonumber\\
&\big[|0\rangle\langle 0|_1\otimes \mathbb{I}+ |1\rangle\langle 1|_1\otimes V_j\big]\big[|1\rangle\langle 1|_1\otimes \mathbb{I}+ |0\rangle\langle 0|_1\otimes V^\times_j\big]\ ,  
\end{align}
where $V_j^\times  = X^{\otimes 3} V_j X^{\otimes 3}$, and the subscript 1 in the right-hand side means the control qubit is qubit 1.  
 Notice that similar to $V_j$, $V_j^\times$ is also a 2-level 3-qubit unitary. Then, the 4-qubit unitary in \cref{eq:Lambda1VLambda0V} is in the form of \cref{Eq2024}, which can be realized using XY interaction alone with the methods developed in the previous section.  In \cref{fig:ctrl_to_parity_ctrl} we present an example of this circuit conversion, which is further discussed below (In this Figure, we have highlighted qubit 
 1).  



It can be easily shown that the resulting circuit realizes  the desired unitary $U\sct1(\textbf{b},\textbf{b}') U\sct1(\overline{\textbf{b}},\overline{\textbf{b}'})$: To see this note  the resulting circuit  has two types of gates: $\Lambda^1(V_j)$, which are activated when the first qubit is $|1\rangle$, and $\Lambda^0(V_j^\times)$, which are activated when  
it  is $|0\rangle$. Clearly, every pair of gates from different types commute with each other. Then, using \cref{eq:Ubb_to_Ucc1} it can be seen that the first family realizes unitary $U\sct1(\textbf{b},\textbf{b}')$, namely
\begin{align}
U\sct1(\textbf{b},\textbf{b}') 
&= \Lambda^1(U\sct1(\textbf{c},\textbf{c}')) \nonumber\\
&= \Lambda^1(V_T\cdots V_1) \nonumber\\
& = \Lambda^1(V_T) \cdots \Lambda^1(V_1) \,, 
\end{align}
and the second family realizes the unitary 
\begin{align} \label{eq:Ubb_to_Ucc2}
    U\sct1(\overline{\textbf{b}},\overline{\textbf{b}'}) &=X^{\otimes n} U\sct1(\textbf{b},\textbf{b}')
    X^{\otimes n}\nonumber\\
    &=X^{\otimes n} \Lambda^1(V_T) \cdots \Lambda^1(V_1) 
    X^{\otimes n}\nonumber\\
    &= \Lambda^0(V_T^\times) \cdots \Lambda^0(V_1^\times) \ .
\end{align}

Therefore, we conclude that

\begin{lemma}
For any $\textbf{b}, \textbf{b}'\in\{0,1\}^n$ 
with equal Hamming weights, 
if    $\textbf{b}'\neq  \overline{\textbf{b}}, \textbf{b}$, then the  4-level unitary $U\sct1(\textbf{b},\textbf{b}') U\sct1(\overline{\textbf{b}},\overline{\textbf{b}'})$ can be realized with $\mathcal{O}(n^2)$ gates $\exp(\i \alpha R): \alpha\in(-\pi,\pi] $, without any ancilla qubits.
\end{lemma}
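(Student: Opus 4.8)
The plan is to reduce the problem, via the hypothesis $\textbf{b}'\neq\overline{\textbf{b}}$, to the three-qubit controlled gates already constructed in \cref{sec:nqubit}, and then to ``$\mathbb{Z}_2$-symmetrize'' each of them using the four-qubit building blocks of \cref{ss:XX_YY_RX_RY}. Since $\textbf{b}$ and $\textbf{b}'$ must share a common bit, after relabeling the qubits and, if necessary, replacing $(\textbf{b},\textbf{b}')$ by $(\overline{\textbf{b}},\overline{\textbf{b}'})$---which leaves the target unitary unchanged---I may assume $|\textbf{b}\rangle=|1\rangle|\textbf{c}\rangle$ and $|\textbf{b}'\rangle=|1\rangle|\textbf{c}'\rangle$ with $\textbf{c},\textbf{c}'\in\{0,1\}^{n-1}$ of equal Hamming weight. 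Then $U\sct1(\textbf{b},\textbf{b}')=\Lambda^{1}\big(U\sct1(\textbf{c},\textbf{c}')\big)$ and $U\sct1(\overline{\textbf{b}},\overline{\textbf{b}'})=X^{\otimes n}U\sct1(\textbf{b},\textbf{b}')X^{\otimes n}=\Lambda^{0}\big(X^{\otimes(n-1)}U\sct1(\textbf{c},\textbf{c}')X^{\otimes(n-1)}\big)$, where the control is qubit $1$.

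Next I would apply \cref{step:dist2,step:equalweight} (as summarized in \cref{cor56}) to the $(n-1)$-qubit energy-conserving two-level unitary $U\sct1(\textbf{c},\textbf{c}')$, which has determinant $1$ since $U\sct1\in\SU(2)$, obtaining a factorization $U\sct1(\textbf{c},\textbf{c}')=V_T\cdots V_1$ with $T=\mathcal{O}(n^2)$, in which every $V_j$ is a three-qubit gate of the controlled form $\Lambda^{b_j}(W_j)$ with $W_j\in\mathcal{SV}_2^{\rm U(1)}$ (the controlled-$\i\text{SWAP}$ and controlled-$\i\text{SWAP}^\dag$ gates of \cref{cor56} are of this type, since $\i\text{SWAP}\in\mathcal{SV}_2^{\rm U(1)}$). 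I then lift each $V_j$ to the four-qubit gate $\Lambda^{1}(V_j)\,\Lambda^{0}(V_j^\times)$ on qubit $1$ together with the three qubits of $V_j$, where $V_j^\times:=X^{\otimes3}V_jX^{\otimes3}$; writing $V_j=\Lambda^{b_j}(W_j)$ one checks $V_j^\times=\Lambda^{\overline{b_j}}(W_j^\times)$ with $W_j^\times=(X\otimes X)W_j(X\otimes X)$, so $\Lambda^{1}(V_j)\Lambda^{0}(V_j^\times)=\Lambda^{1b_j}(W_j)\Lambda^{0\,\overline{b_j}}(W_j^\times)$ is exactly a gate of the form \cref{eq:U_b_Lambda_c}---equivalently \cref{Eq2024}---and hence, up to a permutation of qubits, is realizable with $\mathcal{O}(1)$ gates $\exp(\i\alpha R)$ by the construction around \cref{eq:V1_4qubit}. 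Summing over $j=1,\dots,T$ then gives the claimed $\mathcal{O}(n^2)$ bound with no ancilla.

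Finally I would verify that this symmetrized product equals the target: since $\Lambda^{1}(\cdot)$ and $\Lambda^{0}(\cdot)$ act on complementary eigenspaces of the control qubit, the two families commute, so $\prod_j\big[\Lambda^{1}(V_j)\Lambda^{0}(V_j^\times)\big]=\big[\Lambda^{1}(V_T)\cdots\Lambda^{1}(V_1)\big]\big[\Lambda^{0}(V_T^\times)\cdots\Lambda^{0}(V_1^\times)\big]=U\sct1(\textbf{b},\textbf{b}')\,U\sct1(\overline{\textbf{b}},\overline{\textbf{b}'})$, using $\Lambda^{0}(V_T^\times)\cdots\Lambda^{0}(V_1^\times)=X^{\otimes n}\Lambda^{1}(V_T)\cdots\Lambda^{1}(V_1)X^{\otimes n}$. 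I expect the only genuinely delicate parts to be bookkeeping: confirming that every gate emitted by \cref{cor56}---including the $\i^{\,t-1}$ phase factors of \cref{eq:phase_from_iswaps}, which are determinant-neutral and hence keep $W_j\sct1\in\SU(2)$---really does fit the controlled template $\Lambda^{b_j}(W_j)$, and that all the qubit permutations needed to bring each lifted gate into one of the canonical patterns of \cref{eq:RxRy1234} are available (they are, by the ``permuted versions'' remark following \cref{eq:RxRy1234}).
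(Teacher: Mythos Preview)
Your proposal is correct and follows essentially the same approach as the paper: isolate a shared bit to write $U\sct1(\textbf{b},\textbf{b}')=\Lambda^1\big(U\sct1(\textbf{c},\textbf{c}')\big)$, decompose the $(n-1)$-qubit two-level unitary via \cref{cor56} into $\mathcal{O}(n^2)$ single-controlled three-qubit gates $V_j=\Lambda^{b_j}(W_j)$, lift each to the $\mathbb{Z}_2$-symmetric four-qubit gate $\Lambda^1(V_j)\Lambda^0(V_j^\times)$ of the form \cref{Eq2024}, and verify via commutativity of the two control sectors. The bookkeeping concerns you flag (the $\i^{t-1}$ phases and the qubit permutations) are exactly the ones the paper handles implicitly, and your treatment of them is correct.
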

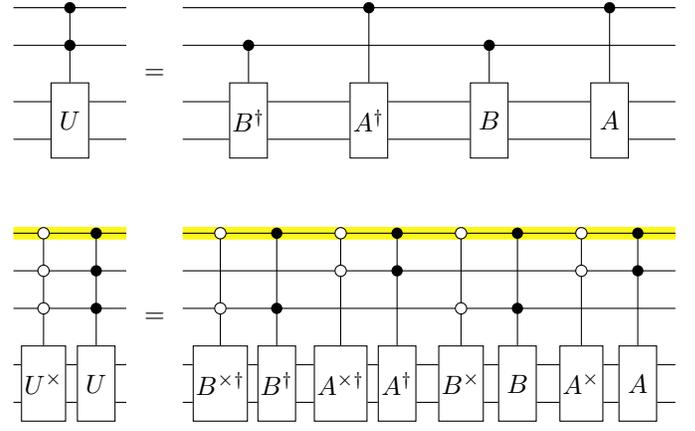
\begin{figure}
    \centering
    \newcommand{\gateskip}{1.6}
    \begin{tikzpicture}
        \coordinate (left) at (-0.5, 0);
        \coordinate (right) at (6.05, 0);
        \coordinate (g0mid) at (-2,0);
        \begin{scope}
            \coordinate (t1) at (0, 0);
            \coordinate (t2) at (0, -0.5);
            \def\cy{{0, 1, 1.5, 1.5, 2}}
            \foreach \i in {1,2,3,4}
            {
                \pgfmathsetmacro\result{\cy[\i] - 0.25}
                \coordinate (c\i) at (0, \result);
            }
            \node[tensor2h] (g1) at ($(t1)!0.5!(t2)+(0.375,0)$)  {$B^\dag$};
            \node[tensor2h] (g2) at ($(g1)+(\gateskip,0)$) {$A^\dag$};
            \node[tensor2h] (g3) at ($(g2)+(\gateskip,0)$) {$B$};
            \node[tensor2h] (g4) at ($(g3)+(\gateskip,0)$) {$A$};
            \foreach \i in {1,2}
            {
                \draw[] (left|-t\i) \foreach \j in {1,2,3,4} { -- (g\j.west|-t\i) (g\j.east|-t\i) } -- (right|-t\i);
            }
            \foreach \i in {1,2}
            {
                \draw[] (left|-c\i) -- (right|-c\i);
            }
            \draw (g1|-c1) pic{c1} -- (g1.north);
            \draw (g3|-c1) pic{c1} -- (g3.north);
            \draw (g2|-c3) pic{c1} -- (g2.north);
            \draw (g4|-c3) pic{c1} -- (g4.north);
            
            \node[tensor2h] (g0) at (g0mid|-g1) {$U$};
            \draw (g0|-c3) pic{c1} -- (g0|-c1) pic{c1} -- (g0.north);
            \coordinate (g0right) at ($(g0) + (0.75,0)$);
            \foreach \i in {t1,t2}
            {
                \draw ($(g0|-\i) - (0.75,0)$) -- (g0.west|-\i) (g0.east|-\i) -- ($(g0|-\i) + (0.75,0)$);
            }
            \foreach \i in {c1,c2}
            {
                \draw ($(g0|-\i) - (0.75,0)$) -- (g0right|-\i);
            }
            \node (equal) at ($(g0right|-t2)!0.5!(left|-c3)$) {$=$};
        \end{scope}
        \begin{scope}[yshift=-3.5cm]
            \coordinate (t1) at (0, 0);
            \coordinate (t2) at (0, -0.5);
            \def\cy{{0, 1, 1.5, 1.5, 2}}
            \foreach \i in {1,2,3,4}
            {
                \pgfmathsetmacro\result{\cy[\i] - 0.25}
                \coordinate (c\i) at (0, \result);
            }
            \node[tensor2h] (g1x) at ($(t1)!0.5!(t2)$) {$B^{\times \dag}$};
            \node[tensor2h] (g1) at ($(g1x)+(0.75,0)$) {$B^\dag$};
            \node[tensor2h] (g2x) at ($(g1x)+(\gateskip,0)$) {$A^{\times \dag}$};
            \node[tensor2h] (g2) at ($(g1)+(\gateskip,0)$) {$A^\dag$};
            \node[tensor2h] (g3x) at ($(g2x)+(\gateskip,0)$) {$B^\times$};
            \node[tensor2h] (g3) at ($(g2)+(\gateskip,0)$) {$B$};
            \node[tensor2h] (g4x) at ($(g3x)+(\gateskip,0)$) {$A^\times$};
            \node[tensor2h] (g4) at ($(g3)+(\gateskip,0)$) {$A$};
            \foreach \i in {1,2}
            {
                \draw[] (left|-t\i) \foreach \j in {1,2,3,4} { -- (g\j x.west|-t\i) (g\j x.east|-t\i) -- (g\j.west|-t\i) (g\j.east|-t\i) } -- (right|-t\i);
            }
            \foreach \i in {1,2}
            {
                \draw[] (left|-c\i) -- (right|-c\i);
            }
            \draw[highlighted] (left|-c4) -- (right|-c4);
            \draw[] (g1|-c4) pic{c1} -- (g1|-c1) pic{c1} -- (g1.north);
            \draw[] (g3|-c4) pic{c1} -- (g3|-c1) pic{c1} -- (g3.north);
            \draw[] (g2|-c4) pic{c1} -- (g2|-c3) pic{c1} -- (g2.north);
            \draw[] (g4|-c4) pic{c1} -- (g4|-c3) pic{c1} -- (g4.north);
            
            \draw[] (g1x|-c4) pic{c0} -- (g1x|-c1) pic{c0} -- (g1x.north);
            \draw[] (g3x|-c4) pic{c0} -- (g3x|-c1) pic{c0} -- (g3x.north);
            \draw[] (g2x|-c4) pic{c0} -- (g2x|-c3) pic{c0} -- (g2x.north);
            \draw[] (g4x|-c4) pic{c0} -- (g4x|-c3) pic{c0} -- (g4x.north);

            \node[tensor2h] (g0x) at ($(g0mid|-g1)-(0.35,0)$)  {$U^\times$};
            \node[tensor2h] (g0) at ($(g0mid|-g1)+(0.35,0)$)  {$U$};
            \coordinate (g0right) at ($(g0mid|-g1) + (0.75,0)$);
            \foreach \i in {c1,c2}
            {
                \draw ($(g0mid|-\i) - (0.75,0)$) -- ($(g0mid|-\i) + (0.75,0)$);
            }
            \foreach \i in {c4}
            {
                \draw[highlighted] ($(g0mid|-\i) - (0.75,0)$) -- ($(g0mid|-\i) + (0.75,0)$);
            }
            \foreach \i in {t1,t2}
            {
                \draw ($(g0mid|-\i) - (0.75,0)$) -- (g0x.west|-\i) (g0x.east|-\i) -- (g0.west|-\i) (g0.east|-\i) -- (g0right|-\i);
            }
            \draw (g0|-c4) pic{c1} -- (g0|-c2) pic{c1} -- (g0|-c1) pic{c1} -- (g0.north);
            \draw (g0x|-c4) pic{c0} -- (g0x|-c2) pic{c0} -- (g0x|-c1) pic{c0} -- (g0x.north);
            
            \node (equal) at ($(g0right|-t2)!0.5!(left|-c4)$) {$=$};
        \end{scope}
    \end{tikzpicture}
    \caption{ The transformation from the circuit for $\Lambda^{11}(U)$ (upper) to $\Lambda^{111}(U)\Lambda^{000}(U^\times)$ (lower) using the mapping in \cref{eq:to_parity_ctrl}. The highlighted qubit is the extra control qubit.}
    \label{fig:ctrl_to_parity_ctrl}
\end{figure}
\subsubsection*{Example: A 5-qubit 4-level unitary}

\cref{fig:ctrl_to_parity_ctrl} shows an example of this circuit conversion where the top is a circuit for the 2-level unitary 
\be
U\sct1(1101,1110)=\Lambda^{11}(U)\ ,
\ee
obtained using the techniques of \cref{sec:nqubit} \cref{step:dist2}, and the bottom is the circuit obtained by applying the transformation in \cref{eq:to_parity_ctrl}. It can be easily seen that the bottom circuit realizes 
\begin{align} \label{eq:5-qubit-target}
    U\sct1(11101,11110)U\sct1(00010,00001) = \Lambda^{111}(U)\Lambda^{000}(U^\times)\ ,
\end{align}
which corresponds to the unitary in \cref{eq:4-level_U1} with
\be
\textbf{b}=11101\ , \ \ \  \textbf{b}'=11110 \ ,
\ee
where
\begin{align}
    U = \left(
\begin{array}{ccc}
1  &   &   \\
  & U\sct1  &   \\
  &   &   1
\end{array}\right), \ \ U^\times = (X\otimes X)U(X\otimes X)\ .
\end{align}

\subsection{The special case of half-filled sector ($m=n/2$)} \label{ss:half-filled}

Next, we focus on the special case of the sector with Hamming weight $m=n/2$ when $n$ the number of qubits is even. This sector, which can be called the ``half-filled" sector, 
splits into two equal-sized subspaces corresponding to $\pm 1$ eigenvalues of $X^{\otimes n}$, denoted as
\be\label{ds}
\mathcal{H}\sct{n/2}=\mathcal{H}\sct{n/2,+}\oplus \mathcal{H}\sct{n/2,-}\ .
\ee
The fact XY interaction commutes with $X^{\otimes n}$, means that this Hamiltonian is block-diagonal with respect to the above decomposition. This in turn implies all realizable unitaries with XY interaction are also block-diagonal. That is
\be\label{dec54}
V\sct{n/2}=V\sct{n/2, +}\oplus V\sct{n/2,-}\ , 
\ee
where $V\sct{n/2, \pm}$ is the component of $V\sct{n/2}$ that acts in the subspace with Hamming weight $n/2$ and eigenvalue $\pm 1$ of $X^{\otimes n}$.  We show that 
\begin{lemma}\label{half}
For $n\geq4$, consider the unitary $V=V\sct{n/2}\oplus\mathbb{I}_{\perp}$  where $\mathbb{I}_{\perp}$ is the identity operator on the subspace orthogonal to $\mathcal{H}\sct{n/2}$. This unitary can be realized with 
XY interaction alone (without any ancillary qubit) if, and only if 
\be \label{eq:detpmis1}
\text{det}(V\sct{n/2,+})=\text{det}(V\sct{n/2,-})=1\ ,
\ee
where $V\sct{n/2,\pm}$ is the component of $V\sct{n/2}$ in the subspace $\mathcal{H}\sct{n/2,\pm}$, as defined in \cref{dec54}. Furthermore, any such unitary can be realized with $\mathcal{O}(n\times 4^n)$ 2-qubit unitaries $\exp(i\theta R)$.
\end{lemma}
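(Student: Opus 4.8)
The plan is to prove both implications; the ``only if'' direction is short, and the ``if'' direction carries the real work. For necessity, note that the projector onto $\mathcal{H}\sct{n/2,\pm}$ is $\tfrac12\,\Pi\sct{n/2}(\mathbb{I}\pm X^{\otimes n})$, so $\Tr(\Pi\sct{n/2,\pm}R_{ij})=\tfrac12\big[\Tr(\Pi\sct{n/2}R_{ij})\pm\Tr(X^{\otimes n}R_{ij}\,\Pi\sct{n/2})\big]$. The first trace vanishes because $R_{ij}$ has no diagonal entries in the computational basis; the second vanishes because $X^{\otimes n}R_{ij}$ connects $|\textbf{b}\rangle$ only to states at Hamming distance $2$ from $|\overline{\textbf{b}}\rangle$, and $|\textbf{b}\rangle$ is itself such a state only when $n\le 2$. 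Hence for $n\ge 4$ every generator $\exp(\i\theta R_{ij})$, and therefore any circuit built from them, has determinant $1$ on each of $\mathcal{H}\sct{n/2,\pm}$, which gives \cref{eq:detpmis1}.

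For sufficiency I would first pass to a convenient basis. Put $d=\binom{n}{n/2}$ and, choosing one representative $\textbf{b}$ from each pair $\{\textbf{b},\overline{\textbf{b}}\}$ of weight $n/2$, set $|{+_{\textbf{b}}}\rangle=(|\textbf{b}\rangle+|\overline{\textbf{b}}\rangle)/\sqrt2$ and $|{-_{\textbf{b}}}\rangle=(|\textbf{b}\rangle-|\overline{\textbf{b}}\rangle)/\sqrt2$; these give orthonormal bases of $\mathcal{H}\sct{n/2,+}$ and $\mathcal{H}\sct{n/2,-}$, each of dimension $d/2$. Using \cref{lem:2-level_decomp}, I would decompose $V\sct{n/2,+}\in\SU(d/2)$ into at most $\tfrac12(d/2)(d/2-1)=\mathcal{O}(4^n/n)$ two--level $\SU(2)$ gates with respect to the $|{+_{\textbf{b}}}\rangle$ basis, and likewise $V\sct{n/2,-}$ with respect to the $|{-_{\textbf{b}}}\rangle$ basis (recall \cref{dec54}). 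Since a gate acting nontrivially only on $\mathcal{H}\sct{n/2,+}$ commutes with one acting only on $\mathcal{H}\sct{n/2,-}$, the target $V=V\sct{n/2}\oplus\mathbb{I}_\perp$ is realized by concatenating these blocks, \emph{provided} I can implement, with $\mathcal{O}(n^2)$ XY gates, an arbitrary ``pure $+$'' two--level gate --- one acting as a prescribed $U\sct1\in\SU(2)$ on $\mathrm{span}\{|{+_{\textbf{b}}}\rangle,|{+_{\textbf{c}}}\rangle\}$, as the identity on all of $\mathcal{H}\sct{n/2,-}$, and as the identity on every other Hamming-weight sector --- together with the analogous ``pure $-$'' gates, for any two distinct classes $\{\textbf{b},\overline{\textbf{b}}\},\{\textbf{c},\overline{\textbf{c}}\}$ of weight $n/2$.

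Building these one--sided gates is the step I expect to be the main obstacle, since every $\mathbb{Z}_2$--symmetric four--level gate available so far --- the gates $\mathbb{R}_x,\mathbb{R}_y$ of \cref{AB} and, more generally, the $n$--qubit four--level gates of \cref{ss:XXYY_distany} --- acts \emph{identically} on the $+$ and $-$ blocks, hence on its own generates only the ``diagonal'' subgroup of $\SU(\mathcal{H}\sct{n/2,+})\times\SU(\mathcal{H}\sct{n/2,-})$. The resolution is a twisted--pair trick. For $\textbf{c}\neq\textbf{b},\overline{\textbf{b}}$ of weight $n/2$ (note $w(\overline{\textbf{c}})=n/2$ as well --- this is precisely why we must be in the half--filled sector) and $U\sct1\in\SU(2)$, consider the two four--level gates $\mathcal{A}(U\sct1):=U\sct1(\textbf{b},\textbf{c})\,U\sct1(\overline{\textbf{b}},\overline{\textbf{c}})$ and $\mathcal{B}(U\sct1):=U\sct1(\textbf{b},\overline{\textbf{c}})\,U\sct1(\overline{\textbf{b}},\textbf{c})$. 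Both are energy--conserving, $\mathbb{Z}_2$--invariant, supported entirely inside $\mathcal{H}\sct{n/2}$ (all four labels have weight $n/2$), and each is realizable with $\mathcal{O}(n^2)$ gates $\exp(\i\alpha R)$ by \cref{ss:XXYY_distany} --- the hypotheses $\textbf{b}'\neq\textbf{b},\overline{\textbf{b}}$ there hold for both the untwisted pair $(\textbf{b},\textbf{c})$ and the twisted pair $(\textbf{b},\overline{\textbf{c}})$, by $\textbf{c}\neq\textbf{b},\overline{\textbf{b}}$. A direct computation --- using $|{+_{\textbf{b}}}\rangle=|{+_{\overline{\textbf{b}}}}\rangle$ but $|{-_{\textbf{b}}}\rangle=-|{-_{\overline{\textbf{b}}}}\rangle$ --- shows that in the matched bases $\mathcal{A}(U\sct1)$ restricts to $U\sct1\oplus U\sct1$ while $\mathcal{B}(U\sct1)$ restricts to $U\sct1\oplus\big(ZU\sct1Z\big)$, with $Z=\diag(1,-1)$. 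Hence $\mathcal{B}(U\sct1)\,\mathcal{A}(U\sct1)^{-1}$ equals $I\oplus\big(ZU\sct1Z\,(U\sct1)^{-1}\big)$, a pure $-$ gate, and, with the roles reversed, one gets $\big(U\sct1 Z(U\sct1)^{-1}Z\big)\oplus I$, a pure $+$ gate; choosing $U\sct1=\exp(\i\tfrac{\alpha}{2}X)$ or $\exp(\i\tfrac{\alpha}{2}Y)$ and using $ZXZ=-X$, $ZYZ=-Y$ collapses these to $\exp(\pm\i\alpha X)$ or $\exp(\pm\i\alpha Y)$ on one block and the identity on the other.

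Finally I would assemble everything. An Euler decomposition $U\sct1=e^{\i\gamma X}e^{\i\beta Y}e^{\i\alpha X}$ turns an arbitrary one--sided two--level $\SU(2)$ gate into a product of three of the primitives just constructed --- $\mathcal{O}(1)$ gates $\mathcal{A},\mathcal{B}$, i.e.\ $\mathcal{O}(n^2)$ gates $\exp(\i\alpha R)$, per one--sided two--level gate. Feeding the $\mathcal{O}(4^n/n)$ two--level gates from \cref{lem:2-level_decomp} for $V\sct{n/2,+}$ and for $V\sct{n/2,-}$ through this recipe and concatenating realizes $V=V\sct{n/2}\oplus\mathbb{I}_\perp$ exactly, using $\mathcal{O}(n^2)\times\mathcal{O}(4^n/n)=\mathcal{O}(n\,4^n)$ two--qubit XY gates, as claimed. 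The one point that needs genuine care in carrying this out is the sign bookkeeping in the ``direct computation'' above: getting the $Z$--conjugation to land on the $-$ block (rather than on the $+$ block, or on neither) is exactly what makes the twist break the diagonal and is the heart of the argument.
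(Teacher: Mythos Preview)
Your approach is essentially the same as the paper's. Both proofs hinge on the same twisted-pair observation: the untwisted four-level gate $\mathcal{A}(U\sct1)=U\sct1(\textbf{b},\textbf{c})\,U\sct1(\overline{\textbf{b}},\overline{\textbf{c}})$ acts as $U\sct1\oplus U\sct1$ on the $\pm$ blocks, while the twisted one $\mathcal{B}(U\sct1)=U\sct1(\textbf{b},\overline{\textbf{c}})\,U\sct1(\overline{\textbf{b}},\textbf{c})$ acts as $U\sct1\oplus ZU\sct1Z$, so suitable combinations isolate one block. The paper phrases this at the Hamiltonian level: since $X(\textbf{b},\textbf{c})+X(\overline{\textbf{b}},\overline{\textbf{c}})$ and $X(\textbf{b},\overline{\textbf{c}})+X(\overline{\textbf{b}},\textbf{c})$ commute and their sum/difference are exactly the $X$ operators on the $+$/$-$ blocks, one gets directly $\mathbb{R}_x(\theta,\textbf{b},\textbf{c})\,\mathbb{R}_x(\pm\theta,\textbf{b},\overline{\textbf{c}})=\exp\bigl(2\i\theta\,X_\pm(\textbf{b},\textbf{c})\bigr)$ --- i.e.\ $\mathcal{A}(e^{\i\theta X})\,\mathcal{B}(e^{\pm\i\theta X})$ rather than your quotient $\mathcal{B}\mathcal{A}^{-1}$. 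The counting and the use of \cref{lem:2-level_decomp} are identical.

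One small correction: your ``with the roles reversed'' claim that $\mathcal{A}\mathcal{B}^{-1}$ (or whatever reversal you intend) yields $\bigl(U\sct1 Z(U\sct1)^{-1}Z\bigr)\oplus I$ is not quite right --- that combination still puts the nontrivial factor on the $-$ block. The pure $+$ gate comes instead from the \emph{product} $\mathcal{A}(U\sct1)\,\mathcal{B}(U\sct1)$, which gives $(U\sct1)^2\oplus\bigl(U\sct1\cdot ZU\sct1 Z\bigr)$, collapsing to $e^{2\i\theta X}\oplus I$ for $U\sct1=e^{\i\theta X}$. This is a bookkeeping slip, not a gap in the argument.
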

\begin{proof}

The necessity of this condition follows from the arguments in \cite{marvian2022restrictions}: For any pair of qubits $i$ and $j$ and $n>2$ we have 
\be
\Tr(X^{\otimes n} R_{ij})=\Tr(R_{ij})=0\ ,
\ee
where  $R_{ij}=(X_i X_j+Y_iY_j)/2$ is the XY interaction. This, in turn, implies 
\be\label{orth}
\frac{\Tr(R_{ij}) \pm  \Tr(X^{\otimes n} R_{ij})}{2} = \Tr(P_{\pm} R_{ij})=0\ ,
\ee%
where $P_\pm=(\mathbb{I}\pm X^{\otimes n})/2$ are the projectors to the subspace with eigenvalue $\pm 1$ of $X^{\otimes n}$. For any unitary $W$ that commutes with $X^{\otimes n}$, let $W=W_+\oplus W_-$ be the decomposition of $W$ relative to the eigensubspaces of $X^{\otimes n}$. Then, for any unitary $W$  which is decomposable as $W = \prod_k \exp(\i\theta_k R_{i_kj_k})$,
\cref{orth} implies that  $\text{det}[P_\pm \exp(\i\theta_k R_{i_kj_k}) P_\pm] = 1$ for every $k$, where the determinant is calculated over the support of $P_\pm$. Therefore, noting that $R_{i_kj_k}$ and $P_\pm$ commute,
\begin{align}
&\text{det}(W_\pm)= \text{det}\left[ \prod_k P_\pm \exp(\i\theta_k R_{i_kj_k}) P_\pm \right] \nonumber \\
&= \prod_k \text{det}\left[ P_\pm \exp(\i\theta_k R_{i_kj_k}) P_\pm \right] = 1\ .
\end{align}
This proves the necessity of the condition. 
In the following, we prove the sufficiency of this condition.

First, for  convenience we choose   a basis for $\mathcal{H}\sct{n/2}$ which is consistent with the decomposition in \cref{ds} such that any basis element is either in $\mathcal{H}\sct{n/2,+}$ or $\mathcal{H}\sct{n/2,-}$. 
To achieve this we define
the basis
\be\label{def3}
|\textbf{b},\pm \rangle=\frac{|\textbf{b}\rangle \pm|\overline{\textbf{b}}\rangle}{\sqrt{2}}\ \ \ \ \ \ : w(\textbf{b})=\frac{n}{2} \ , \  \ 
  \textbf{b}< \overline{\textbf{b}}\ ,
\ee
for $\textbf{b}=b_1\cdots b_n\in\{0,1\}^n$,  
where the first condition means the Hamming weight of $\textbf{b}$ is $n/2$, and in 
 $\textbf{b} < \overline{\textbf{b}}$
we are interpreting $\textbf{b}$ and $\overline{\textbf{b}}$ as the  binary representations of integers (In other words, this condition means the left-most bit of  $\textbf{b}$ is 0). 
 This condition is imposed to  obtain a complete orthonormal basis;  otherwise we will consider  $|\textbf{b},+ \rangle$  and 
$|\overline{\textbf{b}},+ \rangle$ as two separate vectors, whereas according to the definition in \cref{def3} they are indeed equal. (Note that any total order on binary strings can be used here.)  Then, it is clear that the set of vectors $|\textbf{b},\pm \rangle$ with $\textbf{b}$ defined in \cref{def3} form a complete  orthonormal basis for $\mathcal{H}\sct{n/2,\pm}$. 

Next, we show how one can implement unitaries that are 2-level with respect to this basis.
Namely,  for 
$\textbf{b}<\overline{\textbf{b}}$
and $\textbf{c}<\overline{\textbf{c}}$, we consider the unitaries
\bes\label{rgw}
\begin{align}
    &\exp\Big(\i2\theta \big[ |\textbf{b},\pm\rangle\langle \textbf{c},\pm|\ +\ |\textbf{c},\pm\rangle\langle \textbf{b},\pm| \big]\Big)
     \label{eq:2to4_x}\\
    &\qquad  =\mathbb{R}_x(\theta, \textbf{b}, \textbf{c})\ 
    \mathbb{R}_x(\pm \theta, \textbf{b}, \overline{\textbf{c}})\nonumber\\ 
    &\exp\Big(\i2\theta \big[\i |\textbf{c},\pm\rangle\langle \textbf{b},\pm| \ -\ \i |\textbf{b},\pm\rangle\langle \textbf{c},\pm| \big]\Big)
    \label{eq:2to4_y} \\
    &\qquad  =\mathbb{R}_y(\theta, \textbf{b}, \textbf{c})\ 
    \mathbb{R}_y(\pm \theta, \textbf{b}, \overline{\textbf{c}})\nonumber\ ,
\end{align}
\ees
where the identities are proven below. Here, 
 $\mathbb{R}_x$ and $\mathbb{R}_y$ are defined in \cref{AB}, namely 
\bes\label{AB2}
\begin{align}
    \mathbb{R}_x(\theta, \textbf{b}, \textbf{c})&:=\exp\left[\i\theta \left({X}(\textbf{b}, \textbf{c})+{X}(\overline{\textbf{b}}, \overline{\textbf{c}})\right)\right] \\
    \mathbb{R}_y(\theta, \textbf{b}, \textbf{c})&:=\exp\left[\i\theta \left({Y}(\textbf{b}, \textbf{c})+{Y}(\overline{\textbf{b}}, \overline{\textbf{c}})\right)\right]\ ,
\end{align}
\ees
where  we have used the fact that  ${X}(\textbf{b}, \textbf{c})$ and ${X}(\overline{\textbf{b}}, \overline{\textbf{c}})$ commute, and the same fact holds for ${Y}(\textbf{b}, \textbf{c})$ and ${Y}(\overline{\textbf{b}}, \overline{\textbf{c}})$.  Note that $\textbf{b}<\overline{\textbf{b}}$ and $\textbf{c}<\overline{\textbf{c}}$ together imply that $\textbf{b}\neq \overline{\textbf{c}}$, because the left-most bit of $\textbf{b}$ is 0, whereas the left-most bit of $\overline{\textbf{c}}$ is 1. 

To show the identities in \cref{rgw}, we note that for  $\textbf{b}<\overline{\textbf{b}}$ and $\textbf{c}<\overline{\textbf{c}}$, it holds that
\begin{align}
|\textbf{b}\rangle\langle \textbf{c}|+|\overline{\textbf{b}}\rangle\langle \overline{\textbf{c}}|&=|\textbf{b},+\rangle\langle \textbf{c},+|+|\textbf{b},-\rangle\langle \textbf{c},-|\\ 
|\textbf{b}\rangle\langle \overline{\textbf{c}}|+|\overline{\textbf{b}}\rangle\langle {\textbf{c}}|&=|\textbf{b},+\rangle\langle \textbf{c},+|-|\textbf{b},-\rangle\langle \textbf{c},-|\ ,
\end{align}
which, in turn,  implies 
\begin{align}\label{rtq3}
X(\textbf{b},\textbf{c})+X(\overline{\textbf{b}},\overline{\textbf{c}})&=|\textbf{b},+\rangle\langle \textbf{c},+|+|\textbf{c},+\rangle\langle \textbf{b},+|  \nonumber  \\ &+|\textbf{b},-\rangle\langle \textbf{c},-|+|\textbf{c},-\rangle\langle \textbf{b},-|\ ,
\end{align}
and 
\begin{align}\label{rtq4}
X(\textbf{b},\overline{\textbf{c}})+X(\overline{\textbf{b}},{\textbf{c}})&=|\textbf{b},+\rangle\langle \textbf{c},+|+|\textbf{c},+\rangle\langle \textbf{b},+|  \nonumber  \\ &-|\textbf{b},-\rangle\langle \textbf{c},-| - |\textbf{c},-\rangle\langle \textbf{b},-|\ .
\end{align}
Note that the right-hand sides of \cref{rtq3} and \cref{rtq4} commute, which using \cref{AB2},  implies 
$\mathbb{R}_x(\theta_1, \textbf{b}, \textbf{c})$  and $\mathbb{R}_x(\theta_2, \textbf{b}, \overline{\textbf{c}})$ commute and \cref{eq:2to4_x} holds. \cref{eq:2to4_y} can be shown similarly.

Finally, recall that  under the assumption that $\textbf{b}\neq \textbf{c}, \overline{\textbf{c}}$ and $w(\textbf{b})=w(\textbf{c})$, the 4-level unitaries in \cref{AB2} can be realized using the method developed previously in \cref{ss:XXYY_distany}. 
We conclude that unitaries in \cref{rgw} can be realized with XY interaction alone. Next, we use these unitaries to construct other unitaries in the half-filled sectors.

Unitaries in \cref{rgw} are 2-level with respect to the basis in \cref{def3}, namely they act non-trivially on the 2D subspace spanned by 
 $|\textbf{b},+ \rangle$ and $|\textbf{c},+ \rangle$ (or $|\textbf{b},- \rangle$ and $|\textbf{c},- \rangle$), which is restricted to the subspace $\mathcal{H}\sct{n/2,+}$ (or $\mathcal{H}\sct{n/2,-}$). Furthermore, relative to the basis $|\textbf{b},\pm \rangle$ and $|\textbf{c},\pm \rangle$, the unitaries in \cref{rgw} act as $\exp(\i 2\theta X)$ and $\exp(\i 2\theta Y)$, i.e.,  rotations around x and y axes, respectively.

 Using the Euler decomposition, one obtains arbitrary 2-level special  unitaries (i.e., unitaries with determinant 
  1) in the subspace spanned by $|\textbf{b},\pm \rangle$ and $|\textbf{c},\pm \rangle$.  Therefore, using \cref{lem:2-level_decomp}, by combining such unitaries we obtain all unitaries acting on $\mathcal{H}\sct{n/2,+}$ and $\mathcal{H}\sct{n/2,-}$ that satisfy the condition in \cref{eq:detpmis1} of  \cref{half}. In particular, this requires $\mathcal{O}(D^2)=\mathcal{O}(4^n n^{-1})$ unitaries of the type in \cref{rgw}, where
\be
D=\text{dim}(\mathcal{H}\sct{n/2,\pm})=\frac{1}{2}{{n}\choose{n/2}} \approx \frac{2^n}{\sqrt{2\pi n}} \ .
\ee
As we have seen in \cref{ss:XXYY_distany}, each 4-level unitary $\mathbb{R}_{x}(\theta, \textbf{b}, \textbf{c})$ and $\mathbb{R}_{x}(\theta, \textbf{b}, \overline{\textbf{c}})$ can be realized with $\mathcal{O}(n^2)$ gates $\exp(\i \theta R)$. Therefore, we conclude that a general unitary $V\sct{n/2}$ satisfying the condition in \cref{eq:detpmis1} can be realized with $\mathcal{O}(n^2D^2) = \mathcal{O}(4^n n)$  gates $\exp(\i \theta R)$. This completes the proof of \cref{half}.

\end{proof}

\subsubsection*{Example: The sector with Hamming weight 2 of 4 qubits}

In \cref{ss:XX_YY_RX_RY}, we discussed the implementation of unitaries in sectors with Hamming weights 1 and 3 of 4 qubits. Here, we  focus on the sector with Hamming weight 2, and complete the
construction of unitaries in $\mathcal{G}_4$.

In this example, the basis defined in  \cref{def3} is the set of vectors
\bes
\begin{align}
    \ket{0011,\pm} &:=\frac{\ket{0011}\pm\ket{1100}}{\sqrt2}, \\
    \ket{0101,\pm} &:=\frac{\ket{0101}\pm\ket{1010}}{\sqrt2}, \\
    \ket{0110,\pm} &:=\frac{\ket{0110}\pm\ket{1001}}{\sqrt2}\ ,
\end{align}
\ees
which spans the 6-dimensional subspace
$$\mathcal{H}\sct{2}= \mathcal{H}\sct{2,+}\oplus \mathcal{H}\sct{2,-}\cong \mathbb{C}^3\oplus \mathbb{C}^3 \ .$$
Following the above construction, using the XY interaction alone, we can implement any unitary 
inside each of these 3-dimensional subspaces, provided that it has determinant 1 and is 2-level with respect to the above basis. 

For example, consider  $U\in \text{SU}(2)$ with  the Euler decomposition
$$
U=e^{\i\gamma X}e^{\i\beta Y}e^{\i\alpha X}\ .$$
Suppose we want to implement this unitary as a 2-level unitary on the basis vectors $\ket{0011,+}$ and $\ket{0101,+}$, which we denote as unitary $U(0011, +;  0101, +)$. Then, using  \cref{rgw}  we obtain the decomposition
\begin{align}
    &\quad U(0011, +;  0101, +) \nonumber \\
    &= \mathbb{R}_x(\frac{\gamma}{2},0011,0101)  \mathbb{R}_x(\frac{\gamma}{2},0011,1010) \nonumber\\
    &\quad\ \mathbb{R}_y(\frac{\beta}{2},0011,0101)\mathbb{R}_y(\frac{\beta}{2},0011,1010) \nonumber\\
    &\quad\ \mathbb{R}_x(\frac{\alpha}{2},0011,0101)\mathbb{R}_x(\frac{\alpha}{2},0011,1010) \,.
\end{align}

Note that applying \cref{teta,tetay} and their permuted versions, one can implement each unitary in this decomposition using XY interaction alone. Finally, combining such 2-level unitaries on subspace $\mathcal{H}^{(2,+)}$, we obtain the full SU(3) unitary group on this subspace. A similar construction works for $\mathcal{H}^{(2,-)}$. In summary, combined with the results of \cref{ss:XX_YY_RX_RY}, we obtain the group of all 4-qubit unitaries  
satisfying conditions in \cref{Z2,eq:XY_U1,cons-det,cons-det2}, which is isomorphic to the group
$$\mathcal{G}_4\cong \text{SU}(4)\times \text{SU}(3)\times \text{SU}(3) \ .$$

\color{black}

\subsubsection*{Example: A family of 6-qubit 2-level unitaries}
As an example, let us consider the  6-qubit unitary
\begin{align} \label{eq:hf_defV}
    V=\exp(i\theta  [\ket{\textbf{b},+}\bra{\textbf{b}
',+}+ \ket{\textbf{b}',+}\bra{\textbf{b}
,+}]) \ ,
\end{align}
where
\bes\label{psidef}
\begin{align}
\ket{\textbf{b},+}=\frac{\ket{010101}+\ket{101010}}{\sqrt{2}} \,, \\
\ket{\textbf{b}',+}=\frac{\ket{100101} + \ket{011010}}{\sqrt{2}} \ ,    
\end{align}
\ees
and 
$$\textbf{b}=010101\ ,\ \  \textbf{b}'=011010 \ . $$
Applying \cref{rgw}, the unitary  $V$ can be decomposed as
\begin{align}
    V=\mathbb{R}_x(\frac{\theta}{2}, \textbf{b}, \textbf{b}')\mathbb{R}_x(\frac{\theta}{2}, \textbf{b}, \overline{\textbf{b}'}) \,.
\end{align}

\begin{figure}
    \centering
    \begin{tikzpicture}
        \def\cx{{0.25, 1, 1.75, 2.8, 3.55, 4.6, 5.45, 6.6, 7.5, 8.25}}
        \foreach \i in {0,1,2,3,4,5,6,7,8,9}
        {
            \pgfmathsetmacro\result{\cx[\i]}
            \coordinate (g\i) at (\result,0);
        }
        \foreach \i in {1,2,3,4,5,6}
        {
            \coordinate (l\i) at (0,-0.5*\i);
            \node at (l\i) {\i};
        }
        \foreach \i in {1,2,3,4,5,6}
        {
            \draw (g0|-l\i) -- (g9|-l\i);
        }
        \node[tensor2h] (iswap1) at ($(l3-|g1)!0.5!(l4-|g1)$) {$\iswap$};
        \draw (iswap1) -- (iswap1|-l1) pic{c1} (iswap1) -- (iswap1|-l5) pic{c1};
        \node[tensor2h] (iswap2) at ($(l3-|g2)!0.5!(l4-|g2)$) {$\iswap$};
        \draw (iswap2) -- (iswap2|-l1) pic{c0} (iswap2) -- (iswap2|-l5) pic{c0};
        \node[tensor2h] (Ux) at ($(l5-|g3)!0.5!(l6-|g3)$) {$U$};
        \draw (Ux|-l1) pic{c1} -- (Ux|-l2) pic{c0} -- (Ux|-l3) pic{c0} -- (Ux|-l4) pic{c1} -- (Ux);
        \node[tensor2h] (U) at ($(l5-|g4)!0.5!(l6-|g4)$) {$U^\times$};
        \draw (U|-l1) pic{c0} -- (U|-l2) pic{c1} -- (U|-l3) pic{c1} -- (U|-l4) pic{c0} -- (U);
        \node[tensor2h] (iswap1dg) at ($(l3-|g5)!0.5!(l4-|g5)$) {$\iswap^\dag$};
        \draw (iswap1dg) -- (iswap1dg|-l1) pic{c1} (iswap1dg) -- (iswap1dg|-l5) pic{c1};
        \node[tensor2h] (iswap2dg) at ($(l3-|g6)!0.5!(l4-|g6)$) {$\iswap^\dag$};
        \draw (iswap2dg) -- (iswap2dg|-l1) pic{c0} (iswap2dg) -- (iswap2dg|-l5) pic{c0};
        \node[tensor2h] (ex) at ($(l1-|g7)!0.5!(l2-|g7)$) {$e^{\i\frac{\theta}{2} R}$};
        \draw (ex) -- (ex|-l3) pic{c1} -- (ex|-l4) pic{c0} -- (ex|-l5) pic{c1} -- (ex|-l6) pic{c0};
        \node[tensor2h] (e) at ($(l1-|g8)!0.5!(l2-|g8)$) {$e^{\i\frac{\theta}{2} R}$};
        \draw (e) -- (e|-l3) pic{c0} -- (e|-l4) pic{c1} -- (e|-l5) pic{c0} -- (e|-l6) pic{c1};

        \draw[dashed] ($(l1-|g7)!0.5!(l1-|g6) + (0,0.25)$) -- ($(l6-|g7)!0.5!(l6-|g6) + (0,-0.25)$);

        \coordinate[below=0.4 of l6] (lower);
        \draw[decorate, decoration = {brace,mirror,raise=0pt}] ($(iswap1.west|-lower)+(0,0)$) -- node[below,yshift=-0pt]{$\mathbb{R}_x(\frac{\theta}{2}, \textbf{b}, \textbf{b}')$} ($(iswap2dg.east|-lower)+(0,-0)$);
        \draw[decorate, decoration = {brace,mirror,raise=0pt}] ($(ex.west|-lower)+(0,0)$) -- node[below,yshift=-0pt]{$\mathbb{R}_x(\frac{\theta}{2}, \textbf{b}, \overline{\textbf{b}'})$} ($(e.east|-lower)+(0,-0)$);
        
    \end{tikzpicture}
    \caption{The circuit for realizing the 6-qubit unitary $V=\exp(i\theta  [\ket{\textbf{b},+}\bra{\textbf{b}
',+}+ \ket{\textbf{b}',+}\bra{\textbf{b}
,+}])$, where $\ket{\textbf{b},+}$ and $\ket{\textbf{b}',+}$ are defined in \cref{psidef} . The first six gates in the circuit implement $\mathbb{R}_x(\frac{\theta}{2}, \textbf{b}, \textbf{b}')$, and the last two gates implement $\mathbb{R}_x(\frac{\theta}{2}, \textbf{b}, \overline{\textbf{b}'})$. Each of the 4-level gates, namely the pair of controlled-$U$ and $U^\times$ and the pair of controlled-$e^{\i\frac{\theta}{2} R}$, can be implemented with 10 ($T(3)=10$ in \cref{lem:TnOn2}) 4-qubit 4-level gates using the construction in \cref{ss:XXYY_distany}.}
    \label{fig:halffilled_example}
\end{figure}

\begin{figure}
    \centering
    \begin{tikzpicture}
        \def\cx{{0.25, 1, 1.75, 2.8, 3.55, 4.6, 5.45, 6.05, 7.5, 8.25}}
        \foreach \i in {0,1,2,3,4,5,6,7,8,9}
        {
            \pgfmathsetmacro\result{\cx[\i]}
            \coordinate (g\i) at (\result,0);
        }
        \foreach \i in {2,3,4,5,6}
        {
            \coordinate (l\i) at (0,-0.5*\i);
            \node at (l\i) {\i};
        }
        \foreach \i in {2,3,4,5,6}
        {
            \draw (g0|-l\i) -- (g6|-l\i);
        }
        \node[tensor2h] (iswap1) at ($(l3-|g1)!0.5!(l4-|g1)$) {$\iswap$};
        \draw  (iswap1) -- (iswap1|-l5) pic{c1};

        \node[tensor2h] (Ux) at ($(l5-|g3)!0.5!(l6-|g3)$) {$U$};
        \draw  (Ux|-l2) pic{c0} -- (Ux|-l3) pic{c0} -- (Ux|-l4) pic{c1} -- (Ux);

        \node[tensor2h] (iswap1dg) at ($(l3-|g5)!0.5!(l4-|g5)$) {$\iswap^\dag$};
        \draw  (iswap1dg) -- (iswap1dg|-l5) pic{c1};

        \coordinate[below=0.4 of l6] (lower);
        
    \end{tikzpicture}
    \caption{The circuit for realizing the  5-qubit unitary $\exp\big(\i\frac{\theta}{2}X(\mathbf{c}, \mathbf{c}')\big)$, where $\mathbf{c}=01010$ and $\mathbf{c}'=00101$. The qubits shown here correspond to the qubits 2 to 6 in \cref{fig:halffilled_example}. Based on this circuit, we construct the first part of the circuit in \cref{fig:halffilled_example}, which realizes the unitary $\mathbb{R}_x(\frac{\theta}{2}, \textbf{b}, \textbf{b}')=\mathbb{R}_x(\frac{\theta}{2}, \overline{\textbf{b}}, \overline{\textbf{b}'})$  on 6 qubits.}
    \label{fig:halffilled_example_5qubit}
\end{figure}
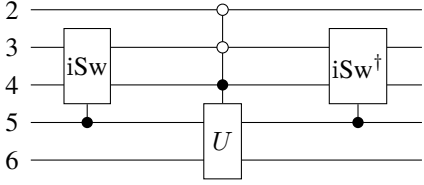
Then, as we further explain below,  applying the above methods we obtain the circuit in \cref{fig:halffilled_example} for implementing the unitary $V$, where the first part of the circuit, i.e., the first six unitaries,  realizes the unitary
\begin{align} \label{eq:RRx010101_011010}
    \mathbb{R}_x(\frac{\theta}{2}, \textbf{b}, \textbf{b}')=\mathbb{R}_x(\frac{\theta}{2}, \overline{\textbf{b}}, \overline{\textbf{b}'})\,,
\end{align}
and, the second part, i.e., the last two unitaries realizes 
\begin{align}
    &\mathbb{R}_x(\frac{\theta}{2}, \textbf{b}, \overline{\textbf{b}'})=\mathbb{R}_x(\frac{\theta}{2}, \overline{\textbf{b}}, {\textbf{b}'}) 
\end{align}
Note that each consecutive pair of unitaries in \cref{fig:halffilled_example} is an energy-conserving unitary that is 4-level in the computational basis, which can be itself decomposed into a sequence of gates $\exp(\i\alpha R): \alpha\in(-\pi,\pi]$ using the methods of \cref{ss:XXYY_distany} (In particular, for each of these 4-level unitaries  the conditions in \cref{eq:b'notbbar,eq:b'notbbar22} hold).

Both parts of this circuit are obtained by applying the method in \cref{ss:XXYY_distany}. 
 In particular, the first part of the circuit, which realizes the unitary in \cref{eq:RRx010101_011010}, is obtained by applying the conversion rule in \cref{ss:XXYY_distany} to the 5-qubit circuit in \cref{fig:halffilled_example_5qubit}. This 5-qubit circuit  realizes the unitary
 $\exp\big(\i\frac{\theta}{2}X(\mathbf{c}, \mathbf{c}')\big)\ , $
where $\mathbf{c}=01010$ and $\mathbf{c}'=00101$, which are obtained by removing the first bit of $\overline{\mathbf{b}}$ and $\overline{\mathbf{b}'}$, i.e.,  
$$\overline{\mathbf{b}}=1\mathbf{c}=101010\ ,\ \ \overline{\mathbf{b'}}=1\mathbf{c'}=100101\ . $$



The 5-qubit circuit in \cref{fig:halffilled_example_5qubit} itself is obtained by the construction in the proof of \cref{lem10}. In particular, this construction gives the decomposition  
\begin{align}
&\exp\big(\i\frac{\theta}{2}X(\mathbf{c}, \mathbf{c}')\big)=\exp\big(\i\frac{\theta}{2}X(01010,00101) \big) \nonumber \\ &=  \Lambda^1_5(\iswap_{34}^\dag) U\sct1(00110,00101) \Lambda^1_5(\iswap_{34}) \, \label{eq:halffilled_5qubit}
\end{align}
which corresponds to the circuit in \cref{fig:halffilled_example_5qubit}.\footnote{In particular, the sequence of bit strings in \cref{seq}, which gives the above circuit, is
$$    \textbf{b}=0\underset{l_1}{1}\underset{r_1}{0}\underset{l_2}{1}\underset{r_2}{0} \to  00110 \to 00101=\textbf{b}' \, .$$ The controlled-iSWAP circuit in \cref{eq:K_as_cisw} reads $K = \Lambda^1_{l_2}(\iswap_{l_1,r_1}) = \Lambda^1_5(\iswap_{34})$, which is the controlled-$\iSWAP$ gate acting on qubits 3 and 4 with the control qubit being qubit 5, the second last qubit in \cref{fig:halffilled_example} and in \cref{fig:halffilled_example_5qubit}. } Here, $U\sct1$ and $U$ are defined through \cref{eq:phase_from_iswaps} as
\bes
\begin{align}
    U &= \left(\begin{matrix}
        1&&\\&U\sct1 &\\&&1
    \end{matrix}\right) \ , \\
    U\sct1 &= \left(\begin{matrix}
        \i&\\&1
    \end{matrix}\right) e^{\i\frac{\theta}{2}X} \left(\begin{matrix}
        -\i&\\&1
    \end{matrix}\right) = \exp\left[\i\frac{\theta}{2}\left(\begin{matrix}&\i\\-\i&\end{matrix}\right)\right]  \\
    U^\times &:= (X\otimes X) U (X\otimes X)\ ,
\end{align}
\ees
where the matrix representation for $U$ is written in the computational basis of qubits 5 and 6 with the ordering $\ket{00}_{65},\ket{01}_{65},\ket{10}_{65},\ket{11}_{65}$, with qubit 6 first.\\

\newpage

\subsection{The full set of realizable unitaries with XY interactions\\ (Completing the proof of \cref{Thm2})} \label{ss:XY_semi}

Putting everything 
together, we obtain a method for implementing a general unitary satisfying the conditions in \cref{Thm2}, using the XY interaction alone. As emphasized in \cref{lem:con-ham}, the additional $\mathbb{Z}_2$ symmetry of XY interaction allows us to restrict our attention to the subspace with Hamming weight $\le n/2$; if the implemented unitary  coincides with the desired unitary in this subspace, then it is equal to the desired unitary.
 In \cref{ss:XXYY_distany}, we  constructed the 4-level unitary $U\sct1(\textbf{b},\textbf{b}')U\sct1(\overline{\textbf{b}},\overline{\textbf{b}'})$, which  assuming $w(\textbf{b})<n/2$ acts as a 2-level unitary in this subspace.

Now, suppose one wants to implement a unitary $V$ satisfying \cref{Z2,eq:XY_U1,cons-det,cons-det2}, namely  $V \in \mathcal{SV}_n^{\rm U(1)}$ that satisfies the $\mathbb{Z}_2$ symmetry as well as the corresponding determinant constraint in \cref{cons-det2}. 
Using the techniques of  \cref{sec:nqubit} \cref{step:semi-universal}, we obtain a circuit $C$ containing $\mathcal{O}(4^n/\sqrt{n})$ number of  gates $U\sct1(\textbf{b},\textbf{b}')$ with $w(\textbf{b})<n/2$ that realizes $V$ in the subspace $\mathcal{H}\sct{ <  n/2 }$. Notice that the resulting unitary  acts trivially on the subspace orthogonal to  $\mathcal{H}\sct{<n/2}$.

Then, replacing each unitary $U\sct1(\textbf{b},\textbf{b}')$ in this circuit with 4-level unitaries $U\sct1(\textbf{b},\textbf{b}')U\sct1(\overline{\textbf{b}},\overline{\textbf{b}'})$, we obtain a circuit $C'$ that respects the $\mathbb{Z}_2$ symmetry, and behaves identical to $C$ in
$\mathcal{H}\sct{<n/2}$, and hence realize the desired unitary $V$ in subspace $\mathcal{H}\sct{<n/2}$. From \cref{ss:XXYY_distany}, each unitary $U\sct1(\textbf{b},\textbf{b}')U\sct1(\overline{\textbf{b}},\overline{\textbf{b}'})$ with $w(\textbf{b})=w(\textbf{b}')$ can be implemented with $\mathcal{O}(n^2)$ gates $\exp(\i\alpha R): \alpha\in(-\pi,\pi]$.

Additionally, if $n$ is even, using the techniques developed in  \cref{ss:half-filled}, we can construct a circuit with gates $\exp(\i\alpha R): \alpha\in(-\pi,\pi]$  to implement $V\sct{n/2}$, the component of $V$ in the sector with Hamming weight $n/2$. Combining this circuit with $C'$, the resulting circuit implements $V$ in the sectors of Hamming weights $0,\dots,\lceil n/2 \rceil$.
By \cref{lem:con-ham}, the resulting circuit realizes $V$ in the full $n$-qubit Hilbert space.

The total number of gates $\exp(\i\alpha R): \alpha\in(-\pi,\pi]$ in this circuit is
\begin{align} \label{eq:XY_O4nn32}
    \mathcal{O}(n^2)\times \mathcal{O}\left(\frac{4^n}{\sqrt{n}}\right) + \mathcal{O}(4^n n)= \mathcal{O}(4^n n^{3/2}) \ .
\end{align}

\subsection{Breaking the $\mathbb{Z}_2$ symmetry of XY interaction with a single ancilla qubit}\label{ss:Z2_symmetry}

We have fully characterized $\mathcal{G}_n$, the set of realizable unitary transformations with XY interaction alone (without ancilla qubits). Namely, this is the set of  unitaries   satisfying all constraints \cref{Z2,eq:XY_U1,cons-det,cons-det2}, where \cref{cons-det2} is relevant only in the case of even $n$. Next, we show that how one can lift the two constraints related to the $\mathbb{Z}_2$ symmetry, i.e.,  \cref{Z2,cons-det2},  using a single ancilla qubit in the initial state $|0\rangle$ or $|1\rangle$, which fully breaks the symmetry (In this case the ancilla can be interpreted as a quantum reference frame or asymmetry catalyst \cite{marvian2013theory}).

In particular, for any unitary $V\in \mathcal{SV}_n^{\text{U}(1)}$, the unitary
\be \label{eq:V_anc_Z2}
{V}'=V\otimes |0\rangle\langle 0|+(X^{\otimes n} V X^{\otimes n})\otimes |1\rangle\langle 1|\ ,
\ee
satisfies
\be \label{eq:XY_V_anc}
{V}'(|\psi\rangle\otimes |0\rangle)=(V|\psi\rangle)\otimes |0\rangle\ ,
\ee
for all $n$-qubit states $|\psi\rangle\in(\mathbb{C}^2)^{\otimes n}$.  In the following, we show that this unitary satisfies all the constraints in  \cref{Z2,eq:XY_U1,cons-det,cons-det2}, and therefore is realizable  with XY interaction alone. 

First, it is straightforward  to see that 
\be \label{eq:XY_V_Z2}
X^{\otimes (n+1)} {V}' X^{\otimes (n+1)}={V}'\ ,
\ee
and
\be\label{eq:XY_V_U1}
(\sum_{j=1}^{n+1} Z_j) {V}'={V}' (\sum_{j=1}^{n+1} Z_j)\ .
\ee
Next, to verify the constraints in \cref{cons-det,cons-det2}, note that  the component of $V'$ in the sector with Hamming weight $m=0,\cdots, n+1$ is 
\be
V'{}\sct{m}=V\sct{m}\otimes |0\rangle\langle 0|+X^{\otimes n} {V}\sct{n+1-m} X^{\otimes n}\otimes |1\rangle\langle 1| \ ,
\ee
where $V\sct{m}=\Pi\sct{m} V \Pi\sct{m}$ is the component of $V$ in the sector with Hamming weight 
$m$, and  $V'{}\sct{m}$ and
 ${V}\sct{n+1-m}$ are defined in a similar fashion. Then,
\be
\text{det}(V'{}\sct{m})=\text{det}(V\sct{m}) \times  \text{det}(V\sct{n+1-m})=1\ , 
\ee
which implies condition in \cref{cons-det} is satisfied. Finally, we show that when $n+1$ is even, $V'$ also satisfies the condition in \cref{cons-det2}. To see this note that
 \be
V'{}\sct{\frac{n+1}{2}}=V\sct{\frac{n+1}{2}}\otimes |0\rangle\langle 0|+X^{\otimes n} {V}\sct{\frac{n+1}{2}} X^{\otimes n}\otimes |1\rangle\langle 1|\ .
\ee
Suppose $V\sct{(n+1)/2}$ has the eigendecompostion
\be \label{eq:V_half_eigendecomp}
V\sct{\frac{n+1}{2}}=\sum_{j} e^{\i\theta_j} |\psi_j\rangle\langle\psi_j|\ ,
\ee
where $\{|\psi_j\rangle\}$ is an orthonormal basis for the subspace of $(\mathbb{C}^2)^{\otimes n}$ with Hamming weight $(n+1)/2$. Then, one can easily see from  \cref{eq:V_anc_Z2,eq:V_half_eigendecomp} that $V'{}\sct{\frac{n+1}{2}}$ has the eigen-decomposition
\begin{align}
    & V'{}\sct{\frac{n+1}{2}} \nonumber \\
    &= \sum_{j} e^{\i\theta_j} \big( \ketbrasame{\psi_j}\otimes\ketbrasame{0}  +X^{\otimes n}\ketbrasame{\psi_j}X^{\otimes n}\otimes\ketbrasame{1} \big) \nonumber\\
    & = \sum_{j} e^{\i\theta_j} \big(|\Psi_j^+\rangle\langle \Psi_j^+|+|\Psi_j^-\rangle\langle \Psi_j^-|\big)\ , 
\end{align}
where 
\be
|\Psi^\pm_j\rangle=\frac{|\psi_j\rangle\otimes |0\rangle\pm X^{\otimes n}|\psi_j\rangle\otimes |1\rangle}{\sqrt{2}}\ ,
\ee
is an orthonormal basis for the eigen-subspace of $(\mathbb{C}^2)^{\otimes (n+1)}$ with Hamming weight $(n+1)/2$, and satisfies
\be
X^{\otimes (n+1)} |\Psi^\pm_j\rangle=\pm |\Psi^\pm_j\rangle\ . 
\ee
This immediately implies that 
\be
V'{}\sct{\frac{n+1}{2}}=V'{}\sct{\frac{n+1}{2},+} \oplus  V'{}\sct{\frac{n+1}{2},-}\ , 
\ee
where 
\be
V'{}\sct{\frac{n+1}{2},\pm}= \sum_{j} e^{\i\theta_j} 
 |\Psi_j^\pm\rangle\langle \Psi_j^\pm|\ ,
\ee
is the component of $V'{}\sct{\frac{n+1}{2}}$ in the eigen-subspace of $X^{\otimes (n+1)}$ with eigenvalue $\pm 1$. We conclude that
\be
\text{det}(V'{}\sct{\frac{n+1}{2},\pm})=\prod_{j} e^{\i\theta_j}=\text{det}(V\sct{\frac{n+1}{2}})=1\ ,  
\ee
where here the determinant is the product of non-zero eigenvalues. 

We conclude that the condition in \cref{cons-det2} is also satisfied. In summary, we showed that

\begin{corollary}
Any unitary transformation in $\mathcal{SV}_n^{\text{U}(1)}$ can be realized using 2-qubit gates $\exp(\i\alpha R): \alpha\in(-\pi,\pi]$ and a single ancilla qubit.
\end{corollary}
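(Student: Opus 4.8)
The plan is to reduce the claim to \cref{Thm2}. Fix a target $V\in\mathcal{SV}_n^{\rm U(1)}$; for $n=1$ this group is trivial, so assume $n\ge2$ and hence $n+1\ge3$. First I would introduce the $(n+1)$-qubit unitary
\[
V'=V\otimes|0\rangle\langle 0|+\big(X^{\otimes n}VX^{\otimes n}\big)\otimes|1\rangle\langle 1|\ ,
\]
which satisfies $V'(|\psi\rangle\otimes|0\rangle)=(V|\psi\rangle)\otimes|0\rangle$ for every $n$-qubit state $|\psi\rangle$, so that any circuit realizing $V'$, run with the ancilla prepared in $|0\rangle$, realizes $V$ on the system. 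It then suffices to verify that $V'$ meets the four conditions of \cref{Thm2}, i.e.\ $V'\in\mathcal{G}_{n+1}$, since that theorem provides an explicit circuit of $\mathcal{O}(4^n n^{3/2})$ gates $\exp(\i\alpha R)$ for any such unitary.

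Two of the conditions are essentially immediate. Since $V$ and $X^{\otimes n}VX^{\otimes n}$ both commute with $\sum_{j=1}^{n}Z_j$ while the ancilla operators are diagonal, $V'$ commutes with $\sum_{j=1}^{n+1}Z_j$; and conjugating $V'$ by $X^{\otimes(n+1)}$ simultaneously exchanges the two ancilla projectors and conjugates the two system blocks by $X^{\otimes n}$, which interchanges them, so $X^{\otimes(n+1)}V'X^{\otimes(n+1)}=V'$. For the determinant condition \cref{cons-det} I would write the Hamming-weight-$m$ block as $V'{}\sct{m}=V\sct{m}\otimes|0\rangle\langle 0|+X^{\otimes n}V\sct{n+1-m}X^{\otimes n}\otimes|1\rangle\langle 1|$, so that $\det(V'{}\sct{m})=\det(V\sct{m})\det(V\sct{n+1-m})=1$ because $V\in\mathcal{SV}_n^{\rm U(1)}$.

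The one step that requires genuine work is the half-filled constraint \cref{cons-det2}, which is active precisely when $n$ is odd (so that $n+1$ is even). The hard part is to show that a $\mathbb{Z}_2$-symmetric extension of the above form automatically splits the half-filled sector of $n+1$ qubits ``symmetrically'' between the two eigensectors of $X^{\otimes(n+1)}$. I would diagonalize $V\sct{(n+1)/2}=\sum_j e^{\i\theta_j}|\psi_j\rangle\langle\psi_j|$ and check that the vectors $|\Psi^\pm_j\rangle=(|\psi_j\rangle\otimes|0\rangle\pm X^{\otimes n}|\psi_j\rangle\otimes|1\rangle)/\sqrt2$ form an orthonormal eigenbasis of the half-filled sector, with $V'{}\sct{(n+1)/2}|\Psi^\pm_j\rangle=e^{\i\theta_j}|\Psi^\pm_j\rangle$ and $X^{\otimes(n+1)}|\Psi^\pm_j\rangle=\pm|\Psi^\pm_j\rangle$. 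Consequently each of the two $X^{\otimes(n+1)}$-eigensectors inside the half-filled sector acquires the full multiset of eigenvalues $\{e^{\i\theta_j}\}$, so $\det(V'{}\sct{(n+1)/2,\pm})=\prod_j e^{\i\theta_j}=\det(V\sct{(n+1)/2})=1$, once more by $V\in\mathcal{SV}_n^{\rm U(1)}$. This is the place where the ancilla genuinely does work: the same determinant identity fails for a generic $\mathbb{Z}_2$-symmetric unitary built without the ancilla, which is exactly why the bare XY interaction is not semi-universal.

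Having confirmed all four conditions, I would conclude $V'\in\mathcal{G}_{n+1}$, apply \cref{Thm2} to synthesize $V'$ from gates $\exp(\i\alpha R)$, and restrict the input to $|0\rangle$ on the ancilla. I expect the only real obstacle to be the bookkeeping in the half-filled case — verifying orthonormality and completeness of the $|\Psi^\pm_j\rangle$ and correctly tracking their parities under $X^{\otimes(n+1)}$ — with no new conceptual input beyond \cref{Thm2}.
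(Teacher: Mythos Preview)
Your proposal is correct and follows essentially the same approach as the paper: you define the same ancilla-extended unitary $V'$, verify the four conditions of \cref{Thm2} in the same way (including the eigenbasis $|\Psi^\pm_j\rangle$ argument for the half-filled sector), and invoke \cref{Thm2} to synthesize $V'$ with XY gates alone. The only addition is your explicit handling of the $n=1$ edge case, which the paper does not spell out.
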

\subsection{All energy-conserving unitaries with 2 ancilla qubits} \label{ss:XY_univ}
Finally, combining this with the result of \cref{sec:nqubit} \cref{step:universal} that allows us to circumvent the constraints in \cref{cons-det} with an ancilla qubit, we can  
implement a general energy-conserving unitary (up to a global phase) using only XY interaction and 2 ancilla qubits. 

In particular, in \cref{sec:nqubit} \cref{step:universal} we showed that for any $n$-qubit energy-conserving unitary $V \in \mathcal{V}_n^{\rm U(1)}$, there exists a unitary in $\widetilde{V} \in \mathcal{SV}_{n+1}^{\rm U(1)}$, such that
\be
\widetilde{V}(|\psi\rangle\otimes |0\rangle_\text{anc1})=(V|\psi\rangle)\otimes |0\rangle_\text{anc1}\ ,
\ee
for all $|\psi\rangle\in(\mathbb{C}^2)^{\otimes n}$. 
Furthermore, from \cref{ss:Z2_symmetry},
\be
\widetilde{V}'=\widetilde{V}\otimes |0\rangle\langle 0|+X^{\otimes (n+1)}\widetilde{V}X^{\otimes (n+1)}\otimes |1\rangle\langle 1| \ ,
\ee
satisfies 
\be
\widetilde{V}'(|\Psi\rangle\otimes |0\rangle_\text{anc2})=(\widetilde{V}|\Psi\rangle)\otimes |0\rangle_\text{anc2}\ , 
\ee
for all $|\Psi\rangle\in\mathbb{C}^{\otimes (n+1)}$. Moreover, since $\widetilde{V}\in\mathcal{SV}_{n+1}^{\text{U}(1)}$, the argument in the previous section implies that $\widetilde{V}'$ is realizable with XY interaction alone. Choosing $|\Psi\rangle=|\psi\rangle\otimes |0\rangle$  we conclude that  
\begin{align}
\widetilde{V}'(|\psi\rangle\otimes |0\rangle_\text{anc1}\otimes |0\rangle_\text{anc2})=(V|\psi\rangle)\otimes |0\rangle_\text{anc1}\otimes |0\rangle_\text{anc2}\ ,
\end{align}
for all $n$ qubit state  $|\psi\rangle$. This completes the proof of the last part of \cref{Thm1}.

\section{Approximate Universality} \label{sec:approx}

For some applications, e.g., in the context of quantum computing,  we are interested in the notion of \emph{approximate} universality, where the desired unitary can be implemented using gates from a finite elementary gate set, for instance,
$S$ gate and  $\sqrt{\i \text{SWAP}}$. Clearly, with finite number of gates from this finite gate set, 
 generic energy-conserving unitary $V$  can only be realized with a non-zero error, which can be quantified by the operator norm distance
\begin{align}
	\|V-V'\|_\infty := \sup_{\ket{\psi}:\|\ket{\psi}\|=1}\|(V-V')\ket{\psi}\| \,,
\end{align}
where $V'$ is the realized unitary.

In the following we show that for $n=2$ qubits,  any desired energy-conserving unitary in $\mathcal{SV}_2^{U(1)}$ can be realized with arbitrary small error using  $T$ and  $\sqrt{\i\text{SWAP}}$  gates whereas this is not possible  if one uses $S=T^2$ instead of $T$. In the latter case, the generated group is finite, and hence is not dense in
$\mathcal{SV}_2^{U(1)}$. 
On the other hand, in the case of $n\ge 3$ qubits $T$ gates are not needed: Any desired element of $\mathcal{SV}_n^{U(1)}$ 
can be realized with single-qubit $S$ gate and $\sqrt{\i\text{SWAP}}$ gate, with arbitrary small error.  

The results in this section rely on the Solovay-Kitaev theorem \cite{kitaev1997quantum,kitaev2002classical}. In particular, we use a recent variant of this result \cite{kuperberg2023breaking}, which states that, for any gate set that generates a dense subgroup of $\text{SU}(d)$, if the inverse of each gate is contained in the gate set,  then 
any $V\in\text{SU}(d)$ can be approximated to precision $\epsilon$ with 
\be\label{SK0}
N=\mathcal{O}(\log^{\nu}(\epsilon^{-1}))\ ,
\ee
number of gates in the gate set, where 
\be\label{SK}
\nu>\SKexp\approx 1.44042\  . 
\ee
Furthermore, the sequences that achieve this bound can be found efficiently.

\subsection{Approximate semi-universality of $T+\sqrt{\i\text{SWAP}}$  for 2 qubits}

Here, we show how a general 2-qubit unitary in $\mathcal{SV}_2^{U(1)}$, i.e., a unitary in the form
\be\nonumber
V=
\left(
\begin{array}{ccc}
1  &   &   \\
  & V\sct1  &   \\
  &   &   1
\end{array}\right)\ \  : V\sct1\in \text{SU}(2)\ ,
\ee
can be realized with arbitrary small error using $T$ and $\sqrt{\i\text{SWAP}}$ gates.

Recall that in the case of 2 qubits, any element of $\mathcal{SV}_2^{U(1)}$ can be realized with $S$ gate and 
 $\exp(i\alpha R): \alpha\in(-\pi,\pi]$ gate (See the circuit below \cref{eq:XYeuler}).
 Luckily \cref{2-qubit} indicates that   the approximate implementation of the gate $\exp(\i\alpha R)$ can be fully characterized using the same techniques and results that have been previously developed in the case of approximate single-qubit unitaries.  
 To see this, recall the correspondence in \cref{eq:VV1} which implies $\siSWAP$ and $\exp(\i\alpha Z)$ gates acts as $\exp(\i\pi X/4)$ and rotation around $z$ in the subspace spanned by $|0 1\rangle$ and $|1 0\rangle$.
 Using  the standard results in quantum computing, one can show that $\exp(\i\pi X/4)$ and $\exp(\i\pi Z/8)$ generate a dense subgroup of SU(2). To see this note that
\be
S^\dag \exp(\i\pi X/4) S^\dag=\frac{1}{\sqrt{2}}\left(
\begin{array}{cc}
1  &  1   \\ 1 & -1
  \end{array}
\right)=H\ ,
\ee
which is the Hadamard gate. But, it is well-known that $T$ and $H$ gates together generate a dense subgroup of SU(2) \cite{boykin1999universalieee,kitaev2002classical}. Combined with the above equation and the fact that 
$S^\dag=T^6$, this implies that $T$ and $\exp(\i\pi X/4)$ generate a dense subgroup of SU(2).

We conclude that any $V\sct1\in\text{SU}(2)$ can be approximately implemented  with $T$ and $\exp(\i\pi X/4)$ gates. By the correspondence in \cref{eq:VV1}, this further implies that any $V\in\mathcal{SV}_2^{U(1)}$ can be approximated with the same number of $T$ and $\sqrt{\i\text{SWAP}}$ gates. In particular,  there exists $\widetilde{V}$, such that $\|V-\widetilde{V}\|_\infty\le \epsilon$ and $\widetilde{V}$ can be realized with $\mathcal{O}(\log^{\nu}(\epsilon^{-1}))$ of $T$  and $\sqrt{\i\text{SWAP}}$ gates, for any $\nu>\SKexp$. Indeed, the above observation implies that one can use the standard existing softwares for approximate implementation of single-qubit gates with $H$ and $T$ gates, to find the sequence of $\sqrt{\i\text{SWAP}}$ and $T$ gates that realize any 2-qubit energy-conserving unitary $V$ in the above form.
\subsection{Approximate semi-universality of $S+\sqrt{\i\text{SWAP}}$  for 3 qubits}

Next, we study the case of $n=3$ qubits. As the first setp, we characterize the group generated by $\sqrt{\i\text{SWAP}}$ gates alone.
 Consider the  subgroup of 3-qubit unitaries $\mathcal{SV}_3^{\text{U}(1)}$ satisfying the additional  $\mathbb{Z}_2$ symmetry in \cref{Z2}, i.e., the group
\be\label{G3}
\mathcal{G}_3=\{U\in \mathcal{SV}_3^{U(1)} : 
X^{\otimes 3} UX^{\otimes 3}= U  \}\ .
\ee
Relative to the computational basis with the following (unconventional) ordering
\be\label{ordering}
|000\rangle;\  |001\rangle , |010\rangle, |100\rangle ;\  |110\rangle , |101\rangle, |011\rangle ;\  |111\rangle\ ,
\ee
elements of $\mathcal{G}_3$ are represented as 
\begin{align}\label{G32}
	U = \left(\begin{matrix}
		1 &&& \\ &U\sct1&& \\ &&U\sct1& \\ &&&1
	\end{matrix}\right)\ \ \ : U\sct1\in \text{SU}(3)\ .
\end{align}
The symmetries of XY interaction imply that any unitary generated by this interaction on 3 qubits, including $\sqrt{\i\text{SWAP}}$ on any pair of qubits, is inside   $\mathcal{G}_3$. Conversely, we show that the unitaries generated by $\sqrt{\i\text{SWAP}}$ on all pairs of qubits generate a dense subgroup of  $\mathcal{G}_3$.

\begin{theorem}\label{Thm:dense}
Consider the group of 
3-qubit unitaries generated by gates 
\be
\sqrt{\i\text{SWAP}}_{12}\ , \ \sqrt{\i\text{SWAP}}_{23} \  , \ \sqrt{\i\text{SWAP}}_{13} \ . 
\ee
This group is a dense subgroup of $\mathcal{G}_3$ defined in \cref{G3} (or, equivalently, \cref{G32}). Hence, by the Solovay-Kitaev theorem, for any unitary $U\in \mathcal{G}_3$ and any  $\epsilon>0$, there exists a sequence $V_1,\cdots, V_N$
of length  $N=\mathcal{O}(\log^{\nu}(\epsilon^{-1}))$ of these gates 
such that $\|U-V_N\cdots V_1\|_\infty\le \epsilon $, where $\nu$ is given in \cref{SK}. Furthermore, this sequence can be found efficiently.  
\end{theorem}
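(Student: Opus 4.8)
\textbf{Proof proposal for \cref{Thm:dense}.}

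The plan is to prove density by working at the level of Lie algebras. Since the three generators $\sqrt{\i\text{SWAP}}_{ij}$ all lie in the compact connected Lie group $\mathcal{G}_3 \cong \text{SU}(3)$ (under the ordering in \cref{ordering}, both copies of $U\sct1$ are identical, so $\mathcal{G}_3$ is really one copy of $\text{SU}(3)$), it suffices to show that the Lie subalgebra generated by the Hamiltonians $\i R_{12}, \i R_{23}, \i R_{13}$ — together with iterated commutators — is all of $\su(3)$, and then invoke the standard fact that a connected subgroup whose Lie algebra is the whole of a compact simple Lie algebra is the whole group, combined with the fact that irrational rotation angles ($\pi/4$ is an irrational multiple of $2\pi$ relative to the relevant eigenvalue spacings, so the one-parameter closure of $\exp(\i t \pi R/4 \cdot \text{integer})$ recovers the full one-parameter subgroup $\{\exp(\i t R)\}$) let us pass from the discrete gate set to these one-parameter subgroups. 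Concretely: first I would note that the group generated by a single $\sqrt{\i\text{SWAP}}_{ij}$ has closure containing $\{\exp(\i\theta R_{ij}): \theta\in(-\pi,\pi]\}$, because in the weight-1 sector $\sqrt{\i\text{SWAP}}$ acts (in the $2$-dimensional block it moves) as $\exp(\i\pi X/4)$, whose powers are dense in $\{\exp(\i\theta X)\}$; hence the closure of the generated group contains all of $\exp(\i\theta R_{ij})$ for the three pairs.

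Next I would compute the Lie algebra $\mathfrak{h}$ generated by $\{\i R_{12}, \i R_{23}, \i R_{13}\}$ inside $\su(3)\oplus\su(3)$ (one summand for weight 1, an identical one for weight 2 after the reordering; the weight-0 and weight-3 sectors are trivial). In the weight-1 sector, spanned by $|001\rangle,|010\rangle,|100\rangle$, the operator $R_{12}$ acts as the Gell-Mann-type matrix that swaps the first two basis vectors with factor $1/2$ (a real symmetric off-diagonal generator), $R_{23}$ similarly on the last two, $R_{13}$ on the first and third. These are three of the "$\lambda_1$-type" generators of $\su(3)$. It is a classical fact that the three real symmetric off-diagonal generators of $\su(3)$, under commutation, generate all of $\su(3)$: their pairwise commutators produce the three antisymmetric off-diagonal generators ($\lambda_2$-type), and commutators of those with the symmetric ones produce the Cartan (diagonal) generators. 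So $\mathfrak{h}$ projected to the weight-1 summand is all of $\su(3)$, and likewise for weight-2.

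The one genuine subtlety — and the step I expect to be the main obstacle — is ruling out that $\mathfrak{h}$ is a \emph{diagonal} subalgebra of $\su(3)\oplus\su(3)$, i.e., that the weight-1 and weight-2 actions are linked by some isomorphism so that $\mathfrak{h}$ is only $8$-dimensional rather than $16$-dimensional. But here the chosen ordering in \cref{ordering} is exactly designed so that each $R_{ij}$ acts by the \emph{same} matrix in both sectors, which would suggest a diagonal embedding — and indeed \cref{G32} confirms $\mathcal{G}_3$ itself is only one copy of $\text{SU}(3)$, not two independent copies. So in fact there is no obstacle of this kind: the target group $\mathcal{G}_3$ is genuinely $\text{SU}(3)$ (a single copy), the generated Lie algebra is genuinely $\su(3)$, and density follows. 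The remaining work is just: (i) verify the matrix forms of $R_{12},R_{23},R_{13}$ in the ordered weight-1 basis and check they are the three symmetric off-diagonal $\su(3)$ generators up to scaling; (ii) cite the commutator computation showing these generate $\su(3)$; (iii) invoke closure of $\sqrt{\i\text{SWAP}}$ powers to get the continuous one-parameter subgroups; (iv) invoke that a connected Lie subgroup with full Lie algebra equals the ambient connected compact group; (v) quote the Solovay–Kitaev theorem in the form of \cref{SK0}–\cref{SK}, noting the gate set is closed under inverses since $\sqrt{\i\text{SWAP}}^{\,\dagger} = \sqrt{\i\text{SWAP}}^{\,7}$, to get the $N = \mathcal{O}(\log^\nu(\epsilon^{-1}))$ bound and the efficient algorithm.
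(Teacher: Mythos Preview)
Your argument has a genuine gap at step (iii), and it is fatal for the approach as written. You claim that the powers of a single $\sqrt{\i\text{SWAP}}_{ij}=\exp(\i\frac{\pi}{4}R_{ij})$ are dense in the one-parameter group $\{\exp(\i\theta R_{ij}):\theta\in(-\pi,\pi]\}$, justifying this by saying that in the weight-$1$ block the gate acts as $\exp(\i\pi X/4)$, ``whose powers are dense in $\{\exp(\i\theta X)\}$.'' This is false: $\exp(\i\pi X/4)$ has eigenvalues $e^{\pm\i\pi/4}$, which are $8$th roots of unity, so the gate has order $8$ and its powers form a finite set. (Your parenthetical about ``$\pi/4$ is an irrational multiple of $2\pi$ relative to the relevant eigenvalue spacings'' is simply incorrect.) Consequently you never pass from the discrete gate set to the continuous one-parameter subgroups, and the Lie-algebra computation that follows, however correct, does not by itself prove density of the \emph{discrete} group generated by the three gates.

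This is exactly the obstacle the paper's proof is organized around. Rather than trying to recover $\exp(\i\theta R_{ij})$ from powers of a single gate (which is impossible), the paper considers the \emph{product} $W_{123}:=\sqrt{\i\text{SWAP}}_{23}\,\sqrt{\i\text{SWAP}}_{13}$ and shows in \cref{lem:dense} that its nontrivial eigenvalues are $e^{\pm\i\theta}$ with $\cos(\theta/2)=\cos^2(\pi/8)$, which is an \emph{irrational} rotation (indeed the same characteristic polynomial as $e^{\i 3\pi/4}HTHT$ in the standard $H$--$T$ universality argument). Only then can one conclude that powers of $W_{123}$ densely fill out the one-parameter group $\{\exp(tA_{123})\}$, and the Lie-algebra step is carried out with $A_{123}$ and its permuted versions rather than with the $R_{ij}$. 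Your Lie-algebra computation with the $R_{ij}$ would be perfectly fine (arguably cleaner), but it establishes only that $\{\exp(\i\alpha R_{ij}):\alpha\in\mathbb{R}\}$ generate $\mathcal{G}_3$; to finish you still need an irrational-angle argument to bridge from $\sqrt{\i\text{SWAP}}$ to those continuous families, and that bridge cannot be built from a single $\sqrt{\i\text{SWAP}}$ gate.
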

The proof of this theorem is presented in \cref{ss:proof_Thm:dense}.  Before proving this theorem, we discuss two of its important implications. First, combined with  \cref{Thm2}, this theorem implies that the group generated by $\sqrt{\i\text{SWAP}}$ gates on  $n\ge 3$ qubits is a dense subgroup of the group $\mathcal{G}_n$.

Second, note that by combining $\mathcal{G}_3$ with a generic energy-conserving unitary, one obtains the group $\mathcal{SV}_3^{\text{U}(1)}$, i.e., the group of all unitaries
in the form    
\begin{align}\label{SU}
	V = \left(\begin{matrix}
		1 &&& \\ &V\sct1&& \\ &&V\sct2& \\ &&&1
	\end{matrix}\right)\ \ \ \ \ \ : V\sct1 , V\sct2  \in \text{SU}(3)\ ,
\end{align}
with respect to the basis in \cref{ordering}. 

\begin{lemma}\label{lem65}
Suppose a 3-qubit energy-conserving unitary $J$ with respect to the basis in \cref{ordering} has the decomposition    
\begin{align}
	J = \left(\begin{matrix}
		e^{i\phi_0} &&& \\ &J\sct1&& \\ &&J\sct2& \\ &&& e^{i\phi_3}
	\end{matrix}\right)\ .
\end{align}
 If $J\sct2 \neq J\sct1 e^{\i \gamma}$ for some phase $e^{\i \gamma}$, or equivalently, if $|\Tr(J\sct1^\dag J\sct2)|<3$, then the group generated by 
$J$ and $\mathcal{G}_3$ contains a dense subgroup of $\mathcal{SV}_3^{U(1)}$, i.e., all unitaries in the form of \cref{SU}. 
In particular, this is the case for $J=S\otimes \mathbb{I}\otimes \mathbb{I}$, where  $ S=e^{\frac{\i\pi}{4}} R_z(-\frac{\pi}{2})
=
\left(
\begin{array}{cc}
1  &     \\ & \i
  \end{array}
\right)$.
\end{lemma}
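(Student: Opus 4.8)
The plan is to use a standard Lie-algebraic "generated subgroup" argument. Since $\mathcal{G}_3$ is already (a dense subgroup of) the group in \cref{G32}, the Lie algebra $\mathfrak{g}$ of the closure of the group generated by $J$ and $\mathcal{G}_3$ contains the Lie algebra $\mathfrak{g}_3$ of $\mathcal{G}_3$, which relative to the ordering in \cref{ordering} consists of all block-diagonal anti-Hermitian matrices of the form $0\oplus A\oplus A\oplus 0$ with $A\in\mathfrak{su}(3)$. The goal is to show that under the stated hypothesis on $J$, the algebra $\mathfrak{g}$ is strictly larger, and in fact is all of $\mathfrak{sv}_3^{\mathrm{U}(1)}$, the block-diagonal anti-Hermitian matrices $\i\phi_0\oplus A_1\oplus A_2\oplus\i\phi_3$ with $A_1,A_2\in\mathfrak{su}(3)$ (traceless because of the determinant-1 constraint in each sector, and with the top and bottom one-dimensional sectors contributing two extra phases — though note these phases are also in $\mathcal{SV}_3^{\mathrm{U}(1)}$ only via the global-phase freedom; I would check the exact statement of \cref{SU}, which only asks for $V^{(1)},V^{(2)}\in\mathrm{SU}(3)$ with trivial action on the weight-0 and weight-3 sectors, so really $\mathfrak{g}$ should match $\mathfrak{su}(3)\oplus\mathfrak{su}(3)$ acting on the two weight-1 and weight-2 sectors).

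First I would reduce to showing $\mathfrak{g}$ contains an element of the form $0\oplus B_1\oplus B_2\oplus 0$ with $B_1,B_2\in\mathfrak{su}(3)$ and $B_2\neq B_1$ (as opposed to the "diagonal" copies $A\oplus A$ already in $\mathfrak{g}_3$). Given such an element, together with $\mathfrak{g}_3\supseteq\{A\oplus A\}$, I can take commutators: $[A\oplus A,\;B_1\oplus B_2]=[A,B_1]\oplus[A,B_2]$, and since $\mathfrak{su}(3)$ is simple, the "off-diagonal" piece $\{[A,B_1]\oplus[A,B_2] : A\in\mathfrak{su}(3)\}$ together with the diagonal piece generates all of $\mathfrak{su}(3)\oplus\mathfrak{su}(3)$ — this is the elementary fact that if a subalgebra of $\mathfrak{s}\oplus\mathfrak{s}$ (with $\mathfrak{s}$ simple) contains the diagonal and is not equal to the diagonal, it is everything, which follows from Goursat's lemma / the simplicity of $\mathfrak{s}$. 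So the crux is producing one genuinely "asymmetric" element of $\mathfrak{g}$.

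To produce it, I would use $J$ itself. Conjugation by $J$ is an automorphism of $\mathfrak{g}$; applied to a diagonal generator $A\oplus A$ it gives $(J^{(1)}A J^{(1)\dagger})\oplus(J^{(2)}A J^{(2)\dagger})$, which lies in $\mathfrak{g}$. Subtracting $A\oplus A$ (also in $\mathfrak g$) yields $(J^{(1)}AJ^{(1)\dagger}-A)\oplus(J^{(2)}AJ^{(2)\dagger}-A)\in\mathfrak g$. If this were of the "diagonal" form $C\oplus C$ for every $A\in\mathfrak{su}(3)$, then $J^{(1)}AJ^{(1)\dagger}-J^{(2)}AJ^{(2)\dagger}$ would vanish for all $A$, i.e. $\mathrm{Ad}_{J^{(1)}}=\mathrm{Ad}_{J^{(2)}}$ on $\mathfrak{su}(3)$, forcing $J^{(2)}=e^{\i\gamma}J^{(1)}$ for some phase — contradicting the hypothesis $|\Tr(J^{(1)\dagger}J^{(2)})|<3$. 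Hence for some $A$ we get an asymmetric element, completing the argument. For the specific case $J=S\otimes\mathbb{I}\otimes\mathbb{I}$, I would just compute $J^{(1)}$ and $J^{(2)}$ in the basis \cref{ordering}: restricting $S_1$ to the weight-1 sector $\{\ket{001},\ket{010},\ket{100}\}$ gives $\mathrm{diag}(1,1,\i)$ (the $\i$ on $\ket{100}$), while on the weight-2 sector $\{\ket{110},\ket{101},\ket{011}\}$ it gives $\mathrm{diag}(\i,\i,1)$; these differ by more than an overall phase, so $|\Tr(J^{(1)\dagger}J^{(2)})|=|{\bar\i}\cdot\i + \bar 1\cdot\i + \bar\i\cdot 1|=|1+2\i|=\sqrt5<3$, and the general statement applies.

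The main obstacle I anticipate is the bookkeeping around the one-dimensional sectors (weight $0$ and weight $3$) and making sure the claimed target group in \cref{SU} — which acts trivially there — is exactly what the generated group's closure equals, rather than something a phase off; but since $J=S_1$ already acts trivially on $\ket{000}$ and $\ket{111}$ and every element of $\mathcal{G}_3$ does too, the whole generated group lives inside the subgroup acting trivially on those sectors, so no extra phases arise and the density statement then follows from the Lie-algebra equality plus connectedness of $\mathrm{SU}(3)\times\mathrm{SU}(3)$ and the Solovay–Kitaev theorem (the generating set being closed under inverses because $\sqrt{\i\text{SWAP}}^\dagger=(\sqrt{\i\text{SWAP}})^7$ and $S^\dagger=S^3$, modulo how the statement is phrased).
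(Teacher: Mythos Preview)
Your Lie-algebraic approach is correct and takes a genuinely different route from the paper's proof. The paper works at the group level: it first multiplies $J$ by a suitable element of $\mathcal{G}_3$ to obtain an element acting as a scalar on the weight-$1$ sector and as the non-scalar $e^{\i\alpha}J\sct1^\dagger J\sct2$ on the weight-$2$ sector; it then argues that the projection to the weight-$2$ sector of all such elements forms a normal subgroup $\mathcal{N}$ of $\mathrm{U}(3)$ which, by the simplicity of $\SU(3)$, must contain $\SU(3)$, and finishes by taking group commutators to kill the residual phases on the one-dimensional sectors. Your route instead conjugates the diagonal subalgebra $\{0\oplus A\oplus A\oplus 0:A\in\su(3)\}$ by $J$ and invokes the Goursat-type fact that any Lie subalgebra of $\su(3)\oplus\su(3)$ strictly containing the diagonal must be everything. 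Both arguments ultimately hinge on the simplicity of $\SU(3)$ (resp.\ $\su(3)$); yours is arguably cleaner and more conceptual, while the paper's is more constructive in that it exhibits the required elements directly inside the generated group rather than passing through the Lie algebra of its closure. Two minor slips to fix: (i) your trace computation for $J=S_1$ is off---with $J\sct1=\mathrm{diag}(1,1,\i)$ and $J\sct2=\mathrm{diag}(\i,\i,1)$ one gets $\Tr(J\sct1^\dagger J\sct2)=\i+\i-\i=\i$, so $|\Tr|=1$, not $\sqrt{5}$ (the conclusion $<3$ is unaffected); (ii) $S_1$ does \emph{not} act trivially on $\ket{111}$ (it contributes a factor $\i$), but this is harmless since the lemma only asks that the generated group \emph{contain} $\mathcal{SV}_3^{\mathrm{U}(1)}$, not equal it.
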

We prove this lemma in \cref{prooflem65}. Note that the assumption that $J\sct2 \neq J\sct1 e^{\i \gamma}$, holds for generic energy-conserving unitary $J$. It is also worth noting that if instead of $S$ gate, one uses $Z$ gate, i.e., 
$J=Z\otimes \mathbb{I}\otimes \mathbb{I}$, then this assumption is not satisfied, i.e., $|\Tr(J\sct1^\dag J\sct2)|=3$. Indeed, it turns out that in this case, even though $Z$ gate breaks the $\mathbb{Z}_2$ symmetry, i.e., does not commute with  $X^{\otimes 3}$, it cannot extend  $\mathcal{G}_3$ to  $\mathcal{SV}_3^{\text{U}(1)}$.

Combing this lemma with \cref{Thm:dense} we conclude that
\begin{corollary}
The group generated by unitaries $\sqrt{\i\text{SWAP}}_{12}, \sqrt{\i\text{SWAP}}_{23}, \sqrt{\i\text{SWAP}}_{13}$ and the single-qubit $S$ gates contains a dense subgroup of  $\mathcal{SV}_3^{U(1)}$.  
\end{corollary}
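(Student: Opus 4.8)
The plan is to combine the two previously established results in a straightforward way. From \cref{Thm:dense}, the group generated by $\sqrt{\i\text{SWAP}}_{12}$, $\sqrt{\i\text{SWAP}}_{23}$, $\sqrt{\i\text{SWAP}}_{13}$ contains a dense subgroup of $\mathcal{G}_3$. From \cref{lem65}, if $J$ is a $3$-qubit energy-conserving unitary whose Hamming-weight-$1$ and Hamming-weight-$2$ components $J\sct1, J\sct2 \in \text{SU}(3)$ satisfy $|\Tr(J\sct1{}^\dagger J\sct2)| < 3$, then the group generated by $J$ together with $\mathcal{G}_3$ contains a dense subgroup of $\mathcal{SV}_3^{\rm U(1)}$; and the lemma explicitly asserts that $J = S\otimes\mathbb{I}\otimes\mathbb{I}$ qualifies.

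First I would note that the closure $\overline{G}$ of the group $G$ generated by the three $\sqrt{\i\text{SWAP}}$ gates and the three single-qubit $S$ gates is a closed subgroup of $\mathcal{SV}_3^{\rm U(1)}$ containing $S\otimes\mathbb{I}\otimes\mathbb{I}$. Since $G$ contains all the $\sqrt{\i\text{SWAP}}$ gates, its closure $\overline G$ contains the closure of the subgroup they generate, which by \cref{Thm:dense} is all of $\mathcal{G}_3$. Hence $\overline{G} \supseteq \langle \mathcal{G}_3,\, S\otimes\mathbb{I}\otimes\mathbb{I}\rangle$, and taking closures again, $\overline{G} \supseteq \overline{\langle \mathcal{G}_3,\, S\otimes\mathbb{I}\otimes\mathbb{I}\rangle}$. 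By \cref{lem65} (applied with $J = S\otimes\mathbb{I}\otimes\mathbb{I}$, which satisfies the hypothesis), this latter closure is all of $\mathcal{SV}_3^{\rm U(1)}$. Therefore $\overline{G} = \mathcal{SV}_3^{\rm U(1)}$, i.e., $G$ is dense in $\mathcal{SV}_3^{\rm U(1)}$, which is the claim.

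There is essentially no hard part here — the work has been done in \cref{Thm:dense} and \cref{lem65}. The only point requiring a sentence of care is the manipulation of closures: one should observe that $\overline{\langle \mathcal G_3, S\otimes\mathbb I\otimes\mathbb I\rangle}$ contains the closure of $\langle \mathcal G_3', S\otimes\mathbb I\otimes\mathbb I\rangle$ for any dense subgroup $\mathcal G_3' \subseteq \mathcal G_3$ — in particular for $\mathcal G_3'$ the subgroup generated by the $\sqrt{\i\text{SWAP}}$ gates — because the closure operation is monotone and idempotent, and $\langle \mathcal G_3', S\otimes\mathbb I\otimes\mathbb I\rangle$ is dense in $\langle \mathcal G_3, S\otimes\mathbb I\otimes\mathbb I\rangle$ (generators of a dense subgroup, together with extra fixed elements, still generate a dense subgroup of the larger group). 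Once this is granted, the corollary follows immediately, and the Solovay--Kitaev theorem then gives the efficient gate-count bound $N = \mathcal O(\log^\nu(\epsilon^{-1}))$ with $\nu > \SKexp$, exactly as in \cref{Thm:dense}, since the combined gate set is finite and closed under inverses (note $S^\dagger = S^3$ and $\sqrt{\i\text{SWAP}}^\dagger = \sqrt{\i\text{SWAP}}^7$).
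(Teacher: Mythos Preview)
Your approach is exactly the paper's: the corollary is stated immediately after \cref{lem65} as the combination of \cref{Thm:dense} and \cref{lem65}, with no further argument, and your closure manipulations supply the (standard) details.

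There is one small slip worth correcting. You write that $\overline{G}$ is a closed subgroup of $\mathcal{SV}_3^{\rm U(1)}$ and conclude $\overline{G}=\mathcal{SV}_3^{\rm U(1)}$. But $S\otimes\mathbb{I}\otimes\mathbb{I}\notin\mathcal{SV}_3^{\rm U(1)}$: in the Hamming-weight-$1$ sector its determinant is $i$, not $1$. So $G$ (and hence $\overline{G}$) is not contained in $\mathcal{SV}_3^{\rm U(1)}$; it lives in the larger group $\mathcal{V}_3^{\rm U(1)}$ (indeed, in $G_{XX+YY,Z}$). Your chain of containments still gives $\overline{G}\supseteq\mathcal{SV}_3^{\rm U(1)}$, which is precisely what the corollary asserts (``contains a dense subgroup of''), so the argument survives; just replace the claimed equality by this inclusion. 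The Solovay--Kitaev remark at the end is extra---the corollary itself makes no quantitative claim.
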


\subsection{Approximate semi-universality of $S+\sqrt{\i\text{SWAP}}$  for $n\ge 3$ qubits (Proof of  \cref{cor})} \label{ss:proof_cor}

Recall that in \cref{prop13} and \cref{Thm2} 
 we studied unitaries that can be realized with  $\exp(\i R\alpha): \alpha\in(-\pi,\pi]$ gates, with and without $S$ gates, respectively. The above observation in  \cref{Thm:dense} implies that on a system with $n\ge 3$ qubits, and for any pair of qubits $i$ and $j$, the 2-qubit unitary $\exp(\i \alpha R_{ij}): \alpha\in(-\pi,\pi]$ can be realized with arbitrary small error using   gates $\sqrt{\i \text{SWAP}}$ that act on qubits $i, j$ and a third different qubit. Therefore, by combining these results we arrive at
\begin{corollary}
For any system with $n\ge 3$ qubits: 
\begin{itemize}
\item The group generated by $\sqrt{\i\text{SWAP}}$ and single-qubit $S$ gates contains a dense subgroup of  $\mathcal{SV}_n^{\text{U}(1)}$.
\item The group generated by $\sqrt{\i\text{SWAP}}$ gate is a dense subgroup of  $\mathcal{G}_n$, i.e., the group of unitaries satisfying the 4 conditions in \cref{Thm2}.
\end{itemize}
\end{corollary}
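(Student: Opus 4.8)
The plan is to reduce both claims to two ingredients already available. First, by \cref{prop13} (taking $H_{\text{int}}=R$ in gate set~\ref{gates_phaseH}) every $V\in\mathcal{SV}_n^{\rm U(1)}$ is an \emph{exact} finite product of two-qubit gates $\exp(\i\alpha R_{ij})$ and single-qubit $S$ gates, and by the last sentence of \cref{Thm2} every $V\in\mathcal{G}_n$ is an exact finite product of gates $\exp(\i\alpha R_{ij})$ alone. Second, on $n\ge3$ qubits each such factor $\exp(\i\alpha R_{ij})$ can be approximated to arbitrary precision by a word in $\sqrt{\i\text{SWAP}}$ gates on $i$, $j$, and one auxiliary qubit $k$: this is the observation recorded just above the corollary, which itself follows by noting that the restriction of $\exp(\i\alpha R_{ij})$ to the three-qubit window $\{i,j,k\}$ is energy-conserving, satisfies $X^{\otimes3}\exp(\i\alpha R_{ij})X^{\otimes3}=\exp(\i\alpha R_{ij})$ (because $X^{\otimes2}RX^{\otimes2}=R$), and has determinant $1$ in each Hamming-weight sector (since $\Tr(\Pi\sct{m}R_{ij})=0$), hence lies in $\mathcal{G}_3$, so \cref{Thm:dense} applies. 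For $n=3$ the two bullets are precisely \cref{Thm:dense} (for the second) and its corollary together with \cref{lem65} (for the first), so only the bootstrap to $n>3$ needs the above.

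I would first settle the containment direction. Each $\sqrt{\i\text{SWAP}}_{ij}$ lies in $\mathcal{G}_n$: it is energy-conserving; it satisfies the $\mathbb{Z}_2$ symmetry because $X^{\otimes n}R_{ij}X^{\otimes n}=R_{ij}$; and $\det$ equals $1$ in every Hamming-weight sector and, when $n$ is even, in every half-filled $\pm$ block, because $\Tr(\Pi\sct{m}R_{ij})=\Tr(X^{\otimes n}R_{ij})=0$ forces $\Tr(P_\pm R_{ij})=0$. Since $\mathcal{G}_n$ is a closed (compact) subgroup of $\mathrm{U}\big((\mathbb{C}^2)^{\otimes n}\big)$, the closure of $\langle\sqrt{\i\text{SWAP}}\rangle$ is still contained in $\mathcal{G}_n$, so for the second bullet it remains only to prove density. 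For the first bullet one should be careful: the single-qubit $S$ gates are generally \emph{not} in $\mathcal{SV}_n^{\rm U(1)}$ (their per-sector determinants are powers of $\i$ that need not be $1$), so the statement actually to be established is that $\mathcal{SV}_n^{\rm U(1)}$ is contained in the closure of $\langle\sqrt{\i\text{SWAP}},S\rangle$, i.e.\ every element of $\mathcal{SV}_n^{\rm U(1)}$ is $\epsilon$-approximable for all $\epsilon>0$.

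The density step is then a substitution-and-error-bookkeeping argument. Fix a target $V$ and write $V=G_L\cdots G_1$ via the exact decomposition above, each $G_\ell$ being either some $\exp(\i\alpha_\ell R_{i_\ell j_\ell})$ or some $S_{k_\ell}$. Using the second ingredient and the Solovay--Kitaev theorem, replace each $\exp(\i\alpha_\ell R_{i_\ell j_\ell})$ by a $\sqrt{\i\text{SWAP}}$-word $\widetilde G_\ell$ with $\|G_\ell-\widetilde G_\ell\|_\infty\le\delta$, leaving the $S$ factors untouched; then the standard sub-multiplicativity bound for products of unitaries, $\|G_L\cdots G_1-\widetilde G_L\cdots\widetilde G_1\|_\infty\le\sum_\ell\|G_\ell-\widetilde G_\ell\|_\infty\le L\delta$, yields an approximation of $V$ to precision $L\delta$ by a product of $\sqrt{\i\text{SWAP}}$ (and $S$) gates, and taking $\delta=\epsilon/L$ finishes it. Carrying the Solovay--Kitaev count $\mathcal{O}(\log^\nu(\delta^{-1}))$ through, together with $L=\mathcal{O}(4^n n^{3/2})$ from \cref{prop13}/\cref{Thm2}, reproduces the bound of \cref{cor}, but for the pure density claim no tracking of $L$ is required.

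The step I expect to need the most care is not a single estimate but the clean formulation and the verified applicability of \cref{Thm:dense} to every two-qubit factor: one must confirm that restricting $\exp(\i\alpha R_{ij})$ to an arbitrary three-qubit window always lands in $\mathcal{G}_3$ (so that \cref{Thm:dense} and its $\mathcal{SV}_3$ corollary genuinely apply), and that a third distinct qubit $k$ is always available because $n\ge3$. Once that is in place, the containment checks, the error-propagation estimate, and the mild distinction between ``dense subgroup'' and ``dense in'' for the $S$-gate case are all routine.
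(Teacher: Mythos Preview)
Your proposal is correct and follows essentially the same route as the paper: the paper likewise reduces to the exact synthesis results (\cref{prop13} and \cref{Thm2}) and then invokes \cref{Thm:dense} to approximate each $\exp(\i\alpha R_{ij})$ factor by $\sqrt{\i\text{SWAP}}$ words on a three-qubit window $\{i,j,k\}$, with the error propagation handled as you describe. Your observation that $S_j\notin\mathcal{SV}_n^{\rm U(1)}$ (so that the first bullet should strictly read ``contains a dense subgroup of'' rather than ``is a dense subgroup of'') is well taken and in fact matches the more careful phrasing the paper itself uses in the $n=3$ case (\cref{lem65} and its subsequent corollary).
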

As we have seen in \cref{Thm1} and its proof in  \cref{sec:nqubit,sec:XY_only}, 
any  energy-conserving unitary on $n$ qubits can be implemented using a unitary in $\mathcal{SV}_{n+1}^{\text{U}(1)}$ and one ancilla qubit or a unitary in $\mathcal{G}_{n+2}$ and 2 ancilla qubits. In both cases we need $\mathcal{O}(4^n n^{3/2})$ gates $\exp(\i R \alpha): \alpha\in(-\pi,\pi]$. Then, to guarantee that the total error in implementing the desired $n$-qubit unitary is less than $\epsilon$, it suffices to have 
the error in implementing each single gate $\exp(\i R \alpha): \alpha\in(-\pi,\pi]$, less than or equal to $\epsilon'=\epsilon/\mathcal{O}(4^n n^{3/2})$. Then, according to our \cref{Thm:dense} this can be achieved with $\mathcal{O}(\log^{\nu}(\epsilon'^{-1}))=\mathcal{O}(\log^{\nu}(4^n n^{3/2}\epsilon^{-1}))=\mathcal{O}((n-\log\epsilon)^\nu)$ number of $\sqrt{\i\text{SWAP}}$ gates. Therefore, in total, to achieve the overall error $\epsilon$ in implementing the desired $n$-qubit unitary we need, at most, $\mathcal{O}(4^n n^{3/2}(n-\log \epsilon)^{\nu})$ number of $\sqrt{\i\text{SWAP}}$. This completes the proof of \cref{cor}.

\subsection{Proof of \cref{Thm:dense} } \label{ss:proof_Thm:dense}

Here, we explain the main steps in the proof of \cref{Thm:dense}. Further details and the proof of lemmas can be found in \cref{prooflem65,ss:proof_lemma_dense}. The first step in the proof of  \cref{Thm:dense} is the following lemma, which fully characterizes the eigen-decomposition of operator  
\be
W_{123} := \sqrt{\i\text{SWAP}}_{23}\sqrt{\i\text{SWAP}}_{13}\ .
\ee

\begin{lemma}\label{lem:dense}
The unitary  $W_{123} := \siswap_{23}\siswap_{13}$ has 3 distinct eigenvalues, namely  $\{1,e^{\pm\i\theta}\}$, where  
\be \label{eq:costheta}
\cos\frac{\theta}{2}=\cos^2 \frac{\pi}{8}=\frac{1}{2}(1+\frac{1}{\sqrt{2}})\ ,
\ee
and $\theta \approx 0.348886 \pi$ is an irrational multiple of $\pi$. Each eigenvalue $e^{\pm \i \theta}$ has multiplicity two with eigenvectors $\ket{\psi_{\pm\theta}}$ and $X^{\otimes 3}\ket{\psi_{\pm\theta}}$, where 

\begin{align} 
	\ket{\psi_{+\theta}} &:= \frac{1}{\sqrt{5-2\sqrt2}}(\ket{001} + \zeta \ket{010} + \zeta^*\ket{100}) \\
\ket{\psi_{-\theta}} &:= \frac{1}{\sqrt{5-2\sqrt2}}(\ket{001} - \zeta^* \ket{010} - \zeta\ket{100}) \ ,
\end{align}
and 
\be
 \zeta :=  
		 \frac{\sqrt{5-2\sqrt2}}{2}+\frac{\sqrt2-1}{2}\i \ .
\ee
In summary, 
\be
W_{123} := \siswap_{23}\siswap_{13}=\exp(\theta A_{123})\ ,
\ee
where $A_{123} := \i(P_+ - P_-) $ and 
\begin{align} 
	P_\pm &:=  \ketbrasame{\psi_{\pm\theta}} + X^{\otimes3}\ketbrasame{\psi_{\pm\theta}}X^{\otimes3} \,.\label{eq:PplusPminus} 
 \end{align}
\end{lemma}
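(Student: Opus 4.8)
The plan is to analyze the operator $W_{123} = \siswap_{23}\siswap_{13}$ by exploiting the $\mathbb{Z}_2$ symmetry and the $\mathrm{U}(1)$ symmetry to reduce the problem to a single $3\times 3$ (or even $2\times 2$) block, then diagonalize that block explicitly. First I would note that $W_{123}$ is energy-conserving (a product of XY-type gates) and $\mathbb{Z}_2$-symmetric, so by the block structure in \cref{G32} it acts as the identity on the Hamming-weight $0$ and $3$ sectors, and acts as a single $\SU(3)$ matrix $W\sct1$ (simultaneously) on the weight-$1$ and weight-$2$ sectors. Hence it suffices to compute $W\sct1$, the restriction to the weight-$1$ sector $\mathrm{span}\{\ket{001},\ket{010},\ket{100}\}$. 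In this basis $\siswap_{13}$ and $\siswap_{23}$ each act as a $2\times 2$ rotation $\exp(\i\frac{\pi}{4}X)$ on the relevant pair of basis vectors (using the correspondence in \cref{eq:VV1}) and trivially on the third; writing both as explicit $3\times 3$ matrices and multiplying gives $W\sct1$ explicitly.

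Next I would diagonalize the $3\times 3$ matrix $W\sct1$. Since $W_{123}\in\SU(3)$ on this sector (determinant $1$, as each $\siswap$ has determinant $1$ in the weight-1 sector), its eigenvalues are $\{\lambda_1,\lambda_2,\lambda_3\}$ with product $1$. One eigenvalue should be forced to be $1$: indeed $\siswap_{13}\siswap_{23}$ fixes the totally symmetric-like vector; more concretely, I would check directly that $W\sct1$ has eigenvalue $1$ by exhibiting its eigenvector, or by computing $\Tr(W\sct1)$ and $\Tr(W\sct1{}^2)$ (or $\det(W\sct1 - \mathbb{I})=0$). The remaining two eigenvalues are $e^{\pm\i\theta}$ with $2\cos\theta = \Tr(W\sct1) - 1$. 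Computing the trace of the product of the two explicit matrices gives $\Tr(W\sct1) = 1 + 2\cos\theta$, and matching yields $\cos\theta$; then one checks the half-angle identity $\cos(\theta/2) = \cos^2(\pi/8) = \tfrac12(1+\tfrac{1}{\sqrt2})$, which follows from $\cos^2(\pi/8) = \tfrac12(1+\cos(\pi/4))$. For the eigenvectors I would solve $(W\sct1 - e^{\pm\i\theta}\mathbb{I})\ket{v}=0$ directly in the three-dimensional space; the claimed form $\ket{\psi_{\pm\theta}} \propto \ket{001} + \zeta^{\pm}\ket{010} + \zeta^{\mp *}\ket{100}$ with the stated $\zeta$ comes out of this linear algebra, and I would verify normalization $\|\ket{001}+\zeta\ket{010}+\zeta^*\ket{100}\|^2 = 1 + |\zeta|^2 + |\zeta|^2 = 5 - 2\sqrt2$, which forces $|\zeta|^2 = (4-2\sqrt2)/2 = 2-\sqrt2$; this is consistent with $\zeta = \tfrac{\sqrt{5-2\sqrt2}}{2} + \tfrac{\sqrt2-1}{2}\i$ since $\tfrac{5-2\sqrt2}{4} + \tfrac{(\sqrt2-1)^2}{4} = \tfrac{5-2\sqrt2 + 3-2\sqrt2}{4}$ — wait, I would recompute carefully, the point being it is a routine check.

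Then I would lift back from the weight-1 sector to the full $8$-dimensional space: by $\mathbb{Z}_2$-covariance, if $\ket{v}$ is a weight-1 eigenvector of $W_{123}$ with eigenvalue $\mu$, then $X^{\otimes 3}\ket{v}$ is a weight-2 eigenvector with the same eigenvalue $\mu$ (since $W_{123}$ commutes with $X^{\otimes 3}$). This gives multiplicity two for each of $e^{\pm\i\theta}$, with the stated eigenvectors, and the remaining spectrum is $\{1\}$ with multiplicity four ($\ket{000}$, $\ket{111}$, the weight-1 eigenvector for eigenvalue $1$, and its $X^{\otimes3}$ image). The operator identity $W_{123} = \exp(\theta A_{123})$ with $A_{123} = \i(P_+ - P_-)$ and $P_\pm$ as in \cref{eq:PplusPminus} is then just the spectral decomposition restricted to the non-trivial eigenspaces, together with $W_{123}$ acting as identity on the orthogonal complement ($\exp(\theta A_{123})$ is also the identity there since $A_{123}$ annihilates it).

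The last piece is irrationality of $\theta/\pi$. Here I would use a standard argument: $2\cos\theta = 2\cos^2(\theta/2) - 1$ wait — better, from $\cos(\theta/2) = \tfrac12 + \tfrac{1}{2\sqrt2}$, the number $2\cos(\theta/2) = 1 + \tfrac{1}{\sqrt2}$ is an algebraic number that is not an algebraic integer (its minimal polynomial over $\mathbb{Q}$ is $x^2 - 2x + \tfrac12$, equivalently $2x^2 - 4x + 1$, which is not monic with integer coefficients). By Niven's theorem / the Conway–Jones criterion on rational-angle trigonometric values, if $\theta/(2\pi)$ were rational then $2\cos(\theta/2)$ would be an algebraic integer; since it is not, $\theta/\pi$ is irrational. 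I would cite Niven's theorem for this. I expect the main obstacle to be purely bookkeeping: getting the phases in $\siswap$ right in the weight-1 basis (the $\i$ off-diagonal entries), and then extracting the precise closed form of the eigenvectors $\ket{\psi_{\pm\theta}}$ and the constant $\zeta$ — the structure of the proof is entirely routine once the $3\times 3$ reduction is set up, but the explicit eigenvector computation requires care with complex arithmetic. The irrationality step, while conceptually the "deepest," is a one-line appeal to a classical theorem.
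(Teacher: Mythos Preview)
Your reduction to the $3\times 3$ block $W\sct1$ on the weight-$1$ sector via the $\mathrm{U}(1)$ and $\mathbb{Z}_2$ symmetries, followed by direct diagonalization and the lift back to the full Hilbert space via $X^{\otimes 3}$, is exactly the route the paper takes; the paper even writes out $W_{123}\sct1$ explicitly and computes its characteristic polynomial, which factors as $-(\lambda-1)$ times a quadratic, confirming the eigenvalue $1$ and the pair $e^{\pm\i\theta}$.

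The one substantive difference is the irrationality step. The paper observes that the quadratic factor of the characteristic polynomial coincides with that of $e^{\i 3\pi/4}HTHT$, then invokes the Boykin et al.\ result (via the cyclotomic-polynomial criterion) that $HTHT$ generates an irrational rotation. Your argument is more self-contained: you check that $2\cos(\theta/2)=1+\tfrac{1}{\sqrt2}$ has minimal polynomial $2x^2-4x+1$ and hence is not an algebraic integer, whereas if $\theta/\pi$ were rational then $2\cos(\theta/2)$ would be a sum of roots of unity and therefore an algebraic integer. Both arguments rest on the same underlying fact (roots of unity are algebraic integers, equivalently their minimal polynomials are cyclotomic), but yours avoids the detour through $H$ and $T$ gates and is arguably cleaner; the paper's version has the expository advantage of connecting to a familiar quantum-computing result.
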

See \cref{ss:proof_lemma_dense} for the proof of this lemma. The fact that $\theta/\pi$ is an irrational number, i.e., $e^{\i\theta}$ is an irrational rotation, follows from the same argument that previously established \cite{boykin1999universalieee}
 the universality of $H$ and $T$ gates for a single qubit (See \cref{ss:proof_lemma_dense}).

For the following applications, we present the explicit form of $A_{123}$ in the computational basis relative to the order in \cref{ordering}:
\begin{align}
	A_{123} &= \left(\begin{matrix}
		0&&&\\& A_{123}\sct1 && \\ &&A_{123}\sct1& \\ &&&0 
\end{matrix}\right), ~
	A_{123}\sct1 := \left(\begin{matrix}
		0 & \i \alpha & \i \alpha \\ 
		\i \alpha & 0 & -\beta \\ 
		\i \alpha & \beta  & 0 
	\end{matrix}\right), \nonumber \\
	\alpha &= (5-2\sqrt2)^{-\frac12} \approx 0.678598 \nonumber \\
	\beta &= (7+4\sqrt2)^{-\frac12} \approx 0.281085
\end{align}
The significance of an irrational rotation is that its repeated application generates a dense subgroup of $\text{U}(1)$ (See, e.g.,  Chapter 4.5 of \cite{NielsenAndChuang}). Applying this result to $W_{123} = e^{\theta A_{123}}$, we conclude that 
 for any $t\in \R$ and any $\delta>0$, there exists an integer $k$ such that $\|(e^{\theta A_{123}})^k - e^{ t A_{123}}\|_\infty < \delta$. In other words, any gate in the form of $e^{t A_{123}}$  can be approximated with repeated application of $W_{123}$ to arbitrary precision.

Therefore, applying \cref{lem:dense} we know that the group generated by $W_{123}$, $W_{213}$, and $W_{312}$ is a dense subgroup of the Lie group generated by unitaries $\exp(t A_{123})$,  $\exp(t A_{213})$, $\exp(t A_{312})$, i.e., the group
\be\label{group3}
\langle \exp(t A_{123})\ , \exp(t A_{213})\ , \exp(t A_{312})\ :   t\in\mathbb{R}\rangle\  .
\ee
Clearly, because $A_{123}\sct1$ and its permuted versions, $A_{213}\sct1$ and $A_{312}\sct1$, are traceless and skew-Hermitian,  this group is a subgroup of $\mathcal{G}_3$. In the following, we prove that this group is indeed equal to  $\mathcal{G}_3$. Since $\mathcal{G}_3$ is a compact connected Lie group, isomorphic to SU(3), it suffices to show that the real Lie algebra generated by $A_{123}\sct1$, $A_{213}\sct1$, and $A_{312}\sct1$, denoted by
\be
\mathfrak{h}_3=\mathfrak{alg}_\mathbb{R}\{A_{123}\sct1, A_{213}\sct1, A_{312}\sct1 \}\ , 
\ee
is equal to the Lie algebra $\mathfrak{su}(3)$. First, note that $A_{213}\sct1$ can be obtained from  $A_{123}\sct1$ by exchanging qubits 1 and 2, which results in swapping the last  two rows and columns of $A_{123}\sct1$ (since the basis vectors are ordered as $\ket{001},\ket{010},\ket{100}$) and gives
\be
B=\frac{1}{2\beta}(A_{123}\sct1-A_{213}\sct1)=\left(\begin{matrix}
		0 & 0 & 0 \\ 
		0 & 0 & -1 \\ 
		0 & 1  & 0 
	\end{matrix}\right)\in \mathfrak{h}_3\ .
\ee 
Similarly, the permuted versions of this matrix are also in  $\mathfrak{h}_3$, i.e.,  
 $$\left(\begin{matrix}
		0 & -1 & 0 \\ 
		1 & 0 & 0 \\ 
		0 & 0  & 0 
	\end{matrix}\right)\in \mathfrak{h}_3\ 
 , \ \ \ \ \ \left(\begin{matrix}
		0 & 0 & -1 \\ 
		0 & 0 & 0 \\ 
		1 & 0  & 0 
	\end{matrix}\right) \in \mathfrak{h}_3\ 
  . $$
 These matrices generate the Lie algebra $\mathfrak{so}(3)$, i.e., the Lie algebra of real skew-symmetric matrices. Therefore, 
\be
\mathfrak{so}(3)\subset \mathfrak{h}_3  \subseteq \mathfrak{su}(3)\ .
\ee
 As we explain below, $\mathfrak{h}_3$ has other elements that are still in $\mathfrak{su}(3)$ and  not in $\mathfrak{so}(3)$. But, it is well-known that there is no Lie algebra in between $\mathfrak{so}(d)$ and $\mathfrak{su}(d)$, which immediately implies $\mathfrak{h}_3 =\mathfrak{su}(3)$.  To show this explicitly, note that 
\be
C=\frac{1}{2\alpha}(A_{123}\sct1+A_{213}\sct1)=\left(\begin{matrix}
		0 & \i & \i \\ 
		\i & 0 & 0 \\ 
		\i & 0  & 0 
	\end{matrix}\right)\in \mathfrak{h}_3\ .
\ee 
This, in particular, implies 
\be
\frac{[B,C]+C}{2}=\left(\begin{matrix}
		0 & 0 & \i \\ 
		0 & 0 & 0 \\ 
		\i & 0  & 0 
	\end{matrix}\right)\in \mathfrak{h}_3\ .
\ee
It is straightforward to show that this matrix and its permuted versions, together with matrices in  $\mathfrak{so}(3)$ generate the Lie algebra $\mathfrak{su}(3)$. 

We conclude that $\mathfrak{h}_3=\mathfrak{su}(3)$, which in turn implies the group defined in \cref{group3} is equal to $\mathcal{G}_3$. Combined with \cref{lem:dense} this implies that the group generated by $W_{123}$, $W_{312}$ and $W_{231}$ is a dense subgroup of $\mathcal{G}_3$, which proves the first part of \cref{Thm:dense}. 

The second part of the theorem, which bounds the number of the required gates, follows from the Solovay-Kitaev theorem, as stated in \cref{SK0,SK}. \\

\section{Discussion}\label{Sec:discus}

Building on the previous ideas in the quantum circuit theory \cite{kitaev2002classical, NielsenAndChuang, barenco1995elementary}, and, specifically, recent works on symmetric quantum  circuits \cite{marvian2022restrictions, marvian2023non}, we 
 introduced new circuit synthesis 
techniques for implementing all energy-conserving unitaries, or, equivalently, all U(1)-invariant unitaries, using XY interaction. We also showed how these techniques can be generalized beyond XY interaction, to all energy conserving interactions that allow qubits to exchange energy (i.e., interactions that are not diagonal in the computational basis).

In the introduction, we briefly discussed applications of energy-conserving quantum circuits for suppressing noise in quantum computers. Another important area of application of our results is quantum thermodynamics. In this field, one is often interested in implementing energy-conserving unitaries. Indeed, the resource-theoretic approach to quantum thermodynamics starts with the assumption that  all energy-conserving unitaries are free, i.e., can be implemented with negligible cost \cite{FundLimitsNature, brandao2013resource, janzing2000thermodynamic,  lostaglio2015quantumPRX, chitambar2019quantum}. In the context of this resource theory, researchers have developed various protocols  utilizing energy-conserving unitaries. However, the problem of implementing energy-conserving unitaries themselves has not been studied much. In particular, prior to this work, it was not known how a general desired energy-conserving unitary on $n$ qubits can be realized using 2-qubit energy-conserving unitaries.

In addition to quantum thermodynamics, energy-conserving unitaries also play a crucial role in the context of quantum clocks and quantum reference frames \cite{QRF_BRS_07},  covariant error correcting codes \cite{faist2020continuous, hayden2021error, kong2021charge}, and the resource theory of asymmetry. (In particular, energy-conserving unitaries are the only unitaries that can be implemented without having access to synchronized clocks.)  Other examples of applications of energy-conserving unitaries, or, equivalently circuits with U(1) symmetry, include variational quantum machine learning
\cite{meyer2023exploiting, nguyen2022theory, sauvage2022building, zheng2023speeding},  variational quantum eigensolvers for quantum chemistry \cite{barron2021preserving, shkolnikov2021avoiding, gard2020efficientsymmetry, streif2021quantum, wang2020x, barkoutsos2018quantum},  and quantum gravity \cite{nakata2023black}.

 \section*{Acknowledgments}
   This work is supported by a collaboration between the US DOE and other Agencies. This material is based upon work supported by the U.S. Department of Energy, Office of Science, National Quantum Information Science Research Centers, Quantum Systems Accelerator. Additional support is acknowledged from  NSF QLCI grant OMA-2120757,  
  NSF Phy-2046195, and ARL-ARO QCISS grant number W911NF-21-1-0005. 
 G. B. is supported partly by the Hong Kong Research Grant Council (RGC) through the Research Impact Grant R7035-21F, and partly by the National Research Foundation, Singapore and A*STAR under its CQT Bridging Grant.

\bibliographystyle{unsrt}
\bibliography{Ref_tidy,additional_refs}

\newpage

\appendix

\newpage

\newpage

\newpage

\onecolumngrid

\appendix
\newpage

\section{Constraints on the relative phases (Proof of \cref{Thmeq})}\label{App:proof}


In this Appendix,  we review the argument of \cite{marvian2022restrictions} that shows how locality restricts realizable unitaries and, in particular,  imposes constraints on the relative phases between sectors with different energies.  See \cite{marvian2022restrictions} for further details.

In particular, we show that any unitary $V$ that can be realized with XY interaction local $Z$ Hamiltonian satisfies the constraint
\be\label{dkd}
\theta_m={{n}\choose{m}} \times \Big[\frac{m}{n}\times (\theta_n-\theta_0)+ \theta_0\Big]\ \  :\ \text{mod}\   2\pi \ , \tag{\ref{Thmeq}}
\ee  
for all $m=0,\cdots, n$.

We also show that any energy-conserving unitary $V$ satisfying this constraint has a decomposition as 
\be\label{efef}
V=e^{\i\alpha} \exp(\i \beta Z_j)  V'  \ ,
\ee
where $\alpha,\beta\in(-\pi,\pi]$, $j\in\{1,\cdots, n\}$ is any arbitrary qubit,  and  $V'\in \mathcal{SV}_n^{U(1)}$.
 Recall that according to the first part of \cref{Thm0}, which was originally proved in \cite{marvian2022restrictions} and is also proved using a constructive approach in this paper, we have
 \be \mathcal{SV}_n^{\text{U}(1)}\subset  G_{XX+YY,Z} \ , 
\ee
i.e., any unitary in $\mathcal{SV}_n^{\text{U}(1)}$ can be realized with XY and single-qubit Z Hamiltonians. In conclusion, we find that 
$G_{XX+YY,Z}$ is the subgroup of $\mathcal{V}_n^{\text{U}(1)}$ satisfying \cref{dkd}, as stated in \cref{Thm0}.

\subsection{Proof of \cref{Thmeq}}

Suppose unitary $V$ can be realized with Hamiltonians $H_1,\cdots, H_T$ which all satisfy 
\be
[H_j , \sum_{r=1}^n Z_r]=0\ .
\ee
More precisely, assume
\be\label{dec4}
V=e^{\i \gamma_T H_T} \cdots e^{\i \gamma_1 H_1} \ .
\ee
Then, $V$ is energy-conserving, i.e., decomposes as
\be
V=\bigoplus_{m=0}^n V\sct{m}\ ,
\ee
where $V\sct{m}$ is the component of $V$ in the sector with Hamming weight $m$.  
Defining
\be
\theta_m=\text{arg}(\text{det}(V\sct{m}))\ ,
\ee
one can easily show  that \cite{marvian2022restrictions}
\be
\theta_m=\sum_{j=1}^n \gamma_j \times \Tr(\Pi\sct{m} H_j)\ \ \ \ \ :\ \text{mod } 2\pi\ .
\ee
Suppose each Hamiltonian $H_1,\cdots, H_T$, is either $\mathbb{I}$, $Z_j: j=1,\cdots n$, or $R_{ij}$. Then, $\Tr(\Pi\sct{m} H_j)$ is equal to one of the followings  
\bes
\begin{align}\label{eq4}
\Tr(\Pi\sct{m})&={{n}\choose{m}}\\ 
\Tr(Z_j\Pi\sct{m})&=\frac{n-2m}{n}\times {{n}\choose{m}}\\
\Tr(R_{ij}\Pi\sct{m})&=0\ .
\end{align}
\ees
Here, the second equality follows from 
\begin{align}
\Tr(Z_j\Pi\sct{m})&=\frac{1}{n}\sum_{j'=1}^n \Tr(Z_{j'} \Pi\sct{m})\\ &=\frac{n-2m}{n}\times \Tr(\Pi\sct{m})\\ &=\frac{n-2m}{n}\times {{n}\choose{m}} \ ,
\end{align}
where in the first line we use  the permutational symmetry of $\Pi\sct{m}$, and  the second line follows from 
\be
\sum_{j'=1}^n Z_{j'}=\sum_{m=0}^n  (n-2m) \Pi\sct{m}\ .
\ee
Finally, note that $\Tr(R_{ij}\Pi\sct{m})=0$ follows from the fact that 
$\langle\textbf{b}|R_{ij}|\textbf{b}\rangle=0$ for all elements of the computational basis. 

Then, assuming in decomposition in \cref{dec4} each Hamiltonian is one of the 3 types  
 $\mathbb{I}$, $Z_r: r=1,\cdots n$, or $R_{ij}$, we conclude that
\begin{align}
\theta_m&=\sum_{j=1}^n \gamma_j \times \Tr(\Pi\sct{m} H_j)\nonumber
\\ &={{n}\choose{m}}\sum_{j\in A} \gamma_j +\frac{n-2m}{n}\times {{n}\choose{m}}\sum_{j\in B} \gamma_j \nonumber\\
&={{n}\choose{m}} \big[\alpha  +\beta \times  (1-\frac{2m}{n})\big]\  \ :\text{mod } 2\pi\ , \label{rtr}
\end{align}
where $A$ and $B$ are subsets of $\{1,\cdots, T\}$ corresponding to all $j\in \{1,\cdots, T\}$, for which  $H_j=\mathbb{I}$, and $H_j=Z_r$, respectively, and 
 we have defined 
\bes
\begin{align}
\alpha&=\sum_{j\in A} \gamma_j \\
\beta&=\sum_{j\in B} \gamma_j\ .
\end{align}
\ees
Considering $m=0$ and $m=n$, we find
\begin{align}
\theta_0&=\beta+\alpha\\
\theta_n&=-\beta+\alpha\ ,
\end{align}
which implies
\be
\alpha=\frac{\theta_0+\theta_n}{2}+ b \pi\  \ :\text{mod } 2\pi\ ,
\ee
 and 
\be
\beta=\frac{\theta_0-\theta_n}{2}+ b \pi\  \ :\text{mod } 2\pi\ ,
\ee
where $b=0,1$ is unspecified. 

Putting these values of $\alpha$ and $\beta$ in \cref{rtr} we arrive at
\begin{align}
\theta_m&={{n}\choose{m}} \big[\frac{\theta_0+\theta_n}{2}+ b \pi  +(\frac{\theta_0-\theta_n}{2}+ b \pi)\times  (1-\frac{2m}{n})\big]\\ &={{n}\choose{m}}\times  \big[\theta_0-\frac{2m}{n} 
(\frac{\theta_0-\theta_n}{2}+ b \pi)\big] \  \ :\text{mod } 2\pi\ 
\\ &={{n}\choose{m}}\times  \big[\theta_0 (1-\frac{m}{n})+\frac{m}{n}\theta_n- b\frac{m}{n} 2\pi\big]\ ,
\end{align}
where $b=0,1$. 
Finally, we note that for all $m=1,\cdots, n$, it holds that
\be
{{n}\choose{m}}\times \frac{m}{n} 2\pi=\frac{(n-1)!}{(m-1)! (n-m)!} 2\pi=0\ \ \  :\ \text{mod } 2\pi\ .
\ee
We conclude that
\be
\theta_m={{n}\choose{m}} \times \Big[\frac{m}{n}\times (\theta_n-\theta_0)+ \theta_0\Big]\ \  :\ \text{mod}\   2\pi \ ,
\ee  
which proves \cref{Thmeq}.

Finally,  note that multiplying the unitary $V$ with 
$e^{-\i\alpha} \exp(-\i \beta Z_j)$, where $j=1,\cdots, n$ is an arbitrary qubit, we obtain the energy-conserving unitary
\be
V':=e^{-\i\alpha} \exp(-\i \beta Z_j)V\ ,
\ee
with the property that $\det(V'{}\sct{m})=1$, which can be seen by applying \cref{rtr} to $\theta'_m=\text{arg}(\det(V'{}\sct{m}))$.  Therefore, $V'\in\mathcal{SV}_n^{\text{U}(1)}$, which proves \cref{efef}.  

\section{Proof of \cref{lem:decomp_HSHS}}
\label{app:proof_decomp_HSHS}

First, we show the result for the special case where $H$ is traceless and then extend the result to the general case. Any traceless Hermitian $2\times 2$ matrix can be written as
\be
H=\vec{m}\cdot \vec{\sigma}=m_x\sigma_x+m_y\sigma_y+m_z\sigma_z\ ,
\ee 
where $\vec{m}\in\mathbb{R}^3$. 
Then, 
\be
SHS^\dag=m_x\sigma_y-m_y\sigma_x+m_z\sigma_z=\vec{n}\cdot \vec{\sigma}\ ,
\ee
where $\vec{n}$ is obtained by rotating $\vec{m}$ around the z axis by angle $\pi/2$. Then, unless x and y components of $\vec{m}$ are zero, $\vec{n}\neq \pm \vec{m}$, which in turn implies $H$ does not commute with $S H S^\dag$.    Explicitly, one can check that 
\be
\Tr\big(\big[H, S H S^\dag\big] \sigma_z\big)= 4\i (m_x^2+m_y^2)\ ,
\ee
which implies the commutator is non-zero, unless $m_x=m_y=0$. 

Therefore, $\exp(\i\alpha H)$ and $\exp(\i\alpha S H S^\dag)$ are rotations around different axes. Then, by Euler decomposition, any special unitary $U\in \text{SU}(2)$ can be realized with a finite sequence of such rotations as
\begin{align} \label{rtw1}
    U= \prod_{j=1}^l \big[\exp(\i\alpha_j H)S\exp(\i\beta_j  H )S^\dag\big]\ ,
\end{align}
where $\alpha_j,\beta_j\in\mathbb{R}$, and the length of this sequence, $l$, is a constant that does not depend on $U\in \SU(2)$.

This proves the lemma when $H$ is traceless. In general, when $H$ is not traceless, the above result implies that for any $U\in \SU(2)$ there exists a sequence in the form of \cref{rtw1} such that
\be
\tilde{U}=e^{\i\phi} U=\prod_{j=1}^l \big[\exp(\i\alpha_j H)\exp(\i\beta_j S H S^\dag)\big]\ .
\ee
Now recall that any $V\in \text{SU}(2)$ can be decomposed as $V=ABA^\dag B^\dag$, for $A, B\in \text{SU}(2) $. Then, from the above result we know that there exists phases $e^{\i\gamma}$ and $e^{\i\delta}$ such that 
\be
\tilde{A}=e^{\i\gamma} A\ \ \ ,\  \tilde{B}=e^{\i\delta} B , 
\ee
have decomposition in the form of \cref{rtw1}. Then,
\be
\tilde{A}\tilde{B}\tilde{A}^\dag\tilde{B}^\dag=ABA^\dag B^\dag=V\ .
\ee
This implies that when $H$ is not traceless, any element of $V\in \SU(2)$ has a decomposition in the form \cref{rtw1} with $4\times l$ elements. This completes the proof of the lemma.

\section{Controlled-$R_z(-\frac{\pi}{2})$ using a single ancilla qubit}\label{App:pi/4}

We saw how based on  
 \cref{swap2} one can obtain a circuit for implementing controlled-$Z$ gate. Other useful unitaries can be obtained in a similar fashion.    
For instance, one can check the identity \begin{align}
&\sqrt{\iswap}_{12} \iswap_{13} \sqrt{\iswap}_{23} \iswap_{12} \sqrt{\iswap}_{13}\iswap_{23}\nonumber \\ &=(-\i)\  \exp(\i \frac{\pi}{4} Z_1Z_2)\ \exp(\i\frac{\pi}{8}Z_3(Z_1+Z_2) )\ ,
\end{align}
where  $\sqrt{\iswap}_{ij} $ denotes $\sqrt{\i\text{SWAP}}$ gate on qubits $i$ and $j$.   Setting qubit 1 to be $\ket0$, and assuming qubits 2 and 3 are in an arbitrary state $|\psi\rangle$, we obtain
\begin{align}
(-\i)\  \exp(\i \frac{\pi}{4} Z_1Z_2)\ \exp(\i\frac{\pi}{8}Z_3(Z_1+Z_2) )(|0\rangle_1|\psi\rangle_{23})=(|0\rangle_1)\otimes (-\i)\  \exp(\i \frac{\pi}{4} Z_2)\ \exp(\i\frac{\pi}{4} |0\rangle\langle0|_2\otimes Z_3 )|\psi\rangle_{23} \,.
\end{align}
Then, one can cancel the unitary $\exp(\i\frac{\pi}{4}Z_2)$ term by applying its inverse. In conclusion, we find that the unitary
\begin{align}
& \exp(-\i\frac{\pi}{4}Z_2) \sqrt{\iswap}_{12} \iswap_{13} \sqrt{\iswap}_{23} \iswap_{12} \sqrt{\iswap}_{13}\iswap_{23}
\end{align}
implements (up to a global phase) $\exp(\i\frac{\pi}{4} |0\rangle\langle0|_2\otimes Z_3)$, the anti-controlled $R_z(\frac{\pi}{2})$ gate on qubits 2 and 3, when qubit 1 is $\ket0$. Furthermore, concatenating the above gates with $\exp(-\i\frac{\pi}{4}Z_3)$, one obtain controlled $R_z(-\frac{\pi}{2})$ on qubits 2 and 3.

\section{Proof of \cref{lem:2-level_decomp} } \label{app:2-level-decomposition}
\begin{proof}

For completeness we include the proof of \cref{lem:2-level_decomp}, which follows exactly the proof of a similar result for general unitary transformations (not special unitaries), presented originally in \cite{reck1994experimental} (See also \cite{kitaev2002classical, NielsenAndChuang}). 

The proof is by induction.  For $d=2$, the proposition trivially holds since $U$ itself is in $\text{SU}(2)$. For $d > 2$, assume the proposition holds for any unitary in $\text{SU}(d-1)$. Then, for any  $U\in\text{SU}(d)$, let $\textbf{a}=(a_1,\dots,a_d)^T$ be its first column of matrix in the basis $\ket{1},\dots,\ket{d}$. 

    For any $a\neq 0$ or $b\neq 0$, define $V(a,b):=(|a|^2+|b|^2)^{-1/2}\left(\begin{matrix}
        a^* & b^* \\ -b & a
    \end{matrix}\right)$, which is an element of $\text{SU}(2)$. We define $V(0,0)$ as the identity matrix. This unitary satisfies that
    \begin{align}
        V(a,b)\left(\begin{matrix}a\\b\end{matrix}\right) = \left(\begin{matrix} \sqrt{|a|^2+|b|^2}\\0\end{matrix}\right)
    \end{align}
    Let $V(a,b)_{1,k}$ be the {2-level} unitary of $V(a,b)$ acting on the subspace spanned by $\ket{1}$ and $\ket{k}$. If we apply $V(a_1,a_2)$ on the first two components of $\textbf{a}$, we get
    \begin{align}
        V(a_1,a_2)_{1,2} \textbf{a} = \left(\begin{matrix}
            \sqrt{|a_1|^2+|a_2|^2} \\ 0 \\ a_3 \\ \vdots \\ a_d
        \end{matrix}\right)
    \end{align}
    This sets the second component of $\textbf{a}$ to zero.
    Further left-multiplying with $V\left(\sqrt{|a_1|^2+|a_2|^2}, a_3\right)_{1,3}$, we can set the third component to zero. Repeating this for every other components, we get
    \begin{align}
        &V \textbf{a} = (1,0,\dots,0)^T\\
        &V := V\left(\sqrt{\sum_{i=1}^{d-1} |a_i|^2},a_d\right)_{1,d} \dots  V(a_1,a_2)_{1,2}
    \end{align}
    noting that $\sqrt{\sum_{i=1}^{d} |a_i|^2}=1$. Since $U$ has $\textbf{a}$ as its first column, $VU$ will be a matrix whose first column is $(1,0,\dots,0)^T$. Since $VU \in \text{SU}(d)$, its first row must be a unit vector, and thus must be $(1,0,\dots,0)$. $VU$ therefore has the following block-diagonal form:
    \begin{align}
        VU = \left(\begin{matrix}
            1 &  \\  & W
        \end{matrix}\right)
    \end{align}
    where $W \in \text{SU}(d-1)$. By assumption, $W$ can be written as a product of $(d-1)(d-2)/2$ 2-level gates, and $V$ is a product of $d-1$ 2-level gates by definition, thus $U = V^\dag(1\oplus W)$ can be written as a product of $(d-1)(d-2)/2 + d-1 = d(d-1)/2$ 2-level unitaries in $\SU(d)$. This completes the proof.

\end{proof}

\section{Proof of \cref{lem65}}\label{prooflem65}

Recall that
\be\label{AppG3}
\mathcal{G}_3=\{U\in \mathcal{SV}_3^{U(1)} : 
X^{\otimes 3} UX^{\otimes 3}= U  \}\ ,
\ee
and the 
elements of $\mathcal{G}_3$ have matrix representation
\begin{align}\label{AppG32}
	U = \left(\begin{matrix}
		1 &&& \\ &U\sct1&& \\ &&U\sct1& \\ &&&1
	\end{matrix}\right)\ \ \ : U\sct1\in \text{SU}(3)\ .
\end{align}
 Let 
$\widetilde{\mathcal{G}}_3$ be the group generated by 
$\mathcal{G}_3$ and the unitary $J$. 
$\mathcal{G}_3$ contains an element $U$ in the form of \cref{AppG3}, with $U\sct1=e^{i\alpha} J\sct1^\dag$, where $e^{i\alpha}$ is a properly chosen phase such that $U\sct1$ has determinant 1. It follows that 
$\widetilde{\mathcal{G}}_3$, being generated by $J$ and $\mathcal{G}_3$, contains $UJ$, which has the form of \begin{align}\label{UJ}
	U J  = \left(\begin{matrix}
		e^{i\phi_0} &&& \\ &e^{\i\alpha}\mathbb{I} && \\ && e^{\i\alpha} J\sct1^\dag J\sct2& \\ &&& e^{i\phi_3}
	\end{matrix}\right)\ .
\end{align}
Furthermore, the assumption of the lemma implies that $e^{\i\alpha} J\sct1^\dag J\sct2$ is not a global phase, i.e., is not in the center of U(3), which means it does not commute with some elements of SU(3). Based on this observation, we show that the group $\widetilde{\mathcal{G}}_3$ generated by $J$ and $\mathcal{G}_3$ contains  $\mathcal{SV}^{U(1)}_3$. (The argument is similar to the proof of Goursat’s lemma.)

Consider the elements $\widetilde{\mathcal{G}}_3$ that are in the following form 
\begin{align}\label{normal}
	V= \left(\begin{matrix}
		e^{\i\beta_0} &&& \\ &e^{\i\beta_1}\mathbb{I}\sct1 && \\ && V\sct2& \\ &&& e^{\i\beta_3}
	\end{matrix}\right)\ \ \  : {V}\in \widetilde{\mathcal{G}}_3\ ,
\end{align}
where $\mathbb{I}\sct1$ is the identity operator on the 3D subspace with Hamming weight 1 and  $V\sct2\in \text{U}(3)$ is an arbitrary unitary on the subspace with Hamming weight 2, and $e^{\i\beta_{0,1,3}}$ are phases. 
 All such unitaries $V$ in the form of \cref{normal} constitutes a subgroup of $\widetilde{\mathcal{G}}_3$, denoted as  $\widetilde{\mathcal{N}}$.
Let $\mathcal{N}$ be the subgroup of U(3)
formed from all unitaries  $V\sct2\in\text{U}(3)$ such that there exists $V$ related to $V\sct2$ by \cref{normal} and $V \in \widetilde{\mathcal{N}}$. With this definition,  \cref{UJ} implies that
\be
e^{\i\alpha} J\sct1^\dag J\sct2\in \mathcal{N}\ .
\ee
Furthermore, it can be easily shown that with this definition,  
$\mathcal{N}$ is a normal subgroup of U(3). To see this, note that if $V\in \widetilde{\mathcal{N}}$, then for any $U\in \widetilde{\mathcal{G}}_3$, the unitary $U V U^\dag$ is also in the form given in \cref{normal}, which means $\widetilde{\mathcal{N}}$ is a normal subgroup of $\widetilde{\mathcal{G}}_3$. Moreover,  
for any $U\sct1\in \text{SU}(3)$ one can choose $U\in \mathcal{G}_3\subset \widetilde{\mathcal{G}}_3$ with decomposition in the form of \cref{AppG32}, 
 and $UVU^\dag\in\widetilde{\mathcal{N}}$ implies $U\sct1 V\sct2 U\sct1^\dag \in \mathcal{N}$. Finally, we note that any element of U(3) can be written as a global phase times an element of SU(3). 
 In summary, we conclude that if $V\sct2\in\mathcal{N}$, then for any $U\sct1\in \text{U}(3)$, it holds that $U\sct1 V\sct2 U\sct1^\dag\in \mathcal{N}$, which means $\mathcal{N}$ is a normal subgroup of U(3). 
 \color{black}
 
 Next, note that since SU(3) is a simple Lie group,  any normal subgroup of U(3) either contains SU(3) or only contains global phases, i.e., is in the center of U(3). But, we just showed that $\mathcal{N}$ contains  
$e^{\i\alpha} J\sct1^\dag J\sct2$ and the assumption of lemma implies that this unitary is not in the center of U(3). It follows that $\mathcal{N}$ contains SU(3).

This means that for any $V\sct2\in\text{SU}(3)$, there exists $V\in \widetilde{\mathcal{G}}_3$ with decomposition in the form of \cref{normal}.   
Furthermore, because SU(3) is a simple Lie group, it is equal to its commutator subgroup.  The commutator subgroup generated by elements of $\widetilde{\mathcal{G}}_3$ in the form of \cref{normal}, contains all elements in the following form
\begin{align} \label{eq:1IV1}
 \left(\begin{matrix}
		1 &&& \\ &\mathbb{I}\sct1 && \\ && V\sct2& \\ &&& 1
	\end{matrix}\right)\ \ \ : V\sct2\in \text{SU}(3)\ .
\end{align}
Therefore, we find that for any $V\sct2\in \text{SU}(3)$, there exists an element of $\widetilde{\mathcal{G}}_3$ with the  decomposition in \cref{eq:1IV1}. 
  Composing these unitaries with elements of $\mathcal{G}_3$, which are in the form given in \cref{AppG32} one obtains the entire $\mathcal{SV}^{U(1)}_3$, i.e., all unitaries in the form given in \cref{SU}. This completes the proof of the lemma.

\section{Eigenvalues of $W_{123}= \sqrt{\i\text{SWAP}}_{23}\sqrt{\i\text{SWAP}}_{13}$\\ (Proof of \cref{lem:dense})} \label{ss:proof_lemma_dense}

In this section, we study the eigen-decomposition  of operator
\be
W_{123}= \sqrt{\i\text{SWAP}}_{23}\sqrt{\i\text{SWAP}}_{13}\ .
\ee
The fact that $W_{123}$ is energy-conserving and respect the $\mathbb{Z}_2$ symetry $X^{\otimes 3} W_{123} X^{\otimes 3}=W_{123}$ implies that with  the ordering in \cref{ordering}, it has a decomposition as 
 
\begin{align}\label{eq:W_W1}
	W = \left(\begin{matrix}
		1 &&& \\ &W_{123}\sct1&& \\ &&W_{123}\sct1& \\ &&&1
	\end{matrix}\right)\,,\quad \text{where}\quad
       &W_{123}\sct1= \frac12 \left(\begin{matrix}
		1 & \i\sqrt{2}& \i \\ \i & \sqrt{2} & -1 \\ \i\sqrt{2} &0& \sqrt{2} 
	\end{matrix} \right)
\end{align}

The eigenvalues of $W_{123}\sct1$ are $\{1,e^{\pm \i\ \theta}\}$, where
\begin{align}
	e^{\i\theta} = -\frac14 + \frac{\sqrt2}{2} + \frac{\sqrt{7+4\sqrt2}}{4}\i \ ,
\end{align}
and $\theta$ satisfies \cref{eq:costheta}. The eigenvalues of $W_{123}$ are thus the same as $W_{123}\sct1$, ignoring multiplicity.

To show that $\theta/\pi$ is an irrational number, i.e., $e^{\i\theta}$ is a irrational rotation, we use an argument that was previously used in \cite{boykin1999universalieee} to show the universality of $H$ and $T$ gates. 
To link $W_{123}$ with $H$ and $T$, we compute the characteristic polynomial of $W_{123}\sct1$ as
\begin{align} \label{eq:charpolyW}
    \text{det}\left(\lambda \mathbb{I} - W_{123}\sct1\right) = (\lambda - 1) ( \lambda^2 -  \sqrt{2} \lambda + \frac{1}{2}\lambda + 1) \,.
\end{align}

On the other hand, in the proof of universality of $H$ and $T$ gates, it is shown that the gate sequence $HTHT$ generates an irrational rotation \cite{NielsenAndChuang,boykin1999universalieee}. More specifically, the characteristic polynomial of
$e^{\i \frac{3}{2} \pi} (HTHT)^2$ (where the global phase $e^{\i \frac{3}{2} \pi}$ is added to make the resulting gate in $\text{SU}(2)$) is
\begin{align} \label{eq:charpolyHTHT}
    \text{det}\left(\lambda \mathbb{I} - e^{\i \frac{3}{2} \pi} (HTHT)^2\right) = \lambda^2 -  \sqrt{2} \lambda + \frac{1}{2}\lambda + 1 \,,
\end{align}
which is a factor of the polynomial in \cref{eq:charpolyW}. The roots of this polynomial are exactly $e^{\pm\i\theta}$.

This shows that the two complex eigenvalues of $W_{123}\sct1$ (and thus $W$), $e^{\pm\i\theta}$, are exactly the eigenvalues of $e^{\i \frac{3}{2} \pi} (HTHT)^2$. The result that $HTHT$ gives an irrational rotation indicates that $e^{\pm\i\theta}$ is an irrational rotation.

The original proof of $e^{\pm\i\theta}$ being an irrational rotation uses the following lemma:
\begin{lemma}[\cite{boykin1999universalieee}] \label{lem:irrrot}
	Let $e^{\i\theta}$ be the root of a minimum monic polynomial $p(x)$ over the field of rational numbers. Then $e^{\i\theta}$ is rational rotation if and only if $p(x)$ is a cyclotomic polynomial.
\end{lemma}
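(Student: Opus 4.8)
The plan is to deduce the lemma from the basic structure theory of cyclotomic polynomials over $\mathbb{Q}$. Recall that for each integer $m\geq 1$ the $m$-th cyclotomic polynomial is $\Phi_m(x)=\prod_{\zeta}(x-\zeta)$, the product running over the primitive $m$-th roots of unity; it is monic, has integer coefficients, and is irreducible over $\mathbb{Q}$. I would also use the elementary fact that an algebraic number $\alpha\in\mathbb{C}$ has a unique monic minimal polynomial over $\mathbb{Q}$, which coincides with the unique monic irreducible polynomial in $\mathbb{Q}[x]$ having $\alpha$ as a root, and which divides any $q\in\mathbb{Q}[x]$ with $q(\alpha)=0$.

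For the direction ``$p$ cyclotomic $\Rightarrow$ rational rotation'': if $p(x)=\Phi_m(x)$ for some $m$, then $e^{\i\theta}$, being a root of $p$, is a primitive $m$-th root of unity, so $e^{\i m\theta}=1$ and hence $\theta\in\frac{2\pi}{m}\mathbb{Z}$ is a rational multiple of $\pi$. Thus $e^{\i\theta}$ is a root of unity, i.e.\ a rational rotation.

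For the converse: suppose $e^{\i\theta}$ is a rational rotation, so that $e^{\i b\theta}=1$ for some integer $b\geq 1$; choosing $b$ minimal with this property makes $e^{\i\theta}$ a primitive $b$-th root of unity. Then $e^{\i\theta}$ is a root of the monic polynomial $\Phi_b(x)\in\mathbb{Q}[x]$, which is irreducible over $\mathbb{Q}$. Since a monic irreducible polynomial over $\mathbb{Q}$ admitting $e^{\i\theta}$ as a root is exactly the minimal monic polynomial of $e^{\i\theta}$, uniqueness forces $p(x)=\Phi_b(x)$, which is cyclotomic.

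The only substantive obstacle is the irreducibility of $\Phi_m$ over $\mathbb{Q}$; all the remaining steps are bookkeeping with the definition of the minimal polynomial. A self-contained proof of that irreducibility would proceed via Gauss's lemma together with a prime-by-prime reduction argument in $\mathbb{Z}[x]$ — a standard but lengthy detour — so I would import it as a cited classical result (going back to Gauss) rather than reprove it here.
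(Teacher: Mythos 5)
Your argument is correct. Note that the paper does not actually prove this lemma: it is imported verbatim from the cited reference \cite{boykin1999universalieee}, so there is no in-paper proof to compare against; what you have written is the standard argument, and it is essentially the one underlying the cited result. Both directions are sound: a root of a cyclotomic polynomial $\Phi_m$ is a primitive $m$-th root of unity, hence $e^{\i m\theta}=1$ and $\theta/\pi\in\mathbb{Q}$; conversely, a rational rotation is a root of unity of some order $b$, and since $\Phi_b$ is monic, rational, and irreducible over $\mathbb{Q}$, uniqueness of the minimal polynomial forces $p=\Phi_b$. You are also right to isolate the irreducibility of $\Phi_m$ over $\mathbb{Q}$ as the only substantive input; without it, the weaker route $p\mid x^b-1=\prod_{d\mid b}\Phi_d$ together with unique factorization in $\mathbb{Q}[x]$ would only give that $p$ divides some cyclotomic factor, not that $p$ is itself cyclotomic, so citing the classical irreducibility theorem (Gauss for prime $m$, Dedekind/Kronecker in general) is the appropriate move here, and is consistent with the level at which the paper treats this lemma.
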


The minimum monic polynomial for $e^{\i\theta}$  is $x^4 + x^3 + \frac14 x^2 + x + 1$, which is not cyclotomic \cite{boykin1999universalieee}. Therefore, by \cref{lem:irrrot}, $e^{\i\theta}$ is an irrational rotation.

\end{document}